\documentclass[11pt,a4paper]{article}
\pdfoutput=1 %for arxiv
\usepackage[utf8]{inputenc}

\usepackage[T1]{fontenc}
\usepackage{lmodern}

\usepackage[margin=1in]{geometry}
\usepackage{amsfonts}
\usepackage{amsmath}
\usepackage{amssymb}
\usepackage{amsthm}
\usepackage{thmtools}
\usepackage{thm-restate}
\usepackage{float}
\usepackage[font=small,labelfont=bf]{caption} %customise text style on fig captions
\allowdisplaybreaks
\usepackage{bbold}
\usepackage{subcaption}

\usepackage{physics}
\usepackage{xspace}

\usepackage{multirow}

\usepackage{bold-extra} %adds bold+smallcaps (for standard fonts)

\usepackage{hyperref}
\usepackage[svgnames]{xcolor}
\hypersetup{colorlinks={true},urlcolor={blue},linkcolor={DarkBlue},citecolor=[named]{DarkGreen}}

\usepackage{microtype}
\usepackage[capitalise,nameinlink,noabbrev]{cleveref}

\usepackage{tikz}
\usetikzlibrary{arrows.meta}

\usepackage{doi}

\usepackage{graphicx}
\usepackage{mdframed}

\usepackage{natbib}

\usepackage{enumerate}

\theoremstyle{definition}
\newtheorem{definition}{Definition}

\theoremstyle{plain}
\newtheorem{theorem}{Theorem}[section]
\newtheorem{lemma}[theorem]{Lemma}

\newtheorem{proposition}[theorem]{Proposition}
\newtheorem{claim}{Claim}

\theoremstyle{remark}

\Crefname{claim}{Claim}{Claims}

\DeclareMathOperator{\opt}{OPT}

\def\fp/{\textup{\textsf{FP}}}
\def\p/{\textup{\textsf{P}}}
\def\np/{\textup{\textsf{NP}}}
\def\conp/{\textup{\textsf{co-NP}}}
\def\fnp/{\textup{\textsf{FNP}}}
\def\tfnp/{\textup{\textsf{TFNP}}}
\def\ptfnp/{\textup{\textsf{PTFNP}}}
\def\ppa/{\textup{\textsf{PPA}}}
\def\ppad/{\textup{\textsf{PPAD}}}
\def\ppads/{\textup{\textsf{PPADS}}}
\def\ppp/{\textup{\textsf{PPP}}}
\def\pwpp/{\textup{\textsf{PWPP}}}
\def\pls/{\textup{\textsf{PLS}}}
\def\cls/{\textup{\textsf{CLS}}}
\def\ppadpls/{\textup{$\textsf{PPAD} \cap \textsf{PLS}$}}
\def\ppapls/{\textup{$\textsf{PPA} \cap \textsf{PLS}$}}
\def\eopl/{\textup{\textsf{EOPL}}}
\def\sopl/{\textup{\textsf{SOPL}}}
\def\ueopl/{\textup{\textsf{UEOPL}}}
\def\fixp/{\textup{\textsf{FIXP}}}
\def\bu/{\textup{\textsf{BU}}}
\def\bbu/{\textup{\textsf{BBU}}}
\def\linearfixp/{\textup{\textsf{Linear-FIXP}}}
\def\pspace/{\textup{\textsf{PSPACE}}}

\providecommand{\ceil}[1]{\ensuremath{\left \lceil #1 \right \rceil }}
\providecommand{\floor}[1]{\ensuremath{\left \lfloor #1 \right \rfloor }}

\def\pcircuit{\textup{\textsc{Pure-Circuit}}\xspace}
\def\dpcircuit{\textup{\textsc{$\delta$-Pure-Circuit}}\xspace}
\def\gcircuit{\textup{\textsc{GCircuit}}\xspace}
\def\gcircuitp{\textup{\textsc{GCircuit+}}\xspace}
\def\edgcircuit{\textup{\textsc{$(\eps,\delta)$-GCircuit}}\xspace}
\def\edgcircuitp{\textup{\textsc{$(\eps,\delta)$-GCircuit+}}\xspace}

\newcommand{\pmin}{P_{\text{min}}\xspace}
\newcommand{\pmax}{P_{\text{max}}\xspace}
\newcommand{\bmax}{e_{\text{max}}\xspace}
\newcommand{\bmin}{e_{\text{min}}\xspace}
\newcommand{\epsm}{\eps_\mathfrak{m}\xspace}
\newcommand{\epsc}{\eps_\mathtt{c}\xspace}
\newcommand{\deltam}{\delta_\mathfrak{m}\xspace}
\newcommand{\deltac}{\delta_\mathtt{c}\xspace}

\newcommand{\pcppad}{\ensuremath{\textup{\textsf{PCP-for-PPAD}}}\xspace}

\newcommand{\garbo}{\ensuremath{\bot}\xspace}

\newcommand{\val}[1]{\boldsymbol{\mathrm{x}}[#1]}
\newcommand{\valonly}{\boldsymbol{\mathrm{x}}}
\newcommand{\valtwo}[1]{\boldsymbol{\mathrm{x'}}[#1]}
\newcommand{\valtwoonly}{\boldsymbol{\mathrm{x'}}}

\newcommand{\PURE}{\textup{\textsf{PURIFY}}\xspace}
\newcommand{\PURIFY}{\PURE}

\newcommand{\NAND}{\textup{\textsf{NAND}}\xspace}
\newcommand{\NOT}{\textup{\textsf{NOT}}\xspace}
\newcommand{\OR}{\textup{\textsf{OR}}\xspace}
\newcommand{\AND}{\textup{\textsf{AND}}\xspace}

\newcommand{\marbpb}{\textsf{\textup{marginal-bpb}}}

\newcommand{\eps}{\ensuremath{\varepsilon}\xspace}
\newcommand{\del}{\ensuremath{\delta}\xspace}

\newcommand{\aux}{\ensuremath{\mathsf{aux}}\xspace}

\newcommand{\inverter}{\ensuremath{\mathsf{inverter}}\xspace}
\newcommand{\ing}{\ensuremath{\mathsf{in}}\xspace}

\newcommand{\outg}{\ensuremath{\mathsf{out}}\xspace}

\newcommand{\inv}{\ensuremath{\mathsf{inv}}\xspace}
\newcommand{\inslope}{a}
\newcommand{\inthres}{t}
\newcommand{\Uopt}{U_{\textup{opt}}}
\newcommand{\Uactual}{U}

\title{Fisher Markets with Approximately Optimal Bundles and the Need for a PCP Theorem for PPAD}

\author{
\begin{tabular}{cc}
& \\
\textbf{Argyrios Deligkas} & \textbf{John Fearnley}\\
\small{Royal Holloway, United Kingdom} & \small{University of Liverpool, United Kingdom}\\
\href{mailto:argyrios.deligkas@rhul.ac.uk}{\small{\texttt{argyrios.deligkas@rhul.ac.uk}}} & \href{mailto:john.fearnley@liverpool.ac.uk}{\small{\texttt{john.fearnley@liverpool.ac.uk}}}\\
& \\
\textbf{Alexandros Hollender} & \textbf{Themistoklis Melissourgos}\\
\small{University of Oxford, United Kingdom} & \small{University of Essex, United Kingdom}\\
\href{mailto:alexandros.hollender@cs.ox.ac.uk}{\small{\texttt{alexandros.hollender@cs.ox.ac.uk}}} & \href{mailto:themistoklis.melissourgos@essex.ac.uk}{\small{\texttt{themistoklis.melissourgos@essex.ac.uk}}}\\
& \\
\end{tabular}
}

\date{}

\begin{document}

\maketitle
\thispagestyle{empty}

\begin{abstract}
We study the problem of computing a competitive equilibrium with approximately optimal bundles in Fisher markets with separable piecewise-linear concave (SPLC) utility functions, meaning that every buyer receives a $(1-\delta)$-optimal bundle, instead of a perfectly optimal one. We establish the first intractability result for the problem by showing that it is \ppad/-hard for some constant $\delta > 0$, assuming the \pcppad conjecture. This hardness result holds even if all buyers have identical budgets (competitive equilibrium with equal incomes), linear capped utilities, and even if we also allow $\eps$-approximate clearing instead of perfect clearing, for any constant $\eps < 1/9$. Importantly, we show that the \pcppad conjecture is in fact required to show hardness for constant $\delta$: showing \ppad/-hardness for finding such approximate market equilibria in a broad class of markets encompassing those generated by our hardness result would prove the conjecture. This is the first natural problem where the conjecture is provably required to establish hardness for it.
\end{abstract}

\newpage

\setcounter{page}{1}

\newpage
\pagenumbering{arabic}

\section{Introduction}
\label{sec:intro}

We study Fisher markets~\citep{brainard2005compute-fisher}, which are a
fundamental model of a market in which a set of buyers are interested in
purchasing a set of goods. Unlike more general market models, buyers are not endowed with any goods, but instead come to the market with money and they spend this budget to buy goods in a way that maximizes their utility. Importantly, the buyers do not have any utility for money, but only for goods.

For Fisher markets, it is known that when the utility functions satisfy some standard sufficiency
conditions, then a \emph{competitive equilibrium}, or {\em market equilibrium}, is guaranteed to exist. A market
equilibrium consists of prices for each of the goods and an allocation of the
goods to the buyers such that (a) the market {\em clears}, i.e., for every good
its supply is exactly equal to its demand, and (b) every buyer buys an {\em
optimal bundle}, i.e., a set of goods that maximizes their utility under their budget constraint at the current prices.

\paragraph{\bf The computational complexity of finding market equilibria.}

The problem of computing a market equilibrium in a Fisher market has received a
lot of attention in prior work. When buyers have linear utility functions, a
market equilibrium can always be computed in polynomial
time~\citep{Devanur2008-market_equilibrium,orlin2010improved,vegh2012strongly}.
However, the problem becomes intractable as soon as one considers {\em additive
separable piecewise-linear concave} (SPLC) utilities, which are a 
generalization of linear utilities.  A buyer with an SPLC utility function has
a piecewise-linear concave utility for each good, and their utility for a
bundle of goods is simply the sum of their utilities for each of the individual
goods. Finding an equilibrium in a Fisher market with SPLC utilities is
known to be \ppad/-complete, i.e., the problem does not admit a polynomial-time algorithm, unless \ppad/ =
\p/~\citep{vazirani2011market-plc-ppad,chen2009spending,DeligkasFHM24-fisher-constant}.

In fact, this hardness result continues to hold even for {\em
\eps-approximate} market equilibria. In an \eps-approximate market equilibrium, the clearing constraint is relaxed and only requires every good to \eps-clear, meaning that the discrepancy between
the supply and the demand for any good is at most \eps.
\citet{chen2009spending} and \citet{vazirani2011market-plc-ppad} showed that it
is \ppad/-complete to find an \eps-approximate market equilibrium when \eps is
inversely polynomial in the size of the market, i.e., the number of buyers and
the number of goods, and  more recently, \citet{DeligkasFHM24-fisher-constant}
proved \ppad/-completeness even for {\em constant} $\eps < 1/11$, ruling out
the existence of a polynomial-time approximation scheme (PTAS) for \eps-approximate market equilibria.

\paragraph{\bf Approximately optimal bundles.}

The computational complexity lower bounds have so far only considered
approximations that relax the clearing constraint of the market, while still
insisting that each buyer should receive an optimal allocation. It is natural
to ask what happens if the buyers' optimality is relaxed instead. In that case, we  allow every buyer to receive a $(1-\delta)$-optimal bundle, that
is, a bundle that gives them utility that is at least a $(1 - \delta)$-fraction of the utility of an optimal bundle. This is arguably a more natural relaxation of the problem,
since it only requires the buyers be willing to accept slightly sub-optimal
bundles, whereas $\eps$-approximate clearing requires the market to find a way to deal with the mismatch between supply and demand (e.g., by destroying an $\eps$-fraction of some goods).

For approximately optimal bundles, \cite{garg2025approximating} recently obtained a positive result. They proved that for a large family of utility
functions, which include SPLC utilities, any allocation that maximizes Nash
welfare can be coupled with appropriate prices so that the market exactly clears and every
buyer receives
a $1/2$-optimal bundle. The Nash welfare of an allocation is the product of the
utilities of the buyers under this allocation, and it is known that such
an allocation can be computed in polynomial time using convex optimization. So, in particular, this shows that it is possible to efficiently compute a market equilibrium with $(1-\delta)$-optimal bundles when $\delta = 1/2$. However, so far no lower bounds under this relaxation have been established. A very natural question is thus: does this problem admit a PTAS?

\paragraph{\bf The \pcppad conjecture.}

Unrelated to Fisher markets, there has been a line of work that has studied the 
\pcppad conjecture. The $\eps$-\gcircuit problem is a well-known \ppad/-complete
problem that asks us to find an $\eps$-approximate satisfying assignment to a
fixed-point problem defined by arithmetic gates~\citep{ChenDT09-Nash}, and this
problem is known to be \ppad/-hard even when $\eps$ is
constant~\citep{Rubinstein18-Nash-inapproximability,DFHM24}. 

We can further weaken this to the $(\eps, \delta)$-\gcircuit problem, which
requires only that a $(1 - \delta)$-fraction of the gates are
$\eps$-approximately satisfied, and so a $\delta$-fraction of the constraints
are allowed to be broken. The so-far unproven \pcppad conjecture states
that 
there exist small constants $\eps$ and $\delta$ such that 
$(\eps, \delta)$-\gcircuit is \ppad/-hard. This conjecture was first proposed by \citet{BPR16}, where they hoped
that it would be useful in showing that finding an approximate Nash equilibrium in a
bimatrix game requires quasi-polynomial time.
In the end, however, the
conjecture was not used for this task, since the lower bound for
bimatrix games was shown via alternate means~\citep{Rubinstein16-two-player}.

The \pcppad conjecture has found other uses, however, and it has recently been used to
prove conditional hardness for computing approximate non-perfect Markov
stationary coarse correlated equilibria in multiplayer
games~\citep{park2023multi}, for computing stationary CCE in two-player stochastic
games~\citep{daskalakis2023complexity}, and very recently for computing approximate equilibria in Hylland-Zeckhauser markets~\citep{BravermanLXZ26-hylland-PCP}. 

All of these hardness results show \ppad/-hardness assuming that the \pcppad conjecture
holds. As such, they are open to the straightforward criticism that the \pcppad
conjecture may simply be false, meaning that no lower bound is actually shown.
Another criticism is that the \pcppad conjecture {\em may not be necessary at all} to
show hardness for these problems, and there may be a direct \ppad/-hardness
reduction that we have so far missed, as was the case for bimatrix games.

\subsection{Our Contribution}

This paper provides the first intractability results for market
equilibria with approximately optimal bundles, i.e., when the buyers are happy to receive $(1-\delta)$-optimal bundles. Our main hardness result is the following:
\begin{itemize}
\item Assuming that the \pcppad conjecture holds, there exists a suitably small constant $\delta > 0$ such that it is \ppad/-hard to find a market equilibrium with $(1-\delta)$-optimal bundles.
\end{itemize}
Furthermore, as a byproduct of the reduction establishing this result, we also obtain the following two unconditional \ppad/-hardness results:
\begin{itemize}
\item If $\delta > 0$ is inverse-polynomial (i.e., given in the input in unary), then it is \ppad/-hard to find a market equilibrium with $(1-\delta)$-optimal bundles.
\item If buyers are not allowed to spend any money on goods for which they have zero utility, then there exists a constant $\delta > 0$ such that it is \ppad/-hard to find a market equilibrium with $(1-\delta)$-optimal bundles.
\end{itemize}
Importantly, these three results continue to hold if we also \emph{simultaneously} relax the equilibrium notion by allowing $\eps$-approximate clearing, instead of perfect clearing. In fact, they hold for any constant $\eps < 1/9$. In particular, our results improve on the state of the art even for the setting with perfectly optimal bundles, since \ppad/-hardness was previously only known for $\eps < 1/11$~\citep{DeligkasFHM24-fisher-constant}.

Furthermore, our hardness results apply to a fairly simple class of markets. In particular, all buyers have \emph{linear capped} utilities and identical budgets. This latter restriction corresponds to the setting of competitive equilibrium with equal incomes (CEEI), which is known to have desirable fairness properties. To the best of our knowledge, these are the first intractability results for CEEI, even for exact equilibria.\footnote{Some works~\citep{BPR16,Rubinstein18-Nash-inapproximability,othman2016complexity}
prove \ppad/-hardness for approximate CEEI in a setting with indivisible goods, which should not be confused with our divisible setting.}

By itself, our main hardness result for constant $\delta$ is open to the same
criticisms as the other results that depend on the \pcppad conjecture: perhaps
the \pcppad conjecture is false, or perhaps the hardness results can be shown
without it. However, our second and perhaps more interesting result shows that
this is not the case. Namely, we prove the following:
\begin{itemize}
    \item If it is \ppad/-hard to find a market equilibrium with $(1-\delta)$-optimal bundles for some constant $\delta > 0$ in a broad class of Fisher markets encompassing those generated by our hardness result, then the \pcppad conjecture holds.
\end{itemize}

This shows that resolving the \pcppad conjecture is \emph{required} to resolve
the computational complexity of Fisher markets with approximately optimal
bundles: the problem is hard for constant $\delta$ if and only if the \pcppad
conjecture holds. We note that Fisher markets are the first \emph{natural}
problem for which this property is known to hold. By this, we mean that the
problem does not have a ``PCP-like'' structure in which a $\delta$-fraction of
the constraints are allowed to fail. Indeed, the only other problem whose hardness is known to be equivalent to the \pcppad conjecture is the $(\eps,
\delta)$-weak approximate Nash equilibrium problem for polymatrix games~\citep{BPR16}. However, this problem allows a $\delta$-fraction of
the players to not be in $\eps$-equilibrium, namely to play an arbitrarily bad response. On the other hand, a market
equilibrium with approximately optimal bundles requires that \emph{all} goods exactly clear, and that \emph{all}
buyers receive a $(1 - \delta)$-approximate bundle.

We argue that this revitalizes the \pcppad conjecture, and provides a
compelling need for its proof or disproof. 
Before our work, 
the conjecture had only been used to show conditional hardness results, 
and for those works the criticism that perhaps the \pcppad conjecture could be bypassed always applied. 
Since we now have a natural problem that is hard if and only if
the conjecture holds, resolving the conjecture now appears necessary.

\subsection{Further related work}
The computation of market equilibria, for both exact and approximate
equilibria, has
received significant attention over the years. For Fisher markets,
\citet{vazirani2011market-plc-ppad} and \citet{chen2009spending} have
established \ppad/-hardness for SPLC utilities albeit for a sub-constant \eps.
On the other hand, polynomial-time algorithms were derived for the cases where
the utility functions of the buyers are
linear~\citep{Devanur2008-market_equilibrium,orlin2010improved,vegh2012strongly},
homogeneous~\citep{eisenberg1961aggregation}, or weak gross
substitutes~\citep{codenotti2005polynomial}.

Furthermore, when the number of goods is constant \citet{kakade2004graphical}
gave a PTAS while~\citet{devanur2008market} gave a polynomial-time algorithm
for exact equilibria. In fact, the algorithm of~\cite{devanur2008market} also works
when the number of buyers is constant in the SPLC utility setting.
For non-separable PLC utilities~\citet{garg2022approximating} derived a fixed
parameter approximation scheme that has the number of buyers as a parameter.

Market equilibria where buyers get approximately optimal bundles (both with and without approximate clearing) have been studied in the context of obtaining approximation algorithms for various classes of utilities or variants of the problem~\citep{DengPS03-price-equilibria,JainMS03-approx-market,codenotti2005market,codenotti2005polynomial,codenotti2007computation,KarandeD07-WGS,ChakrabortyDK10-market-transaction,GargHM24-satiation}.

Matching markets are an interesting variant of general Fisher markets.
\citet{alaei2017computing} designed a polynomial-time algorithm for markets with a 
constant number of goods or buyers, while~\citet{vazirani2020computational}
derived a polynomial algorithm when the buyers have dichotomous utilities. For
one-sided matching markets, the most famous problem is the Hylland-Zeckhauser
market, for which the existence of an equilibrium was initially established
in~\citep{hylland1979efficient} and was recently simplified by
\citet{braverman2021optimization}. \citet{chen2022computational-Hylland-Zeckhauser-PPAD-h} have established
\ppad/-completeness for the problem, and, as mentioned above, very recently hardness of constant approximation was established subject to the \pcppad conjecture~\citep{BravermanLXZ26-hylland-PCP}.

There has also been interest in Fisher markets with
additional constraints. \citet{birnbaum2010new}, \citet{devanur2004spending},
and~\citet{vazirani2010spending} considered the case where the utilities of the
buyers depend on prices of goods through spending constraints.
\citet{jalota2023fisher} considered additional linear constraints that include
matching markets, and they gave a tâtonnement process which was
found to converge to a market equilibrium in experiments.

The \pcircuit problem was recently introduced in~\citep{DFHM24} , where it was used
to prove strong, improved, \ppad/-hardness results for a variety of problems
related mainly to approximate Nash equilibria. Since then it was further used
in~\citep{deligkas2023tight} to prove tight \ppad/-hardness for approximate Nash
equilibria in graphical games, in~\citep{ioannidis2023clearing} and in~\citep{dohn2025improved} to prove stronger \ppad/-hardness in the problem of clearing financial
networks, and in~\citep{hansen2025complexity} to derive improved inapproximability \ppad/-hardness for Markov equilibria in stochastic games.

\section{Preliminaries}

\subsection{Fisher Markets}\label{sec:fisher_markets}

\paragraph{\bf Fisher markets.} A Fisher market is given by a tuple $(G,B,(e_i)_{i\in B},(u_i)_{i \in B})$, where:
\begin{itemize}
\item $G$ is a set of (divisible) goods. Without loss of generality, we assume that there is one unit of each good available.\footnote{This can be achieved by a simple normalization, and it simplifies the expression for the clearing constraint below.}
\item $B$ is a set of buyers.
\item For every $i \in B$, $e_i > 0$ is the budget of buyer $i$.
\item For every $i \in B$, $u_i: \mathbb{R}_{\geq 0}^{|G|} \to \mathbb{R}_{\geq 0}$ is the utility function of buyer $i$. For any allocation $x_i \in \mathbb{R}_{\geq 0}^{|G|}$ of goods to buyer $i$ (where $x_{i,j} \geq 0$ denotes the amount of good $j$ allocated to buyer $i$), $u_i(x_i)$ denotes the utility derived by the buyer. We assume that the utility functions are separable piecewise-linear concave (SPLC), meaning that $u_i(x_i)$ can be written as $\sum_{j\in G} u_{i,j}(x_{i,j})$, where each $u_{i,j}: \mathbb{R}_{\geq 0} \to \mathbb{R}_{\geq 0}$ satisfies
\begin{enumerate}
    \item $u_{i,j}(0) = 0$,
    \item $u_{i,j}$ is continuous and piecewise-linear,
    \item $u_{i,j}$ is concave and non-decreasing.
\end{enumerate}
In particular, we represent each piecewise-linear concave utility function
$u_{i,j}$ as a sequence $\langle (s_{i,j,1}, \ell_{i,j,1}), (s_{i,j,2},
\ell_{i,j,2}), \dots, (s_{i,j,m_{i,j}}, \ell_{i,j,m_{i,j}}) \rangle$ where for each $k$ we have
that $s_{i,j,k}$ gives a slope of a linear piece and $\ell_{i,j,k}$ gives the
length of that piece, and where $m_{i,j}$ is the number of pieces used by
$u_{i,j}$. The length of the last piece can be infinite.
So to compute $u_{i,j}(x)$ we find the largest value $a$ such
that $\sum_{k=1}^a \ell_{i,j,k} \le x$ and then we set
$$u_{i,j}(x) = \sum_{k=1}^a (s_{i,j,k} \cdot \ell_{i,j,k}) + (x - \sum_{k=1}^a
\ell_{i,j,k}) \cdot s_{i,j,a+1}.$$
The concavity of the function implies that $s_{i,j,k} \ge s_{i,j,k+1}$ for all
$k$, and we can without loss of generality assume that $s_{i,j,k} >
s_{i,j,k+1}$ for all $a$, since we can simply merge adjacent pieces with
identical slopes.

A notable special case of SPLC utilities are \emph{linear capped} utilities, where $u_{i,j}$ is of the form $u_{i,j}(x_{i,j}) = \min\{a \cdot x_{i,j}, b\}$ for some $a \in \mathbb{R}_{\geq 0}$ and $b \in \mathbb{R}_{\geq 0} \cup \{+ \infty\}$. Our hardness result will apply to these utilities.
\end{itemize}

\paragraph{\bf Approximately optimal bundles.}
Given $\delta \in [0,1]$ and a price vector $p \in \mathbb{R}_{\geq 0}^{|G|}$, where $p_j$ denotes the price of good $j$, the set of $(1-\delta)$-optimal bundles for buyer $i$, denoted $\opt^\delta_i(p) \subseteq \mathbb{R}_{\geq 0}^{|G|}$, is the set of all $(1-\delta)$-optimal solutions of the following optimization problem:
\begin{equation}\label{eq:opt-bundle}
\begin{split}
\max \quad &u_i(x_i) \\
\text{ s.t.} \quad  
& \sum_{j \in G} p_j x_{i,j} \leq e_i \\
& x_{i,j} \geq 0 \quad \forall j \in G.
\end{split}
\end{equation}
In other words, letting $F_i(p) \subseteq \mathbb{R}_{\geq 0}^{|G|}$ denote the feasible set of optimization problem~\eqref{eq:opt-bundle}, and $U_i^* := \max_{x_i \in F_i(p)} u_i(x_i) \in \mathbb{R}_{\geq 0} \cup \{+ \infty\}$ its optimal value, we can write
$$\opt^\delta_i(p) := \{x_i \in F_i(p): u_i(x_i) \geq (1-\delta) \cdot U_i^*\}.$$
For $\delta = 0$, this simply corresponds to the set of all optimal solutions to~\eqref{eq:opt-bundle}.
Note that it is possible for $\opt^\delta_i(p)$ to be empty. Indeed, it is possible to have $U_i^* = \infty$, when some good has price zero and the buyer is never satiated with that good.

For SPLC utilities, we will often be interested in the \emph{bang-per-buck} of
a particular utility-function segment. The bang-per-buck of the $k$th segment
of $x_{i,j}$ under the price vector $p$ is defined as $s_{i,j,k}/p_j$.

\paragraph{\bf Approximate competitive equilibrium.}
For any $\eps, \delta \in [0,1]$, an $(\eps, \delta)$-approximate market equilibrium is a price vector $p$ and an allocation vector $x = (x_i)_{i \in B}$ satisfying the following conditions:
\begin{enumerate}
    \item For each buyer $i$, $x_i$ is a $(1-\delta)$-optimal bundle at prices $p$, i.e., $x_i \in \opt^\delta_i(p)$.
    \item For each good $j$, the market clears approximately up to $\eps$ units of good, i.e.,
    $$\left| \sum_{i \in B} x_{i,j} - 1 \right| \le \eps.$$
\end{enumerate}
When $\eps = \delta = 0$, this corresponds to an exact market equilibrium.

\paragraph{\bf Existence of equilibria.}
The following condition is sufficient to guarantee the existence\footnote{In fact, existence of a market equilibrium is guaranteed even without this condition. However, the condition guarantees the very desirable property that any market equilibrium is Pareto optimal, see e.g.~\citep{garg2025approximating}.} of a market equilibrium \citep{Maxfield1997,vazirani2011market-plc-ppad}:

\begin{quote}
    \textbf{Sufficient Condition:} For every buyer $i \in B$, there exists a good $j \in G$ such that $u_{i,j}$ is a strictly increasing function (i.e., buyer $i$ is never satiated with good $j$).
\end{quote}

\paragraph{\bf Computational problem.}
Let $\eps, \delta \in [0,1]$. The computational problem of computing an $(\eps, \delta)$-approximate market equilibrium is defined as follows: 
\begin{quote}
    \textbf{Input:} A Fisher market $(G,B,(e_i)_{i\in B},(u_i)_{i \in B})$ with SPLC utilities satisfying the sufficient condition for the existence of equilibria. For each $i \in B$ and $j \in G$, $u_{i,j}$ is explicitly described in the input, i.e., for each linear affine piece we are given the positions and values at its endpoints.
\end{quote}
\begin{quote}
    \textbf{Output:} An $(\eps, \delta)$-approximate market equilibrium $(p,x)$.
\end{quote}
Given $(p,x)$, the equilibrium conditions can be verified in polynomial time, because, for SPLC utilities, the optimization problem \eqref{eq:opt-bundle} used to define $\opt^\delta_i(p)$ can be solved in polynomial time using a simple greedy approach; see, e.g., \citep{Garg2015-pivot}. Together with the existence of solutions guaranteed by the sufficient condition, this puts the problem in the complexity class \tfnp/ of total \np/ search problems. Prior work~\citep{vazirani2011market-plc-ppad} has shown that the problem lies in the subclass \ppad/ of \tfnp/, even for $\eps = \delta = 0$. In particular, exact rational solutions are guaranteed to exist. The problem is known to be \ppad/-complete when $\delta = 0$ and $\eps < 1/11$~\citep{chen2009spending,vazirani2011market-plc-ppad,DeligkasFHM24-fisher-constant}. 
No hardness result is known for any constant $\delta > 0$, even when $\eps = 0$.

\subsection{\pcircuit, \gcircuit, and the PCP conjecture for \ppad/}

In this section, we formally present the PCP conjecture for \ppad/ and two new equivalent formulations. The original formulation is in terms of the generalized circuit problem. The first new equivalent formulation is in terms of the \pcircuit problem and it makes it easier to prove hardness results assuming the conjecture. The other new formulation makes it easier to prove that the conjecture is needed to prove some hardness result. We prove that the three formulations are equivalent and so can be used interchangeably.

\subsubsection{The \pcircuit Problem}\label{sec:pcircuit_def}

An instance of the \pcircuit problem is given by a set of nodes (or \emph{variables}) $V=[n]$ and a set
$C$ of gate-constraints (or just \emph{gates}). Each gate $g \in C$ is of the
form $g = (T,u,v,w)$ where $u,v,w \in V$ are distinct nodes, and $T \in \{\NAND, \PURE\}$ is the type of the gate, with the following interpretation.
\begin{itemize}
	\item If $T=\NAND$, then $u$ and $v$ are the inputs of the gate, and $w$ is its output.
	\item If $T=\PURE$, then $u$ is the input of the gate, and $v$ and $w$ are its outputs.
\end{itemize}
We require that each node is the output of exactly one gate. A node can be used as an input by multiple gates.

A solution to instance $(V,C)$ is an assignment $\valonly: V \to \{0,1,\garbo\}$ that satisfies all the gates (see~\cref{fig:gates}), i.e., for each gate $g=(T,u,v,w) \in C$ we have the following. 
\begin{itemize}
\item If $T=\NAND$ in $g = (T,u,v,w)$, then $\valonly$ satisfies
    \begin{align*}
        \val{u} = \val{v} = 1 \implies \val{w} = 0\\
        (\val{u} = 0) \lor (\val{v} = 0) \implies \val{w} = 1
    \end{align*}

	\item If $T=\PURE$, then $\valonly$ satisfies
	\begin{align*}
		& \{\val{v}, \val{w}\} \cap \{0,1\} \neq \emptyset\\
		& \val{u} \in \{0,1\} \implies \val{v} = \val{w} = \val{u}.
	\end{align*}
\end{itemize}

\begin{figure*}[t!]
     \begin{center}
    \begin{minipage}{0.35\textwidth}
        \begin{center}
            \begin{tabular}{c|c||c}
                $u$ & $v$ & $w$ \\ \hline
                1 & 1 & 0 \\
                0 & $\{0,1,\garbo\}$ & 1 \\
                $\{0,1,\garbo\}$ & 0 & 1 \\
                \multicolumn{2}{c||}{Else} & $\{0,1,\garbo\}$
            \end{tabular}
			\caption*{\NAND gate}
        \end{center}
    \end{minipage}
	\begin{minipage}{0.35\textwidth}
		\begin{center}
			\begin{tabular}{c||c|c}
				$u$ & \phantom{xx}$v$\phantom{xx}  & $w$ \\ \hline
				$0$ & $0$ & $0$ \\
				$1$ & $1$ & $1$ \\
				\multirow{2}{*}{$\garbo$} & \multicolumn{2}{c}{At least one} \\
				& \multicolumn{2}{c}{output in  $\{0, 1\}$}
			\end{tabular}
			\caption*{\PURE gate}
		\end{center}
	\end{minipage}
    \end{center}
	\caption{The truth tables of the two gates of \pcircuit.}
	\label{fig:gates}
\end{figure*}

\begin{theorem}[\citep{DFHM24}]
	\label{thm:pancircuit}
	\pcircuit is \ppad/-complete, even when every node is the input to exactly one gate.
\end{theorem}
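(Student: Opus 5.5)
This theorem is cited from \citep{DFHM24}, so the plan is to recall how that proof goes rather than invent a new one.

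\textbf{Membership in \ppad/.} Reduce \pcircuit to finding a fixed point of a continuous piecewise-linear map, or equivalently to $\eps$-\gcircuit. Each node gets a real variable in $[0,1]$. Values near $0$ or $1$ encode the bits, and intermediate values are read as \garbo. A \NAND gate is simulated by an arithmetic NAND on the reals. A \PURE gate is simulated by two threshold gadgets applied to the input, with cut points at different levels, say $1/3$ and $2/3$. Whatever the input value, at least one of the two gadgets is far from its cut point, so at least one output is a clean bit. If the input is already a bit, both outputs copy it. An approximate fixed point of this circuit then rounds to a valid \pcircuit assignment, and the rounding is polynomial time.

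\textbf{Hardness.} Reduce from a standard \ppad/-complete problem, such as End-of-Line or a discrete Brouwer/Tucker problem on a grid. The construction follows the usual pipeline: averaging plus bit extraction, as in \citet{ChenDT09-Nash} and \citet{Rubinstein18-Nash-inapproximability}, but carried out with purely Boolean gates.

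1. \emph{Bit extraction.} The \PURE gate is the bit-extraction primitive. A \garbo-valued coordinate is repeatedly purified into several copies, and at least one copy is guaranteed to be a proper bit.
2. \emph{Sampling points.} Several candidate points are sampled near the current position. Each candidate is fed through a Boolean circuit that computes the discrete displacement. The outputs are aggregated by a NAND-based robust majority or averaging construction, which replaces the linear averaging of \gcircuit.
3. \emph{Closing the loop.} The encoded point is connected back to itself, so that any satisfying assignment yields a solution of the original instance.

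\textbf{Main obstacle.} The key step is showing that \garbo values cannot spread and corrupt everything. The argument is that only a bounded number of sampled points can be ``bad'', meaning they contain some \garbo bit. The \NAND truth table makes a single controlling $0$ force the output, so the good samples dominate the aggregated output.

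\textbf{Degree restriction.} To get the extra condition that every node is the input of exactly one gate, add copy gadgets: a \PURE gate with a clean input outputs two identical copies. Pad with \NAND-based copies wherever fan-out is needed, and handle nodes with zero fan-out by routing them into dummy gates.
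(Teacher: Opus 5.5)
Your membership argument is fine. It is the standard gate-by-gate simulation by a continuous piecewise-linear map, or equivalently by $\eps$-\gcircuit, as in the simulation of \citet{DFHM24} that the paper invokes in \cref{lem:pcircuit-to-gcircuit}.

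The gap is in the step you yourself call the main obstacle. First, the invariant you state for purification is the wrong one. You say that among the copies of a \garbo node at least one is a proper bit. What a binary tree of \PURE gates actually guarantees, and what is needed to stop garbage from multiplying, is that \emph{at most one} leaf is \garbo. Second, neither a ``robust majority'' nor the remark that a controlling $0$ forces a \NAND output gives you control of the aggregate. Majority is the wrong aggregator. The fixed-point argument needs that the aggregated displacement can be (close to) zero only when the sampled points witness a solution of the source instance, and that is a property of averages. A majority is discontinuous, so a satisfying assignment can sit at a split vote where the aggregate is \garbo, and closing the loop then certifies nothing.

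The device that makes Boolean averaging work is the one the paper reuses, crediting \citet{DFHM24}, as the formatting sub-gadget in the proof of \cref{lem:gcircuit-to-pcircuit1}:
\begin{enumerate}
\item Encode each real quantity as a unary vector of nodes, and purify every entry with a \PURE tree.
\item Run a sorting network whose comparators are \AND/\OR gates. The controlling-value property you cite is exactly what makes these comparators order the pure bits correctly whatever the \garbo inputs do. The output is therefore sorted outside a contiguous middle window no longer than the number of \garbo inputs.
\item Keep every $M$-th entry, so that at most one \garbo survives (cf.\ \cref{cl:formatting_bound}).
\end{enumerate}
Averaging over samples is then concatenation of their unary outputs followed by this formatting. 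Only at that point does ``few bad samples'' become ``small error''. Note also a shorter route available in the paper. The construction of \cref{lem:gcircuit-to-pcircuit1}, run with $\delta=0$ as the paper says it can be, already reduces $\eps$-\gcircuit for a small constant $\eps$ to \pcircuit with every node the input to exactly one gate. Since that problem is \ppad/-hard by \citet{Rubinstein18-Nash-inapproximability}, you need not rebuild an End-of-Line construction at all.

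For the degree restriction, only \PURE trees provide fan-out. A \NAND gate needs two distinct inputs and there are no constant gates, so there is no ``\NAND-based copy'' that duplicates a node. When the root of a \PURE tree is \garbo, different consumers may receive different clean bits. This is harmless only because both truth tables are monotone: replacing a bit input by \garbo only enlarges the set of allowed outputs. Hence restricting a solution to the original nodes still gives a solution, and you should say so.

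More seriously, ``routing zero-fan-out nodes into dummy gates'' does not close up as stated. Every new gate has outputs that must in turn be consumed exactly once. Counting nodes against input slots shows that the exactly-one condition forces equally many \NAND and \PURE gates, so some closed gadget is required. The paper avoids sinks altogether by pruning, iterating until every node is used exactly once:
\begin{itemize}
\item delete a \NAND gate whose output is unused;
\item delete a \PURE gate with both outputs unused;
\item replace a \PURE gate with exactly one unused output by the copy gadget \PURE--\NAND--\PURE--\NAND from its input to its remaining output.
\end{itemize}
Your approach can instead be repaired by attaching to each dangling node $a$ the gates $(\NAND,a,b,c)$, $(\PURE,c,d,e)$, $(\NAND,d,e,b)$ with fresh $b,c,d,e$. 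This creates no new dangling node and leaves the original gates untouched, so restricting a solution gives a solution. It is also satisfiable for every value of $a$: $a=1$ gives $c=0$, $d=e=0$, $b=1$, and otherwise take $b=0$, $c=1$, $d=e=1$.
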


For any constant $\delta \in [0,1]$, the \dpcircuit problem is defined as: given a \pcircuit instance $(V,C)$, find an assignment $\valonly \in [0,1]^{|V|}$ such that at least a $(1-\del)$ fraction of the gates in $C$ are satisfied. The \pcppad conjecture can be formulated as follows.

\bigskip

\noindent
{\bf \pcppad Conjecture \textmd{(\pcircuit version)}.}
\textit{There exists a constant $\delta > 0$ such that $\dpcircuit$ is \ppad/-hard, even when every node is the input to exactly one gate.}

\bigskip

Furthermore, as we show in \cref{thm:pcp-equivalent}, removing the restriction on every node being the input to exactly one gate does not change the conjecture, i.e., the versions with and without that restriction are equivalent.

The conjecture was originally formulated in terms of the generalized circuit problem, \gcircuit, which we define next.
In \cref{thm:pcp-equivalent} we show that these formulations are equivalent. We expect the new simplified formulation in terms of \pcircuit to be useful for future work.

\subsubsection{The \gcircuit Problem}

A generalized circuit is defined in the following way.

\begin{definition}[Generalized Circuit~\citep{ChenDT09-Nash}]
A generalized circuit is a tuple $(V, T)$, where $V$ is a set of nodes, and $T$
is a set of gates. Each gate $t \in T$ is a five-tuple $(G, u, v, w, c)$,
where $G$ is a gate type from the set $\{G_c, G_{\times c}, G_=, G_+, G_-, G_<,
G_\lor, G_\land, G_\lnot\}$, $u, v \in V \cup \{\textsf{nil}\}$ are input
variables, $w \in V$ is an output variable, and $c \in [0, 1]
\cup \{\textsf{nil}\}$ is a rational constant.

The following requirements must be satisfied for each gate 
$(G, u, v, w, c) \in T$.
\begin{itemize}
\item 
$G_c$ gates take no input variables and uses a constant in $[0, 1]$. So $u = v = \textsf{nil}$ and $c \in [0, 1]$ whenever $G = G_c$.

\item

$G_{\times c}$ gates take one input variable and a constant. So $u \in V$, $v =
\textsf{nil}$, and $c \in [0, 1]$ whenever $G = G_{\times c}$. 

\item 
$G_{=}$ and $G_{\lnot}$ gates take one input variable and do not use a constant. So $u \in V$,
$v = c = \textsf{nil}$, whenever $G \in \{G_{=}, G_{\lnot}\}$. 

\item All other gates take two input variables and do not use a constant. So 
$u \in
V$, $v \in V$, and $c = \textsf{nil}$ whenever $G \notin \{G_c, G_{\times c},
G_{=}, G_{\lnot}\}$. 

\item Every variable in $V$ is the output variable for exactly one gate. More
formally, for each variable $w \in V$, there is exactly one gate $t \in T$ such
that $t = (G, u, v, w, c)$.  
\end{itemize}
\end{definition}

For constants $\eps, \delta \in [0,1]$, the \edgcircuit problem is defined as follows. Given a generalized circuit
$(V, T)$, find a vector $\valonly \in [0, 1]^{|V|}$ such that a $(1- \delta)$
fraction of the gates in $T$ satisfy the following constraints.

\begin{center}
\begin{tabular}{l|l}
Gate & Constraint \\ \hline
$(G_c, \textsf{nil}, \textsf{nil}, w, c)$ & $\val{w} = c \pm \eps$ \\
$(G_{\times c}, u, \textsf{nil}, w, c)$ & $\val{w} = \min(\val{u} \cdot c, 1) \pm \eps$ \\
$(G_{=}, u, \textsf{nil}, w, \textsf{nil})$ & $\val{w} = \val{u} \pm \eps$ \\
$(G_{+}, u, v, w, \textsf{nil})$ & $\val{w} = \min(\val{u} + \val{v}, 1) \pm \eps$ \\
$(G_{-}, u, v, w, \textsf{nil})$ & $\val{w} = \max(\val{u} - \val{v}, 0) \pm \eps$ \\
$(G_{<}, u, v, w, \textsf{nil})$ & $\val{w} = 
    \begin{cases} 
    1 \pm \eps & \text{if } \val{u} < \val{v} - \eps \\ 
    0 \pm \eps & \text{if } \val{u} > \val{v} + \eps 
    \end{cases}$ \\
$(G_{\lor}, u, v, w, \textsf{nil})$ & $\val{w} = 
    \begin{cases} 
    1 \pm \eps & \text{if } \val{u} \geq 1 - \eps \text{ or } \val{v} \geq 1 - \eps \\ 
    0 \pm \eps & \text{if } \val{u} \leq \eps \text{ and } \val{v} \leq \eps
    \end{cases}$ \\
$(G_{\land}, u, v, w, \textsf{nil})$ & $\val{w} = 
    \begin{cases} 
    1 \pm \eps & \text{if } \val{u} \geq 1 - \eps \text{ and } \val{v} \geq 1 - \eps \\ 
    0 \pm \eps & \text{if } \val{u} \leq \eps \text{ or } \val{v} \leq \eps
    \end{cases}$ \\
$(G_{\lnot}, u, \textsf{nil}, w, \textsf{nil})$ & $\val{w} = 
    \begin{cases} 
    1 \pm \eps & \text{if } \val{u} \leq \eps \\
    0 \pm \eps & \text{if } \val{u} \geq 1 - \eps
    \end{cases}$ \\
\end{tabular}
\end{center}
Here the notation $a = b \pm \eps$ is used as a shorthand for $a \in [b - \eps, b + \eps]$.

The \pcppad conjecture was originally introduced by \cite{BPR16} and formulated as follows.\footnote{\citet{BPR16} actually give a more specific conjecture regarding the existence
of a quasi-linear reduction from End-of-the-Line, which is the canonical
\ppad/-complete problem, to $\edgcircuit$. However, all subsequent works use the weaker conjecture that we also use in this work~\citep{Rubinstein16-two-player,park2023multi,daskalakis2023complexity}.}

\bigskip

\noindent
{\bf \pcppad Conjecture \textmd{(\gcircuit version)}.}
\textit{There exist constants $\eps > 0$ and $\delta > 0$ such that $\edgcircuit$ is
\ppad/-hard.}

\bigskip

As mentioned above, in \cref{thm:pcp-equivalent} we show that this is equivalent to our formulation in terms of \pcircuit presented earlier. For the purpose of proving that some problem is \ppad/-hard assuming the \pcppad conjecture, it is easier to use the \pcircuit version of the conjecture, since the gates are easier to implement. However, for the purpose of showing that the \pcppad conjecture is \emph{necessary} to show \ppad/-hardness for some problem of interest, it is more convenient to have a formulation of the conjecture that uses a more expressive problem. Next, we present a third equivalent version of the conjecture in terms of a generalization of the \gcircuit problem.

\subsubsection{The \gcircuitp Problem}

When we reduce to \gcircuit, it will be more convenient to express our circuits
using more general gates. Specifically, we define the \edgcircuitp
problem to be the same as the \edgcircuit problem but with the following
modifications. Fix some absolute constants $L$ and $U$ with $L < U$, and
another absolute constant $b$.
\begin{itemize}
\item In this version of the problem, each variable $v
\in V$ comes equipped with a rational lower bound $v_l$ and a rational upper
bound $v_u$, where $v_l < v_u$ and $v_l, v_u \in [L,U]$. Then, instead of seeking a solution $\valonly \in [0, 1]^{|V|}$,
we seek a solution $\valonly$ where each $x_v$ is constrained to lie in the
range $[v_l, v_u]$. 
\item
We extend the $G_{\times c}$ gate to permit $c$ to be an arbitrary
rational in $[U, L]$, rather than restricting $c$ to lie in $[0, 1]$. 
\item We replace the $G_+$, $G_-$, and $G_{\times c}$ gates with versions that do not take a
$\min$ or $\max$ with $1$ or $0$, but instead use the bounds for the output
variable. That is, we use the following definitions.
\begin{itemize}
\item The $(G_{+}, u, v, w, \textsf{nil})$ gate now requires that $\val{w} =
\max(\min(\val{u} + \val{v}, w_u), w_l) \pm \eps$.
\item The $(G_{-}, u, v, w, \textsf{nil})$ gate now requires that $\val{w} =
\max(\min(\val{u} - \val{v}, w_u), w_l) \pm \eps$.
\item The $(G_{\times c}, u, \textsf{nil}, w, c)$ gate now requires that
$\val{w} = \max(\min(\val{u} \cdot c, w_u), w_l) \pm \eps $.
\end{itemize}
Note that we now need to bound these operations from above and below, since we
allow negative variables in \gcircuitp, and we also allow negative
multiplicands in our $G_{\times c}$ gates.
\item We introduce explicit gates that allow us to take a $\min$ or $\max$ with a
rational value $c \in [L,U]$. 
\begin{itemize}
\item A $(G_{\max}, u, \textsf{nil}, w, c)$ gate requires that 
$\val{w} = \max(\val{u}, c) \pm \eps$. 
\item A $(G_{\min}, u, \textsf{nil}, w, c)$ gate requires that $\val{w} = \min(\val{u}, c) \pm \eps$.
\end{itemize}
\item For the comparison gate $G_<$, we require that the output node has lower bound $0$ and upper bound $1$. For the logical gates $G_{\lor}$, $G_{\land}$, and $G_{\lnot}$, we require that the input and output nodes have lower bound $0$ and upper bound $1$. For the constant gate $G_c$, we require that the constant $c$ lies between the lower and upper bound of the output node. Finally, for the equality gate $G_=$, we require that the bounds on the input and output are the same.
\end{itemize}

It is not hard to reduce this new
version of the problem to the original (see \cref{thm:pcp-equivalent}). As a result, we obtain the following third equivalent formulation of the conjecture.

\bigskip

\noindent
{\bf \pcppad Conjecture \textmd{(\gcircuitp version)}.}
\textit{There exist constants $\eps > 0$ and $\delta > 0$ such that $\edgcircuitp$ is \ppad/-hard.}

\bigskip

We will use this version to prove that the \pcppad conjecture is \emph{necessary} to obtain \ppad/-hardness for our market equilibrium problem.

\subsubsection{Equivalence of the three formulations}

In \cref{sec:app:pcppad-equivalence} we prove the following theorem which shows that all these formulations are indeed equivalent.

\begin{restatable}{theorem}{pcpequivalence}
\label{thm:pcp-equivalent}
The following are all equivalent formulations of the \pcppad conjecture:
\begin{enumerate}
    \item There exists a constant $\delta > 0$ such that $\dpcircuit$ is \ppad/-hard, even when every node is the input to exactly one gate.
    \item There exists a constant $\delta > 0$ such that $\dpcircuit$ is \ppad/-hard.
    \item There exist constants $\eps > 0$ and $\delta > 0$ such that $\edgcircuit$ is \ppad/-hard.
    \item There exist constants $\eps > 0$ and $\delta > 0$ such that $\edgcircuitp$ is \ppad/-hard.
\end{enumerate}
\end{restatable}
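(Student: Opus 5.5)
The plan is to prove the cycle of implications (1)$\Rightarrow$(2)$\Rightarrow$(3)$\Rightarrow$(4)$\Rightarrow$(1). Each step is a polynomial-time reduction that maps instances to instances, maps approximate solutions back, and loses only a constant factor in $\delta$ (and in $\eps$).

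\textbf{Easy steps.} (1)$\Rightarrow$(2) is immediate: the restricted instances are a special case.

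For (2)$\Rightarrow$(3), reduce \dpcircuit to \edgcircuit. Replace each \NAND gate by a constant-size gadget, namely a $G_\land$ followed by a $G_\lnot$, followed by purification. Replace each \PURE gate by a constant-size gadget built from a $G_<$ comparison against a constant $1/2$ and a second comparison against a slightly shifted threshold, so that at least one output is $\eps$-close to a bit and both outputs are correct when the input is pure. Decode a value $y$ as $0$ if $y\le\eps$, as $1$ if $y\ge1-\eps$, and as $\garbo$ otherwise. Each gadget has at most $k=O(1)$ gates. If a $(1-\delta')$-fraction of the gates is satisfied, then at most $\delta'|T|\le \delta' k|C|$ gadgets contain a violated gate, and every gadget with all gates satisfied yields a satisfied pure gate. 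Choosing $\delta'=\delta/k$ therefore gives (2). Constant gates used as thresholds are shared, or counted per gadget, so the accounting remains valid.

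For (3)$\Rightarrow$(4), note that \edgcircuit is essentially a special case of \edgcircuitp with $L=0$, $U=1$, and bounds $[0,1]$ on every variable. One gate type does not match syntactically: the $G_-$ gate with clipping at $0$ corresponds to clipping at the lower bound $w_l=0$, so it matches as well. This step is therefore the identity reduction.

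\textbf{Main step: (4)$\Rightarrow$(1).} Reduce \edgcircuitp to \dpcircuit, with every node used as an input exactly once. I would proceed in two stages.

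First, normalize \edgcircuitp to ordinary \edgcircuit. Use the affine rescaling $x\mapsto (x-L)/(U-L)$, which maps all variables into $[0,1]$. Each extended gate becomes a constant-size subcircuit of standard gates. The general $G_{\times c}$ with $|c|$ bounded by a constant is implemented by repeated additions and a multiplication by a constant in $[0,1]$. Clipping to $[w_l,w_u]$ and the $G_{\min}/G_{\max}$ gates are implemented via $G_-$ identities such as $\min(a,c)=a-\max(a-c,0)$. Because the constants $L,U,b$ are absolute, each gadget has constant size and amplifies $\eps$ by only a constant factor. So we take $\eps'$ as a small constant multiple of $\eps$ and $\delta'=\delta/k$, exactly as in the previous accounting.

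Second, reduce $(\eps,\delta)$-\gcircuit to \dpcircuit. Here I would follow the unconditional reduction from \pcircuit to $\eps$-\gcircuit of \citep{DFHM24}, but in reverse. That is, take the known robust reduction \emph{from} \gcircuit \emph{to} \pcircuit, in which each real variable is encoded by a constant number of Boolean nodes. This is possible because $\eps$ is constant: a discretization into $O(1/\eps)$ levels with \PURE-based comparators suffices, and each gate becomes a constant-size \pcircuit gadget. The key point to check is \emph{locality}. Every gadget must have constant size, and an assignment satisfying all gates of a gadget must decode to values satisfying the corresponding arithmetic gate up to $\eps$. A $\delta/k$ fraction of violated pure gates then hits at most a $\delta$-fraction of the arithmetic gates.

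Finally, enforce the restriction that every node is the input to exactly one gate. Insert a fan-out tree of \PURE gates, each of which copies its input to two outputs, for each node of fan-out $d$. Since $\sum d$ is linear in the number of gates, this multiplies the gate count by a constant. Unused outputs get sink gates.

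\textbf{Expected obstacle.} The main obstacle is this last part. A single violated copy gate in a fan-out tree can corrupt many downstream gadgets, so violations are not local. To handle it, I would bound the fan-out by a constant before building the trees: first transform the \gcircuit instance into one where each variable is used at most a constant number of times. One could do this with $G_=$ chains, but a broken chain link again spreads errors. A cleaner option is to rely on the fact that the hardness instances of \citep{Rubinstein18-Nash-inapproximability,DFHM24} already have bounded degree. Alternatively, one can observe that \PURE gates with a $\garbo$ input still produce some pure output, so a broken copy gate only affects the gadgets in its own subtree, and charge each such failure to the constant-size subtree.
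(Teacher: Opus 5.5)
\textbf{The gap is fan-out.} The locality accounting in your main step fails for arbitrary instances, and the cause is unbounded fan-out, not only the copy trees. In the unary-encoding reduction, a gadget for a gate $g$ with all its \pcircuit gates satisfied is correct only if the vectors it reads are well formed (sorted, at most one \garbo). That depends on the gadgets, and any copy trees, of the gates producing $g$'s inputs. So one violated \pcircuit gate can spoil a producer gadget and, through it, \emph{every} consumer of that variable. In an arbitrary \edgcircuit instance a variable may feed $\Theta(|T|)$ gates. Hence your claim that a $\delta/k$ fraction of violated pure gates hits at most a $\delta$ fraction of arithmetic gates is false without a fan-out bound, even before copy trees are added.

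None of your three remedies supplies that bound:
\begin{itemize}
\item $G_{=}$ chains or trees propagate a broken link downstream, as you note, and they also accumulate $\eps$ error with depth.
\item Appealing to the bounded degree of the instances of \citet{Rubinstein18-Nash-inapproximability} and \citet{DFHM24} is not legitimate. Those are hardness results for $\delta=0$, whereas formulations 3 and 4 assert hardness over arbitrary instances. Your reduction must therefore handle whatever instances witness the hypothesised hardness, and these may have unbounded fan-out.
\item The charging argument assumes the copy subtrees have constant size, i.e.\ that fan-out is already bounded, so it is circular.
\end{itemize}

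\textbf{The missing ingredient} is a fan-out reduction for the $(\eps,\delta)$ problem that loses only a constant factor in $\delta$. The paper takes it from \citet{BPR16}: $(\eps,\delta)$-\gcircuit with unbounded fan-out reduces to $(\eps,\delta/4)$-\gcircuit with fan-out at most $\lceil 8/\delta \rceil$. Only after this do the \PURE copy trees on each unary vector have $O(M/\delta)=O(1)$ gates. They can then be absorbed into the producer's gadget, so each violated \pcircuit gate affects only constantly many arithmetic gates, and the accounting goes through with $\delta'$ depending only on $\eps$ and $\delta$.

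\textbf{A lesser issue.} Your affine map $x\mapsto (x-L)/(U-L)$ onto exactly $[0,1]$ introduces constant offsets. Clipping of intermediate values at $0$ and $1$ then breaks the naive rewrites. For example, take $L=-1$, $U=1$, $c=2$, $u=0.4$: computing $2u'=1.4$ by addition clips to $1$ before the offset $-1/2$ is added, giving $w=0$ instead of $0.8$. This can be repaired with headroom and careful ordering of operations. The paper avoids it by scaling linearly into $[-1,1]$, eliminating large multipliers while intermediate bounds are still free, and then splitting each variable into positive and negative parts with one-sided gadgets.

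Your steps (1)$\Rightarrow$(2), (2)$\Rightarrow$(3) (with your own gadgets in place of the cited construction of \citet{DFHM24}), and (3)$\Rightarrow$(4) are fine.
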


\section{Technical Overview}

\subsection{Hardness for Fisher Markets from the \pcppad conjecture}

We show hardness results for Fisher markets by reducing from the \dpcircuit
problem, whose \ppad/-hardness for some constant $\delta$ is equivalent to the \pcppad conjecture (by \cref{thm:pcp-equivalent}). The starting point for obtaining such a reduction is to examine the existing reduction from \pcircuit to Fisher markets with approximate clearing (but exact optimal bundles) \citep{DFHM24-Fisher_constant}. Unfortunately, in order for that reduction to work with $(1-\deltam)$-optimal bundles, we would need to set $\deltam$ to an inverse-polynomial value. Indeed, setting $\deltam$ to be a constant fails due to various parameters in the reduction being set to inverse-polynomial values.

This, in turn, is due to the fact that this construction uses a so-called \emph{reference good}. This is a good that is desired by a buyer with an enormous budget (compared to the other buyers' budgets), who thus spends all their money on that good. As a result, the price of that good is very stable, as it barely moves (in relative terms) depending on whether any of the other buyers spend any money on it or not. This is very useful for the reduction, as it provides a stable reference price with respect to which other prices can be defined. However, this imbalance between an enormous budget and smaller budgets yields inverse-polynomial parameters.

We bypass this obstacle by avoiding the use of a reference good altogether. This yields a simpler, more direct reduction where all parameters are constant, including the slopes of the utility functions and the degrees in the graph of interactions between buyers and goods. As a result, we can set $\deltam$ to be a sufficiently small constant and thus obtain \ppad/-hardness assuming the \pcppad conjecture. Furthermore, by setting $\deltam$ to a sufficiently small inverse-polynomial value, we also obtain unconditional \ppad/-hardness for that regime.

The instances we construct have two further properties that yield improvements over prior work, even for the case where $\deltam = 0$.
\begin{itemize}
    \item The reduction works for any clearing parameter $\epsm < 1/9$, improving upon the previously best $\epsm < 1/11$ from \citep{DFHM24-Fisher_constant}, even when $\deltam = 0$.
    \item The constructed market satisfies the CEEI property, i.e., all the buyers have identical budget. No hardness result for this setting was known, even when $\epsm = \deltam = 0$.
\end{itemize}
To summarize, this reduction is a significant simplification of a construction appearing in prior work that yields stronger results. For the details, see \cref{sec:hardness}.

The intuition for why we need the \pcppad conjecture in order to show hardness for constant $\deltam$ is the following. If $\deltam$ is some constant, say $0.1$, then that means that every buyer can spend $10\%$ of their budget on arbitrary goods, as long as they spend their remaining $90 \%$ optimally. Indeed, doing so will still guarantee $90 \%$ of the optimal total utility they could have achieved by spending their whole budget optimally. But now this means that $10 \%$ of the total amount of money available in the market can be spent in a completely arbitrary manner. In particular, this ``rogue budget'' can completely disrupt the functionality of some gadgets implementing gates. However, because it is only a constant fraction of the total amount of money, we can show that this rogue budget can only disrupt a constant fraction of the gadgets. This is where the \pcppad conjecture comes in, as it precisely states that \pcircuit remains hard even if a constant fraction of the gates malfunction. In the next section, we show that the conjecture is in fact needed to prove hardness, at least for a family of markets with some nice structure.

\subsection{Hardness for Fisher Markets Implies the \pcppad conjecture}

We now give a high-level overview of our result that if $(0,
\delta)$-approximate market equilibrium is \ppad/-hard for some constants
$\eps$ and $\delta$, then the \pcppad conjecture holds. Full details and proofs
can be found in~\cref{sec:markettopcp}.

\paragraph{\bf Reducible markets.}

We say that a market is \emph{reducible} if it satisfies the following
properties.
\begin{itemize}
\item It has constant degree $d$, meaning that each buyer is interested in at
most $d$ goods, and each good is desired by at most $d$ buyers. 

\item The budgets of all buyers lie in a range $[\bmin, \bmax]$, where $\bmin$
and $\bmax$ are both constants.

\item Each buyer has an SPLC utility function where each utility-function uses
a constant number of segments, and the slope of each
utility-function segment is either zero or lies in a range $[1, \kappa]$ where $\kappa$ is a
constant.
\end{itemize}

We show that if it is \ppad/-hard to find an $(\eps, \delta)$-approximate market
equilibrium in a reducible Fisher market for some constants $\eps$ and
$\delta$, then the \pcppad conjecture holds. We note in particular that
the family of instances generated by our hardness results is reducible, so we
get that the \pcppad conjecture holds if and only if it is \ppad/-hard to
find an $(\eps, \delta)$-approximate market equilibrium in a reducible Fisher
market. 

We prove this result using a four-step procedure,
where we first make a single query to 
$(\epsc, \deltac)$-\gcircuitp for some suitably small constants $\epsc$ and
$\deltac$, and then use the result of this query to compute, in polynomial
time, an 
$(\eps, \delta)$-approximate market equilibrium of the Fisher market. Thus if
it is \ppad/-hard to find an
$(\eps, \delta)$-approximate market equilibrium for constants~$\eps$ and~$\delta$, it must also be \ppad/-hard to find 
a $(\epsc, \deltac)$-\gcircuitp solution for constants~$\epsc$ and~$\deltac$,
which implies the \pcppad conjecture. 
We give a high-level
overview of the reduction here. Full details and proofs can be found in~\cref{sec:markettopcp}.

\subsubsection{\texorpdfstring{Step 1: Reduce to $(\epsc, \deltac)$-\gcircuitp}{Step 1: Reduce to \gcircuitp}}

We start by formulating the $(\eps, \delta)$-approximate market equilibrium problem
as an $(\epsc, \deltac)$-\gcircuitp instance for some constants $\epsc$
and~$\deltac$.

The variables of this instance will encode a price vector $p$ that assigns a
price $p_j$ to each good~$j$, and a vector $q$ such that for each buyer $i$,
good $j$, and utility-function segment $k$, the variable $q_{i,j,k}$ gives the
total amount of money that buyer $i$ spends on that utility-function segment.
Setting $x_{i,j} = \sum_k q_{i,j,k}/p_j$ then allows us to recover an allocation
from the $\gcircuitp$ solution.

We show that a system of constraints can be imposed on these variables to
ensure that, in an $(\epsc, \deltac)$-\gcircuitp solution, we have that there
is a constant $c$ such that the following properties hold.
\begin{itemize}
\item A $(1 - c \cdot \deltac)$-fraction of the goods $(c \cdot \epsc)$-clear.
\item A $(1 - c \cdot \deltac)$-fraction of the buyers have $(1 - c \cdot
\epsc)$-optimal allocations and satisfy their budget constraints with equality.
\end{itemize}
The fact that only a 
$(1 - c \cdot \deltac)$-fraction of the goods and buyers satisfy their
constraints arises from the fact that a $\deltac$-fraction of the constraints
in our \gcircuitp instance fail. The number of constraints used for each buyer is
proportional to the number of non-zero utility-function segments that the buyer
has, which is a constant in a reducible market. Likewise, the number of
constraints used for each good is proportional to the non-zero utility-function
segments for that good, which is also constant in a reducible market. 
So the $\deltac$-fraction of constraints that fail then translate to a 
$(c \cdot \deltac)$-fraction of buyers and goods for which at least one
constraint fails for some constant $c$.

\subsubsection{Step 2: Fix the broken buyers}

In Step 2 we address the $(c \cdot \deltac)$-fraction of buyers that do not
receive a $(1 - c \cdot \epsc)$-optimal allocation at the end of Step 1. We do
this simply by making those buyers buy an optimal allocation at the
current price vector $p$, while leaving all other buyers unchanged. This then
ensures that all buyers receive a $(1 - c \cdot \epsc)$-optimal allocation, and
that all buyers satisfy their budget constraints with equality.

The cost of doing this is that any buyer who changes their allocation in Step 2
may affect the clearing constraint of the goods that they move money from or
to. This is fine, however, because the degree $d$ of the market is constant,
and at most a $(c \cdot \deltac)$-fraction of the buyers change their
allocation during Step 2. So at most a 
$(c \cdot \deltac)$-fraction of the goods violated their clearing constraints
at the start of Step 2, and at most $(d \cdot c \cdot \deltac)$-fraction of the
goods have their clearing constraints broken by the shifting of allocations in
Step 2. 

To summarize, at the end of Step 2, we have a price vector $p$, a revised allocation
$x$, and a constant $c$ (larger than the constant used in Step 1) such that the
following hold.
\begin{itemize}
\item A $(1 - c \cdot \deltac)$-fraction of the goods $(c \cdot \epsc)$-clear.
\item All buyers have a 
$(1 - c \cdot \epsc)$-optimal allocation and satisfy their budget constraint with equality.
\end{itemize}

\subsubsection{Step 3: Fix the over-clearing}

In Step 3 we receive a price vector $p$, an allocation $x$ and a constant $c$
such that a $(1 - c \cdot \deltac)$-fraction of the goods $(c \cdot
\epsc)$-clear. We say that a good $j$ \emph{over-clears} if $\sum_{i} x_{i,j} >
1 + c \cdot \epsc$, and we say that a good \emph{under-clears} if $\sum_{i}
x_{i,j} < 1 - c \cdot \epsc$. In Step 3, our goal is to remove all of the
over-clearing in the current allocation, and this is by far the most
technically involved step.

\paragraph{\bf Burning.}

In Step 3 we will temporarily allow each buyer to \emph{burn} up to $\epsc$
money. When a buyer burns money, they do not spend it on any goods, and they
instead just destroy it. Technically, this is implemented by loosening the
budget constraint so that buyer $i$ is required to spend at least $e_i - \epsc$
and at most $e_i$ money. 

This is only a temporary measure. We will use burning as a technical tool to
help us reduce over-clearing goods in Step 3, and we will later make use of the
burned money in Step 4 to fix the under-clearing goods, and thereby restore the
budget constraints with equality for all buyers. 

\paragraph{\bf The Step 3 invariant.}

Throughout Step 3, we maintain the following invariant. For each buyer $i$, we
define $\marbpb_i(p,x)$ to be the bang-per-buck of the most desirable
utility-function segment (under prices $p$) that buyer $i$ does not fully buy.
We say that buyer $i$ satisfies the invariant if they do not spend any money on
segments that have bang-per-buck that is strictly worse than 
$\marbpb_i(p,x)/(1+c_3 \cdot \epsc)$, where $c_3$ is some constant. It is relatively easy to show that any
buyer that satisfies the invariant will receive an approximately optimal
allocation, and so we will maintain this invariant for all buyers throughout Step
3.

We also define the \emph{window} for each buyer to be the interval
$[\marbpb_i(p,x)/(1+c_3 \cdot \epsc), \marbpb_i(p,x)]$. We say that a
utility-function segment is \emph{in the window } if its bang-per-buck lies in
the window interval. An important property of the window is that a buyer can
shift money between utility-function segments that lie within the window
without affecting the invariant.

Although we do not necessarily satisfy the invariant at the start of Step 3,
the properties that we prove for Step 1 ensure that each buyer buys their
segments in approximate bang-per-buck order, and we show that a straightforward
preprocessing step can use this to transform the allocation at the end of Step
2 into an allocation that meets the invariant, without significantly altering
the clearing constraints of the goods. 

\paragraph{\bf Shift-and-burn.}

Our procedure for Step 3 involves alternating two algorithms, the first of which
we call shift-and-burn. The purpose of this algorithm is to shift money from
the over-clearing goods on to buyers, who then burn that money.

\begin{figure}
\begin{center}
\begin{subfigure}{0.4\textwidth}
    \begin{center}
    \begin{tikzpicture}
    \node at (0, 1.5) (g1upper) {};
    \node[circle,draw] at (0, 0) (g1) {$g_1$};
    \node[circle,draw] at (2, 0) (g2) {$g_2$};

    \node[draw, inner sep=0.25cm] at (1, -2) (b1) {$b_1$};
    \node[draw, inner sep=0.25cm] at (3, -2) (b2) {$b_2$};
    \node at (1, -3) (b1lower) {};

    \path[->,draw]
        (b1) edge[left] node {\footnotesize $\frac{0.6}{1}$} (g1)
        (b1) edge[left] node {\footnotesize $\frac{0.2}{0.4}$} (g2)
        (b2) edge[right] node {\footnotesize $\frac{0.1}{1}$} (g2)
        ;
    \end{tikzpicture}
    \end{center}
    \caption{Two buyers with in-window segments to two goods, where good
$g_1$ over-clears.}
    \label{fig:flowmarket}
\end{subfigure}
\hskip 0.1\textwidth
\begin{subfigure}{0.4\textwidth}
    \begin{center}
    \begin{tikzpicture}
    \node at (-1, 1) (g1upper) {s};
    \node[circle,draw] at (-1, 0) (g1) {$g_1$};
    \node[circle,draw] at (3, 0) (g2) {$g_2$};

    \node[draw, inner sep=0.25cm] at (1, -2) (b1) {$b_1$};
    \node[draw, inner sep=0.25cm] at (5, -2) (b2) {$b_2$};
    \node at (1, -3) (b1lower) {t};
    \node at (5, -3) (b2lower) {t};

    \path[->,draw]
        (g1upper) edge[right] node {} (g1)
        (b1) edge[bend left, left] node {\scriptsize $0.4 p_{g_1}$} (g1)
        (g1) edge[bend left, right] node {\scriptsize $0.6 p_{g_1}$} (b1)
        (b1) edge[bend left, right] node {\scriptsize $0.2 p_{g_2}$} (g2)
        (g2) edge[bend left, left] node {\scriptsize $0.2 p_{g_2}$} (b1)
        (b2) edge[bend left, right] node {\scriptsize $0.9 p_{g_2}$} (g2)
        (g2) edge[bend left, right] node {\scriptsize $0.1 p_{g_2}$} (b2)
    
        (b1) edge[right] node {\footnotesize $\epsc$} (b1lower)
        (b2) edge[right] node {\footnotesize $\epsc$} (b2lower)
        ;
    \end{tikzpicture}
    \end{center}
    \caption{The resulting flow graph.}
    \label{fig:flowgraph}
\end{subfigure}
\caption{The construction of the flow graph.}
\end{center}
\end{figure}

In~\cref{fig:flowmarket} we show a simple scenario in which there is a
good $g_1$ that over-clears, and a good $g_2$ that does not over- or
under-clear. The two buyers in the example have utility-function
segments to the goods that all lie in their buyer's windows, which are shown as
edges in the figure. The labels of the edges show how much of those utility
function segments is bought in the current allocation. So for example, buyer
$b_1$ has an in-window utility function segment to good $g_1$ with length
$1$, and is currently buying $0.6$ units of that segment, and another
in-window segment to good $g_2$, of which $0.2$ of the maximum $0.4$
units are currently bought. 

Note that buyer $b_1$ can \emph{shift} up to $\epsc$ money from $g_1$ and burn it
instead, and doing so reduces the over-clearing at $g_1$, since fewer units of
$g_1$ would now be bought. Buyer $b_1$ can also shift $\epsc$ money from $g_1$ to
$g_2$, which reduces the over-clearing at $g_1$ while increasing the amount of
money spent on $g_2$. If, at the same time, $b_2$ then shifts $\epsc$ money from
$g_2$ and burns it instead, then the over-clearing at $g_1$ is reduced, while
the clearing constraint at $g_2$ remains unaffected.

We view this shifting process as a flow problem. 
\begin{itemize}
\item The source $s$ of the flow problem has an edge to each over-clearing good
$j$, with the capacity of that edge being the amount of money that would need
to be removed from good $j$ to stop it over-clearing.

\item Each buyer has an edge to the target $t$ of the flow problem with
capacity $\epsc$, which represents the amount of money that they can burn.
In later iterations, when the buyer may have already burned some money, this
capacity will be reduced to ensure that no buyer burns more than $\epsc$
money.

\item For each utility-function segment for buyer $i$ and good
$j$, if the segment lies in buyer $i$'s window, then we add an edge from $j$ to $i$ whose capacity is the total amount of money
that is currently being spent on that utility-function segment, and an edge
from $i$ to $j$ whose capacity is equal to the total amount of extra money that
could be spent on that utility-function segment before it is fully used.
\end{itemize}
\cref{fig:flowgraph} shows the flow problem associated with the scenario
from~\cref{fig:flowmarket}. Note that the edge from $g_1$ to $b_1$ has
capacity $0.6 \cdot p_{g_1}$, since that is the total amount of money that
$b_1$ is currently spending on $g_1$ for this utility-function segment, and
thus the largest amount of money that $b_1$ could shift off this utility-function
segment. Meanwhile, the edge from $b_1$ to $g_2$ has capacity $0.2 \cdot
p_{g_2}$ since that is the total amount of money that $b_1$ could shift on to
that utility-function segment without exhausting it. 

Computing a maximal $s$-$t$ flow then tells us the maximum amount of money that
we can shift from the over-clearing goods to money that is burned by the
buyers. In particular, since we have a flow, any intermediate goods through
which money is routed do not have their clearing constraints changed, since
exactly as much money is spent on them after applying the flow as was
beforehand. 

We will show that applying the shift-and-burn procedure maintains the
invariant. Indeed, by construction, the flow is only able to change the amount
of money spent on segments that lie within their buyer's window. However, it is
possible that $\marbpb_i(p,x)$ changes after we apply the flow, which occurs
whenever we fully buy all of the segments that defined $\marbpb_i(p,x)$ before
the flow is applied. This changes the windows that we must consider in the new
allocation, but we will show that this is not an issue, and that the
invariant is maintained even if $\marbpb_i(p,x)$ changes for some buyer $i$. It
will however be technically convenient to re-run shift-and-burn whenever this
occurs, and we show that only a polynomial number of iterations are
required to reach a shift-and-burn step in which no buyer's window changes.

\paragraph{\bf The restricted flow graph.}

After the last shift-and-burn step, we then construct the \emph{restricted flow
graph}, which is the result of applying the following iterative algorithm to
the flow graph that was considered in the last shift-and-burn iteration. If
$f$ is the maximal $s$-$t$ flow that we computed in the flow graph, then we do
the following.
\begin{enumerate}
\item Add all over-clearing goods to the restricted flow graph.

\item If $j$ is a good in the restricted flow graph, and there is an edge from
$j$ to buyer $i$ that is not saturated by $f$, then we add buyer $i$ to the
restricted flow graph.

\item If $i$ is a buyer in the restricted flow graph, and there is an edge from
$i$ to good $j$ that is not saturated by $f$, then we add good $j$ to the
restricted flow graph.
\end{enumerate}
The restricted flow graph is obtained by repeating points 2 and 3 above until
convergence.

An important property of the restricted flow graph is that, if there are still
over-clearing goods in the market, and if a $(c
\cdot \deltac)$-fraction of the goods are over-clearing for some constant $c$,
then the restricted flow graph contains at most 
a $(c' \cdot \deltac/\epsc)$-fraction of the goods for some constant $c'$. 
This follows from the following two observations.
\begin{itemize}
\item Every buyer in the restricted flow graph fully burns, i.e., they burn
exactly $\epsc$ money. This can easily be shown by contradiction: if a buyer
$i$ in the restricted flow graph did not
fully burn then by construction there is a path in the restricted flow graph
from $s$ to $i$ that only uses unsaturated edges, so we could increase the flow
on those edges in order to burn more at $i$, which would contradict the
maximality of $f$.

\item The total amount of money spent on an over-clearing good is at most some
constant $M$ because each buyer's budget is less than $\bmax$, which is
constant, and since the market has constant degree.
\end{itemize}
So at most $M \cdot (c \cdot \deltac) \cdot |G|$ money is spent on over-clearing goods, and
therefore if the restricted flow graph contains more than 
$$\frac{M \cdot c \cdot \deltac}{\epsc} \cdot |G|$$ 
goods, then all of the over-clearing must have been burned,
meaning that there are no longer any over-clearing goods.

\paragraph{\bf Pump-and-shift.}

The second procedure that we use in Step 3 is called the \emph{pump-and-shift}
procedure. Here we 
\emph{pump} the prices of all goods in the restricted flow graph by multiplying
them by $(1 + \epsc)$. 

After the pump, we then need to repair the allocation by shifting money. In
particular, if a utility-function segment $(j,k)$ was above its buyer's window,
then buyer $i$ will need to spend more money in order to continue to
fully buy that segment and satisfy the invariant. 

To address this, we simply instruct each buyer who is interested in at least
one good that was pumped to shift money from their worst bang-per-buck
utility-function segments on to their best non-fully-bought utility-function
segments, and we order this process so we first shift on to the segments that
have the highest bang-per-bucks. We will show that, after this shifting
operation, all buyers will continue to satisfy the invariant.

The pump-and-shift step
does come with another cost, however, which is that as the buyers shift,
they may introduce under-clearing in goods that are adjacent to the restricted
flow graph. That is, if a buyer shifts money from a segment for their worst
bang-per-buck good $j$ onto another segment for a pumped good
$j'$, then the clearing constraint for $j$ may be affected, and in particular
we may have that $j$ under-clears even if it was not under-clearing beforehand. 

Here our bounds on the size of the restricted flow graph play a crucial role.
The scenario described above only occurs when there is a buyer $i$ that is
interested in goods $j$ and $j'$, and $j'$ is in the restricted flow graph.
Since the market has constant degree $d$, and since at most a $(c' \cdot
\deltac / \epsc)$-fraction of the goods lie in the restricted flow graph for
some constant $c'$, we get that at most a $(d^2 \cdot c' \cdot \deltac /
\epsc)$-fraction of the goods can under-clear as a result of the pump step,
which keeps the under-clearing small enough for our purposes.

\paragraph{\bf Combining into Step 3.}

Step 3 consists of alternating shift-and-burn and pump-and-shift until no
over-clearing remains. We prove that this will terminate after a constant number
of rounds. The argument for this uses the fact that all goods have prices
lying in the range $[\pmin, \pmax]$ for some constants $\pmin$ and $\pmax$
after
after Step 1. We can multiply $\pmin$ by $(1 + \epsc)$ at most
constantly many times before it exceeds $\pmax$, because $\pmin$ and $\pmax$
are both constant, and $\epsc$ is also constant. We then
argue that if a good has price
$\pmax$ then it cannot possibly over-clear, because $\pmax$ was selected to be
so high that even if all buyers that are interested in a good spend all of
their money on it, the good still cannot over-clear. 
Thus, after constantly many rounds all over-clearing goods will have been
pumped to the point where they no longer over-clear. 

At the end of Step 3, we arrive at a price vector $p$, an allocation $x$, and a
constant $c$ (again, larger than the constant from Step 2) such that
\begin{itemize}
\item Every buyer receives a $(1 - c \cdot \epsc)$-optimal allocation, and
burns at most $\epsc$ money.
\item At least $(1 - c \cdot (\deltac/\epsc))$-fraction of the goods $(c \cdot
\epsc)$-clear, and no good over-clears.
\end{itemize}

\subsubsection{Step 4: Fix the under-clearing}

There are only two tasks left at this point: fix the under-clearing and deal
with the money that was burned in Step 3. We can in fact use one of these
problems to fix the other. We instruct each buyer that has burned money to
spend that money on goods that under-clear. In general this will
deliver zero extra utility to those buyers, since they may not be interested in
those goods, but we already have that these buyers are given 
$(1 - c \cdot \epsc)$-optimal allocations, so this does not harm us. 

There may be too much or too little burned money to do this. We use the
following procedures to deal with this.
\begin{itemize}
\item If there is too much burned money, then after fixing all of the
under-clearing, we then instruct the buyers to spend their money equally on
each of the goods in the market. This will push up the demand of all goods
slightly, but we will show bounds on the amount of money burned in Step 3,
which will be sufficient to show that all goods $(c' \cdot \epsc)$-clear after
this operation, for some constant $c' > c$. 

\item If there is too little burned money, then after we exhaust the burn pool,
we then instruct all buyers in the market to shift a small amount of money off
their worst bang-per-buck utility segments, and then to use that money to clear
the under-clearing goods. We will show bounds on the amount of money needed to
clear the under-clearing goods that are sufficient to argue that only a small
amount of money needs to be shifted in this way. In particular, each buyer will
still have a $(1 - g(\epsc, \deltac))$-optimal allocation after this operation for
some function $g$, where $\epsc$ and $\deltac$ now
appear in this bound because at the end of Step 3 the fraction of under-clearing
goods depended on both $\epsc$ and $\deltac$.

Likewise, since the market has constant degree, and
each buyer shifts only a small amount of money, in the worst case the total
amount of money shifted off each good during this operation is still small, and
we maintain that each good $f(\epsc,\deltac)$-clears for some
function~$f$.
\end{itemize}

At the end of Step 4 we have a price vector $p$, an allocation $x$ such that the following hold.
\begin{itemize}
\item Every buyer receives a $(1 - g(\epsc, \deltac))$-optimal allocation, and
satisfies their budget constraint with equality.
\item All goods $f(\epsc, \deltac)$-clear.
\end{itemize}

\subsubsection{Step 5: Obtain exact clearing}

With the results from Step 4, by picking $\epsc$ and $\deltac$ to be suitably
small constants,  it is possible to show that if it is \ppad/-hard to find a
$(\eps, \delta)$-approximate market equilibrium of a reducible market for some
constants $\eps$ and $\delta$, then the \pcppad conjecture holds. We can
however show the stronger theorem that if it is 
\ppad/-hard to find a
$(0, \delta)$-approximate market equilibrium of a reducible market for some
constant $\delta$, then the \pcppad conjecture holds. We do this in Step 5,
where we transform the allocation $x$ and price vector $p$ from Step 4 so that
all goods exactly clear, while ensuring that not too much is lost from the
buyer's optimality during the process.

We do this in a three step process.
\begin{enumerate}
\item The first step is to remove money from the allocation so that every good
has demand exactly equal to $1 - f(\epsc, \deltac)$. Note that Step 4
ensures that all goods $f(\epsc, \deltac)$-clear, so this can be
achieved by removing at most $2 \cdot f(\epsc, \deltac)$ demand from each good.
Since $f(\epsc, \deltac)$ will be chosen to be small, we show that each buyer
does not lose too much of their utility during this operation, although they
are now underspending their budget. 

\item Next, we take the money that was removed in the previous step, and we
instruct all buyers to distribute it evenly across all of the goods in such a
way as to ensure that the demand of all goods is equal to some value $D \in [1
- f(\epsc, \deltac), 1 + f(\epsc, \deltac)]$. This can only increase the
  buyer's utilities, so the optimality of their bundles is not affected, and
this also reestablishes the budget constraint (with equality) of each of the buyers.

\item Finally, we multiply the prices by a factor of $D$, while keeping the
amount of money spent on each good the same. This normalizes the demand of each
good to 1, giving us the exact clearing that we desire. This step can affect
the optimality of the bundles, but we show that since 
$D \in [1 - f(\epsc, \deltac), 1 + f(\epsc, \deltac)]$, and since $f(\epsc,
\deltac)$ is very small, the bundles remain approximately optimal.
\end{enumerate}
At the end of this process we arrive at an allocation $x'$ and a price vector
$p'$ such that the following hold.
\begin{itemize}
\item Every good exactly clears.
\item Every buyer receives a $1 - g'(\epsc, \deltac)$-optimal bundle for some
function $g'$. 
\end{itemize}
As a final step, we then choose constant values for $\epsc$ and $\deltac$ to
ensure that 
$g'(\epsc, \deltac) \le \delta$, 
which then allows us to recover
the $(0, \delta)$-approximate market equilibrium that we seek.

\section{Hardness for Fisher Markets from the \pcppad Conjecture}
\label{sec:hardness}

In this section we prove that, if we assume the \pcppad conjecture, then, for any $\eps < 1/9$, there exist some sufficiently small constant $\delta > 0$ such that computing an $(\eps, \delta)$-approximate market equilibrium in a Fisher market with SPLC utilities is \ppad/-hard.

Our reduction also yields two further results that hold unconditionally, without needing to assume the \pcppad conjecture. Namely, we obtain \ppad/-hardness for the problem, if we either
\begin{itemize}
    \item set $\delta > 0$ to be inverse-polynomial instead of constant, or
    \item do not allow buyers to spend any money on goods for which their utility function is the zero function.
\end{itemize}

Furthermore, all of our hardness results hold for a class of Fisher markets that are particularly ``simple'', in the following sense.

\begin{definition}\label{def:simple-condition}
A family of Fisher markets is \emph{simple} if the utility functions are SPLC \emph{linear capped}, all buyers have budget $e_i = 1$ (CEEI), and there exist constants such that for any market in the family:
\begin{itemize}
    \item Every buyer $i$ has non-zero utility function $u_{ij}$ for at most a constant number of goods $j$.
    \item For every good $j$, the utility function $u_{ij}$ is non-zero for at most a constant number of buyers $i$.
    \item For every buyer $i$, the non-zero slopes in the utility functions $u_{ij}$ have ratio bounded by a constant.
\end{itemize}
Furthermore, the sufficient condition for existence of equilibrium holds, i.e., for every buyer $i$, there exists at least one good $j$ such that $i$ is not satiated with good $j$ (in other words, the function $u_{ij}$ is linear (uncapped) with strictly positive slope).
\end{definition}

\begin{theorem}\label{thm:fisher-pcp-hard}
Fix any constant $\eps < 1/9$. Assuming the \pcppad conjecture, there exists a constant $\delta > 0$ such that finding an $(\eps,\delta)$-approximate market equilibrium in \emph{simple} Fisher markets is \ppad/-hard.
\end{theorem}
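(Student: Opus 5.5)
We reduce from \dpcircuit, in the version where every node is the input to exactly one gate; by \cref{thm:pcp-equivalent} this is \ppad/-hard for some constant $\delta_0>0$ under the conjecture. Given such an instance $(V,C)$, we build a simple Fisher market gadget by gadget, with no reference good. Every buyer has budget $1$, constantly many linear capped utility segments, and slopes from a fixed finite set. Each node $v$ gets a constant number of goods, and $\val{v}$ is read off from their prices. Each gate gets a constant number of buyers linking its input goods to its output goods.

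The intended encoding works as follows. Each node $v$ is represented by a pair of goods $(v^0,v^1)$ with a ``balancing'' buyer who has uncapped linear utility for both. This forces their prices to stay in a fixed constant range, and the relative price tells us the bit: $p_{v^1}$ sufficiently larger than $p_{v^0}$ means $1$, sufficiently smaller means $0$, and anything in between is $\garbo$. The buyers of a \NAND gadget have capped utilities whose caps and slopes are tuned so that their demand pattern forces the output pair's relative price. A \PURE gadget uses two buyers with shifted thresholds, so that at least one output is pushed to a pure value. Because every node feeds exactly one gate, every good has bounded degree, and the market stays simple. The budgets and slopes are chosen so that, in any $(\eps,0)$-equilibrium with $\eps<1/9$, each gadget implements its gate. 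This is where the constant $1/9$ comes from. Demand and supply are balanced so that the clearing slack of a single good, at most $\eps$, cannot flip the interpreted bit. Optimizing the caps over the gadget's margin inequalities yields the threshold $1/9$.

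Next we handle $\delta>0$. A $(1-\delta)$-optimal bundle means each buyer spends all but about $O(\delta)$ of their money on maximum bang-per-buck segments. Since slopes are within a constant ratio and prices are bounded by constants, a buyer earning utility at least $(1-\delta)U^*$ can misspend at most $c\delta$ money. The total rogue money is therefore at most $c\delta|B|$. Call a gadget \emph{corrupted} if its buyers, or buyers adjacent to its goods, misspend more than some threshold $\tau$ in total. By averaging, at most $c\delta|B|d/\tau$ gadgets are corrupted, which is $O(\delta/\tau)\cdot|C|$, since $|B|=\Theta(|C|)$. The remaining gadgets see perturbations of at most $\tau$. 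With $\tau$ a small constant, these perturbations are absorbed into the slack of the $\eps<1/9$ analysis, because we prove the gadget lemmas for clearing error $\eps+O(\tau)$, still below the threshold. Choosing $\delta$ so that the fraction of corrupted gates is at most $\delta_0$ gives a $\delta_0$-solution of \dpcircuit.

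The main obstacle is making the gadgets robust without a reference good. Price levels of different gadgets must stay comparable even when some neighbouring gadgets are corrupted, so that corruption does not propagate along the circuit. I would handle this by making each gadget's correctness depend only on ratios of prices within the goods it touches. Together with a balancing buyer per node pair, this keeps all prices in $[\pmin,\pmax]$ with constant bounds, using the degree bound and unit budgets. Then a broken gadget affects the interpretation only of its own output node, which counts as only one violated gate of the \pcircuit instance.
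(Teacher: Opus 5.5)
\paragraph{The gadgets are missing.} Your plan matches the paper's: reduce from \dpcircuit with fan-out exactly one, avoid a reference good so all parameters are constants, bound the misspent money, and average so only a small fraction of gadgets break. But the technical heart of the theorem is absent. You never specify the \NAND or \PURIFY buyers, and ``optimizing the caps over the gadget's margin inequalities yields the threshold $1/9$'' is an assertion, not an argument. Your encoding is not shown to work either. A balancing buyer with uncapped linear utility for $v^0$ and $v^1$ only constrains which of the two it buys; if it buys both, it essentially pins $p_{v^1}/p_{v^0}$ to its slope ratio. So the bit would have to be carried by buyers you do not describe.

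In the paper, each node is a single good whose bit is read from its absolute price: at least $H=9/2$ means $1$, at most $L=100/a$ means $0$. This is meaningful because all budgets are $1$, and gate correctness is local, so no propagation argument is needed. Every buyer is an \inverter: uncapped slope $1$ on the output good, slope $a$ capped at $t$ units on the input good. Each node good $w$ feeds exactly one gate, whose four downstream inverters have caps summing to $8/9$. So those buyers spend about $\tfrac89 p_w$ on $w$ whenever $p_w<H$, and the upstream inverters must supply the remaining $(\tfrac19\pm\eps)p_w$. If they spend roughly nothing, clearing forces $p_w\le (F+O(\delta))/(\tfrac19-\eps)$, which is meaningful only because $\eps<1/9$. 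If they spend roughly $4$, it forces roughly $p_w\ge 4/(\tfrac19+\eps)\ge H$. \PURIFY uses an auxiliary good and inverters with thresholds $2/9$ and $4/9$; your ``shifted thresholds'' intuition is right there. That construction and its case analysis are what your proposal needs to supply.

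\paragraph{The treatment of $\delta$ has a concrete error.} You call a gadget corrupted according to how much its own buyers, or buyers adjacent to its goods, misspend (averaging via the degree $d$). But misspent money can be placed on any good, including goods the spender has zero utility for. What threatens a gadget is rogue money \emph{arriving} at its goods. Once $\delta$ is small, your own per-buyer bound makes every buyer misspend only $O(\delta)<\tau/d$, so no gadget is corrupted under your definition. Yet money misspent by many distant buyers can still concentrate on a few goods and break them. The paper instead lets $f_j$ be the money spent on $j$ that yields no utility, calls a gate faulty if one of its goods has $f_j\ge F$, and averages over $\sum_j f_j$. A faulty good can break both the gate producing it and the gate consuming it.

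For the same reason, constant bounds $[\pmin,\pmax]$ on all prices cannot be guaranteed. The paper bounds prices (by $20$) only in non-faulty gates, using that at most $12$ unit-budget buyers are interested in each good. So your per-buyer bound cannot rest on global price bounds. Instead, an inverter wasting $z$ loses at least $z/p_{\outg}$ utility while $U^*\le at+1/p_{\outg}$, so $z\le\delta(1+at\,p_{\outg})$. These bounds can be summed using $\sum_j p_j\le|B|/(1-\eps)$. Finally, the rogue threshold must be small relative to the lowest price level ($F\ll 1/a$ in the paper). This is because $\tau$ money on a good of price $p$ is a clearing error of $\tau/p$, not $O(\tau)$.
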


The reduction establishing this theorem also yields the following two unconditional results.

\begin{theorem}\label{thm:fisher-inv-poly-hard}
Fix any constant $\eps < 1/9$. For inverse-polynomial $\delta > 0$ given as part of the input, finding an $(\eps,\delta)$-approximate market equilibrium in \emph{simple} Fisher markets is \ppad/-hard.
\end{theorem}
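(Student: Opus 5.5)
Theorem~\ref{thm:fisher-inv-poly-hard} will be proved with the same reduction as Theorem~\ref{thm:fisher-pcp-hard}, but the reduction will start from exact \pcircuit, which is \ppad/-hard unconditionally by Theorem~\ref{thm:pancircuit}, rather than from \dpcircuit. We may assume every node is the input to exactly one gate. Given an instance $(V,C)$, we build the simple market used for Theorem~\ref{thm:fisher-pcp-hard}: one constant-size gadget of buyers (budget $1$, linear capped utilities) and goods per gate, and a good per node whose price (or amount bought) encodes $0$, $1$ or \garbo. All slopes, caps and degrees are constants. We then set $\delta = 1/(K\cdot|C|)$ for a suitably large constant $K$; this is inverse-polynomial and is written in unary in the input. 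The goal is to show that every $(\eps,\delta)$-approximate equilibrium decodes to an assignment satisfying \emph{all} gates.

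The key step is a \emph{rogue budget} lemma, which is the same robustness statement that drives the \pcppad version. If each buyer's bundle is $(1-\delta)$-optimal, then it is $(1-\delta)$-optimal with respect to a slope-ordered greedy optimum. Because slope ratios are bounded by a constant $\kappa$, a buyer can spend at most $O(\kappa\delta)$ money on segments that are not best available bang-per-buck. Summed over all $n = O(|C|)$ buyers, the total misallocated money is therefore at most $c\,\delta\, n$ for a constant $c$.

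The second ingredient is a local gadget lemma, proved for the constant version. Each gate gadget functions correctly, meaning its output good encodes a value consistent with the \NAND or \PURE truth table given the encodings of its inputs, provided the money misallocated by buyers adjacent to that gadget is below some constant threshold $\tau$. Together with clearing slack $\eps<1/9$, these thresholds separate the $0$, $1$ and \garbo regions. With $\delta = 1/(K|C|)$ and $K$ large, the total rogue money $c\delta n$ is below $\tau$. So no gadget receives enough rogue money to break, every gate is satisfied, and the decoded assignment solves \pcircuit.

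The main obstacle is the gadget analysis itself: designing the \NAND and \PURE gadgets without a reference good, with equal budgets, and with the $1/9$ clearing threshold. That analysis is shared with Theorem~\ref{thm:fisher-pcp-hard}. For the present theorem, the only new point to verify is that the gadget correctness margins are constants independent of $|C|$, so an inverse-polynomial total perturbation stays inside them. Finally, the constructed market is simple (Definition~\ref{def:simple-condition}): each buyer has an uncapped positive-slope good, so the sufficient condition holds, and the reduction runs in polynomial time.
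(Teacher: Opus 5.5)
Your overall plan matches the paper's: reuse the same simple market, start from exact \pcircuit, take $\delta=\Theta(1/|C|)$, and argue that the total rogue money is then below a constant per-gadget threshold, so no gadget breaks. The paper phrases this as choosing $\deltac=1/(2m)$ in \cref{lem:hard:gates-faulty}, so that fewer than one gate is $F$-faulty, with an inverse-polynomial $\deltam$ satisfying \eqref{eq:hard:F-bound}--\eqref{eq:hard:PURIFY-second}. However, your rogue-budget lemma is false as stated, and the global accounting you build on it does not go through.

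The problem is the quantity you bound. Bang-per-buck is $s/p$, so near-ties are determined by prices, and a constant bound on slope ratios does not exclude them. Take an inverter with $\inslope/p_{\ing}=(1+\delta)/p_{\outg}$ and $p_{\ing}\inthres<1$. Spending its whole budget on the output good loses only a $\delta p_{\ing}\inthres/(1+\delta p_{\ing}\inthres)<\delta$ fraction of the optimum. Yet it leaves the best segment unbought and diverts $p_{\ing}\inthres$ money, possibly almost the entire budget, to a worse one.

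In an equilibrium, such misallocation is controlled only by the clearing constraint. When $\inslope/p_{\ing}\le 2/p_{\outg}$, the buyer spends at most $2p_{\outg}\le 4p_{\ing}/\inslope$ on the output good; this is where the $6p_{\ing}/\inslope$ error in \cref{lem:hard:in-lower-spend-indifferent} comes from. That bound is a constant per buyer, independent of $\delta$. So it cannot be charged to a global budget of size $c\delta n$, and no choice $\delta=1/(K|C|)$ pushes it below $\tau$.

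The paper therefore separates two effects. Near-ties are absorbed locally, inside each gadget, by the per-inverter lemmas (\cref{lem:hard:out-upper-spend,lem:hard:in-lower-spend-indifferent}) and the $O(1/\inslope)$ slack in \eqref{eq:hard:NAND-lower}--\eqref{eq:hard:PURIFY-second}. Only money yielding \emph{zero} utility is accounted for globally, via $f_j$: the money spent on good $j$ by any buyer in the market that earns that buyer nothing, because it values $j$ at zero or is beyond its cap. So the local condition concerns rogue money arriving at a gadget's goods from anywhere, not the behaviour of the buyers adjacent to it. Your argument goes through once the rogue-budget lemma is restated for this quantity.

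Even then, the per-buyer bound comes from prices, not slope ratios. The uncapped output good keeps the marginal utility of money at least $1/p_{\outg}$, so a buyer wasting $z$ money satisfies $z/p_{\outg}\le\delta U^*\le\delta(\inslope\inthres+1/p_{\outg})$. That is, $z\le\delta(1+\inslope\inthres p_{\outg})$, which can far exceed $\delta$. The per-buyer bound $\deltam$ asserted in \cref{lem:hard:gates-faulty} overlooks this too, though it only costs constants.

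To finish, sum over buyers. Each good is the output of at most $8$ inverters, and $\eps$-clearing gives $\sum_j p_j\le|B|/(1-\eps)$. Together these yield total waste $O(\inslope\delta|B|)=O(\inslope/K)$. This is below the threshold $F$ demanded by \eqref{eq:hard:F-bound} and \eqref{eq:hard:NAND-upper} once $K$ is a large enough constant depending on $\inslope$ and $\eps$.
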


\begin{theorem}\label{thm:fisher-non-zero-hard}
Fix any constant $\eps < 1/9$. Then, there exists a constant $\delta > 0$ such that finding an $(\eps,\delta)$-approximate market equilibrium in \emph{simple} Fisher markets is \ppad/-hard, \emph{if buyers are not allowed to spend any money on goods for which their utility function is the zero function}.
\end{theorem}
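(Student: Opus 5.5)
The plan is to reuse the same reduction from \pcircuit that underlies \cref{thm:fisher-pcp-hard}, but to start from the exact \pcircuit problem (\ppad/-hard by \cref{thm:pancircuit}, with every node the input to exactly one gate) rather than from \dpcircuit. The point is that the \pcppad conjecture enters the proof of \cref{thm:fisher-pcp-hard} in only one place. With constant $\delta$, each buyer may spend a $\delta$-fraction of its unit budget on arbitrary goods, and this ``rogue budget'' is what can break a constant fraction of the gadgets. I will argue that forbidding spending on zero-utility goods removes exactly this freedom, so every gadget works correctly and we can decode a fully satisfying \pcircuit assignment.

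The construction is the reference-good-free gadget reduction of this section. Each node $v$ gets a good, or a small set of goods, whose price encodes $\val{v}$ with thresholds separating $0$, $1$ and $\garbo$. Each \NAND and \PURE gate gets a constant number of buyers with linear capped utilities, budget $1$, and constant slopes. Because all parameters are constant, the market is simple, and the gadget analysis tolerates $\eps$-clearing for every $\eps<1/9$.

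The key step is a per-gadget robustness lemma. Suppose a buyer spends only on goods in its gadget for which it has nonzero utility, and its bundle is $(1-\delta)$-optimal. Then it can misallocate money only among these constantly many goods, whose bang-per-buck values are constant-ratio apart. Concretely, I will show that $(1-\delta)$-optimality forces the buyer to put at most $O(\delta)$ money on any segment whose bang-per-buck is a constant factor below its maximum. This holds because all slopes lie in a bounded ratio and prices lie in a constant range $[\pmin,\pmax]$, so a wasted unit of money costs a constant fraction of utility.

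Choosing $\delta$ small compared to the gadget's slack margins (those already used to absorb $\eps<1/9$), the $O(\delta)$ perturbation simply enlarges the effective clearing error to $\eps+O(\delta)<1/9$. Each gadget's correctness proof then goes through verbatim, for every gate simultaneously. Reading off $\valonly$ from the prices therefore gives a solution to \pcircuit.

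The main obstacle is making the ``$O(\delta)$ misallocation'' bound uniform. Optimal utility could in principle be small, for example if a buyer is capped on its good goods, so $\delta$-slack measured in utility might translate into large money slack. I would handle this by using the uncapped linear good guaranteed by simplicity. Its bounded price gives the buyer optimal utility $\Omega(1)$, so a $\delta$ utility loss corresponds to $O(\delta)$ wasted money on lower-bang-per-buck segments. I would also check that prices stay within constant bounds at any approximate equilibrium of these markets, which the gadget degree bounds and unit budgets should give. Finally, I will fix $\delta$ as a constant depending only on $\eps$ and the gadget constants.
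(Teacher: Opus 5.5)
Your plan is correct in outline. It uses the same reduction and the same insight as the paper: start from exact \pcircuit (\cref{thm:pancircuit}), and observe that the restriction confines all utility-free spending on a good to its at most $12$ interested buyers, so no gadget can be disrupted.

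You differ in how this is fed into the gadget analysis. The paper stays in units of money. It bounds the wasted expenditure $f_j$ on every good by a constant multiple of $\delta$ and concludes that no gate is $F$-faulty. It then reuses its inverter and gate lemmas unchanged, since these already carry the $\delta$ error terms and use factor-$2$ case splits on bang-per-buck to cope with near-indifference.

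Your detour through an ``effective clearing error $\eps+O(\delta)$'' needs two facts you have not secured.
\begin{enumerate}
\item Converting money into units divides by $p_j$, so you need a lower bound on prices. One holds, but it does not come from degree bounds and budgets. It comes from the $(1-\delta)$-optimality of a buyer whose uncapped output is $j$ together with clearing of $j$, which gives $p_j \ge (1-\delta)/(1+\eps+at)$.
\item After the $O(\delta)$ badly spent money is removed, buyers may still split money freely among segments within your constant factor. So an analysis of exactly optimal bundles does not go through ``verbatim''. You need the paper's $\delta$-robust lemmas, and once you use them the clearing-error reformulation buys nothing.
\end{enumerate}

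In exchange, your worry about capped high-slope segments is justified, and it is sharper than the paper's one-line claim that each buyer wastes at most $\delta$. State your misallocation lemma relative to the optimum's marginal bang-per-buck, not the buyer's maximum, since the capped input segment forces legitimate spending on worse segments. The uncapped output good keeps that marginal value at least $1/p_w$, while $U^\ast \le at+1/p_w$. Together these give a waste bound of $\delta(1+a t p_w)$, the same quantity as in \cref{lem:hard:out-lower-spend}, which can be of order $a\delta$.

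So the needed fix is this upper bound on $U^\ast$ plus $p_w \le 27/2$ (from the restriction and $\eps<1/9$), not a lower bound $U^\ast=\Omega(1)$ as you wrote. It yields $f_j=O(a\delta)$, which is harmless because $a$ is fixed before $\delta$.
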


These three theorems also hold for the \emph{Arrow-Debreu exchange market} setting, where buyers do not have a budget and are instead endowed with goods. This follows by a known direct reduction; see \cref{sec:app:exchange}.

\subsection{\texorpdfstring{Proof of the Hardness Theorems (\cref{thm:fisher-pcp-hard,thm:fisher-inv-poly-hard,thm:fisher-non-zero-hard})}{Proof of the Hardness Theorems}}

We prove \cref{thm:fisher-pcp-hard,thm:fisher-inv-poly-hard,thm:fisher-non-zero-hard} using essentially the same reduction. For the sake of clarity, we focus on proving \cref{thm:fisher-pcp-hard}, the main result of this section. In view of \cref{thm:pcp-equivalent}, \cref{thm:fisher-pcp-hard} follows from the following.

\begin{proposition}\label{prop:fisher-pcp-hard}
Fix any constant $\epsm < 1/9$. For any constant $\deltac > 0$, there exists a constant $\deltam > 0$ such that $\deltac$-\pcircuit reduces in polynomial time to the problem of finding an $(\epsm,\deltam)$-approximate market equilibrium in a family of \emph{simple} Fisher markets.
\end{proposition}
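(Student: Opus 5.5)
We will give a gadget-based reduction in which every \pcircuit node $v$ is encoded by the price of a dedicated good (or the ratio of prices of a pair of goods). All buyers have budget $1$, linear capped utilities with a constant number of goods each, and slopes in a constant range; there is no reference good. By \cref{thm:pancircuit} and the first formulation in \cref{thm:pcp-equivalent}, we may assume every node is the input to exactly one gate, so every good is touched by $O(1)$ buyers and every buyer by $O(1)$ goods. Each gate is built as a constant-size gadget: a NAND gadget and a PURIFY gadget, each with a constant number of buyers and goods. Value $0$ or $1$ of a node is read off as the price of its good being below a low threshold or above a high threshold, and anything in between is read as $\garbo$.

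\textbf{Gadget correctness.} First we prove a local lemma for each gadget under exact optimal bundles and $\epsm$-clearing with $\epsm<1/9$. Suppose all buyers of the gadget spend optimally and all goods of the gadget $\epsm$-clear. Then the gadget's input and output prices, decoded as above, satisfy the truth table of the corresponding gate. The gadget uses capped segments so that a buyer's marginal bang-per-buck comparison between two goods forces money onto a good when its price crosses a threshold. The bound $1/9$ arises from the worst-case clearing slack that must still separate the two price regimes. Second, we add normalization goods ("anchors") per gadget that keep all prices within constant bounds $[\pmin,\pmax]$. Each such good has a buyer who is uncapped on it, which also guarantees the sufficient condition for existence.

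\textbf{Robustness to $(1-\deltam)$-optimal bundles.} For a $(1-\deltam)$-optimal bundle, a buyer loses at most a $\deltam$ fraction of optimal utility. Because slopes lie in a constant ratio range and prices lie in $[\pmin,\pmax]$, this implies the buyer spends all but $O(\deltam)$ of its money on segments of optimal bang-per-buck. Call a buyer \emph{deviant} if its misallocated money exceeds a threshold $\tau$. The total misallocated money is $O(\deltam\,|B|)$, so at most an $O(\deltam/\tau)$ fraction of buyers are deviant. Non-deviant buyers behave as $\tau$-perturbations of optimal buyers. We choose $\tau$ to be a small constant such that the gadget lemma still holds when each buyer may misallocate up to $\tau$ money; this works because $\epsm<1/9$ is strict, leaving slack for $\tau$. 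A gadget is \emph{corrupted} if it contains a deviant buyer or if the extra demand placed on its goods by deviant buyers exceeds what the slack absorbs. Each good has constant degree, and stray money going to a gadget's goods must come from misallocated money. Hence the number of corrupted gadgets is at most $O(\deltam/\tau)\cdot|C|$, using the fact that $|B|=\Theta(|C|)$. We set $\deltam$ so that this is at most $\deltac|C|$. Decoding the prices then gives an assignment satisfying a $(1-\deltac)$ fraction of gates. For $\garbo$ values, we map them to $0$, or to any value in $[0,1]$ as the \dpcircuit definition allows.

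\textbf{Main obstacle.} We expect the hardest step to be designing the gadgets without a reference good while keeping all parameters constant. In particular, price levels must be compared across gadgets, since an output good of one gadget is the input good of another, without any global price scale. Our first attempt will express each node's value as a ratio between a gadget's own anchor good and the node's good, and then show that anchors of adjacent gadgets have constant-ratio prices via a shared buyer. Proving that the PURIFY gadget forces at least one output to be pure under $\epsm$-clearing up to $1/9$ is where the constant must be tuned carefully.
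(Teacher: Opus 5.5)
There is a genuine gap. Your outline has the right skeleton: node values read off the prices of node goods, constant-size gadgets, no reference good, and a counting argument showing that a bounded total of rogue money corrupts few gadgets. But the part that constitutes the proof is missing. The ``local lemma'' for the \NAND and \PURIFY gadgets is asserted without specifying any buyers, caps or thresholds, so neither the gate semantics nor the role of $1/9$ is established.

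The obstacle you single out is also not real. A Fisher market with fixed unit budgets is not scale-invariant: a good that $\epsm$-clears has price within a factor $1\pm O(\epsm)$ of the money spent on it. So absolute prices already carry the scale, and the paper decodes directly ($p_u\ge 9/2\mapsto 1$, $p_u\le 100/a\mapsto 0$, otherwise $\garbo$), with no anchors or ratios. Anchors could not give you global bounds $[\pmin,\pmax]$ anyway, since nothing prevents rogue money from concentrating on a few goods; the paper only proves $p\le 20$ for goods of non-faulty gates.

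The missing construction is the \emph{inverter} buyer: budget $1$, utility $x$ on its output good and $a\min\{x,t\}$ on its input good, with $a$ a large constant. Because every node feeds exactly one gate, each node good is consumed by exactly four inverters with $t=2/9$. These buy, up to small error, $8/9$ of a unit whenever its price is below $9/2$. The remaining $1/9\pm\epsm$ units must then be paid for by the inverters whose \emph{output} is that good, plus at most $F$ of zero-utility money.
\begin{itemize}
\item If those inverters spend essentially nothing, clearing forces $p\le (F+O(\deltam))/(1/9-\epsm)\le 100/a$.
\item If they spend at least about $1$, the price cannot stay below $9/2$, since that would force $p\gtrsim 1/(1/9+\epsm)>9/2$.
\end{itemize}
This is exactly where $\epsm<1/9$ is used. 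For \PURIFY, an auxiliary good feeds two inverters with $t=2/9$ (towards $v$) and one with $t=4/9$ (towards $w$). These leave their outputs undetermined only for $p_{\aux}\in(9/4,9/2)$ and $p_{\aux}\in(100/a,9/4)$ respectively, so at most one output is $\garbo$.

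Your robustness step also fails as stated. It is false that a $(1-\deltam)$-optimal buyer spends all but $O(\deltam)$ of its money on optimal-bang-per-buck segments. If two segments' bang-per-buck differ by a factor $1+\eta$ with $\eta<\deltam$, the buyer can move its whole budget between them at negligible loss. So total ``misallocated'' money is not $O(\deltam|B|)$ and the deviant-buyer count collapses. The argument also leans on global price bounds you do not have.

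The paper separates the two effects. Only money yielding \emph{zero} utility is counted globally, as $f_j$ per good, and a gate is $F$-faulty if one of its goods has $f_j\ge F$. Near-ties are absorbed inside the per-inverter lemmas by clearing rather than by optimality. If $a/p_{\ing}\le 2/p_{\outg}$, then $p_{\outg}\le 2p_{\ing}/a$, and since at most $2$ units of the output good are bought, the buyer spends at most $4p_{\ing}/a$ on it.

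Even zero-utility money needs a global rather than per-buyer bound. An inverter whose utility is dominated by its capped input segment can waste far more than $\deltam$, up to about $\deltam(1+a\,t\,p_{\outg})$. Summing over buyers and using $\sum_j p_j\le |B|/(1-\epsm)$ still gives $O(a\deltam|B|)$, which suffices once $a$ is fixed before $\deltam$.

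Finally, decode $\garbo$ as $\garbo$. Collapsing it to $0$ can turn a satisfied \PURIFY gate (input $\garbo$, outputs $\garbo$ and $1$) into a violated one.
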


The rest of this section is devoted to the proof of this proposition. At the end of the proof, we explain the minor modifications in the argument that yield \cref{thm:fisher-inv-poly-hard,thm:fisher-non-zero-hard}.

\subsubsection{Construction of the market}

Fix any constant $\epsm < 1/9$.

Let $\deltac \in (0,1)$ and consider an instance $(V,C)$ of \pcircuit consisting of \NAND and \PURIFY gates, and such that every node is the input to exactly gate (we can assume this due to \cref{thm:pcp-equivalent}). We wish to construct a Fisher market that is \emph{simple} (in the sense of \cref{def:simple-condition}) and such that for some sufficiently small constant $\deltam \in (0,1)$ (that is allowed to depend on $\deltac$), we have that any $(\epsm, \deltam)$-approximate market equilibrium yields an assignment that satisfies at least a fraction $(1 - \deltac)$ of the gates of the \pcircuit instance.

We begin by describing a primitive that will be used in our construction. All of the buyers will be of the following form.

\paragraph{\bf Inverter buyers.} An \inverter buyer is described by specifying an input good, denoted \ing, an output good, denoted \outg, and a parameter $\inthres \in [0,1]$. The buyer has budget $1$ and the following separable linear capped utility function:
\begin{itemize}
\item For the output good \outg, the buyer has utility $u_{\inverter, \outg}(x) = x$.
\item For the input good \ing, the buyer has utility
\begin{equation*}
u_{\inverter,\ing}(x) = \begin{cases}
\inslope \cdot x & \text{if $x \le \inthres$,} \\
\inslope \cdot \inthres & \text{otherwise}
\end{cases}
\end{equation*}
where $a > 1$ will be set to be a sufficiently large constant later. 
\item For all other goods, the buyer has zero utility.
\end{itemize}
See \cref{fig:inverter} for an illustration.

\begin{figure}
\begin{center}
\begin{tikzpicture}

\tikzset{
    inputedge/.style={thick,densely dashed,->},
    outputedge/.style={thick,->}
}

% Goods
\node[circle,draw,thick,minimum size=1.3cm] (ing)  at (0, 0) {$\ing$};
\node[circle,draw,thick,minimum size=1.3cm] (outg) at (7, 0) {$\outg$};

% Inverter buyer
\node[draw, inner sep=0.35cm, thick] (inv) at (3.5, 0) {\inverter};

% Input-good edge
\draw[inputedge]
    (inv.west) -- node[below,pos=0.55] {$t$} (ing.east);

% Output-good edge
\draw[outputedge]
    (inv.east) -- (outg.west);

\end{tikzpicture}
\end{center}
\caption{An \inverter buyer with parameter $t$.}
\label{fig:inverter}
\end{figure}

\noindent Next, we construct our market using this primitive. For each node $u \in V = [n]$ of the \pcircuit instance, we construct a corresponding good, which we also denote $u$ with a slight abuse of notation. For every gate $g = (T,u,v) \in C$, we introduce some buyers, and possibly an additional good, as described next.

\paragraph{\bf \NAND gates.} For every gate $g = (\NAND, u, v, w)$, i.e., every \NAND gate with inputs $u$ and $v$, and output $w$, we introduce the following \inverter buyers:
\begin{itemize}
\item Four \inverter buyers $\inv_{uw}^1, \inv_{uw}^2, \inv_{uw}^3, \inv_{uw}^4$, each with input good $\ing = u$, output good $\outg = w$, and parameter $t = 2/9$.
\item Similarly, four \inverter buyers $\inv_{vw}^1, \inv_{vw}^2, \inv_{vw}^3, \inv_{vw}^4$, each with input good $\ing = v$, output good $\outg = w$, and parameter $t = 2/9$.
\end{itemize}
This gate does not introduce any additional goods. See \cref{fig:NAND-inverters} for an illustration.

\begin{figure}
\begin{center}
\begin{tikzpicture}

\tikzset{
    inputedge/.style={thick,densely dashed,->},
    outputedge/.style={thick,->}
}

% Goods
\node[circle,draw,thick,minimum size=1.3cm] (u) at (0,  2.1) {$u$};
\node[circle,draw,thick,minimum size=1.3cm] (v) at (0, -2.1) {$v$};
\node[circle,draw,thick,minimum size=1.3cm] (w) at (7, 0) {$w$};

% Inverter buyers from u to w
\node[draw, inner sep=0.22cm, thick] (invuw1) at (2.9,  3.6) {$\inv_{uw}^1$};
\node[draw, inner sep=0.22cm, thick] (invuw2) at (2.9,  2.6) {$\inv_{uw}^2$};
\node[draw, inner sep=0.22cm, thick] (invuw3) at (2.9,  1.6) {$\inv_{uw}^3$};
\node[draw, inner sep=0.22cm, thick] (invuw4) at (2.9,  0.6) {$\inv_{uw}^4$};

% Inverter buyers from v to w
\node[draw, inner sep=0.22cm, thick] (invvw1) at (2.9, -0.6) {$\inv_{vw}^1$};
\node[draw, inner sep=0.22cm, thick] (invvw2) at (2.9, -1.6) {$\inv_{vw}^2$};
\node[draw, inner sep=0.22cm, thick] (invvw3) at (2.9, -2.6) {$\inv_{vw}^3$};
\node[draw, inner sep=0.22cm, thick] (invvw4) at (2.9, -3.6) {$\inv_{vw}^4$};

% Input-good edges
\draw[inputedge] (invuw1.west) -- (u);
\draw[inputedge] (invuw2.west) -- (u);
\draw[inputedge] (invuw3.west) -- (u);
\draw[inputedge] (invuw4.west) -- (u);

\draw[inputedge] (invvw1.west) -- (v);
\draw[inputedge] (invvw2.west) -- (v);
\draw[inputedge] (invvw3.west) -- (v);
\draw[inputedge] (invvw4.west) -- (v);

% Output-good edges
\draw[outputedge] (invuw1.east) -- (w);
\draw[outputedge] (invuw2.east) -- (w);
\draw[outputedge] (invuw3.east) -- (w);
\draw[outputedge] (invuw4.east) -- (w);

\draw[outputedge] (invvw1.east) -- (w);
\draw[outputedge] (invvw2.east) -- (w);
\draw[outputedge] (invvw3.east) -- (w);
\draw[outputedge] (invvw4.east) -- (w);

\end{tikzpicture}
\end{center}
\caption{\NAND gate gadget. All \inverter buyers have $t = 2/9$.}
\label{fig:NAND-inverters}
\end{figure}

\paragraph{\PURIFY gates.} For every gate $g = (\PURIFY, u, v, w)$, i.e., every \PURIFY gate with input $u$ and outputs $v$ and $w$, we introduce a new good $\aux_g$, as well as the following \inverter buyers:
\begin{itemize}
\item Four \inverter buyers $\inv_{u\aux}^1, \inv_{u\aux}^2, \inv_{u\aux}^3, \inv_{u\aux}^4$, each with input good $\ing = u$, output good $\outg = \aux_g$, and parameter $t = 2/9$.
\item Two \inverter buyers $\inv_{\aux v}^1, \inv_{\aux v}^2$, each with input good $\ing = \aux_g$, output good $\outg = v$, and parameter $t = 2/9$.
\item One \inverter buyer $\inv_{\aux w}$, with input good $\ing = \aux_g$, output good $\outg = w$, and parameter $t = 4/9$.
\end{itemize}
See \cref{fig:PURIFY-inverters} for an illustration.
We will usually drop the subscript $g$ from $\aux_g$, as it will be clear from the context.

\begin{figure}
\begin{center}
\begin{tikzpicture}

\tikzset{
    inputedge/.style={thick,densely dashed,->},
    outputedge/.style={thick,->}
}

% Goods
\node[circle,draw,thick,minimum size=1.3cm] (u)   at (0,  2.4) {$u$};
\node[circle,draw,thick,minimum size=1.3cm] (aux) at (7,  2.4) {$\aux$};
\node[circle,draw,thick,minimum size=1.3cm] (v)   at (14, 3.2) {$v$};
\node[circle,draw,thick,minimum size=1.3cm] (w)   at (14, 1.6) {$w$};

% Inverter buyers from u to aux
\node[draw, inner sep=0.22cm, thick] (invuaux1) at (3.2,  3.9) {$\inv_{u\aux}^1$};
\node[draw, inner sep=0.22cm, thick] (invuaux2) at (3.2,  2.9) {$\inv_{u\aux}^2$};
\node[draw, inner sep=0.22cm, thick] (invuaux3) at (3.2,  1.9) {$\inv_{u\aux}^3$};
\node[draw, inner sep=0.22cm, thick] (invuaux4) at (3.2,  0.9) {$\inv_{u\aux}^4$};

% Inverter buyers from aux to v
\node[draw, inner sep=0.22cm, thick] (invauxv1) at (10.7, 3.7) {$\inv_{\aux v}^1$};
\node[draw, inner sep=0.22cm, thick] (invauxv2) at (10.7, 2.7) {$\inv_{\aux v}^2$};

% Inverter buyer from aux to w
\node[draw, inner sep=0.22cm, thick] (invauxw) at (10.7, 1.3) {$\inv_{\aux w}$};

% Input-good edges
\draw[inputedge] (invuaux1.west) -- (u);
\draw[inputedge] (invuaux2.west) -- (u);
\draw[inputedge] (invuaux3.west) -- (u);
\draw[inputedge] (invuaux4.west) -- (u);

\draw[inputedge] (invauxv1.west) -- (aux);
\draw[inputedge] (invauxv2.west) -- (aux);
\draw[inputedge] (invauxw.west)  -- node[below,sloped,pos=0.55] {\scriptsize $t=4/9$} (aux);

% Output-good edges
\draw[outputedge] (invuaux1.east) -- (aux);
\draw[outputedge] (invuaux2.east) -- (aux);
\draw[outputedge] (invuaux3.east) -- (aux);
\draw[outputedge] (invuaux4.east) -- (aux);

\draw[outputedge] (invauxv1.east) -- (v);
\draw[outputedge] (invauxv2.east) -- (v);
\draw[outputedge] (invauxw.east)  -- (w);

\end{tikzpicture}
\end{center}
\caption{\PURIFY gate gadget. All \inverter buyers, except $\inv_{\aux w}$, have $t=2/9$.}
\label{fig:PURIFY-inverters}
\end{figure}

This completes the description of the market. Clearly, the market can be constructed in polynomial time given the \pcircuit instance. The value of $\deltam$ and $\inslope$ will be specified at the end of the proof of correctness. Importantly, since $\inslope$ will be constant, the family of Fisher markets constructed by this reduction will be \emph{simple} (in the sense of \cref{def:simple-condition}). In particular, this also relies on the fact that in the \pcircuit instance, every node is the output of exactly one gate, and the input to exactly gate. We now proceed with the proof of correctness.

\paragraph{\bf Mapping of solutions.}
Given (approximate) market equilibrium prices $p$, we will obtain an assignment $\valonly$ to the nodes of the \pcircuit instance as follows. Let $H = 9/2$ and $L = 100/\inslope$. For any node $u \in V$, we define the assignment as follows:
\begin{itemize}
\item If $p_u \geq H$, then set $\val{u} := 1$.
\item If $p_u \leq L$, then set $\val{u} := 0$.
\item If $p_u \in (L,H)$, then set $\val{u} := \garbo$.
\end{itemize}

In the next section, we prove some general properties of \inverter buyers. These properties will then be used in the subsequent section to argue the correctness of the reduction. Before we proceed, let us note that all prices are non-zero.

\begin{lemma}\label{lem:hard:prices-pos}
In any $(\eps_m, \delta_m)$-approximate equilibrium, all goods have strictly positive prices.
\end{lemma}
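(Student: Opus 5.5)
Every good in the constructed market is the output good of at least one \inverter buyer. Each original node $u \in V$ is the output of exactly one gate, so it is the output of some \inverter buyer from that gate's gadget, and each auxiliary good $\aux_g$ is the output of $\inv_{u\aux}^1,\dots,\inv_{u\aux}^4$. An \inverter buyer has utility $u_{\inverter,\outg}(x) = x$ for its output good, which is linear, uncapped and has strictly positive slope. The plan is to show that a zero price on such a good makes this buyer's optimal utility infinite, and that this is incompatible with the buyer receiving a $(1-\deltam)$-optimal bundle, which the definition of an approximate equilibrium requires.

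Concretely, suppose for contradiction that some good $j$ has $p_j = 0$, and let $i$ be an \inverter buyer whose output good is $j$. At prices $p$ the budget constraint $\sum_{j'} p_{j'} x_{i,j'} \le 1$ is satisfied by $x_{i,j} = M$ for every $M \ge 0$ (with all other coordinates zero). Such a bundle yields utility at least $M$, so $U_i^* = +\infty$. Then $\opt^{\deltam}_i(p) = \{x_i \in F_i(p) : u_i(x_i) \ge (1-\deltam)\cdot\infty\}$.

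For $\deltam < 1$ this set is empty, because every feasible bundle has finite utility. The reason is that $u_i$ is a finite sum of the finite-valued functions $u_{i,j'}$. Hence buyer $i$ cannot be assigned any bundle in $\opt^{\deltam}_i(p)$, which contradicts condition 1 of an $(\epsm,\deltam)$-approximate market equilibrium. The paper already notes this emptiness phenomenon when it introduces $\opt^\delta_i(p)$, so I will simply invoke it.

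The only step needing care is the claim that every good has an \inverter buyer outputting to it. This relies on the \pcircuit requirement that every node is the output of exactly one gate, together with the explicit gadget descriptions: \NAND gates output to $w$, and \PURIFY gates output to $v$, $w$ and $\aux_g$. The argument uses only $\deltam < 1$; it does not need the clearing condition or the value of $\epsm$. I do not expect a real obstacle beyond checking this coverage of goods and handling the $(1-\deltam)\cdot\infty$ convention explicitly.
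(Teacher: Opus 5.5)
Your proposal is correct and follows essentially the same argument as the paper. Every good is the output of some \inverter buyer, whose utility for that good is linear, uncapped and has positive slope. A zero price therefore makes that buyer's optimal utility infinite, and no finite bundle is $(1-\deltam)$-optimal for $\deltam<1$.
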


\begin{proof}
By construction, for every good $j$ there exists a buyer $i$ such that the utility function $u_{ij}$ is linear with strictly positive slope. This holds for goods $\aux_g$ by construction of the \PURIFY gates, and for all goods corresponding to \pcircuit nodes because every node is the output of a gate. As a result, if $p_j = 0$, the optimal utility value of buyer $i$ is infinite. Thus, the (finite) bundle $x_i$ of buyer $i$ cannot possibly be $(1-\deltam)$-approximately optimal, for any $\deltam < 1$, a contradiction.
\end{proof}

\subsubsection{Properties of \inverter buyers}

In this section we prove some general properties of \inverter buyers. We assume throughout that we are at an $(\epsm, \deltam)$-approximate\footnote{Although our reduction requires $\epsm < 1/9$, in this section we only use the (trivial) bound $\epsm \leq 1$.} equilibrium $(p,x)$, with strictly positive prices.

\begin{lemma}\label{lem:hard:out-lower-spend}
For any \inverter buyer with parameter $\inthres \in [0,1]$, we have $p_\outg \cdot x_{\inverter,\outg} \geq 1 - p_\ing \cdot \inthres - \deltam (1 + \inslope \cdot p_\outg)$.
\end{lemma}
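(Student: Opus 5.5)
We compare the buyer's actual utility with a lower bound on the optimal utility $U^*$ at prices $p$. The inequality then follows from the $(1-\deltam)$-optimality condition $u_{\inverter}(x_{\inverter}) \geq (1-\deltam) U^*$.

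\emph{Lower bound on $U^*$.} Consider the feasible bundle that buys $\min\{\inthres, 1/p_\ing\}$ units of \ing and spends all remaining money on \outg. If $p_\ing \inthres \le 1$, it buys exactly $\inthres$ units of \ing and spends the leftover $1 - p_\ing \inthres$ on \outg. This gives utility $\inslope \inthres + (1 - p_\ing\inthres)/p_\outg$, so
\[
U^* \geq \inslope \inthres + \frac{1 - p_\ing \inthres}{p_\outg}.
\]
If $p_\ing\inthres > 1$, the claimed right-hand side is negative, because $1 - p_\ing \inthres < 0$ and $\deltam(1+\inslope p_\outg) \geq 0$. The statement is then trivial, since spending is nonnegative. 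Prices are positive by \cref{lem:hard:prices-pos}, so dividing by $p_\outg$ is legitimate.

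\emph{Upper bound on the actual utility.} The buyer's utility from \ing is at most $\inslope\inthres$ because of the cap. Its utility from \outg is $x_{\inverter,\outg}$, and goods it values at zero add nothing. Hence
\[
u_{\inverter}(x_{\inverter}) \le \inslope \inthres + x_{\inverter,\outg}.
\]

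\emph{Combining.} Chaining the two bounds through the optimality condition gives
\[
\inslope\inthres + x_{\inverter,\outg} \geq (1-\deltam)\Bigl(\inslope\inthres + \frac{1-p_\ing\inthres}{p_\outg}\Bigr).
\]
Multiply both sides by $p_\outg > 0$ and rearrange:
\[
p_\outg x_{\inverter,\outg} \geq 1 - p_\ing\inthres - \deltam\bigl(1 - p_\ing\inthres + \inslope\inthres p_\outg\bigr) - \inslope\inthres p_\outg + \inslope \inthres p_\outg.
\]
The last two terms cancel. It remains to bound the error term using $\inthres \in [0,1]$ and $p_\ing\inthres \ge 0$:
\[
1 - p_\ing\inthres + \inslope\inthres p_\outg \le 1 + \inslope p_\outg.
\]
This yields the claim.

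There is no real obstacle here. The only care needed is the case split on whether $p_\ing\inthres \le 1$, and keeping the sign of the error term correct when absorbing it into $\deltam(1+\inslope p_\outg)$.
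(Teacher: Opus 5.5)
Your proof is correct and follows essentially the same route as the paper. You lower-bound the optimum by the bundle that buys $\inthres$ units of \ing and spends the rest on \outg, apply $(1-\deltam)$-optimality, and absorb the error using $\inthres\le 1$ and $p_\ing\inthres\ge 0$. The only cosmetic differences are these: you replace the \ing-utility by its cap $\inslope\inthres$ on both sides, whereas the paper carries the actual value $u_\ing\le\inslope\inthres$ through. You also dispose of the trivial case via $p_\ing\inthres>1$ rather than via $p_\outg x_{\inverter,\outg} > 1-p_\ing\inthres$.
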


\begin{proof}
Let $z_\outg$ denote the amount of money spent by \inverter on good \outg, i.e., $z_\outg = p_\outg \cdot x_{\inverter,\outg}$. Note that if $z_\outg > 1 - p_\ing \cdot \inthres$, then the claim trivially holds. Thus, in the rest of the proof we assume that $z_\outg \leq 1 - p_\ing \cdot \inthres$, and thus in particular $1 - p_\ing \cdot \inthres \geq 0$.

Let $u_\ing$ denote the utility obtained by \inverter from good \ing, i.e., $u_\ing = \inslope \cdot \min\{\inthres, x_{\inverter, \ing}\}$. Note that $u_\ing \leq \inslope \inthres$. The total utility obtained by \inverter can now be written as
$$\Uactual = u_\ing + z_\outg \frac{1}{p_\outg}.$$
The maximum utility $\Uopt$ achievable by \inverter can be bounded as follows
$$\Uopt \geq u_\ing + (1 - p_\ing \cdot \inthres) \frac{1}{p_\outg}.$$
Indeed, this corresponds to the allocation where we spend $p_\ing \cdot \inthres$ on good \ing (thus obtaining utility $\inslope \inthres \geq u_\ing$), and the remaining money on the good \outg.

By $(1-\deltam)$-optimality of the bundles, we must have $\Uactual \geq (1-\deltam) \Uopt$. Plugging in the bounds for the utilities and grouping terms yields
$$(z_\outg - (1 - p_\ing \cdot \inthres)) \frac{1}{p_\outg} \geq - \deltam u_\ing - \deltam (1 - p_\ing \cdot \inthres) \frac{1}{p_\outg}$$
and thus
$$z_\outg - (1 - p_\ing \cdot \inthres) \geq - \deltam u_\ing p_\outg - \deltam (1 - p_\ing \cdot \inthres) \geq - \deltam (1 + \inslope \cdot p_\outg)$$
where we used $u_\ing \leq \inslope \inthres \leq \inslope$.
\end{proof}

\begin{lemma}\label{lem:hard:out-upper-spend}
For any \inverter buyer with parameter $\inthres \in [0,1]$, if we have $\inslope/p_\ing \geq 2/p_\outg$, then $p_\ing \cdot x_{\inverter, \ing} \geq \min\{1, p_\ing \inthres\} - 2\deltam$ and $p_\outg \cdot x_{\inverter, \outg} \leq \max\{0, 1 - p_\ing \inthres\} + 2\deltam$.
\end{lemma}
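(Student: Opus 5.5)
The plan is to compare the buyer's actual bundle with an explicit optimal bundle and use $(1-\deltam)$-optimality to bound how much money can sit on the low bang-per-buck part. Since $\inslope/p_\ing \geq 2/p_\outg$, the capped segment on \ing has bang-per-buck at least twice that of \outg. The greedy optimum therefore spends $m^* := \min\{1, p_\ing \inthres\}$ on \ing and the remaining $1 - m^*$ on \outg. This gives
\[
\Uopt = \frac{\inslope}{p_\ing} m^* + \frac{1-m^*}{p_\outg}.
\]
Let $z_\ing := \min\{p_\ing x_{\inverter,\ing}, p_\ing \inthres\}$ be the money spent usefully on \ing, and $z_\outg := p_\outg x_{\inverter,\outg}$. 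The budget constraint gives $z_\ing + z_\outg \leq 1$, and of course $z_\ing \leq m^*$. The actual utility is
\[
\Uactual = \frac{\inslope}{p_\ing} z_\ing + \frac{z_\outg}{p_\outg} \leq \frac{\inslope}{p_\ing} z_\ing + \frac{1 - z_\ing}{p_\outg}.
\]
Money spent on \ing beyond the cap, or on zero-utility goods, only makes things worse, so this upper bound is safe.

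For the first claim, subtracting the two expressions gives
\[
\Uopt - \Uactual \geq (m^* - z_\ing)\left(\frac{\inslope}{p_\ing} - \frac{1}{p_\outg}\right) \geq (m^* - z_\ing)\frac{\inslope}{2p_\ing},
\]
where the last step uses the hypothesis $1/p_\outg \leq \inslope/(2p_\ing)$. On the other hand, $\Uopt - \Uactual \leq \deltam \Uopt$. To bound $\Uopt$, note that $1/p_\outg \leq \inslope/(2p_\ing)$ implies $\Uopt \leq \frac{\inslope}{p_\ing}\bigl(m^* + (1-m^*)/2\bigr) \leq \inslope/p_\ing$. Combining these gives $m^* - z_\ing \leq 2\deltam$. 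Hence $p_\ing x_{\inverter,\ing} \geq z_\ing \geq m^* - 2\deltam$, which is the first claim.

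For the second claim, we need $z_\outg$ small. The budget constraint gives $z_\outg \leq 1 - p_\ing x_{\inverter,\ing} \leq 1 - z_\ing \leq 1 - m^* + 2\deltam$. Finally, $1 - m^* = \max\{0, 1 - p_\ing\inthres\}$, which gives exactly the stated bound.

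The only delicate point is the normalisation by $\Uopt$: one must bound $\Uopt$ by $\inslope/p_\ing$ rather than by something involving $p_\outg$. This is where the factor-$2$ hypothesis is used, together with the fact that the bang-per-buck gap, and not just the gap in utilities, controls the loss. Once that is set up the remaining steps are routine.
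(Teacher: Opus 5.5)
Your proposal is correct and follows essentially the paper's argument. You bound $\Uactual \le \frac{\inslope}{p_\ing} z + \frac{1-z}{p_\outg}$, compare it with the greedy optimum, and use that the bang-per-buck gap is at least $\inslope/(2p_\ing)$. The differences are cosmetic: you avoid the paper's case split by working with the capped expenditure $z=\min\{p_\ing x_{\inverter,\ing}, p_\ing \inthres\}$, and you bound $\deltam \Uopt$ by $\deltam \inslope/p_\ing$ in one step rather than dividing term by term. Both routes give $m^* - z \le \deltam(1+m^*) \le 2\deltam$, where $m^* = \min\{1, p_\ing\inthres\}$.
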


\begin{proof}
Let $z_\ing$ denote the amount of money spent by \inverter on good \ing, i.e., $z_\ing = p_\ing \cdot x_{\inverter, \ing}$. If $z_\ing > \min\{1, p_\ing \inthres\}$, then the claim trivially holds, so assume for the rest of the proof that $z_\ing \leq \min\{1, p_\ing \inthres\}$.

The utility $\Uactual$ obtained by \inverter can be bounded as follows
$$\Uactual \leq z_\ing \frac{\inslope}{p_\ing} + (1 - z_\ing) \frac{1}{p_\outg}.$$
On the other hand, the maximum utility $\Uopt$ achievable by \inverter satisfies
$$\Uopt \geq \min\{1, p_\ing \inthres\} \frac{\inslope}{p_\ing} + (1 - \min\{1, p_\ing \inthres\}) \frac{1}{p_\outg}.$$
In the case where $p_\ing \inthres \leq 1$, this corresponds to buying $\inthres$ units of \ing, and then spending the remaining budget on good \outg. In the case where $p_\ing \inthres \geq 1$, this corresponds to spending the whole budget on \ing.

By $(1 - \deltam)$-optimality of the bundles, we must have $\Uactual \geq (1 - \deltam) \Uopt$. Plugging in the bounds for the utilities and grouping terms yields
$$(z_\ing - \min\{1, p_\ing \inthres\}) \left(\frac{\inslope}{p_\ing} - \frac{1}{p_\outg}\right) \geq - \deltam \min\{1, p_\ing \inthres\} \frac{\inslope}{p_\ing} - \deltam (1 - \min\{1, p_\ing \inthres\}) \frac{1}{p_\outg}.$$
Now, dividing by $\inslope/p_\ing - 1/p_\outg$ on both sides, and using the fact that $\inslope/p_\ing - 1/p_\outg \geq \inslope/(2p_\ing) \geq 1/p_\outg$, we obtain
$$z_\ing - \min\{1, p_\ing \inthres\} \geq - 2 \deltam \min\{1, p_\ing \inthres\} - \deltam (1 - \min\{1, p_\ing \inthres\}) \geq - 2 \deltam.$$
This proves the first part of the claim. The second part then immediately follows by using the fact that $z_\ing + z_\outg \leq 1$, where $z_\outg := p_\outg \cdot x_{\inverter, \outg}$ denotes the expenditure of \inverter on good \outg.
\end{proof}

\begin{lemma}\label{lem:hard:in-lower-spend-indifferent}
For any \inverter buyer with parameter $\inthres \in [0,1]$, if we have $\inslope/p_\ing \leq 2/p_\outg$, then $p_\ing \cdot x_{\inverter, \ing} \geq \min\{1, p_\ing \inthres\} - 6 p_\ing/\inslope - \deltam$.
\end{lemma}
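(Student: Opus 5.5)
The plan is to get the claimed lower bound on $z_\ing := p_\ing \cdot x_{\inverter,\ing}$ by comparing the buyer's actual utility with the utility of one explicit feasible bundle. The extra ingredient, compared with \cref{lem:hard:out-upper-spend}, is that the approximate clearing condition caps how much utility the buyer can obtain from the output good. Set $m := \min\{1, p_\ing \inthres\}$. If $z_\ing \geq m$ the claim holds trivially, so I will assume $z_\ing < m$.

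\emph{Upper bound on the actual utility.} The utility from \ing is $\inslope \cdot \min\{\inthres, x_{\inverter,\ing}\} \leq \inslope\, x_{\inverter,\ing} = z_\ing \inslope / p_\ing$. The utility from \outg equals $x_{\inverter,\outg}$. By $\epsm$-clearing of good \outg and the trivial bound $\epsm \leq 1$ (the one allowed in this section), we have $x_{\inverter,\outg} \leq 1 + \epsm \leq 2$. Hence
\[
\Uactual \leq z_\ing \frac{\inslope}{p_\ing} + 2 .
\]

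\emph{Lower bound on the optimal utility.} Spending $m$ on \ing is feasible: if $p_\ing\inthres \le 1$ it buys exactly $\inthres$ units, and otherwise it uses the whole budget. This bundle gives utility $m\inslope/p_\ing$, so $\Uopt \geq m \inslope / p_\ing$. Then $(1-\deltam)$-optimality gives
\[
z_\ing \frac{\inslope}{p_\ing} + 2 \geq (1-\deltam)\, m \frac{\inslope}{p_\ing}.
\]
Multiplying through by $p_\ing/\inslope$ yields
\[
z_\ing \geq m - \deltam m - \frac{2p_\ing}{\inslope} \geq m - \deltam - \frac{6 p_\ing}{\inslope},
\]
using $m \leq 1$. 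This is the claimed bound.

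\emph{Possible obstacle.} The main point to check is the use of the clearing bound on $x_{\inverter,\outg}$, which is where the section's standing assumption that $(p,x)$ is an $(\epsm,\deltam)$-approximate equilibrium enters. If one wanted to avoid clearing, the hypothesis $\inslope/p_\ing \leq 2/p_\outg$ would take its place. That route would bound the money spent on \outg times its bang-per-buck $1/p_\outg$ in terms of the \ing bang-per-buck, which loses a constant factor and explains the slack constant $6$. The clearing route already gives $2p_\ing/\inslope$, so I would present that argument and remark that the hypothesis only affects the constant.
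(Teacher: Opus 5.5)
Your proof is correct, and it takes a genuinely different and slightly stronger route than the paper.

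\textbf{The paper's route.} The paper compares the buyer's bundle with the one that keeps the actual expenditure $z_\outg = p_\outg \cdot x_{\inverter,\outg}$ on \outg and spends the rest optimally on \ing. This leaves an additive loss of $z_\outg$ measured \emph{in money}. To control it, the paper needs clearing ($x_{\inverter,\outg} \leq 2$) \emph{and} the hypothesis. The hypothesis gives $p_\outg \leq 2p_\ing/\inslope$, hence $z_\outg \leq 4p_\ing/\inslope$. Adding the term $2\deltam p_\ing/\inslope$ produces the constant $6$.

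\textbf{Your route.} You compare with the bundle that spends $\min\{1, p_\ing \inthres\}$ on \ing alone. You then bound the \emph{utility} from \outg, which is just the quantity $x_{\inverter,\outg} \leq 2$, directly by clearing, so no conversion from units to money is needed. The result is
$p_\ing \cdot x_{\inverter,\ing} \geq (1-\deltam)\min\{1, p_\ing\inthres\} - 2p_\ing/\inslope$.
This holds for every \inverter buyer under the section's standing assumptions, without the hypothesis $\inslope/p_\ing \leq 2/p_\outg$, and with a better constant.

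\textbf{What each approach buys.} The paper combines this lemma with \cref{lem:hard:out-upper-spend} to get a case-free lower bound on spending on the input good, in \cref{lem:outputs-clean} and in the \PURIFY analysis. Your bound would suffice there on its own. The paper's comparison only has a finer intermediate loss term: roughly $x_{\inverter,\outg}\, p_\outg$ instead of your $x_{\inverter,\outg}\, p_\ing/\inslope$, which is smaller exactly when $p_\outg < p_\ing/\inslope$. That advantage disappears once worst-case bounds are substituted.

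\textbf{The side remark is wrong.} Under ``Possible obstacle'' you suggest the hypothesis could take the place of clearing; it cannot. The hypothesis is a \emph{lower} bound on the bang-per-buck $1/p_\outg$ of \outg, so by itself it cannot cap what the buyer gets from \outg. Without clearing the statement is false. For instance, take $p_\ing = 1$ and $p_\outg < 1/\inslope$. Then spending the whole budget on \outg is exactly optimal and the hypothesis holds, so $p_\ing \cdot x_{\inverter,\ing} = 0$ is allowed. Yet the claimed bound $\inthres - 6/\inslope - \deltam$ is positive for, say, $\inthres = 2/9$, $\inslope = 100$, $\deltam = 0.01$. In the paper's proof the hypothesis is used \emph{together with} clearing, to turn the unit bound $x_{\inverter,\outg} \leq 2$ into a money bound. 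Since your actual argument does not rely on that route, your proof is unaffected.
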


\begin{proof}
Let $z_\ing$ and $z_\outg$ denote the amount of money the \inverter spends on goods \ing and \outg, respectively. We have
$$z_\outg = p_\outg \cdot x_{\inverter, \outg} \leq 2 p_\outg \leq \frac{4 p_\ing}{\inslope}$$
where we used the fact that $x_{\inverter, \outg} \leq 1 + \epsm \leq 2$, since good \outg has to $\epsm$-clear.

The utility $\Uactual$ obtained by \inverter can be bounded as follows
$$\Uactual \leq z_\ing \frac{\inslope}{p_\ing} + z_\outg \frac{1}{p_\outg}.$$
On the other hand, the maximum utility $\Uopt$ achievable by \inverter satisfies
$$\Uopt \geq \min\{1 - z_\outg, p_\ing \inthres\} \frac{\inslope}{p_\ing} + z_\outg \frac{1}{p_\outg}.$$
In the case where $p_\ing \inthres \leq 1 - z_\outg$, this corresponds to buying $\inthres$ units of \ing, and then spending $z_\outg$ on good \outg. In the case where $p_\ing \inthres \geq 1 - z_\outg$, this corresponds to spending $z_\outg$ on good \outg, and all the remaining budget on good \ing.

If $z_\ing > p_\ing \inthres$, then the claim trivially holds, so we assume that $z_\ing \leq p_\ing \inthres$ from now on. By $(1 - \deltam)$-optimality of the bundles, we must have $\Uactual \geq (1 - \deltam) \Uopt$. Plugging in the bounds for the utilities and grouping terms yields
$$(z_\ing - \min\{1 - z_\outg, p_\ing \inthres\}) \frac{\inslope}{p_\ing} \geq - \deltam \min\{1 - z_\outg, p_\ing \inthres\} \frac{\inslope}{p_\ing} - \deltam z_\outg \frac{1}{p_\outg}.$$
Next, we divide by $\inslope/p_\ing$ on both sides,
$$z_\ing - \min\{1 - z_\outg, p_\ing \inthres\} \geq - \deltam \min\{1 - z_\outg, p_\ing \inthres\} - \deltam z_\outg \frac{p_\ing}{p_\outg \inslope} \geq - \deltam - 2 \deltam \frac{p_\ing}{\inslope}$$
where we used the fact that $z_\outg \leq 2 p_\outg$. Finally, since $z_\outg \leq 4 p_\ing/\inslope$, we obtain
$$z_\ing \geq \min\{1, p_\ing \inthres\} - z_\outg - \deltam - 2 \deltam p_\ing/\inslope \geq \min\{1, p_\ing \inthres\} - 6 p_\ing/\inslope - \deltam$$
where we used $\deltam \leq 1$.
\end{proof}

\subsubsection{Correctness of the reduction}

Recall that $H = 9/2$ and $L = 100/\inslope$. The values of $\deltam$ and $\inslope$ will be specified at the end of this section, but the intuition is that $\inslope$ will be picked to be a sufficiently large constant, and then $\deltam > 0$ will be picked to be a sufficiently small constant. If we instead pick $\deltam$ to be a sufficiently small inverse-polynomial value, then our reduction will satisfy all gates of the \pcircuit instance, thus showing unconditional \ppad/-hardness.

For any good $j$, let $f_j$ denote the total amount of money spent on it that does not yield any utility to the buyer spending the money. In other words, $f_j$ contains any money spent on good $j$ by a buyer that has zero utility for the good, and also any money spent by a buyer beyond the cap on its utility function for the good. We say that a good $j$ is $F$-faulty, if $f_j \geq F$. We say that a gate is $F$-faulty, if any of the goods of the gate (namely, input, output, and possibly auxiliary good) is $F$-faulty.

\begin{lemma}\label{lem:hard:gates-faulty}
If we set $F := 32 \deltam/\deltac$, then a fraction at most $\deltac$ of the gates are $F$-faulty.
\end{lemma}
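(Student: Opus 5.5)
The plan is a counting (averaging) argument. First I bound the total "wasted" money $\sum_j f_j$ over all goods. Then I use Markov's inequality to bound the number of $F$-faulty goods. Finally I use the constant-degree structure of the gadgets to turn this into a bound on the number of faulty gates.

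\textbf{Step 1: per-buyer waste.} Consider any \inverter buyer at an $(\epsm,\deltam)$-approximate equilibrium. Let $w_i$ be the money buyer $i$ spends without gaining utility: money on goods other than \ing and \outg, plus money on \ing beyond the cap $\inthres$. I claim $w_i \le \deltam$. The buyer could move the wasted money $w_i$ to good \outg, which has strictly positive price by \cref{lem:hard:prices-pos} and linear uncapped utility. This gains $w_i/p_\outg$ utility.

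To relate this gain to $\Uopt$, I would argue directly that the optimal bundle has bang-per-buck at least $1/p_\outg$ on every dollar. The buyer's options are slope $\inslope/p_\ing$ on \ing up to the cap, and $1/p_\outg$ on \outg. A greedy optimum spends the whole budget $1$, and every dollar earns at least $1/p_\outg$, so $\Uopt \ge 1/p_\outg$.

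Now compare the actual bundle with the one obtained by redirecting the waste. The useful money is $1-w_i$ (or less, if the buyer underspends). Spending only $1-w_i$ optimally gives utility at most $\Uopt - w_i/p_\outg$: removing $w_i$ dollars from an optimal greedy bundle loses at least $w_i$ times the marginal bang-per-buck, which is at least $1/p_\outg$. Hence $\Uactual \le \Uopt - w_i/p_\outg$. Combined with $\Uactual \ge (1-\deltam)\Uopt$, this gives
\[
w_i \le \deltam\, p_\outg \Uopt.
\]
This is not yet $\le \deltam$, because $p_\outg\Uopt$ can exceed $1$ when \ing has high bang-per-buck.

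\textbf{Step 1, the main obstacle.} I therefore expect the per-buyer bound to be $w_i \le \deltam$ only in the right normalization. The fix is to argue via utility fractions: waste of $w_i$ dollars loses at least a $w_i$-fraction of $\Uopt$. The reason is that the optimal bundle's bang-per-buck is non-increasing, so the last $w_i$ dollars of an optimal spend yield at most, but also—after scaling—at least, an average share. Precisely, by concavity of the greedy value function $V(m)$ (optimal utility with budget $m$), we have $V(1-w_i)\le (1-w_i)V(1)$? No: concavity with $V(0)=0$ gives $V(1-w_i)\ge(1-w_i)V(1)$, which is the wrong direction.

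So I would instead use the linear uncapped \outg segment. Since $V$ is concave and its final slope is $1/p_\outg$, we get $V(1)-V(1-w)\ge w/p_\outg$. For the average share I would use $\Uopt \le \inslope\inthres + 1/p_\outg$. This is exactly the tension already present in \cref{lem:hard:out-lower-spend}, and it gives $w_i \le \deltam(1+\inslope p_\outg)$. Since prices are bounded, I would try to bound $\sum_i p_\outg$ using clearing. The total money in the market is $|B|$, and each good clears up to $\epsm$, so $\sum_j p_j \le \sum_j(\text{spend}_j)/(1-\epsm) \le |B|/(1-\epsm)$. Each good is \outg for at most $8$ buyers, so $\sum_i p_{\outg(i)} \le 8|B|/(1-\epsm)$. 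Summing then gives $\sum_i w_i \le \deltam|B|(1+O(\inslope))$. If the constant $32$ in $F$ must hold without an $\inslope$ dependence, I would refine this by choosing the better of the two comparison bundles, but I would accept an $O(1)$ constant here.

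\textbf{Steps 2 and 3: counting.} We have $\sum_j f_j = \sum_i w_i \le c\,\deltam |B|$. By Markov's inequality, the number of $F$-faulty goods is at most $c\,\deltam|B|/F$. Each gate contributes at most $8$ buyers, so $|B|\le 8|C|$. Each good belongs to at most $2$ gates, namely the one it is the output of and the one it is input to; $\aux$ goods belong to one gate. Hence the number of faulty gates is at most $2c\,\deltam\cdot 8|C|/F$. With $F=32\deltam/\deltac$ and $c$ the per-buyer constant from Step 1 (which must be $\le 2$ to match exactly), this is at most $\deltac|C|$.
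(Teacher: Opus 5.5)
Your Steps 2 and 3 coincide with the paper's counting; the paper uses $|B|\le 8n$ and $m\ge n/2$ where you use $|B|\le 8|C|$, which leaves you a spare factor $2$. The difference is Step 1. The paper settles it in one sentence: because every buyer has positive utility for some good, a $(1-\deltam)$-optimal bundle can spend at most a $\deltam$-fraction of the unit budget without generating utility, so $\sum_j f_j\le 8n\deltam$. Your suspicion of this step is justified. It implicitly treats utility as proportional to money, which fails for capped utilities.

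Concretely, take an \inverter buyer with $\inthres=2/9$ in a \NAND gadget whose input is $0$ and output is $1$, say $p_\ing\le L=100/\inslope$ and $p_\outg=9/2$. Then $\Uopt=2\inslope/9+(1-2p_\ing/9)\cdot 2/9$. Diverting $w$ dollars from \outg to a zero-utility good costs only $2w/9$ in utility. So any $w\le\tfrac{9}{2}\deltam\Uopt=\deltam(1+\inslope-2p_\ing/9)$, in particular any $w\le\deltam\inslope$, is compatible with $(1-\deltam)$-optimality. This is feasible since $\deltam\inslope$ is small under the paper's parameter choice. Hence your estimate $w_i\le\deltam(1+\inslope p_\outg)$ has the right order. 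No refinement of the per-buyer bound, in particular not ``choosing the better of the two comparison bundles'', can bring your constant down to $2$.

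So the gap you identify is genuine, but it is a gap in the paper's own proof, not an idea you are missing. Neither argument establishes the lemma with $F=32\deltam/\deltac$. What your proposal does prove rigorously is $\sum_j f_j\le(1+9\inslope)\deltam|B|$. This uses $\sum_j p_j\le|B|/(1-\epsm)<9|B|/8$ and the fact that each good is \outg for at most $8$ buyers. Your Steps 2 and 3 then give the lemma with $F:=16(1+9\inslope)\deltam/\deltac$.

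That corrected version is all the reduction needs. $F$ enters only through conditions of the form \eqref{eq:hard:F-bound} and \eqref{eq:hard:NAND-upper}, which require $F\le 1$ and $F=O\bigl((1/9-\epsm)/\inslope\bigr)$. Since $\deltam$ is chosen after $\inslope$, taking $\deltam$ to be a small constant multiple of $(1/9-\epsm)\deltac/\inslope^2$ suffices. You should state and prove this version instead of aiming for the constant $32$.
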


\begin{proof}
By construction, every buyer has strictly positive utility for some good. As a result, in a $(1 - \deltam)$-optimal bundle, every buyer can spend a fraction at most $\deltam$ of its budget (which is $1$) in a way that does not yield any utility. Since the number of buyers is at most $8n$, where $n$ is the number of nodes in the \pcircuit instance, we must have $\sum_j f_j \leq 8 n \deltam$. Let $m$ denote the number of gates in the \pcircuit instance. Since every good appears in at most two gates, we have that the number of gates that are $F$-faulty is at most
$$2 \frac{\sum_j f_j}{F} \leq \frac{16 n \deltam}{F} = \frac{16 n \deltam}{32 \deltam/\deltac} \leq \deltac \cdot m$$
where we used $m \geq n/2$, which holds due to the fact that every node must be the output of a gate, and every gate has at most two outputs.
\end{proof}

It remains to show that gates that are not $F$-faulty are correctly simulated. For this, it will be convenient to use the following two lemmas.

\begin{lemma}\label{lem:hard:prices-bounded}
In any gate that is not $F$-faulty, the price of any good is at most $20$.
\end{lemma}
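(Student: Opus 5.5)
The plan is a direct money-counting argument on a single good $j$ belonging to a gate that is not $F$-faulty, so that $f_j < F$. I will assume, as the constants will later allow, that $\deltam$ is small enough relative to $\deltac$ that $F = 32\deltam/\deltac \le 1$. It suffices to treat the case $p_j \ge 9/2$, since otherwise $p_j < 20$ trivially.

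\textbf{Lower bound on the money spent on $j$.} Because the market $\epsm$-clears, $\sum_i x_{i,j} \ge 1-\epsm > 8/9$. Hence the total money spent on $j$ is at least $(8/9)\,p_j$.

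\textbf{Upper bound on the money spent on $j$.} I split this money into three parts.
\begin{enumerate}
\item Money that yields no utility to its buyer. By definition this is $f_j$, and so it is less than $F$.
\item Useful money from \inverter buyers whose output good is $j$. Each such buyer has budget $1$, so this part is bounded by their number. Since $j$ is the output of exactly one gate, there are at most $8$ such buyers (the \NAND gadget has $8$; the \PURIFY outputs $v$ and $w$ have $2$ and $1$; an $\aux$ good has $4$). So this part is at most $8$.
\item Useful money from \inverter buyers whose input good is $j$. Here the key observation is that useful spending by such a buyer is at most $\min\{1, p_j \inthres\}$: it is capped both by the utility cap and by the budget. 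Since $j$ is the input of exactly one gate (or is an $\aux$ good), there are at most $4$ such buyers. Their parameters satisfy $\inthres \ge 2/9$, so $p_j \inthres \ge 1$ once $p_j \ge 9/2$. Each therefore contributes at most $1$, and this part is at most $4$.
\end{enumerate}
Altogether the money spent on $j$ is at most $12 + F$.

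\textbf{Conclusion.} Combining the two bounds gives $(8/9)\,p_j \le 12 + F \le 13$, so $p_j \le 117/8 < 20$.

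\textbf{Main obstacle.} The only delicate point is the input-side term. The naive bound $p_j \inthres$ per buyer sums to exactly $(8/9)\,p_j$, which cancels the clearing lower bound and leaves only a weak constant near $(8+F)/(1/9-\epsm)$. The fix is to use the budget cap of $1$ per buyer in the regime $p_j \ge 9/2$; this is where the choice $H = 9/2$ together with $\inthres \in \{2/9, 4/9\}$ is used. I would also double-check the degree bookkeeping for $\aux$ goods: they have $4$ output-side buyers and $3$ input-side buyers, which stays within the bounds of $8$ and $4$ used above.
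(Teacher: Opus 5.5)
Your argument is correct and is essentially the paper's: bound the total money spent on the good by $12+F$ (at most $12$ buyers with nonzero utility for it, each with budget $1$, plus less than $F$ of non-utility money), then divide by the clearing lower bound. The paper uses $1-\epsm \ge 2/3$ where you use $1-\epsm > 8/9$. Your case split on $p_j \ge 9/2$ and the ``main obstacle'' discussion are unnecessary, since the budget cap of $1$ per input-side buyer holds at every price.
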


\begin{proof}
By construction, for any given good, there are at most $12$ buyers with non-zero utility for it. Furthermore, since the good is contained in a gate that is not $F$-faulty, at most $F$ units of money are spent on the good by other buyers. As a result the total amount of money spent on the good is at most $12 + F$. Given that the good has to $\epsm$-clear, at least $1 - \epsm \geq 2/3$ units must be bought. Thus, the price can be at most $3(12+F)/2 \leq 20$, as long as
\begin{equation}\label{eq:hard:F-bound}
F \leq 1
\end{equation}
which we will ensure with our choice of our parameters.
\end{proof}

\begin{lemma}\label{lem:outputs-clean}
In any gate that is not $F$-faulty, and for any output node $w$ of that gate, we have that the amount of money spent on good $w$ by buyers that are not a part of this gate is:
\begin{itemize}
\item at most $p_w \cdot 8/9 + F$.
\item at least $p_w \cdot 8/9 - 1000/\inslope - 2 \deltam \inslope$, unless $p_w \geq H$.
\end{itemize}
\end{lemma}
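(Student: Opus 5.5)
The plan is to identify exactly which buyers outside the gate touch the good $w$, and then bound their spending one buyer at a time using the lemmas on \inverter buyers. Every node is the input to exactly one gate $g'$, and the only buyers with non-zero utility for $w$ are either in $w$'s own gate (where $w$ is the output good) or in $g'$. Inspecting the gadgets: if $g'$ is a \NAND gate, then $w$ is the input good of four \inverter buyers with parameter $t=2/9$. If $g'$ is a \PURIFY gate, the same holds via $\inv_{u\aux}^1,\dots,\inv_{u\aux}^4$. In both cases the auxiliary good is only an output within its own gate, so it never occurs as an output node $w$ here. Every other buyer outside the gate has zero utility for $w$, so its spending on $w$ is counted in $f_w$. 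Since $w$ belongs to a gate that is not $F$-faulty, $f_w<F$.

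\textbf{Upper bound.} Each of the four relevant \inverter buyers derives utility from at most $t=2/9$ units of $w$. Money it spends beyond that cap yields no utility and is therefore also counted in $f_w$. Hence the utility-yielding outside spending is at most $4\cdot \frac{2}{9}p_w=\frac{8}{9}p_w$, and adding $f_w\le F$ gives the first bullet.

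\textbf{Lower bound.} Assume $p_w<H=9/2$, so that $p_w\cdot\frac29<1$ and hence $\min\{1,p_w\cdot\frac29\}=\frac29 p_w$. Fix one of the four \inverter buyers, with $\ing=w$ and output good $o$ in $g'$. Prices are positive by \cref{lem:hard:prices-pos}, so we can split into two cases.
\begin{itemize}
\item If $\inslope/p_w\ge 2/p_o$, then \cref{lem:hard:out-upper-spend} gives spending on $w$ of at least $\frac29 p_w-2\deltam$.
\item If $\inslope/p_w\le 2/p_o$, then \cref{lem:hard:in-lower-spend-indifferent} gives at least $\frac29 p_w-6p_w/\inslope-\deltam\ge \frac29 p_w-27/\inslope-\deltam$, using $p_w<9/2$.
\end{itemize}
Neither lemma needs any bound on $p_o$, which is helpful because $g'$ may itself be faulty.

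Summing over the four buyers, the outside spending on $w$ is at least $\frac89 p_w-108/\inslope-8\deltam$. This is at least $\frac89 p_w-1000/\inslope-2\deltam\inslope$ as long as $\inslope\ge 4$, which holds because $\inslope$ will be chosen as a large constant. The only real care needed is the case analysis that ensures one of the two \inverter lemmas applies regardless of $p_o$. I expect no serious obstacle beyond correctly enumerating which buyers can spend usefully on $w$.
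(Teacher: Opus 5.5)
Your proof is correct and follows the paper's argument. The upper bound comes from the four capped \inverter buyers of the unique gate using $w$ as input, plus $f_w < F$. The lower bound comes from combining \cref{lem:hard:out-upper-spend} and \cref{lem:hard:in-lower-spend-indifferent} over those same four buyers. The only differences are minor and harmless. You bound $6p_w/\inslope$ using $p_w < H = 9/2$ instead of invoking $p_w \le 20$ from \cref{lem:hard:prices-bounded}. You also correctly note that absorbing $8\deltam$ into $2\deltam\inslope$ requires $\inslope \ge 4$.
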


\begin{proof}
Recall that node $w$ is the input to exactly one gate. As a result, by construction there are four \inverter buyers with parameter $\inthres = 2/9$ that have good $w$ as input. The upper bound thus follows immediately by construction (given that every node is the input to at most one gate) together with the fact that the good is contained in a gate that is not $F$-faulty. For the lower bound, combining \cref{lem:hard:out-upper-spend} and \cref{lem:hard:in-lower-spend-indifferent} we obtain that the amount of money spent on good $w$ is at least
$$4(\min\{1, p_w \cdot 2/9\} - 6p_w/\inslope - 2 \deltam).$$
Now if we assume that $p_w \leq H = 9/2$, we have $\min\{1, p_w \cdot 2/9\} = p_w \cdot 2/9$. Together with the fact that $p_w \leq 20$ (by \cref{lem:hard:prices-bounded}), the lower bound can be written as
$$p_w \cdot 8/9 - 1000/\inslope - 2 \deltam \inslope$$
where we also used $1 \leq \inslope$.
\end{proof}

We are now ready to proceed with the analysis of the gates.

\begin{lemma}
In any gate $(\NAND,u,v,w)$ that is not $F$-faulty, we have
\begin{itemize}
\item if $p_u \geq H$ and $p_v \geq H$, then $p_w \leq L$.
\item if $p_u \leq L$ or $p_v \leq L$, then $p_w \geq H$.
\end{itemize}
\end{lemma}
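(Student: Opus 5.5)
We analyse the clearing of good $w$ by splitting the money spent on it into two parts. The first part is spent by the buyers of this gate, namely the eight \inverter buyers $\inv_{uw}^k,\inv_{vw}^k$, for which $w$ is the output good. The second part is spent by everyone else, and \cref{lem:outputs-clean} controls it. The $\epsm$-clearing condition reads $p_w(1-\epsm)\le(\text{total money on }w)\le p_w(1+\epsm)$. By \cref{lem:hard:prices-bounded}, all prices involved are at most $20$, and by \cref{lem:hard:prices-pos} they are positive.

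\emph{First bullet.} Suppose $p_u,p_v\ge H=9/2$ and, for contradiction, $p_w>L=100/\inslope$. Then $\inslope/p_u\ge \inslope/20$, while $2/p_w< 2\inslope/100$, so $\inslope/p_\ing\ge 2/p_\outg$ holds for every inverter in the gadget. \cref{lem:hard:out-upper-spend} then gives that each of the eight inverters spends at most $\max\{0,1-p_\ing\cdot 2/9\}+2\deltam=2\deltam$ on $w$, because $p_\ing\cdot 2/9\ge1$. By \cref{lem:outputs-clean}, the outside money is at most $p_w\cdot 8/9+F$. Clearing then requires
\[
p_w(1-\epsm)\le 8/9\,p_w+F+16\deltam ,
\]
that is, $p_w(1/9-\epsm)\le F+16\deltam$. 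Since $\epsm<1/9$ is a fixed constant, this forces $p_w\le (F+16\deltam)/(1/9-\epsm)$. Choosing $\deltam$ small relative to $\deltac$, $\inslope$ and $1/9-\epsm$ (recall $F=32\deltam/\deltac$), this bound is below $L$, giving the contradiction. This is where the threshold $1/9$ enters.

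\emph{Second bullet.} Suppose $p_u\le L$ and, for contradiction, $p_w<H$. For the four inverters $\inv_{uw}^k$ we apply \cref{lem:hard:out-lower-spend} with $t=2/9$. Each spends at least
\[
1-p_u\cdot 2/9-\deltam(1+20\inslope)\ge 1-\tfrac{200}{9\inslope}-21\deltam\inslope
\]
on $w$. The other four inverters spend at least $0$. Since $p_w<H$, the lower bound in \cref{lem:outputs-clean} applies to the outside money. The total money on $w$ is therefore at least
\[
8/9\,p_w+4-O(1/\inslope)-O(\deltam\inslope).
\]
Clearing requires this to be at most $p_w(1+\epsm)\le p_w\cdot 10/9$, so $p_w\cdot 2/9\ge 4-O(1/\inslope)-O(\deltam\inslope)$, i.e.\ $p_w\ge 18-O(1/\inslope+\deltam\inslope)$. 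This exceeds $H=9/2$ once $\inslope$ is a large constant and $\deltam\inslope$ is small, which is a contradiction.

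The main obstacle is the bookkeeping of the parameter choices rather than any conceptual step. We must check the hypothesis $\inslope/p_\ing\ge2/p_\outg$ in the first bullet, which uses $p_\ing\le20$ and $p_w>L$. We then fix the order of choices: first $\inslope$ large enough that the $O(1/\inslope)$ terms are negligible, then $\deltam$ small enough that $F\le1$, that $(F+16\deltam)/(1/9-\epsm)<100/\inslope$, and that $\deltam\inslope$ is negligible. All of these constraints are simultaneously satisfiable with constants.
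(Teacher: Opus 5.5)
Your proposal is correct and follows the paper's proof essentially step for step. The differences are cosmetic. In the first bullet, the paper splits per inverter on whether $\inslope/p_\ing \geq 2/p_\outg$ holds, while you assume $p_w > L$, which gives the hypothesis of \cref{lem:hard:out-upper-spend} for all eight inverters at once. In the second bullet, you use $1+\epsm \le 10/9$ where the paper keeps the factor $1/9+\epsm$ and states condition~\eqref{eq:hard:NAND-lower} explicitly.
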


\begin{proof}
First, consider the case where $p_u \geq H$ and $p_v \geq H$. By \cref{lem:hard:out-upper-spend} we either have (a) $p_w \leq 2 p_u/\inslope$ or (b) the buyer $\inv_{uw}^i$ spends at most $\max\{0,1-p_u \cdot 2/9\} + 2\deltam \leq 2\deltam$ money on good $w$. In case (a), by \cref{lem:hard:prices-bounded} we have $p_u \leq 20$ and thus $p_w \leq 2 p_u/\inslope \leq 40/\inslope \leq L$, as desired. Thus, it remains to handle the setting where we are in case (b) for all $8$ \inverter buyers appearing in the gate. Thus, all together they spend at most $8 \cdot 2\deltam = 16\deltam$ units of money on good $w$. By \cref{lem:outputs-clean}, the total amount of money spent on good $w$ by all buyers in the market is thus at most $p_w \cdot 8/9 + F + 16\deltam$. Since the good must $\epsm$-clear, at least $1-\epsm$ units of the good must be bought, i.e.,
$$p_w \cdot 8/9 + F + 16\deltam \geq (1-\epsm) p_w.$$
Solving for $p_w$ yields
$$p_w \leq \frac{ F + 16\deltam}{1/9-\epsm}.$$
Thus, we indeed have $p_w \leq L = 100/\inslope$, as long as
\begin{equation}\label{eq:hard:NAND-upper}
\frac{ F + 16\deltam}{1/9-\epsm} \leq 100/\inslope
\end{equation}
which we will ensure with our choice of our parameters.

Now, consider the case where $p_u \leq L$. The case where $p_v \leq L$ can be handled in the same way. By \cref{lem:hard:out-lower-spend}, each buyer of the four buyers $\inv_{uw}^i$ spends at least
$$1 - p_u \cdot 2/9 - \deltam(1+\inslope p_w) \geq 1 - 100/\inslope - 21\deltam \inslope$$
money on good $w$, where we used $p_w \leq 20$ by \cref{lem:hard:prices-bounded}. Together with \cref{lem:outputs-clean} we thus get that either $p_w \geq H$, as desired, or the total amount of money spent on good $w$ is at least
$$4(1 - 100/\inslope - 21\deltam \inslope) + p_w \cdot 8/9 - 1000/\inslope - 2 \deltam \inslope \geq 4 + p_w \cdot 8/9 - 2000/\inslope - 100\deltam \inslope.$$
Since good $w$ must $\epsm$-clear, at most $1+\epsm$ units of the good can be bought, i.e.,
$$4 + p_w \cdot 8/9 - 2000/\inslope - 100\deltam \inslope \leq (1+\epsm) p_w$$
which is equivalent to
$$p_w \geq \frac{4 - 2000/\inslope - 100\deltam \inslope}{1/9 + \epsm}.$$
Thus, we indeed have $p_w \geq H$, as long as
\begin{equation}\label{eq:hard:NAND-lower}
\frac{4 - 2000/\inslope - 100\deltam \inslope}{1/9 + \epsm} \geq 9/2
\end{equation}
which we will ensure with our choice of our parameters.
\end{proof}

\begin{lemma}
In any gate $(\PURIFY,u,v,w)$ that is not $F$-faulty, we have
\begin{itemize}
\item if $p_u \geq H$, then $p_v \geq H$ and $p_w \geq H$.
\item if $p_u \leq L$, then $p_v \leq L$ and $p_w \leq L$.
\item if $p_v \in (L,H)$, then $p_w \geq H$ or $p_w \leq L$.
\end{itemize}
\end{lemma}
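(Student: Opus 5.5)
The plan is to treat the \PURIFY gadget as two chained layers of \inverter buyers and reuse the NAND analysis. The first layer maps $u$ to $\aux$ and behaves like half of a NAND gadget, so $\aux$ is an inverted copy of $u$. The second layer maps $\aux$ to $v$ and $w$ and inverts again, with thresholds chosen so that $v$ and $w$ switch at different values of $p_\aux$.

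Since $\aux$ belongs only to this gate and the gate is not $F$-faulty, \cref{lem:hard:prices-bounded} gives $p_\aux, p_u, p_v, p_w \le 20$. For the same reason, the money spent on $\aux$ by buyers other than its seven adjacent \inverter buyers is at most $F$. First I establish an analogue of \cref{lem:outputs-clean} for $\aux$. The downstream buyers of $\aux$ (two with $t=2/9$ and one with $t=4/9$) spend at most $p_\aux \cdot 8/9 + F$ on it. Combining \cref{lem:hard:out-upper-spend,lem:hard:in-lower-spend-indifferent}, they spend at least $p_\aux\cdot 6/9 - O(1/\inslope) - O(\deltam\inslope)$ whenever $p_\aux < H$. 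To get this, I lower-bound $\min\{1, p_\aux\cdot 4/9\}$ by $p_\aux \cdot 2/9$ for $p_\aux\le 9/2$. This weaker bound suffices.

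For the first layer I argue as in the NAND lemma.
\begin{itemize}
\item If $p_u\ge H$: each of the four $\inv_{u\aux}^i$ either forces $p_\aux\le 2p_u/\inslope\le 40/\inslope\le L$, or spends at most $2\deltam$ on $\aux$. In the latter case clearing gives $(1-\epsm)p_\aux \le p_\aux\cdot 8/9 + F + 8\deltam$, hence $p_\aux\le L$ under \eqref{eq:hard:NAND-upper}.
\item If $p_u\le L$: \cref{lem:hard:out-lower-spend} makes each first-layer buyer spend about $1$ on $\aux$. Then either $p_\aux\ge H$, or $4 + p_\aux\cdot 6/9 - \text{small} \le (1+\epsm)p_\aux$. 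The latter gives $p_\aux \gtrsim 4/(1/3+\epsm) > 9 > H$.
\end{itemize}

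For the second layer I use the upstream bounds on $v$ and $w$ from \cref{lem:outputs-clean} (at most $8/9\,p$ plus $F$, and at least $8/9\,p$ minus small unless $p\ge H$).
\begin{itemize}
\item If $p_\aux\le L$: the two $\inv_{\aux v}$ buyers put about $2$ on $v$, so $p_v \gtrsim 2/(1/9+\epsm) > H$. The single $\inv_{\aux w}$ puts about $1$ on $w$, so $p_w \gtrsim 1/(1/9+\epsm)$. This exceeds $9/2$ exactly because $\epsm<1/9$, and the slack absorbs the $O(1/\inslope)+O(\deltam\inslope)$ errors once $\inslope$ is large and $\deltam$ small.
\item If $p_\aux\ge H$: we have $\max\{0,1-p_\aux\cdot 2/9\}=0$ and $\max\{0,1-p_\aux\cdot 4/9\}=0$. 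So by \cref{lem:hard:out-upper-spend} every second-layer buyer either forces the output price below $40/\inslope\le L$, or spends at most $2\deltam$. Clearing then gives $p_v,p_w\le L$ as in the NAND upper case.
\end{itemize}
Together these give the first two bullets.

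For the third bullet I split on $p_\aux$ at the threshold $9/4$, where the $t=4/9$ buyer saturates.
\begin{itemize}
\item If $p_\aux\ge 9/4$: $\inv_{\aux w}$ has $1-p_\aux\cdot 4/9\le 0$, so by the argument above $p_w\le L$.
\item If $p_\aux\le 9/4$: each $\inv_{\aux v}$ spends at least $1-p_\aux\cdot 2/9 - \text{small}\ge 1/2-\text{small}$ on $v$ by \cref{lem:hard:out-lower-spend}, so $v$ receives at least $1-\text{small}$ from them. Then either $p_v\ge H$ already, or clearing gives $p_v\ge (1-\text{small})/(1/9+\epsm) > 9/2 = H$, again using $\epsm<1/9$.
\end{itemize}
Hence $p_v\in(L,H)$ forces $p_w\le L$.

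The main obstacle is the bookkeeping of error terms. I need to check that every inequality of the form "$(\text{const}-O(1/\inslope)-O(\deltam\inslope))/(1/9+\epsm)\ge 9/2$" holds with the strict margin supplied by $\epsm<1/9$. This yields additional parameter constraints analogous to \eqref{eq:hard:NAND-lower}, the tightest being the $w$ and $v$ ones with leading constant $1$. These are satisfied by first choosing $\inslope$ large and then $\deltam$ (hence $F$) small.
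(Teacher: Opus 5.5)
Your proposal is correct and follows essentially the same route as the paper. Like the paper, you invert $u$ into $\aux$ as in the \NAND analysis, then prove the second-layer implications with thresholds $L$, $H/2=9/4$ and $H$, using the larger threshold $t=4/9$ of $\inv_{\aux w}$ to separate $w$ from $v$; the only minor difference is that for $p_u\le L$ you count all three downstream buyers of $\aux$ (giving roughly $4/(1/3+\epsm)$), whereas the paper counts only the two $\inv_{\aux v}$ buyers (giving $4/(5/9+\epsm)$), and both comfortably exceed $H$.
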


\begin{proof}
We first argue that
\begin{itemize}
\item if $p_u \geq H$, then $p_\aux \leq L$.
\item if $p_u \leq L$, then $p_\aux \geq H$.
\end{itemize}
The first bullet follows from a very similar analysis to the proof of the first bullet in the previous lemma for the \NAND gate. For the second bullet, as argued in the previous lemma for the \NAND gate, we have that each of the four buyers $\inv_{u\aux}^i$ spends at least
$$1 - p_u \cdot 2/9 - \deltam(1+\inslope p_\aux) \geq 1 - 100/\inslope - 21\deltam \inslope$$
money on good $\aux$. Now, if we assume that $p_\aux \leq H = 9/2$, then by \cref{lem:hard:out-upper-spend} and \cref{lem:hard:in-lower-spend-indifferent}, the two buyers $\inv_{\aux v}^1$ and $\inv_{\aux v}^2$ each spend at least
$$\min\{1, p_\aux \cdot 2/9\} - 6p_\aux/\inslope - 2 \deltam \geq p_\aux \cdot 2/9 - 200/\inslope - 2 \deltam \inslope$$
money on good $\aux$, where we used $1 \leq \inslope$. As a result, the total amount of money spent on good \aux is at least
$$4(1 - 100/\inslope - 21\deltam \inslope) + 2(p_\aux \cdot 2/9 - 200/\inslope - 2 \deltam \inslope) \geq 4 + p_\aux \cdot 4/9 - 1000/\inslope - 100 \deltam \inslope.$$
Since good $\aux$ must $\epsm$-clear, at most $1+\epsm$ units of the good can be bought, i.e.,
$$4 + p_\aux \cdot 4/9 - 1000/\inslope - 100 \deltam \inslope \leq (1+\epsm) p_\aux$$
which is equivalent to
$$p_\aux \geq \frac{4 - 1000/\inslope - 100 \deltam \inslope}{5/9 + \epsm}.$$
Thus, we indeed have $p_\aux \geq H$, as long as
\begin{equation}\label{eq:hard:PURIFY-aux-lower}
\frac{4 - 1000/\inslope - 100 \deltam \inslope}{5/9 + \epsm} \geq 9/2
\end{equation}
which we will ensure with our choice of our parameters.

Now, it remains to prove the following:
\begin{itemize}
\item if $p_\aux \geq H$, then $p_v \leq L$.
\item if $p_\aux \leq H/2$, then $p_v \geq H$.
\item if $p_\aux \leq L$, then $p_w \geq H$.
\item if $p_\aux \geq H/2$, then $p_w \leq L$.
\end{itemize}
Together, these four bullets imply the statement of the lemma. The first bullet can be shown by using the same ideas as the first bullet in the analysis of the \NAND gate.

For the fourth bullet, by \cref{lem:hard:out-upper-spend} we either have $p_w \leq 2p_\aux/\inslope \leq L$, as desired, or the money spent by $\inv_{\aux w}$ on good $w$ is at most
$$\max\{0,1-p_\aux \cdot 4/9\} + 2 \deltam \leq 2\deltam$$
since $p_\aux \geq H/2 = 9/4$. As a result, very similar arguments to the first bullet in the analysis of the \NAND gate apply here again, and we obtain $p_w \leq L$.

For the second bullet, where $p_\aux \leq H/2$, assume that $p_v \leq H$ (otherwise, we are done). By \cref{lem:hard:out-lower-spend} each of the two buyers $\inv_{\aux v}^1$ and $\inv_{\aux v}^2$ spend at least
$$1 - p_\aux \cdot 2/9 - \deltam(1 + \inslope p_v) \geq 1/2 - 21 \deltam \inslope$$
money on good $v$, where we used $p_\aux \leq H/2 = 9/4$. Using \cref{lem:outputs-clean}, the total amount of money spent on good $v$ is at least
$$2(1/2 - 21 \deltam \inslope) + p_v \cdot 8/9 - 1000/\inslope - 2 \deltam \inslope \geq 1 + p_v \cdot 8/9 - 1000/\inslope - 100 \deltam \inslope.$$
Since good $v$ must $\epsm$-clear, at most $1+\epsm$ units of the good can be bought, i.e.,
$$1 + p_v \cdot 8/9 - 1000/\inslope - 100 \deltam \inslope \leq (1+\epsm) p_v$$
which is equivalent to
$$p_v \geq \frac{1 - 1000/\inslope - 100 \deltam \inslope}{1/9 + \epsm}.$$
Thus, we indeed have $p_v \geq H$, as long as
\begin{equation}\label{eq:hard:PURIFY-second}
\frac{1 - 1000/\inslope - 100 \deltam \inslope}{1/9 + \epsm} \geq 9/2
\end{equation}
which we will ensure with our choice of our parameters.

Finally, for the third bullet, where $p_\aux \leq L$, assume that $p_w \leq H$ (otherwise, we are done). By \cref{lem:hard:out-lower-spend} the buyer $\inv_{\aux w}$ spends at least
$$1 - p_\aux \cdot 4/9 - \deltam(1 + \inslope p_w) \geq 1 - 100/\inslope - 21 \deltam \inslope$$
money on good $w$, where we used $p_\aux \leq L$. Then, the same analysis as for the second bullet in the previous paragraph applies again and shows that $p_v \geq H$, as long as condition \eqref{eq:hard:PURIFY-second} is satisfied.
\end{proof}

\paragraph{\bf Setting the parameters.}
Using the fact that $\eps_m$ is a constant with $\epsm < 1/9$, and $\deltac \in (0,1)$ is constant, it is not hard to see that there exist constants $\deltam \in (0,1)$ and $a > 100$ such that the conditions \eqref{eq:hard:F-bound} to \eqref{eq:hard:PURIFY-second} are all satisfied. As a result, in any $(\epsm, \deltam)$-approximate equilibrium, all but a fraction $\deltac$ of the gates are correctly simulated. This completes the proof of \cref{prop:fisher-pcp-hard} and thus \cref{thm:fisher-pcp-hard}.

\paragraph{\bf Additional theorems.}
In order to prove \cref{thm:fisher-inv-poly-hard}, we set $\deltac = 1/2m$, where we recall that $m$ is the number of gates. Then there exists inverse-polynomial $\delta_m$ and constant $a > 100$ such that conditions \eqref{eq:hard:F-bound} to \eqref{eq:hard:PURIFY-second} are all satisfied. By \cref{lem:hard:gates-faulty} it follows that all gates are correctly simulated in that case, and thus we obtain \ppad/-hardness unconditionally, i.e., without the \pcppad conjecture.

Finally, in order to prove \cref{thm:fisher-non-zero-hard}, we consider any $(\epsm, \deltam)$-approximate equilibrium where buyers are not spending any money on goods for which their utility function is the zero function. Recall that for any good $j$, $f_j$ denotes the total amount of money spent on it that does not yield any utility to the buyer spending the money. By construction, for any good $j$, there are at most $12$ buyers with non-zero utility function for it. Since we now only allow those buyers to spend money on good $j$, and since each of them can spend at most $\deltam$ of their (unit) budget on segments yielding zero utility, we have $f_j \leq 12\deltam$. As a result, if we set, say, $F := 20 \deltam$, then no gate can be $F$-faulty. Furthermore, it is possible to pick constants $\deltam \in (0,1)$ and $a > 100$ such that the conditions \eqref{eq:hard:F-bound} to \eqref{eq:hard:PURIFY-second} are all satisfied. This means that all gates are correctly simulated, and we again obtain \ppad/-hardness unconditionally.

\section{Hardness for Fisher Markets Implies the \pcppad Conjecture}
\label{sec:markettopcp}

In this section we will prove that \ppad/-hardness of finding market equilibria with approximately optimal bundles would imply that the \pcppad conjecture holds. We will do so for a class
of markets that we call the \emph{reducible} markets, which are defined as
follows.
A Fisher market $(G,B,(e_i)_{i\in B},(u_i)_{i \in B})$ is reducible if all of
the following hold.

\begin{itemize}
\item The market has {\em constant} degree $d$. We say that a market has degree $d$ if, for each buyer $i \in B$, there
are at most $d$ goods $j$ such that $u_{i,j}$ is a non-zero function, and for
each good $j$ there are at most $d$ buyers $i$ such that $u_{i,j}$ is a
non-zero function.

\item The budgets are upper and lower bounded by some positive constants $\bmax$ and $\bmin$, i.e., $\max_i e_i \leq \bmax$ and $\min_i e_i \geq \bmin$.

\item The utilities are SPLC, each piecewise linear utility function $u_{i,j}$ has a constant number of
pieces, and the non-zero slopes of each piece lie in the range
$[1, \kappa]$ for some constant $\kappa$. 
Note that so long as, for any given buyer, the ratios between all non-zero slopes are bounded by a
constant, the latter condition can be achieved by a simple re-normalization step.

More formally, for each $i$ and $j$,
if $\langle (s_{i,j,1}, \ell_{i,j,1}), (s_{i,j,2}, \ell_{i,j,2}), \dots, (s_{i,j,m},
\ell_{i,j,m}) \rangle$ is the representation of the utility function $u_{i,j}$,
then we require that $m$ is upper bounded by a constant, and that for all $k$, $s_{i,j,k}$ is either zero or lies in $[1,
\kappa]$.
\end{itemize}

In this section we show the following theorem.
\begin{theorem}
\label{thm:markettopcp}
If there exists $\delta > 0$ such that finding a $(0, \delta)$-approximate market equilibrium in reducible Fisher markets is \ppad/-hard, then the \pcppad conjecture holds.
\end{theorem}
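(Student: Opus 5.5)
We prove the contrapositive-style reduction: we give a polynomial-time reduction that, given a reducible market, makes a single query to $(\epsc,\deltac)$-\gcircuitp for suitably small constants $\epsc,\deltac$, and post-processes the answer in polynomial time into a $(0,\delta)$-approximate equilibrium. Combined with the \gcircuitp version of the conjecture and \cref{thm:pcp-equivalent}, \ppad/-hardness of the market problem then yields \ppad/-hardness of $(\epsc,\deltac)$-\gcircuitp, i.e.\ the \pcppad conjecture. We follow the five steps of the technical overview.

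\textbf{Step 1 (encoding).} We build a \gcircuitp instance with variables $p_j\in[\pmin,\pmax]$ and spending variables $q_{i,j,k}\in[0,\bmax]$. Here $\pmax$ is a constant large enough that $d\bmax/\pmax<1/2$, so a good at price $\pmax$ cannot over-clear. Using constant-size subcircuits of $G_{\times c}$, $G_+$, $G_-$, $G_<$, $G_{\min}$ and $G_{\max}$ gates, we encode three families of constraints: budget equality $\sum_{j,k}q_{i,j,k}=e_i$, clearing $\sum_{i,k}q_{i,j,k}\approx p_j$, and approximate bang-per-buck ordering. The last one says that money on a segment with bang-per-buck $s/p_j$ is only present if all strictly better segments (by more than a $(1+O(\epsc))$ factor) are full. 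Division by $p_j$ is unavailable, so we compare $s\cdot p_{j'}$ against $s'\cdot p_j$ via multiplication by the constants $s\in[1,\kappa]$. Each buyer and good touches $O(d\cdot m)$ gates, which is constant. Hence a $\deltac$-fraction of broken gates corrupts at most a $c\deltac$-fraction of buyers and goods, and every uncorrupted buyer is $(1-c\epsc)$-optimal.

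\textbf{Steps 2--4 (repair).} In Step 2 we replace each corrupted buyer's bundle by an exactly optimal one, computed greedily. By the degree bound, this breaks clearing at only $O(d\deltac)$ goods.

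Step 3 is the main obstacle. We allow burning of up to $\epsc$ money per buyer and first preprocess the allocation to meet the window invariant. We then alternate shift-and-burn (a max-flow on in-window segments) with pump-and-shift (raising the prices of goods in the restricted flow graph by the factor $(1+\epsc)$). The key claims are the following.
\begin{itemize}
\item Every buyer in the restricted flow graph burns fully. Over-clearing goods carry at most $M=O(d\bmax)$ money, so the graph contains only an $O(\deltac/\epsc)$-fraction of goods.
\item The invariant survives both shift-and-burn and pump-and-shift, including when the marginal bang-per-buck changes.
\item Prices reach $\pmax$ after $O(\log_{1+\epsc}(\pmax/\pmin))=O(1)$ rounds, so the procedure terminates.
\end{itemize}
The hardest parts are the invariant preservation under pumping and the potential argument for the polynomial number of shift-and-burn iterations. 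For these I would argue monotonically that each iteration fully buys a new segment.

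In Step 4 we spend the burned money on under-clearing goods. Any surplus is spread uniformly over all goods; any deficit is covered by small shifts off the worst segments. This gives $f(\epsc,\deltac)$-clearing and $(1-g(\epsc,\deltac))$-optimality, where $f,g\to0$ as $\epsc\to0$ and $\deltac/\epsc\to0$.

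\textbf{Step 5 (exact clearing).} First we trim demand so that every good has demand exactly $1-f$. Next we redistribute the removed money uniformly so that every good has common demand $D\in[1-f,1+f]$. Finally we rescale all prices by $D$, keeping expenditures fixed, so every good clears exactly while optimality degrades by only a $(1\pm O(f))$ factor. Choosing $\epsc$ small and then $\deltac\ll\epsc$ so that the total loss satisfies $g'\le\delta$ completes the reduction.
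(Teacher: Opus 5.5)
Your outline follows the paper's five-step architecture closely, but Step~1 as you state it has a genuine gap. You claim that every good whose gates are uncorrupted approximately clears, and you justify only the upper price bound. Without an argument at the lower bound $p_j=\pmin$, and without extra preprocessing, the claim is false. A reducible market may contain many goods $j$ whose total utility-yielding capacity $r_j=\sum_{(i,k)\in I_j}\ell_{i,j,k}$ is at most $1/2$. Examples are goods valued only by one buyer with a small cap, or goods valued by nobody; the latter are not even limited in number by the degree bound. Such a good would have price $0$ in an exact equilibrium, but your circuit forces $p_j\ge\pmin$. When the price-update gate is truncated at $\pmin$, it only certifies that excess demand is at most $\epsc$. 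So the good can under-clear by about $1/2$ even though all of its gates are satisfied. These goods can make up a constant fraction of $G$, or even almost all of it, so they escape the $O(\deltac)$ and $O(\deltac/\epsc)$ accounting of Steps~2--4. In particular, Step~4's bound on the money needed to fill under-clearing goods fails, and so does the inequality $|G|\le d|B|$ that turns that bound into a small per-buyer shift.

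The missing idea is the paper's preprocessing.
\begin{itemize}
\item Truncate every segment to length at most $2$, and delete every good with $r_j\le 1+\delta$.
\item After solving the reduced market, add these goods back at price $0$. Give buyer $i$ the amount $\frac{1}{r_j}\sum_{k:s_{i,j,k}>0}\ell_{i,j,k}$, or one unit to an arbitrary buyer if $r_j=0$.
\item This clears exactly and keeps every bundle $(1-\delta)$-optimal, since $1/r_j\ge 1-\delta$. Truncation guarantees that any good with an unsatiated buyer has $r_j\ge 2$, so it is never deleted.
\end{itemize}
Every remaining good is then over-demanded by more than $\delta$ at price $0$. Take $\pmin=\bmin/(8d\kappa)$. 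A satisfied good stuck at $\pmin$ is then so cheap that all interested buyers buy all its segments: everything of comparable bang-per-buck costs them at most $\bmin/2$, by the bounds $\kappa$ and $d$. Its excess demand therefore exceeds $\epsc$ once $\epsc\ll\delta$, contradicting truncation at the lower bound.

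You should also make explicit how your three constraint families become a generalized circuit, in which every variable is the output of exactly one gate. The paper uses truncated fixed-point updates $p_j\leftarrow p_j+(\sum_{i,k}q_{i,j,k}-p_j)$ and $q_{i,j,k}\leftarrow\max(\min(q_{i,j,k}+g_{i,j,k},\,p_j\ell_{i,j,k}),0)$. The budget signal is weighted by $M_i+1$ so that it overrides the comparator signals. This form also enforces the segment cap $q_{i,j,k}\le p_j\ell_{i,j,k}$, which your box $[0,\bmax]$ does not.
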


An important point is that the \emph{simple} markets (\cref{def:simple-condition}) for which our hardness result holds (\cref{thm:fisher-pcp-hard}) are in particular also \emph{reducible} markets, as defined above. As a result, \cref{thm:markettopcp} shows that the \pcppad conjecture is necessary to obtain \ppad/-hardness for equilibrium with approximately optimal bundles in simple markets.

We will prove~\cref{thm:markettopcp} by showing that, given a reducible
market and a constant $\delta > 0$, we can construct an instance of $(\epsc, \deltac)$-\gcircuitp for some
sufficiently small constants $\epsc$ and $\deltac$, and then in polynomial time recover a $(0,
\delta)$-approximate market equilibrium from any solution of the generalized circuit instance. For the rest of this section, we fix $(G,B,(e_i)_{i\in B},(u_i)_{i \in B})$ to be a particular reducible market to which we will apply our reduction.

Our reduction follows a five step procedure. The aim of the first four steps will be to find an $(f(\epsc,\deltac), g(\epsc,\deltac))$-approximate market equilibrium, where $f$ and $g$ are some functions such that $f(\epsc,\deltac)$ and $g(\epsc,\deltac)$ can be made arbitrarily small by picking $\epsc$ and $\deltac$ small enough. In the fifth step, we will then show that we can pick sufficiently small constants $\epsc$ and $\deltac$ such that we can go from an $(f(\epsc,\deltac), g(\epsc,\deltac))$-equilibrium to a $(0, \delta)$-equilibrium in polynomial time.

\paragraph{\bf Preprocessing.}
Before proceeding with the first step of the reduction, we perform two preprocessing steps. First, we ensure that $\ell_{i,j,k} \le 2$ for all $i,j,k$. This can be enforced by simply truncating any segments that extend beyond two units of the good. Indeed, note that the buyer can be allocated at most $1+\eps$
of a good that $\eps$-clears, and equilibrium computation in Fisher markets is trivial when $\eps = 1$. Thus, what the utility function looks like beyond two units of good is irrelevant.

Second, for each good $j$, let $I_j = \{ (i,k) \; : \; i \in B \text{ and }
s_{i,j,k} > 0 \}$ denote the set of all segments of that good for which buyers have
non-zero utility. We ensure that $\sum_{(i,k) \in I_j} \ell_{i,j,k} > 1 + \delta$. 
In other words, we insist that every good $j$ would be over-demanded by more than $\delta$ units, if it had price zero. This can be enforced as follows. For any good $j$ with 
$\sum_{(i,k) \in I_j} \ell_{i,j,k} \le 1 + \delta$, we remove the good from the instance. Let $G_-$ denote the set of goods that were thus removed from $G$. We perform our five-step reduction on the market (with the goods in $G_-$ removed) and obtain a $(0,\delta)$-equilibrium. Then, we add the goods in $G_-$ back to the market with corresponding prices set to zero.

It remains to describe how we update the allocation in order to obtain a $(0,\delta)$-equilibrium of the market containing all goods. For every good $j \in G_-$, let $r_j := \sum_{(i,k) \in I_j} \ell_{i,j,k}$ and recall that $r_j \leq 1+\delta$, by definition of $G_-$. If $r_j = 0$, then assign one unit of good $j$ to some arbitrary buyer. If $r_j > 0$, then for each buyer $i$, we assign a quantity
$$\frac{1}{r_j} \sum_{k: s_{i,j,k} > 0} \ell_{i,j,k}$$
of good $j$ to the buyer. By definition of $r_j$, good $j$ clears exactly. Finally, note that for every good $j \in G_-$, every buyer receives a fraction at least $1/r_j \geq 1/(1+\delta) \geq 1-\delta$ of the total quantity of good $j$ that she has positive utility for. As a result, the bundle received by each buyer remains $(1-\delta)$-optimal.

\subsection{Step 1: Reduce to \gcircuitp}

The first step is to construct an $(\epsc, \deltac)$-\gcircuitp instance, where the constants
$\epsc$ and $\deltac$ will be fixed only at the very end of the proof of~\cref{thm:markettopcp}. At the end of Step 1, we aim to have a price
vector $p$ and an allocation $x$ in which a $(1 - c \cdot \deltac)$-fraction of
the goods $(c \cdot \epsc)$-clear, and in which a  $(1 - c \cdot
\deltac)$-fraction of the buyers receive $(1 - c \cdot \epsc)$-optimal allocations.

Our \gcircuitp formulation will have the following set of variables.
\begin{itemize}
\item For each good $j$, we have a price variable $p_j \in [\pmin, \pmax]$,
where $\pmin = \bmin / (8 \cdot d \cdot \kappa)$ and $\pmax = 2 \cdot d \cdot
\bmax$. Observe that for a reducible market, both $\pmin$ and $\pmax$ are
constant. 

\item For each buyer $i$, each good $j$, and each segment $k$ of $u_{i,j}$ we
define a variable $q_{i,j,k}$ to denote the amount of money that buyer $i$
spends on this segment. We constrain $q_{i,j,k} \in [0, \ell_{i,j,k} \cdot
\pmax]$.
\end{itemize}
Observe that in a reducible market, the bounds on these variables are constant.

When we build our \gcircuitp instance, we will often use long-form expressions
to specify a sequence of gates and intermediate variables that should be built.
When we do so, we ensure that the bounds on each intermediate variable are set
so that they never affect the computation. For example, if we were to introduce
an intermediate variable $x := y + z$, we set the bounds $x_l = y_l + z_l -
\epsc$ and $x_u = y_u + z_u + \epsc$ to ensure that the result is never truncated
by the bounds on $x$. In this way, we ensure that truncation will only occur for the
variables $p_j$ and $q_{i,j,k}$ defined above. 

Our \gcircuitp formulation will have the following constraints. 
\begin{itemize}
\item For each good $j$, we first introduce an intermediate variable $d_j$ to
hold the excess demand of good $j$.
\begin{equation}
\label{eqn:d}
d_j := \sum_{i,k} q_{i,j,k} - p_j,
\end{equation}
where the sum only includes segments for which $q_{i,j,k} > 0$. 
Then we set
\begin{equation}
\label{eqn:p}
p_j := p_j + d_j,
\end{equation}
where here the $G_{+}$ gate will truncate the computation using the bounds for
$p_j$.

\item For each buyer $i$, we introduce an intermediate variable $b_i$ to
hold a budget signal for that buyer. This signal will output $1$ if the buyer is
significantly underspending their budget, and $-1$ if the buyer significantly
overspends. 
\begin{equation}
\label{eqn:budget}
b_i := 2 \cdot \left(e_i > \sum_{j,k} q_{i,j,k} \right) - 1
\end{equation}
where the sum only includes segments for which $q_{i,j,k} > 0$.
Here, if we ignore the $\epsc$ errors, then the $G_{>}$ gate will output a
value in $[0, 1]$, and we then rescale that value to the range $[-1, 1]$.

\item For each buyer $i$, and each pair of slopes $s_{i,j,k}$ and $s_{i,j',k'}$
we introduce an intermediate comparator variable $c_{i,j,k,j',k'}$ which will
output $1$ if the bang-per-buck of $s_{i,j,k}$ is significantly worse than the
bang-per-buck of $s_{i,j',k'}$, and $-1$ if the latter is significantly worse
than the former. 
\begin{equation*}
c_{i,j,k,j',k'} := 2 \cdot \Bigl( s_{i,j,k} \cdot p_{j'} < s_{i,j',k'} \cdot p_{j}
\Bigr) - 1.
\end{equation*}
This is simply a reformulation of the standard bang-per-buck comparison
$s_{i,j,k} / p_{j} < s_{i,j',k'} / p_{j'}$ that avoids using a division.
Note that since the prices are constrained to be strictly positive, there are
no issues with division by zero here. 

\item For each buyer $i$, good $j$, and piece $k$, we introduce an intermediate
variable $g_{i,j,k}$ that will hold an aggregate signal for $q_{i,j,k}$. Let
$M_i$ be the total number of utility-function segments for which $s_{i,j,k} >
0$, and note that $M_i$ is upper bounded by a constant. We set
\begin{equation*}
g_{i,j,k} := \sum_{j',k'} \max(0, c_{i,j,k,j',k'}) + (M_i + 1) \cdot b_i,
\end{equation*}
where in the sum we only sum over $j'$ and $k'$ such that $s_{i,j',k'} > 0$.
That is, the segment receives positive signals from other segments when it has
a better bang-per-buck than them, and it also receives a budget signal that is
weighted to be able to overwhelm the bang-per-buck signals if necessary.

\item Finally, 
for each buyer $i$, good $j$, and piece $k$, we set
\begin{equation}
\label{eqn:q}
q_{i,j,k} := \max\Biggl(\min\Bigl(q_{i,j,k} + g_{i,j,k}, \;\; p_j \cdot
\ell_{i,j,k}\Bigr), 0\Biggr),
\end{equation}
which makes the expenditure on the segment move according to its signal, but
truncates from above to ensure that the total expenditure does not exceed the total cost
of buying the full segment, and truncates from below to ensure that the expenditure does not fall below zero.

\end{itemize}

\paragraph{\bf Correctness of Step 1.}

Given an $(\epsc, \deltac)$-\gcircuitp solution of the instance, we use the
$p_j$ variables directly as a price vector, and we construct an allocation for
each buyer $i$ using the following procedure.
\begin{enumerate}
\item First we create an initial allocation $x'_{i,j} = \sum_k q_{i,j,k} /
p_j$.
\item Then, if $\sum_{j} p_j \cdot x'_{i,j} < e_i$, meaning that the initial allocation
under-spends, we create the final allocation $x_i$
by spending $e_i - \sum_{j} p_j \cdot x'_{i,j}$ extra money optimally in addition to what
is spent by $x'_i$. 
\item On the other hand if $\sum_{j} p_j \cdot x'_{i,j} > e_i$, meaning that the initial
allocation over-spends, we create
the final allocation $x_i$ by removing $ \sum_{j} p_j \cdot x'_{i,j} - e_i$ money optimally
(that is, removing so as to minimize the loss of utility) from $x'_i$. 
\end{enumerate}
The last two steps above can both be implemented in polynomial time by, for
each buyer, adding or removing money greedily from utility-function segments in
bang-per-buck order.

We say that a buyer $i$ is \emph{satisfied} if all of the constraints that involve
buyer $i$ are satisfied. The following lemma states that, if buyer $i$ is
satisfied, then the solution to the \gcircuitp instance will allocate money to
utility-function segments in approximate bang-per-buck order.

\begin{lemma}
\label{lem:slopes}
There is a suitably small constant value for $\epsc$ such that
if buyer $i$ is satisfied, then for all $j$, $j'$, $k$, and $k'$ if $s_{i,j,k} / p_j
> s_{i,j',k'} / p_{j'} + \epsc$ then we cannot have $q_{i,j,k} < \ell_{i,j,k} \cdot
p_j - \epsc$ and $q_{i,j',k'} > \epsc$.
\end{lemma}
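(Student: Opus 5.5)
Throughout, suppose the hypotheses hold but both $q_{i,j,k} < \ell_{i,j,k} p_j - \epsc$ and $q_{i,j',k'} > \epsc$. We derive a contradiction from the two update constraints \eqref{eqn:q} for segments $(j,k)$ and $(j',k')$, which are satisfied since buyer $i$ is satisfied. The idea is that the comparator gives $(j,k)$ a strictly larger signal than $(j',k')$. A fixed point of \eqref{eqn:q} then forces $g_{i,j,k}\le O(\epsc)$ whenever $q_{i,j,k}$ is strictly below its cap, and $g_{i,j',k'}\ge -O(\epsc)$ whenever $q_{i,j',k'}$ is strictly above $0$.

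\textbf{Step 1: the comparator is correct.} The bang-per-buck gap $s_{i,j,k}/p_j - s_{i,j',k'}/p_{j'} > \epsc$ multiplies out to
\[
s_{i,j',k'}p_j - s_{i,j,k}p_{j'} > \epsc\, p_j p_{j'} \ge \epsc \pmin^2 .
\]
Each product is computed with only $O(\epsc)$ error, since prices are at most $\pmax$ and slopes at most $\kappa$. However, the comparison gate only guarantees a correct output when the gap exceeds its own $\epsc$ tolerance plus this accumulated error. This fails for a naive choice, because $\pmin^2\epsc$ is smaller than $\epsc$.

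I would handle this by building the comparison with an amplified gap, for instance by feeding $K(s_{i,j',k'}p_j - s_{i,j,k}p_{j'})$ against $0$ with $K$ a large constant such as $K = 10/\pmin^2$. Alternatively, I would interpret the lemma's statement as allowing a constant factor and choose $\epsc$ so that the needed inequalities hold. Either way, we get $c_{i,j',k',j,k}\approx -1$ and $c_{i,j,k,j',k'} \approx 1$, each up to $O(\epsc)$.

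\textbf{Step 2: compare the signals.} Consider any other segment $(j'',k'')$. Its comparator against $(j,k)$ dominates its comparator against $(j',k')$ up to $O(\epsc)$ errors, by transitivity of the true bang-per-buck order. The exception is near-ties, where the outputs may be arbitrary in $[0,1]$; this causes at most a loss of $1$ per pair. The key point is the extra term $\max(0,c_{i,j,k,j',k'})\approx 1$, which $(j,k)$ receives and $(j',k')$ does not.

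This transitivity/near-tie bookkeeping is the main obstacle. To handle it I would weaken the goal: I would not try to prove $g_{i,j,k} > g_{i,j',k'}$ outright. Instead I would use the budget term $(M_i+1)b_i$ as follows.

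\textbf{Step 3: case analysis on $b_i$.} Note that $b_i\in[-1,1]\pm O(\epsc)$.

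If $b_i \ge 0$, then $g_{i,j,k} \ge 1 - O(\epsc)$ because of the comparator against $(j',k')$. The update \eqref{eqn:q} with $\epsc$-accuracy gives
\[
q_{i,j,k} \ge \min(q_{i,j,k}+1-O(\epsc),\, p_j\ell_{i,j,k}) - \epsc .
\]
The first argument cannot be the minimum, since it exceeds $q_{i,j,k}$ by far more than $\epsc$. So $q_{i,j,k}\ge p_j\ell_{i,j,k}-\epsc$, which contradicts the assumption.

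If $b_i < 0$ significantly, then $b_i\approx -1$ (up to $O(\epsc)$, away from the threshold region of the comparison gate). Then $g_{i,j',k'} \le M_i - 1 - (M_i+1) + O(\epsc) <-1$, using that $(j',k')$ gets no positive signal from $(j,k)$. So \eqref{eqn:q} forces $q_{i,j',k'} \le \max(q_{i,j',k'}-1,0)+\epsc$, hence $q_{i,j',k'}\le \epsc$, again a contradiction.

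The remaining intermediate case, where $b_i$ is strictly inside $(-1,0)$, needs care. There I would bound $g_{i,j,k} - g_{i,j',k'} \ge 1 - O(\epsc)$ from the pairwise term alone, using only the crude fact that every other comparator term differs by at most what transitivity allows. Then at least one of the two fixed-point conditions is violated. Choosing $\epsc$ a sufficiently small constant, depending on $\kappa,\pmin,\pmax,M_i$, completes the argument.
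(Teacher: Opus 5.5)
Your Step~1 concern is legitimate, and the paper glosses over it. A bang-per-buck gap of $\epsc$ is only a gap of $p_jp_{j'}\epsc$ between the compared products, which can fall below the $G_{<}$ tolerance when $\pmin<1$. Weakening the hypothesis to a gap of $C\epsc$ is the right repair. Amplification works only if the factor is folded into the constants of the $G_{\times c}$ gates, since scaling the already computed difference scales its $O(\epsc)$ error as well.

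The genuine gap is in Step~3. First, ``$b_i$ significantly negative'' does not put you outside the ambiguous region of the budget comparator. When $\sum_{j,k}q_{i,j,k}$ is within $O(\epsc)$ of $e_i$, the $G_{<}$ gate may output anything in $[0,1]$, so $b_i$ can be any value in $[-1,1]$. Your cases $b_i\ge 0$ and $b_i\approx -1$ therefore cover only the extremes, and everything rests on case (iii). There you assert $g_{i,j,k}-g_{i,j',k'}\ge 1-O(\epsc)$ ``by transitivity''. That is precisely the near-tie bookkeeping you flagged in Step~2 as the obstacle and set out to avoid, and nothing in the proposal establishes it.

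Nor is it automatic. Let $\tau=O(\epsc)$ be the width of the comparator's ambiguous region in bang-per-buck units. If the gap between $(j,k)$ and $(j',k')$ lies between $\tau$ and $2\tau$, a third segment lying between them can be ambiguous with respect to both. It can then send signal $0$ to $(j,k)$ but $1$ to $(j',k')$, cancelling the $+1$. Conversely, once the inequality is available, the case split is superfluous. The term $(M_i+1)b_i$ enters both signals identically up to $O(\epsc)$, and the fixed-point conditions $g_{i,j,k}\le O(\epsc)$ and $g_{i,j',k'}\ge -O(\epsc)$ contradict the inequality directly for every value of $b_i$. That is the paper's whole argument.

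The missing idea is that the constant-factor enlargement you already accepted in Step~1 also disposes of near-ties. Require the gap to exceed $2\tau$. Then every third segment $Y$ falls into one of two cases:
\begin{itemize}
\item its bang-per-buck is more than $\tau$ below that of $(j,k)$, so $(j,k)$ receives signal $1-O(\epsc)$ from $Y$, at least what $(j',k')$ receives;
\item its bang-per-buck is more than $\tau$ above that of $(j',k')$, so $(j',k')$ receives only $O(\epsc)$ from $Y$.
\end{itemize}
This gives termwise domination up to $O(\epsc)$ for every $Y\neq(j,k),(j',k')$. Self-comparisons must be dropped from the sums, since a $G_{<}$ gate with equal inputs is unconstrained. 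The pair itself contributes $1-O(\epsc)$ to the difference, and the contradiction follows with no case analysis on $b_i$. The paper simply asserts this signal-count difference for $\epsc>0$; the $2\tau$ margin is what justifies it.
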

\begin{proof}
Suppose for the sake of contradiction that this does occur. 
We first present the argument for the case where $\epsc = 0$. 
Let $z_{j,k}= \sum_{j'',k''} \max(0, c_{i,j,k,i'',k''})$ be the sum of the
positive signals received by the segment $s_{i,j,k}$, and likewise let $z_{j',k'}= \sum_{j'',k''} \max(0,
c_{i,j',k',i'',k''})$. Observe that $z_{j,k} > z_{j',k'} + 1$ since $s_{i,j,k}$ receives a signal from $s_{i,j',k'}$ but no
signal is sent in return. 
\begin{itemize}

\item Since $q_{i,j,k} < \ell_{i,j,k} \cdot p_j - \epsc$, this means that the solution
for $q_{i,j,k}$ was not constrained by the $G_{\min}$ gate in \eqref{eqn:q}, which
means that we must have $g_{i,j,k} \le 0$, which implies that $z_{j,k} + (M_i +
1) \cdot b_i \le 0$. 

\item On the other hand, since 
$q_{i,j',k'} > \epsc$ we have that $q_{i,j',k'}$ was not constrained by the
$G_{\max}$ gate in \eqref{eqn:q}. Therefore, we must have $g_{i,j,k} \ge 0$, and so 
$z_{j',k'} + (M_i + 1)\cdot  b_i \ge 0$.
\end{itemize}

But now we have
\begin{align*}
0 &\le z_{j',k'} + (M_i + 1) \cdot b_i \\
& < z_{j,k} - 1 + (M_i + 1) \cdot b_i  \\
& \le -1,
\end{align*}
which is a contradiction.

When $\epsc$ is not zero, first note that there is still a difference of at
least one in the number of signals received by $s_{i,j,k}$ and $s_{i,j',k'}$
because we have $s_{i,j,k} / p_j < s_{i,j',k'} / p_{j'} - \epsc$, and this gap
is large enough to be detected by the $G_{<}$ gate used to implement the
comparator. Our arguments about $q_{i,j,k}$ and $q_{i,j',k'}$ not being
constrained by the $G_{\min}$ and $G_{\max}$ gates, respectively, still hold
since there is again a gap that is strictly greater than $\epsc$. 
The only other difference is that each value will be computed now with an error
of at most $c \cdot \epsc$, where $c$ is the number of gates used to compute
that value, and since our instance is reducible, we have that $c$ is constant.
So we just need to ensure that the final values of $g_{i,j,k}$ and
$g_{i,j',k'}$ are computed with accuracy strictly better than $0.5$, which then
still allows our $0 < -1$ contradiction to hold, and we can therefore choose
$\epsc < 0.5 / c$, where $c$ is the largest number of gates used to compute a
value that is used in this proof.
\end{proof}

The following lemma states that each satisfied buyer approximately satisfies their
budget constraint in the allocation $x'$ that we derived from the $\gcircuitp$
solution.

\begin{lemma}
\label{lem:budget}
There is a suitably small constant value for $\epsc$ such that, for each satisfied buyer
$i$, we have $\left|\sum_{j,k} q_{i,j,k} - e_i\right| \le (M_i+1) \cdot
\epsc$.
\end{lemma}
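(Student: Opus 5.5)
We argue by contradiction and split into the two symmetric cases of overspending and underspending, following the template of \cref{lem:slopes}: first for $\epsc = 0$, then absorbing the gate errors. Write $S_i = \sum_{j,k} q_{i,j,k}$, summing over segments with $s_{i,j,k} > 0$ as in \eqref{eqn:budget}.

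\emph{Case $S_i > e_i + (M_i+1)\epsc$.} The gap exceeds $\epsc$, so the comparison gate in \eqref{eqn:budget} outputs approximately $0$ and hence $b_i \approx -1$. For every segment, $g_{i,j,k} = \sum \max(0,c) + (M_i+1) b_i \le (M_i - 1) - (M_i+1) = -1$ up to accumulated error; the sum has at most $M_i - 1$ terms each at most $1$, since a segment does not compare with itself. The budget weight $(M_i+1)$ is chosen precisely so that this is strictly negative. By \eqref{eqn:q}, each $q_{i,j,k}$ then satisfies $q_{i,j,k} \approx \max(\min(q_{i,j,k} - 1, \cdot), 0)$. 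A fixed point of a strictly decreasing update can only be reached via the $G_{\max}$ truncation, so $q_{i,j,k} \le \epsc$ (more precisely, at most the accumulated error) for every segment. Hence $S_i \le M_i \epsc < e_i$ once $\epsc$ is small relative to $\bmin$, a contradiction.

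\emph{Case $S_i < e_i - (M_i+1)\epsc$.} Now $b_i \approx 1$, so every $g_{i,j,k} \ge M_i + 1 > 0$. A fixed point of \eqref{eqn:q} then forces the $G_{\min}$ truncation to be active, giving $q_{i,j,k} \ge p_j \ell_{i,j,k} - O(\epsc)$ for every segment. Summing over segments, $S_i \ge \sum_{j,k} p_j \ell_{i,j,k} - O(M_i\epsc)$. This is the main obstacle: we must show that buying all segments fully costs more than $e_i$. The idea is to use $p_j \ge \pmin$ together with the preprocessing guarantee that every good carries more than $1+\delta$ units of positive-utility segments. However, this guarantee concerns a good, not a buyer. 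Buyer $i$'s own segments might be short, so full purchase could cost less than $e_i$, and the buyer would then legitimately underspend.

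To handle this, I would first try to show that the precise statement is only needed where it is true: for a buyer whose total segment cost is below $e_i$, the circuit forces every segment to be fully bought, and Step 1's reallocation then spends the leftover arbitrarily at no utility loss. Alternatively, I would add a preprocessing step ensuring every buyer has an uncapped segment, or one of length at least $2$ as permitted by the truncation to $\ell \le 2$. Since $p_j \ell \ge 2\pmin$ may still be below $e_i$, I expect the intended argument relies on each buyer having a long last segment, making full purchase cost at least $p_j \cdot 2$ with $p_j$ large enough. If that fails, I would restrict the lemma to buyers whose full-segment cost exceeds $e_i$.

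\emph{Error accounting.} As in \cref{lem:slopes}, each intermediate value is computed with error at most $c\epsc$ for a constant $c$ bounded by the number of gates in a buyer's subcircuit. Choosing $\epsc < 1/(2c)$ keeps the sign of each $g_{i,j,k}$ robust, so the truncation arguments go through. The threshold $(M_i+1)\epsc$ in the statement arises from the per-segment slack $\epsc$ at the $G_{\max}$ and $G_{\min}$ truncations, summed over the $M_i$ segments, plus $\epsc$ for the comparison gate.
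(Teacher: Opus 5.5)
Your overspending case is correct and is the paper's argument. The comparator in \eqref{eqn:budget} forces $b_i \approx -1$, the weight $M_i+1$ makes every $g_{i,j,k}$ negative, and so each $q_{i,j,k}$ is pushed to the $G_{\max}$ floor, giving $\sum_{j,k} q_{i,j,k} = O(\epsc)$.

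The underspending case, however, is not proved. You correctly derive $q_{i,j,k} \ge p_j\ell_{i,j,k} - O(\epsc)$ for every segment. You also correctly observe that $\sum_{j,k} p_j\ell_{i,j,k}$ can be below $e_i$ when prices are near $\pmin$ (note $2\pmin = \bmin/(4d\kappa) < e_i$). But then you stop. The paper disposes of this case by saying the $q_{i,j,k}$ would sit at their upper bound $\pmax > \bmax \ge e_i$. As you noticed, that is not what \eqref{eqn:q} gives, since the cap there is $p_j\ell_{i,j,k}$. Your fallbacks do not close the gap either. Restricting the lemma to buyers whose full-segment cost exceeds $e_i$ is not safe, because \cref{lem:clears2} uses the lower bound for every satisfied buyer. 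Without it, a satisfied buyer could be topped up by a constant amount of money, spent arbitrarily, when $x$ is built from $x'$.

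The missing idea is the one you half-guessed: explaining why the price of the long good is large.

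\begin{itemize}
\item By the sufficient condition, buyer $i$ has a never-satiated good $j^*$. After the preprocessing truncation, its positive-slope segments still cover at least $2$ units.
\item If all of buyer $i$'s segments are nearly full, then $\sum_k q_{i,j^*,k} \ge 2p_{j^*} - O(\epsc)$.
\item All $q$ variables are nonnegative, so the excess demand from \eqref{eqn:d} satisfies $d_{j^*} \ge p_{j^*} - O(\epsc) \ge \pmin - O(\epsc) > \epsc$. In words, buyer $i$ alone demands two units of a good with supply one.
\item If the gate \eqref{eqn:p} for $j^*$ is satisfied, this rules out the untruncated regime (which needs $|d_{j^*}| \le \epsc$) and truncation at $\pmin$ (which needs $d_{j^*} \le \epsc$). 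Hence $p_{j^*} \ge \pmax - \epsc$.
\item Therefore $\sum_{j,k} q_{i,j,k} \ge 2\pmax - O(\epsc) = 4d\bmax - O(\epsc) > e_i$, contradicting the underspending assumption.
\end{itemize}

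For this to work, you must count the gates \eqref{eqn:d} and \eqref{eqn:p} of the goods buyer $i$ is interested in among the constraints involving buyer $i$; they read $q_{i,j,k}$, so this is natural. Bounded degree keeps the number of constraints per buyer constant, so the counting in \cref{lem:step1} is unaffected.

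Without this convention the statement is actually false. All prices at $\pmin$, with every segment full and $b_i = 1$, satisfy all of buyer $i$'s own gates while the buyer underspends by a constant.

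One minor correction to your error accounting. The slack $(M_i+1)\epsc$ is what lets the comparator in \eqref{eqn:budget} detect the imbalance despite the up to $(M_i-1)\epsc$ error of the $G_+$ chain computing the sum. It is not the per-segment truncation slack.
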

\begin{proof}
We first prove that $\sum_{j,k} q_{i,j,k} \le e_i + (M_i+1) \cdot \epsc$ for every
satisfied buyer $i$. Suppose for the sake of contradiction that this is not the
case for some satisfied buyer $i$. Since $M_i+1$ is strictly larger than the number
of summands in $\sum_{j,k} q_{i,j,k}$, then observe that the $G_{>}$ gate in
the definition of the budget signal $b_i$ in~\cref{eqn:budget}
will output $0 \pm \epsc$, and therefore we will have
$b_i \le -1 + 3 \cdot \epsc$, where the extra $2 \cdot \epsc$ error arises from
the $G_{\times 2}$ and $G_{-}$ gates used to compute~$b_i$. 

For each $i$, $j$, and $k$, the signal $g_{i,j,k}$ is obtained by summing at
most $M_i$ comparators $c_{i,j,k,i',k'}$, and then adding $(M_i+1) \cdot b_i$ to
that sum. Note that each comparator outputs a value of at most $1+3\cdot
\epsc$, which implies that each signal 
\begin{align*}
g_{i,j,k} &\le M_i\cdot (1 + 3\cdot \epsc) + (M_i + 1) \cdot (-1 + 3 \cdot \epsc)
\\
&= -1 + (2M_i + 1) \cdot 3 \cdot \epsc \\
&< -\epsc
\end{align*}
where in the final inequality we have used the fact 
that we can select $\epsc \le 1 / (6 \cdot (2M_i + 1))$ because $M_i$ is
constant.
 
Therefore, every segment of buyer $i$ receives a strictly negative signal from
$g_{i,j,k}$. This means that we should have $q_{i,j,k} \le \epsc$ for all $i$, $j$,
and $k$ in order for the constraints on the $q_{i,j,k}$ variables to be
satisfied. But this then contradicts the assumption that 
$\sum_{j,k} q_{i,j,k} > e_i + (M_i+1) \cdot \epsc$, since $e_i$ is bounded away
from zero.

The proof that $\sum_{j,k} q_{i,j,k} \ge e_i - (M_i+1) \cdot \epsc$ for all
satisfied
buyers $i$ is essentially the same. 
If we assume that $\sum_{j,k} q_{i,j,k} < e_i - (M_i+1) \cdot \epsc$, then
this time the budget signal $b_i$ outputs a
value of at least $1 - 3 \cdot \epsc$ which is again big enough to overwhelm
the comparator signals in the computation of the $g_{i,j,k}$ signals. So the
only way for the $q_{i,j,k}$ constraints to be satisfied would be for each
$q_{i,j,k}$ to be at their upper bound of $\pmax$, but $\pmax > \bmax \ge e_i$,
which then contradicts our assumption that $\sum_{j,k} q_{i,j,k} < e_i - (M_i+1)
\cdot \epsc$.
\end{proof}

We can now combine the previous two lemmas to show that the allocation $x$
gives each satisfied buyer $i$ an approximately optimal bundle.

\begin{lemma}
\label{lem:optimal}
There is a suitably small constant value for $\epsc$ such that,
for each satisfied buyer $i$, the allocation given by $x_{i}$ is $(1 - c \cdot \epsc)$
optimal for some constant $c$.
\end{lemma}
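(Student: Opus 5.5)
We will compare the utility of $x_i$ with that of an optimal bundle $x^*_i$ at prices $p$, computed greedily by bang-per-buck. The plan has three stages.

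First, we describe the structure of the initial allocation $x'_i$ induced by the $q_{i,j,k}$. Let $\beta$ denote the marginal bang-per-buck of the optimal greedy solution, i.e.\ the bang-per-buck of the segment on which the optimal budget runs out. By \cref{lem:slopes}, any segment with bang-per-buck exceeding that of some purchased segment (one with $q > \epsc$) by more than $\epsc$ is bought up to within $\epsc$ money of full. Consequently $x'_i$ differs from a ``threshold'' allocation only in two ways:
\begin{itemize}
\item at most $\epsc$ money is missing on each high segment, and at most $\epsc$ money sits on each low segment;
\item money may be distributed arbitrarily among segments whose bang-per-buck lies in an interval of width $\epsc$.
\end{itemize}
Since $M_i$ is constant, the total money misplaced in the first way is at most $2M_i\epsc$. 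By \cref{lem:budget}, the total spending $\sum_{j,k} q_{i,j,k}$ is within $(M_i+1)\epsc$ of $e_i$.

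Second, we account for the greedy correction. Passing from $x'_i$ to $x_i$ adds or removes at most $(M_i+1)\epsc$ money, and does so optimally, so it can only help relative to the threshold analysis. We compare $u_i(x_i)$ with $U^*_i$ via a simple exchange argument. Let $\tau$ be the threshold bang-per-buck level of $x'_i$, taken as the minimum bang-per-buck over segments with $q > \epsc$. Then $x_i$ and $x^*_i$ agree up to three discrepancies:
\begin{itemize}
\item $O(M_i\epsc)$ misplaced money, each unit of which loses at most $\kappa/\pmin$ utility;
\item money in the $\epsc$-wide window, each unit of which loses at most $\epsc$ utility;
\item the $O(M_i\epsc)$ budget mismatch.
\end{itemize}
Summing, $U^*_i - u_i(x_i) \le c'\epsc$, where $c'$ depends only on $M_i$, $\kappa$, $\pmin$ and $\bmax$. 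All of these are constants in a reducible market.

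Third, we convert this additive bound into the multiplicative one required. This needs a constant lower bound on $U^*_i$. Every buyer has some segment with slope at least $1$, and that segment has positive length; after preprocessing we may also use the sufficient condition of an unsatiated good. Spending $\min(e_i, \ell\, p_j)$ on that segment gives $U^*_i \ge \min(\bmin/\pmax, \ell)$, which is bounded below by a constant provided segment lengths are bounded below by a constant. This is the step I expect to be the main obstacle, since the reducible-market definition does not explicitly bound lengths from below. If needed, we would instead use an unsatiated good, which has infinite length, to get $U^*_i \ge \bmin/\pmax$. Alternatively, we would compare losses relative to $U^*_i$ directly: each unit of misplaced money costs at most $\kappa$ times the marginal rate, and $U^*_i \ge e_i \cdot \beta$ where $\beta$ is the optimal marginal bang-per-buck. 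This makes the ratio bound follow with $c = O(M_i\kappa/\bmin)$.
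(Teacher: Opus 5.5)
Your proposal is correct in substance but organized differently from the paper's proof. The paper uses a chain of multiplicative comparisons. It replaces $x^*_i$ by the optimum $y^*_i$ for the budget reduced by $(M_i+1)\epsc$. It then compares $y^*_i$ with the segment-wise money allocation $z_i$ given by the $q_{i,j,k}$, partitioning segments relative to the worst segment bought by $y^*_i$ into those more than $\epsc$ better, those worse, and the rest. Finally it uses $u_i(x'_i)\ge u_i(z_i)$ and accounts for the greedy budget correction. You instead bound the \emph{additive} loss against a threshold $\tau$ read off from the circuit solution itself, and convert to a ratio only once, through a global lower bound on $U^*_i$.

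This buys you something real. The paper's step for the ``clearly better'' segments explicitly uses that each $\ell_{i,j,k}$ is bounded below by a constant, which the definition of reducible markets does not provide; this is exactly the obstacle you noticed. Your accounting instead charges every misplaced unit of money at most $\kappa/\pmin$, regardless of segment lengths. What the paper's route avoids is needing a lower bound on $U^*_i$, and your fallback via the unsatiated good supplies it. After the truncation preprocessing that good no longer has infinite length, but the bound survives:
its slopes are all at least $1$, its length is $2$, and it is not removed (buyer $i$ alone contributes length $2>1+\delta$). Hence $U^*_i\ge\min(e_i/p_j,2)\ge\bmin/\pmax$, using $\pmax=2d\bmax$.

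Three points need tightening.

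\textbf{Stage 1 describes $q$, not $x'_i$.} The structure you describe is that of the money vector $(q_{i,j,k})$, not of $x'_i$. You need the easy inequality $u_i(x'_i)\ge\sum_{j,k}s_{i,j,k}q_{i,j,k}/p_j$, since refilling each good's segments in order only helps.

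\textbf{Write the exchange argument as a duality bound.} It is cleanest as weak duality with dual price $\tau$. This also makes $\beta$ unnecessary, which you define but never use. Write $r_\sigma$ and $C_\sigma$ for the bang-per-buck and full cost of segment $\sigma$, and $Q=\sum_\sigma q_\sigma$. Here $\tau$ is well defined because $Q\ge\bmin-(M_i+1)\epsc>M_i\epsc$. Then
$U^*(Q)-\sum_\sigma r_\sigma q_\sigma\le\sum_{r_\sigma\ge\tau}(r_\sigma-\tau)(C_\sigma-q_\sigma)+\sum_{r_\sigma<\tau}(\tau-r_\sigma)q_\sigma$,
and each term is $O(\epsc)$:
for $r_\sigma>\tau+\epsc$ by \cref{lem:slopes}; for window segments by $r_\sigma-\tau\le\epsc$ and $C_\sigma\le 2\pmax$; and for $r_\sigma<\tau$ by $q_\sigma\le\epsc$. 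The gap $|e_i-Q|\le(M_i+1)\epsc$ from \cref{lem:budget} costs at most another $(\kappa/\pmin)(M_i+1)\epsc$.

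\textbf{Drop your last ``alternative''.} Bang-per-buck ratios across goods are bounded by $\kappa\pmax/\pmin$, not $\kappa$. Also, $U^*_i\ge e_i\beta$ presumes the buyer is not satiated, which truncation can break at low prices.
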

\begin{proof}
Let $x_i^*$ denote the actual optimal allocation for buyer $i$ under the price
vector $p$. We will relate this to the actual allocation $x$ using a sequence
of steps. 
\begin{enumerate}
\item First, let $y_i^*$ be the optimal allocation for buyer $i$ 
under the price vector $p$ where buyer $i$'s budget has been reduced by 
$(M_i+1) \cdot \epsc$. Since the slopes $s_{i,j,k}$ are constant for
all utility function pieces in each $u_{i,j}$, 
and since we can select $\epsc$ to be a suitably small constant to ensure that 
$(M_i+1) \cdot \epsc \ll e_i$, 
there exists a constant $c_y$ such
that $\sum_j u_i(y^*_{i}) \ge (1 - c_y \cdot \epsc) \cdot u_i(x^*_i)$.

\item Next consider the hypothetical allocation $z_i$ in which buyer $i$ spends 
exactly $q_{i,j,k}$ on each line segment $s_{i,j,k}$. This cannot be
implemented in practice, since we must fully buy $s_{i,j,k}$ before we can
spend money on $s_{i,j,k+1}$, but it will serve as a useful intermediary.

Note that by~\cref{lem:budget}, we have that $z_i$ always spends at least
as much as $y^*_i$, and that $z_i$ spends at most 
$2 \cdot (M_i+1) \cdot \epsc$ more than $y^*_i$. We now compare the utility
obtained by these two allocations. 

Let $j_w$ and $k_w$ denote the segment such that $y^*_{i,j_w,k_w} > 0$ and 
$j_w$ and $k_w$ minimize $s_{i,j_w,k_w} / p_{j_w}$. That is, these are the indices
of the worst utility-function segment that is bought by $y^*$.
Then we divide the goods into three sets:
\begin{itemize}
\item The set $A$ contains pairs $(j,k)$ such that 
$s_{i,j,k} / p_j > s_{i,j_w,k_w} / p_{j_w} + \epsc$.
\item The set $B$ contains 
pairs $(j,k)$ such that 
$s_{i,j,k} / p_j < s_{i,j_w,k_w} / p_{j_w} + \epsc$.
\item The set $C$ contains all pairs $(j, k)$ that are not in $A$ or $B$.
\end{itemize}
The idea is that $A$ contains all segments whose bang-per-buck would be
identified by a comparator as being strictly better than the worst segment,
while $B$ contains all segments that a comparator would identify as having
strictly worse bang-per-bucks.

Since $z$ spends at least as much money as $y^*$, \cref{lem:slopes}
implies that $z$ must buy $\ell_{i,j,k} - \epsc$ of any segment $(j, k) \in A$.
Thus the utility gained by $z$ from the segments in $A$ is at least a $(1 -
c \cdot \epsc)$ fraction of the utility gained by $y^*$ from the segments in $A$, for
some constant $c$,
where we are using the fact that each $\ell_{i,j,k}$ is constant. 

While $z$ is not required to buy the same segments as $y^*$ in $B$,
\cref{lem:slopes} and~\cref{lem:budget} imply that it must
spend at least as much money on segments in $B$, and all such segments
differ in bang-by-buck by at most a constant. Therefore the utility gained by 
by $z$ from the segments in $B$ is again at least a $(1 -
c \cdot \epsc)$ fraction of the utility gained by $y^*$ from the segments in $B$, for
some constant $c$.

There may be some allocation of $z$ to segments in $C$, but this only improves
the utility gained by $z$, while we know that $y^*$ does not gain any utility
from segments in $C$. 
By combining all three sets, we get that there exists some constant $c$ such that
$u_i(z) \ge (1 - c \cdot \epsc) \cdot u_i(y^*)$.

\item Next we consider the initial allocation $x'$. We have that $x'$ differs
from $z$ only in that it is forced to buy each line segment $s_{i,j,k}$ before it
buys $s_{i,j,k+1}$. This means that we can construct $x'$ from $z$ by, for each
good $j$, continually shifting money from the worst bang-per-bug segment for
$j$ to the best bang-per-buck segment for $j$ until that segment is fully
bought. This process can only increase the utility that buyer $i$ obtains,
since we are shifting money from worse segments to better ones. So we
immediately get that $u_i(x') \ge u_i(z)$. 

\item Finally we consider the actual allocation $x$. If $x$ was constructed
from $x'$ by adding extra money in the case where $x'$ underspent, then we
clearly have 
$u_i(x) \ge u_i(x')$, since spending extra money can only increase the utility
of buyer $i$. On the other hand, if 
$x$ was constructed
from $x'$ by removing money in the case where $x'$ underspent, note that the
amount of money that was removed was at most 
$(M_i+1) \cdot \epsc$ by~\cref{lem:budget}. Since this money is
always removed from the goods that have lowest bang-per-buck, and since the
ratio between the slopes of any two segments is constant,  we get that
$u_i(x') \ge (1 - c \cdot \epsc) \cdot u_i(x)$ for some constant $c$. 
\end{enumerate}
Combining the three steps above gives us that $u_i(x) \ge (1 - c \cdot \epsc)
\cdot u_i(x^*)$ for some constant $c$, as required.
\end{proof}

We say that a good $j$ is \emph{satisfied} if every buyer that has a non-zero
utility function for $j$ is satisfied. 
The following lemma states that a satisfied good approximately clears in the
initial allocation $x'$. 

\begin{lemma}
\label{lem:clears}
There is a suitably small constant value for $\epsc$ such that,
if good $j$ is satisfied, then $\left| \sum_{i \in B} x'_{i,j} - 1 \right| \le c' \cdot \epsc$ for some constant $c'$.
\end{lemma}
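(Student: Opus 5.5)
The approach is to read clearing off the price-update constraints \eqref{eqn:d} and \eqref{eqn:p}. First observe that $\sum_{i} x'_{i,j} = \sum_{i,k} q_{i,j,k}/p_j$. So it suffices to show that, for a satisfied good $j$, the total expenditure $S_j := \sum_{i,k} q_{i,j,k}$ satisfies $|S_j - p_j| \le c'' \epsc$. Then dividing by $p_j \ge \pmin$, a positive constant, gives the claim with $c' = c''/\pmin$. One caveat: ``satisfied good'' must include the gates defining $d_j$ and the $G_+$ gate updating $p_j$. I would read the definition as covering all constraints involving good $j$, or otherwise count those gates among the $O(1)$ constraints charged to $j$.

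Step 1 handles the case where the price constraint is not truncated. The gate computing $d_j$ is a chain of at most $d\cdot m + 1$ additions and subtractions, a constant number, so $d_j = S_j - p_j \pm c_1\epsc$. The update gate gives $p_j = \max(\min(p_j + d_j, \pmax), \pmin) \pm \epsc$. If $p_j \in (\pmin + \epsc, \pmax - \epsc)$, then truncation is inactive, so $|d_j| \le \epsc$ and hence $|S_j - p_j| \le (c_1+1)\epsc$.

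Step 2 shows $p_j$ cannot sit (within $\epsc$) at $\pmax$. If it did, the gate would only give $d_j \ge -O(\epsc)$, so $S_j \ge \pmax - O(\epsc)$. But at most $d$ buyers spend on $j$, and each satisfied buyer spends at most $e_i + (M_i+1)\epsc \le \bmax + O(\epsc)$ by \cref{lem:budget}. So $S_j \le d\bmax + O(\epsc) < 2d\bmax - O(\epsc) = \pmax - O(\epsc)$, a contradiction for small $\epsc$.

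Step 3 shows $p_j$ cannot sit at $\pmin$, and I expect this to be the main obstacle. Here we would get $d_j \le O(\epsc)$, i.e.\ $S_j \le \pmin + O(\epsc)$, so there is little spending on $j$, and we must contradict buyer optimality. By the preprocessing, the positive-slope segments on $j$ have total length more than $1+\delta$. Hence some segment $(i,k)$ with $s_{i,j,k} \ge 1$ is not nearly fully bought: its cost is $\ell_{i,j,k} p_j$, and the total of $q$ over such segments is only $\pmin + O(\epsc)$, much less than $(1+\delta)\pmin$. That segment has bang-per-buck at least $1/\pmin = 8d\kappa/\bmin$. Buyer $i$ is satisfied and spends at least $e_i - O(\epsc) \ge \bmin/2$ in total, yet spends at most $\pmin + O(\epsc)$ on good $j$. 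So $i$ spends a positive constant amount on some other segment $(j',k')$ with $q_{i,j',k'} > \epsc$. That segment has bang-per-buck at most $\kappa/p_{j'}$. If $p_{j'} \ge 2\kappa\pmin$, this is far below $1/\pmin$, and \cref{lem:slopes} is contradicted.

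The delicate subcase is when every good $j'$ that $i$ buys also has a tiny price. To handle it, I would choose $\pmin = \bmin/(8d\kappa)$ so that a buyer can spend at most about $d\cdot 2\cdot p_{j'}$ (lengths are at most $2$) on low-priced goods. That is at most roughly $\bmin/(4\kappa) < \bmin/2$, so $i$ must spend a constant amount on some good with price at least $2\kappa \pmin$. This restores the contradiction, possibly after adjusting the constants. I also need the $\ell_{i,j,k}p_j - \epsc$ slack in \cref{lem:slopes} to be absorbed by the gap $\delta\pmin$. Since $\delta$ is a fixed constant, this holds once $\epsc$ is taken small enough.
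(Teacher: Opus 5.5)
Your proposal is correct and follows essentially the same route as the paper. It uses the same three-way case split on whether the price-update gate for $p_j$ truncates: the untruncated case gives clearing directly; the upper-bound case is ruled out by \cref{lem:budget} together with $\pmax = 2d\bmax$; and the lower-bound case is ruled out by the $1+\delta$ over-demand preprocessing, the choice $\pmin = \bmin/(8d\kappa)$, and \cref{lem:slopes}. There is one arithmetic slip in Step 3 to fix. Goods priced below $2\kappa\pmin$ can absorb up to $d\cdot 2\cdot 2\kappa\pmin + O(\epsc) = \bmin/2 + O(\epsc)$, not $\bmin/(4\kappa)$, so your lower bound of $\bmin/2$ on the buyer's spending is not enough. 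Use instead the sharper bound $e_i - O(\epsc) \ge \bmin - O(\epsc)$ from \cref{lem:budget}, or lower the threshold to something like $\tfrac{3}{2}\kappa\pmin$; with that change the argument goes through.
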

\begin{proof}
Suppose for the sake of contradiction that this does not hold, and let $j$ be a
satisfied good such that the inequality does not hold.
Observe that the demand variable $d_j$ will compute 
$\sum_{i,k} q_{i,j,k} - p_j \pm c \cdot \epsc$ for some constant $c$, because
there are at most constantly many segment indices $i$ and $k$ such that
$s_{i,j,k} > 0$. 
\begin{itemize}
\item If the gate implementing $p_j = p_j + d_j$ in Equation~\eqref{eqn:p} is not truncated by
the bounds on $p_j$, then we must have $|d_j| \le \epsc$, which would then
imply that 
$$\left|\sum_{i,k} q_{i,j,k} - p_j \pm c \cdot \epsc\right| \le \epsc,$$ 
and dividing this inequality by $p_j$ then yields 
$$\left|\sum_{i} x'_{i,j} - 1 \pm c \cdot \epsc\right| \le \epsc/p_j,$$ 
which would contradict our assumption.

\item If the gate implementing $p_j = p_j + d_j$ is truncated by the upper
bound on $p_j$, then we have $d_j \ge -\epsc$ and $p_j = \pmax = 2 \cdot d \cdot \bmax$. 
However, since all buyers who are interested in good $j$ are satisfied,
\cref{lem:budget} implies that each such buyer satisfies their budget
constraint
$\left|\sum_{j,k} q_{i,j,k} - e_i\right| \le (M_i+1) \cdot \epsc$, and there
are at most $d$ such buyers. So the
total amount of money spent on good $j$ can be at most $d \cdot (\bmax +
(M+1) \cdot \epsc)$, where $M$ is the maximum over the $M_i$ terms of each
buyer $i$ who is interested in good $j$. 

So long as 
$\epsc < \bmax/(2M+2)$ we have $\sum_{i,k} q_{i,j,k} < 
d \cdot 1.5 \cdot \bmax$, where here we are using the fact that we can
choose $\epsc$ to be a suitably small constant, and that $\bmax$ and $M$ are
both constants. Therefore the excess demand varaible $d_j$ as computed in
Equation~\eqref{eqn:d} must satisfy 
\begin{align*}
d_j &\le d \cdot 1.5 \cdot \bmax - d \cdot 2 \cdot \bmax + c \cdot \epsc \\
&= -0.5 \cdot d \cdot \bmax + c \cdot \epsc
\end{align*}
for some constant $c$, where the 
$c \cdot \epsc$ term arises from the errors in the constantly-many gates that
are used to compute $d_j$. Since $c$, $d$, and $\bmax$ are constant, we can pick
$\epsc$ to be a small enough constant to ensure that $d_j < -\epsc$, which then
contradicts the fact that $d_j \ge -\epsc$ in this case.

\item If the gate implementing $p_j = p_j + d_j$ is truncated by the lower
bound on $p_j$, then we have $d_j \le \epsc$ and $p_j = \pmin = \bmin / (8 \cdot d
\cdot \kappa)$. 

We argue that each buyer $i$ who has a non-zero utility
function for good $j$ must buy their full allocation of $j$. To see why,
observe that if there is a good $j'$ and a segment $k'$ such that $s_{i,j',k'}
/ p_{j'} \ge s_{i,j,k} / p_j - \epsc$, then we must have
\begin{align*}
p_{j'} &\le (s_{i,j',k'} / s_{i,j,k}) \cdot p_j + (p_j p_{j'}/s_{i,j,k}) \cdot \epsc \\
&\le \kappa \cdot p_j + \pmax^2 \cdot \epsc,
\end{align*}
where in the second inequality we
have used the fact that all slopes lie in $[1,\kappa]$ and that prices are upper bounded by $\pmax = 2 \cdot d \cdot \bmax$. Since by assumption the buyer can demand at most 2 units of $p_{j'}$
it therefore costs at most $2 \cdot (\kappa \cdot p_j + \pmax^2 \cdot \epsc)$ to buy those units. Since
there are at most $d$ goods $j'$ for which this can occur, it therefore costs
at most $d \cdot 2 \cdot (\kappa \cdot p_j + \pmax^2 \cdot \epsc) = \bmin/4 + 2 \cdot d \cdot
(2 \cdot d \cdot \bmax)^2 \cdot \epsc$ for buyer $i$ to buy all
such segments. Since $\bmin$, $\bmax$, and $d$ are all constant, we can choose
a suitably small constant value for $\epsc$ so that $\epsc < \bmin/(8 \cdot d
\cdot (2 \cdot d \cdot \bmax)^2)$, and doing so ensures 
that buying all such segments costs at most $\bmin/2$ money.

Buying all segments of $j$ costs at most $2 \cdot \pmin < \bmin/4$ money, where
we have used the assumption that no buyer demands more than 2 units of any
good. Therefore, by~\cref{lem:slopes}, the solution of the \gcircuitp instance must buy all
segments of good $j$, with at most $\epsc$ error in each segment. 
That is, $\sum_{i,k} q_{i,j,k} \ge \sum_{i,k} (\ell_{i,j,k} \cdot p_j - M_i \cdot \epsc)$,
where $M_i$ is the total number of non-zero segments of buyer $i$. 

Recall that $I_j = \{ (i,k) \; : \; i \in B \text{ and }
s_{i,j,k} > 0 \}$ denotes the set of all segments of good $j$ for which buyers have non-zero utility. By the preprocessing, we have that $\sum_{(i,k) \in I_j} \ell_{i,j,k} > 1 + \delta$.
This implies that $\sum_{i,k} q_{i,j,k} > (1 + \delta - \sum_{(i,k) \in I_j} (M_i
\cdot \epsc))
\cdot p_j$, and therefore so long as $\epsc + \sum_{(i,k) \in I_j} M_i \cdot \epsc
< \delta$
we get that $d_j > \epsc$, and we can choose $\epsc$ to be suitably small to
ensure this because $M_i$ and $|I_j|$ are upper bounded by a constant, and $\delta > 0$ is constant. So we have concluded
$d_j > \epsc$ but also $d_j \le \epsc$, which provides our contradiction.
\end{itemize}
\end{proof}

Finally, we show that every satisfied good approximately clears in $x$.

\begin{lemma}
\label{lem:clears2}
There is a suitably small constant value for $\epsc$ such that, if good $j$ is
satisfied, then 
$\left| \sum_{i \in B} x_{i,j} - 1 \right| \le c \cdot \epsc$ for some
constant $c$.
\end{lemma}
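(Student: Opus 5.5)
The plan is to compare $x$ with $x'$ good by good. \cref{lem:clears} already controls $x'$, and $x$ differs from $x'$ only by the budget-repair procedure (steps 2 and 3 of the construction). Fix a satisfied good $j$. By definition, every buyer $i$ with a non-zero utility function for $j$ is satisfied. \cref{lem:clears} gives $\left|\sum_{i} x'_{i,j} - 1\right| \le c'\epsc$. By the triangle inequality it then suffices to bound $\sum_i |x_{i,j} - x'_{i,j}|$ by a constant multiple of $\epsc$.

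First, the buyers with zero utility for $j$. A question to settle first is whether the greedy add or remove step could move money onto good $j$ for such a buyer. The removal step only removes money, and $x'_{i,j}=0$ for these buyers, because the $q$ variables exist only for segments of $u_{i,j}$. For the add step, I would specify that extra money is placed greedily on segments with positive bang-per-buck. This is always possible, since under the sufficient condition every buyer has an unsaturated (uncapped) good, so no money ever needs to be placed on a zero-utility good. Hence $x_{i,j}=x'_{i,j}=0$ for every buyer not interested in $j$.

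Second, the at most $d$ buyers interested in $j$. Each is satisfied, so \cref{lem:budget} gives $|\sum_{j,k} q_{i,j,k} - e_i| \le (M_i+1)\epsc$. The amount of money added or removed for buyer $i$ is therefore at most $(M_i+1)\epsc$. Consequently the money spent on good $j$ by buyer $i$ changes by at most $(M_i+1)\epsc$, so $|x_{i,j} - x'_{i,j}| \le (M_i+1)\epsc / p_j \le (M_i+1)\epsc/\pmin$.

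Summing over the at most $d$ such buyers gives $|\sum_i x_{i,j} - \sum_i x'_{i,j}| \le d(M+1)\epsc/\pmin$, where $M$ is the maximum $M_i$. This quantity is a constant times $\epsc$, because $d$ and $M$ are constants and $\pmin = \bmin/(8 d\kappa)$ is a positive constant. Taking $c = c' + d(M+1)/\pmin$ completes the argument.

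The only delicate point is the one flagged above: the paper's construction of $x$ must be read, or lightly fixed, so that the add step never puts money on goods the buyer does not value. Apart from that, the proof is just the triangle inequality together with the fact that prices lie in $[\pmin,\pmax]$.
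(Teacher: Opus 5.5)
Your proposal is correct and follows the paper's own argument: the triangle inequality against \cref{lem:clears}, \cref{lem:budget} to bound the $O(\epsc)$ money reallocated by each of the at most $d$ satisfied buyers who value $j$, and conversion to quantities via $p_j \ge \pmin$. Your explicit treatment of buyers with zero utility for $j$ makes precise a point the paper leaves implicit, since its degree argument assumes that only buyers valuing $j$ change spending on $j$ while \cref{lem:budget} covers only satisfied buyers; one small correction is that the preprocessing truncates every segment to length at most $2$, so no good is literally uncapped, and you should instead stipulate that any leftover money goes to some good for which the buyer has a non-zero utility function, which still keeps $j$ untouched by buyers who do not value it.
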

\begin{proof}
By \cref{lem:clears} we have that
$\left| \sum_{i \in B} x'_{i,j} - 1 \right| \le c' \cdot \epsc$ for some
constant $c'$, and by \cref{lem:budget} we have that $x'$ differs from $x$ only
by allowing each buyer to reallocate at most $c'' \cdot \epsc$ money for some
constant $c''$. 
Since the degree of the market is at most $d$, the amount spent on each good
can therefore change by at most $d \cdot c'' \cdot \epsc$, and since all prices
are at least $\pmin$, the amount of each good that is bought can change by at
most $d \cdot c'' \cdot \epsc/\pmin$. Combining all of this yields
$$\left| \sum_{i \in B} x_{i,j} - 1 \right| \le c' \cdot \epsc + d \cdot c''
\cdot \epsc/\pmin,$$
and since $c'$, $c''$, $d$, and $\pmin$ are all constant, the claim has been
shown. 
\end{proof}

The following lemma summarizes what we have shown for Step 1.

\begin{lemma}
\label{lem:step1}
There exists a constant $c$ and a suitably 
small constant value for $\epsc$ such that 
we can make a single call to the $(\epsc, \deltac)$-\gcircuitp problem and, in
polynomial time, recover a price vector $p$ and allocation $x$ that satisfy the
following properties.
\begin{itemize}
\item A $(1 - c \cdot \deltac)$-fraction of the goods $(c \cdot \epsc)$-clear.
\item A $(1 - c \cdot \deltac)$-fraction of the buyers have 
$(1 - c \cdot \epsc)$-optimal allocations and satisfy their budget
constraints with equality. 
\end{itemize}
\end{lemma}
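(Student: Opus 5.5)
The plan is to assemble \cref{lem:step1} from \cref{lem:budget,lem:optimal,lem:clears2} and a counting argument that turns the $\deltac$-fraction of violated gates into an $O(\deltac)$-fraction of unsatisfied buyers and goods. We first fix $\epsc$ below the smallest of the thresholds required by \cref{lem:slopes,lem:budget,lem:optimal,lem:clears,lem:clears2}. Each threshold depends only on $d$, $\kappa$, $\bmin$, $\bmax$, the maximum number of segments per utility function, and $\delta$, so the minimum is a positive constant. We then build the \gcircuitp instance described in Step 1. Note that every bound and constant appearing in it lies in a constant range $[L,U]$, as the \gcircuitp version requires. We query the oracle once, read off $p$ and $q$, and form $x'$ and then $x$ via the greedy add/remove procedure. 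This takes polynomial time because the optimal-bundle computation for SPLC utilities is greedy.

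\textbf{Clearing and optimality for satisfied agents.} By \cref{lem:clears2}, every satisfied good $(c\cdot\epsc)$-clears in $x$. By \cref{lem:optimal}, every satisfied buyer receives a $(1-c\cdot\epsc)$-optimal bundle. By construction, $x$ spends exactly $e_i$ for every buyer, since the add/remove step restores the budget with equality. So the only remaining task is to bound the number of unsatisfied buyers and goods.

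\textbf{Counting (the main bookkeeping obstacle).} Let $N$ be the total number of gates. Each buyer $i$ owns only a constant number $K$ of gates: the budget gates, the comparators for pairs of its at most $d\cdot m$ segments, the signal gates, and the $q$-update gates. Each good $j$ owns a constant number of gates (the $d_j$ and $p_j$ gates). So $N=\Theta(|B|+|G|)$. More importantly, $|G|$ and $|B|$ are each at least a constant fraction of $N$. To see this, note that each buyer has at least one nonzero segment by the sufficient condition, and each remaining good has at least one interested buyer after preprocessing; combined with degree $d$, this gives $|B|\ge |G|/d$ and $|G|\ge|B|/d$. We then argue as follows.
\begin{itemize}
\item A buyer is unsatisfied only if one of its $K$ gates fails, or one of the $O(1)$ price/good gates it reads fails. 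Each failed gate is charged to at most a constant number $K'$ of buyers, since degrees are bounded. Hence the number of unsatisfied buyers is at most $K'\deltac N\le c\,\deltac|B|$.
\item A good is unsatisfied only if it is adjacent to an unsatisfied buyer (or one of its own gates fails). Since each buyer is adjacent to at most $d$ goods, the number of unsatisfied goods is at most $d$ times that count plus $O(\deltac N)$, which is $\le c\,\deltac|G|$.
\end{itemize}
Taking $c$ to be the maximum of all these constants yields the lemma.

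The delicate point is making sure that $N$ is linear in both $|B|$ and $|G|$ separately, so that the fractions translate correctly. We handle this with the degree bound as above. We also need the intermediate-variable bounds to be constants. This holds because the expressions have constant fan-in and all inputs have constant bounds.
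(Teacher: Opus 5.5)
Your proposal is correct and follows the paper's proof. Satisfied buyers and goods are handled by \cref{lem:optimal} and \cref{lem:clears2}, budget equality holds by construction of $x$, and a constant-degree counting argument bounds the unsatisfied buyers and goods by an $O(\deltac)$-fraction. You also check explicitly that $|B|$ and $|G|$ are within a factor $d$ of each other, so that $\deltac N$ failed gates become $O(\deltac)$-fractions of both sets; the paper leaves this step implicit, and your argument for it is correct.
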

\begin{proof}
Since each buyer has constantly many constraints, there exists a constant $q$ 
such that a $(1 - q \cdot \deltac)$-fraction of the buyers are satisfied. 
Since the market has constant degree, we also get that at least $(1 -
r \cdot \deltac)$ fraction of the goods are satisfied for some constant $r$.
\cref{lem:optimal} states that any satisfied buyer has a $(1 -
s \cdot \epsc)$-optimal bundle for some constant $s$, while~\cref{lem:clears2} states that any satisfied good $(1 - t \cdot
\epsc)$-clears for some constant $t$. Finally, all buyers satisfy their budget
constraints with equality by construction, since $x$ was constructed from $x'$
in a way that ensures this. Taking $c = \max(q,r,s,t)$ then proves the
claim.
\end{proof}

\subsection{Step 2: Fix the broken buyers}

After Step 1 we have a price vector $p$ and an allocation $x$ in which a $(1 -
c \cdot \deltac)$ fraction of the buyers have $(1 - c \cdot \epsc)$-optimal
allocations. We refer to any buyer $i$ who does not have a 
$(1 - c \cdot \epsc)$-optimal
allocation as \emph{broken}. In Step 2, we modify $p$ and $x$ to eliminate the
broken buyers.

This is achieved in a straightforward way. For every broken buyer $i$, we
simply modify $x_i$ so that buyer $i$ now buys an optimal allocation under the
price vector $p$. The correctness of this is shown in the following lemma.

\begin{lemma}
\label{lem:step2}
Let $p$, $x$, and $c$ be a price vector, allocation, and constant that
satisfy the conclusion of~\cref{lem:step1}, respectively.
There is a suitably small constant value for $\epsc$ such that, 
in polynomial time, we can find an allocation $x'$ for which there is a constant
$c'$ where
\begin{itemize}
\item a $(1 - c' \cdot \deltac)$-fraction of the goods $(c' \cdot \epsc)$-clear,
and
\item all buyers have a 
$(1 - c' \cdot \epsc)$-optimal allocation and satisfy their budget constraint with equality.
\end{itemize}
\end{lemma}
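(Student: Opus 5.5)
We take $x'$ to agree with $x$ on every buyer that is not broken, and for each broken buyer $i$ we set $x'_i$ to be an exactly optimal bundle at prices $p$ that spends exactly $e_i$. Such a bundle is computed greedily: fill utility-function segments in decreasing bang-per-buck order until the budget is exhausted. If every non-zero segment gets fully bought before the budget runs out, we spend the leftover money on an arbitrary good. Any such good is an admissible choice, but for concreteness we pick a zero-utility good of $i$ if one exists and otherwise a good already in $i$'s support. This does not reduce utility, so $x'_i$ remains optimal. All prices lie in $[\pmin,\pmax]$ and are strictly positive, so this is well defined and runs in polynomial time.

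The buyer property is then immediate. Non-broken buyers keep their $(1-c\cdot\epsc)$-optimal bundles and their tight budgets from \cref{lem:step1}. Broken buyers now have optimal, hence $(1-c\cdot\epsc)$-optimal, bundles and spend exactly $e_i$.

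For the goods, call a good \emph{touched} if some broken buyer's allocation changes on it. Every good that $(c\cdot\epsc)$-clears and is untouched keeps its demand, so it still $(c\cdot\epsc)$-clears. We therefore only need to bound the number of touched goods.

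A broken buyer's old bundle $x_i$ and new bundle $x'_i$ may both put money on goods outside its $d$ non-zero-utility goods. For instance, $x_i$ came from $x'$ via the $q_{i,j,k}$ variables, which only cover non-zero segments. Extra money may then have been added "optimally," and this could place money on zero-utility goods when all segments are saturated.

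To avoid this, we fix the convention in Step 1 and in Step 2 that leftover money is placed on a single designated good of the buyer. We choose this good to be one with non-zero utility for that buyer, which exists because of the sufficient condition. Then each buyer's support lies within its at most $d$ neighbouring goods. Each broken buyer therefore touches at most $d$ goods. There are at most $c\cdot\deltac|B|$ broken buyers, so at most $d\cdot c\cdot\deltac|B|$ goods are touched.

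To convert this count into a fraction of $|G|$, we need $|B|=O(|G|)$. This holds because every buyer has at least one neighbouring good and every good has at most $d$ neighbouring buyers, so $|B|\le d|G|$. Adding the at most $c\cdot\deltac|G|$ goods that failed to clear after Step 1, the fraction of goods that do not $(c\cdot\epsc)$-clear is at most $(c+d^2c)\deltac$. Setting $c'=c(1+d^2)$ gives both properties, and this $c'$ also dominates the constant needed for the buyer property.

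The main obstacle is the support issue above: ensuring that a broken buyer's reallocation only affects goods adjacent to it in the constant-degree graph. We handle it by fixing the leftover-money convention in both steps.
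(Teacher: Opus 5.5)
Your proposal is correct and follows the paper's proof. Broken buyers switch to an optimal bundle at the unchanged prices $p$, and all other buyers are left alone. A constant-degree count bounds the goods whose demand changes, and these are added to the goods that already failed to clear.

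The paper uses two points only implicitly, and you make both explicit. First, every buyer's support stays inside its at most $d$ neighbouring goods. So leftover money must go to a non-zero-utility good, not the zero-utility good you suggest in your first paragraph. Second, the bound $|B|\le d|G|$ is what turns a fraction of buyers into a fraction of goods.
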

\begin{proof}
The allocation $x'$ is defined so that each buyer $i$ who does not have a 
$(1 - c \cdot \epsc)$-optimal allocation under $x$ buys an optimal allocation
under the price vector $p$, and each other buyer remains unchanged. This new
allocation can be computed in polynomial time by simply buying the
utility-function segments of buyer $i$ in bang-per-buck order for each broken
buyer $i$ until their budget is exhausted. Since all other buyers remain unchanged, we immediately have that
every buyer in allocation $x'$ receives a 
$(1 - c \cdot \epsc)$-optimal bundle.

Since the market has constant degree $d$, and since at most a 
$(1 - c \cdot \deltac)$-fraction of the buyers were broken, 
it follows that at
most a 
$(1 - c' \cdot \deltac)$-fraction of the goods change their allocations between
$x$ and $x'$ for some constant $c'$. If a good's allocation was not changed, and it 
$(c \cdot \epsc)$-clears in $x$, then it also 
$(c \cdot \epsc)$-clears in $x'$, while on the other hand, we have no guarantees
about the clearing of the goods that were changed. 

So for $x'$ we have that
there is at most a 
$(1 - c \cdot \deltac)$-fraction of the goods that do not 
$(c \cdot \epsc)$-clear because they did not 
$(c \cdot \epsc)$-clear in $x$ and they have not subsequently been changed, and we have that
there is at most a 
$(1 - c' \cdot \deltac)$-fraction of the goods that do not 
$(c \cdot \epsc)$-clear because their allocation changed from $x$ to $x'$.
All other goods continue to 
$(c \cdot \epsc)$-clear in $x'$. Therefore there is a constant $c'' \ge \max(c,
c')$ such that at least a $(1 - c'' \cdot \deltac)$-fraction of the goods
$(c \cdot \epsc)$-clear, as required.
\end{proof}

\subsection{Step 3: Fix the over-clearing}

In Step 3, we start with a price vector $p$ and allocation $x$ that satisfy the
conclusion of~\cref{lem:step2}.
Step 3 consists of two different procedures called the shift-and-burn procedure
and the pump-and-shift procedure, which will be called alternately until all of the
over-clearing has been eliminated.

We say that a good $j$ \emph{over-clears} if 
$\sum_{i} x_{i,j} > 1 + c_2 \cdot \epsc$, where $c_2$ is the constant arising from~\cref{lem:step2}, and we say that $j$ \emph{under-clears} if
$\sum_{i} x_{i,j} < 1 - c_2 \cdot \epsc$. Our goal is to eliminate the
over-clearing.

\paragraph{\bf Burning.}

In Step 3 we will allow buyers to \emph{burn} $\epsc$ money. If money
is burned, then it is not spent on any goods, and instead simply destroyed.
Formally, this is implemented by weakening the budget constraint on the buyers.
For each buyer $i$ we require that 
$$e_i - \epsc \le \sum_{j} p_j \cdot x_{i,j} \le e_i.$$
We say that a
buyer has \emph{fully burned} if $\sum_{j} p_j \cdot x_{i,j} = e_i - \epsc$.

\paragraph{\bf Invariant.} Let $c_3$ be a constant to be fixed later. Throughout Step 3, our goal will be to ensure that the following invariant is always satisfied.

For the current prices and allocation $(p,x)$ and for any buyer $i$, define $\marbpb_i(p,x)$ to be the \emph{marginal bang-per-buck} for buyer $i$, i.e., the bang-per-buck of the most desirable segment (in terms of bang-per-buck under the current prices $p$) that is not fully bought by buyer $i$. 
More formally, let $(j_1, k_1), (j_2, k_2), \dots$ be the
utility-function segments of buyer $i$ listed in bang-per-buck order with respect to $p$, with $(j_1, k_1)$ being the segment that has the best bang-per-buck. Let $a$ be the smallest index such that segment $(j_a,k_a)$ is not fully bought in $x$, i.e., this is the segment with highest bang-per-buck with that property. 
Then, $\marbpb_i(p,x) := s_{i,j_a,k_a}/p_{j_a}$. Note that Step 1 ensures that the prices are strictly positive and thus the division in the bang-per-buck expression is well defined. Furthermore, we will only ever increase prices and never decrease them, so they remain strictly positive throughout. Furthermore, we also define the \emph{window} for buyer $i$ at the current prices and allocation $(p,x)$ as the interval $[\marbpb_i(p,x)/(1 + c_3 \cdot \epsc), \marbpb_i(p,x)]$.

We say that the current prices and allocation $(p,x)$ satisfy the invariant if for every $i$, buyer $i$ does not spend any money on segments with bang-per-buck strictly worse than $\marbpb_i(p,x)/(1 + c_3 \cdot \epsc)$. Equivalently, the invariant is satisfied if buyer $i$ does not spend any money on segments with bang-per-buck strictly below the window. Note that segments with bang-per-buck lying strictly above the window are necessarily fully bought.

Ultimately, we will use the fact that any allocation satisfying this invariant must give an approximately-optimal allocation to each buyer (as long as every buyer spends most of their budget).

\begin{lemma}\label{lem:invariant-works}
Consider a buyer $i$ who satisfies the invariant and spends at least $e_i - \epsc$ of its total budget $e_i$. Then, the buyer receives a $(1 - c \cdot \epsc)$-optimal allocation, for some constant $c$.
\end{lemma}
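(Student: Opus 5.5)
Write $\beta := \marbpb_i(p,x)$ and $\gamma := 1 + c_3 \epsc$. The plan is to compare the buyer's utility with the optimal value of the fractional-knapsack problem~\eqref{eq:opt-bundle}. For SPLC utilities this optimum is attained by buying segments greedily in bang-per-buck order. The argument uses two facts: the segments the buyer currently buys form ``almost'' a greedy prefix, and the buyer has spent almost all of the budget.

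First, split the buyer's segments into three groups relative to the window: those strictly above the window (bang-per-buck $> \beta$), those in the window, and those strictly below it. By definition of $\beta$, every segment with bang-per-buck $> \beta$ is fully bought. By the invariant, no money is spent on segments below the window. Hence every dollar the buyer spends is spent on a segment with bang-per-buck at least $\beta/\gamma$. Next, consider the optimal greedy allocation $x^*$, with value $\Uopt$, and let $S$ be the set of segments with bang-per-buck strictly above $\beta$. Both $x_i$ and $x^*$ spend the same amount on $S$: the current allocation because these segments are fully bought, and $x^*$ at most that much trivially. The amount $x^*$ spends outside $S$ is at most $e_i - \mathrm{cost}(S)$, and each such dollar yields at most $\beta$ utility, because every segment outside $S$ has bang-per-buck $\le \beta$. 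The current allocation spends at least $e_i - \epsc - \mathrm{cost}(S)$ outside $S$, and each such dollar yields at least $\beta/\gamma$. Writing $U_S$ for the utility from $S$ and $R := e_i - \mathrm{cost}(S) \ge 0$, this gives
\begin{equation*}
\Uopt \le U_S + \beta R, \qquad \Uactual \ge U_S + \frac{\beta}{\gamma}(R - \epsc).
\end{equation*}
Here I assume $R \ge \epsc$; if $R < \epsc$, the lower bound on $\Uactual$ is simply $U_S$, and the same argument goes through with that bound.

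Finally, convert this additive loss into a multiplicative one. We have $\Uopt - \Uactual \le \beta R(1 - 1/\gamma) + \beta\epsc/\gamma \le \beta(R c_3 \epsc + \epsc)$. This requires a lower bound on $\Uopt$ that is comparable to $\beta$ times a constant. The buyer's budget is at least $\bmin$, so either $R \ge \bmin/2$ or $\mathrm{cost}(S) \ge \bmin/2$. In the first case, $\Uopt \ge \beta R$, obtained by spending the money outside $S$ on the segment achieving $\beta$ (or the greedy equivalent; if that segment's remaining capacity is small, greedy continues with further segments, but their bang-per-buck could be lower, so I would argue instead through $\Uactual \ge (\beta/\gamma)(R-\epsc)$ and compare directly). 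In the second case, $U_S \ge \beta \cdot \bmin/2$, since every segment in $S$ has bang-per-buck above $\beta$. In both cases $\Uopt \ge \beta \cdot \Omega(\bmin)$. Because $R \le \bmax$, the loss is at most $\beta \epsc (c_3 \bmax + 1)$, so the ratio is at most $c\,\epsc$ with $c = 2(c_3\bmax + 1)/\bmin$.

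The step I expect to be the main obstacle is bounding $\Uopt$ from below, in particular in the case where $\beta$ is the bang-per-buck of a short segment, so that the optimum cannot really earn $\beta R$. Using the lower bound on $\Uactual$ itself circumvents this: $\Uactual \ge U_S + (\beta/\gamma)(R - \epsc)$. When $R \ge \bmin/2$ this is at least a constant times $\beta$, and otherwise $U_S$ is. So the ratio $(\Uopt - \Uactual)/\Uopt \le (\Uopt - \Uactual)/\Uactual$ is still $O(\epsc)$, and $(1 - c\epsc)$-optimality follows.
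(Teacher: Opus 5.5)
Your argument is correct, but it is organized differently from the paper's. The paper separates the two sources of loss multiplicatively. Write $M \in [e_i - \epsc, e_i]$ for the amount actually spent and $U^*(\cdot)$ for the optimal utility as a function of the budget. Concavity gives $U^*(M) \ge (M/e_i)\,U^*(e_i) \ge (1 - \epsc/e_i)\,U^*(e_i)$. The window argument is then applied at the \emph{same} expenditure $M$: the optimum also fully buys $S$, and both allocations spend exactly $M - \mathrm{cost}(S)$ outside $S$, at bang-per-buck at most $\beta$ versus at least $\beta/\gamma$. So $\Uactual \ge U^*(M)/\gamma$ follows immediately, with no lower bound on utility needed.

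You instead compare with the full-budget optimum in one go. This produces an additive loss $\beta\epsc(c_3R+1)$, and you then need a lower bound $\Uactual = \Omega(\beta\,\bmin)$ to make it multiplicative; that is where your case analysis and the dependence on $\bmax/\bmin$ come from. The paper's route is shorter and really only uses $e_i \ge \bmin$ (to make $\epsc/e_i = O(\epsc)$). Yours makes the dollar-by-dollar accounting explicit.

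Your case split can be dropped. Since $U_S \ge \beta\cdot\mathrm{cost}(S)$, your own bound gives $\Uactual \ge \beta\,\mathrm{cost}(S) + (\beta/\gamma)(R-\epsc) \ge (\beta/\gamma)(e_i - \epsc)$ directly.

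Two minor points. First, bounding what $x^*$ spends outside $S$ by $e_i - \mathrm{cost}(S)$ uses that $x^*$ fully buys $S$. This holds for the greedy optimum since $\mathrm{cost}(S) \le e_i$, but it does not follow merely from $x^*$ spending at most $\mathrm{cost}(S)$ on $S$. Second, once you switch to lower-bounding $\Uactual$, your explicit constant $2(c_3\bmax+1)/\bmin$ is off by a constant factor, which is harmless.
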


\begin{proof}
For any amount of money $M > 0$, let $U^*(M)$ denote the optimal utility that buyer $i$ can achieve by spending $M$ at the current prices. Now let $M$ be the amount of money actually spent by buyer $i$ in the current allocation. By assumption, we have $e_i - \epsc \leq M \leq e_i$. The first observation we make is that $U^*(M) \geq (1 - \epsc/e_i) \cdot U^*(e_i)$. Indeed, by spending a fraction $(e_i - \epsc)/e_i$ of the total budget $e_i$, buyer $i$ can guarantee at least a fraction $(e_i - \epsc)/e_i$ of the optimal utility attainable with the full budget $e_i$, by concavity of the utility function.

Next, we compare the utility obtained by buyer $i$ in the current allocation with the optimal utility $U^*(M)$ achievable with the same expenditure. Since buyer $i$ satisfies the invariant, they do not spend any money on a segment with bang-per-buck strictly worse than $\marbpb_i(p,x)/(1 + c_3 \cdot \epsc)$. Recall that $\marbpb_i(p,x)$ is the bang-per-buck of the best segment that is not fully bought by buyer $i$. As a result, the current allocation yields utility at least $U^*(M)/(1 + c_3 \cdot \epsc)$, which is greater or equal to $(1 - c_3 \cdot \epsc) \cdot U^*(M)$. Putting everything together, we obtain that buyer $i$ has utility at least $(1 - c_3 \cdot \epsc) \cdot (1 - \epsc/e_i) \cdot U^*(e_i) \geq (1 - c \cdot \epsc) \cdot U^*(e_i)$, for some constant $c$, since $c_3$ is a constant and $e_i$ is upper bounded by a constant.
\end{proof}

\paragraph{\bf Preprocessing.}

The following lemma gives a
preprocessing procedure that ensures that the invariant holds at the start of
Step 3.

\begin{lemma}
There is a constant $c_3$ and a polynomial-time algorithm that, given the allocation $x$ arising
from~\cref{lem:step2}, produces an allocation $x'$ that
satisfies the invariant for some constant $c_3$, and is such that a $(1 - c_3
\cdot \deltac)$-fraction of the goods $(c_3 \cdot \epsc)$-clear in $x'$.
\end{lemma}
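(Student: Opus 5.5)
The idea is to reallocate each buyer's existing expenditure greedily, without changing any prices, and to show that this moves only $O(\epsc)$ money per buyer. \emph{Where the $x$ comes from matters.} Buyers of Step 1 that were satisfied spend money in approximate bang-per-buck order: by \cref{lem:slopes}, if segment $(j,k)$ has bang-per-buck exceeding that of $(j',k')$ by more than $\epsc$, then either $(j,k)$ is bought up to $\epsc$ or $(j',k')$ carries at most $\epsc$ money. This survives the final add/remove step of Step 1, which is greedy and moves at most $(M_i+1)\epsc$ money. Buyers fixed in Step 2 buy exactly optimally, so they already satisfy the invariant for any $c_3$.

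\emph{The preprocessing.} For each remaining buyer $i$, leave the total expenditure $M = \sum_j p_j x_{i,j}$ unchanged. Replace $x_i$ by the greedy allocation that spends $M$ on segments in decreasing bang-per-buck order, respecting per-good segment order. The greedy allocation spends nothing below its marginal segment, so the invariant holds trivially for every buyer. In fact it holds even with $c_3 = 0$, and any $c_3$ may be fixed later. The budget constraint still holds with equality, and the allocation is computable in polynomial time.

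\emph{Bounding the money moved.} The remaining task is to show that each buyer moves at most $c\cdot\epsc$ money. Let $\beta$ be the marginal bang-per-buck of the greedy allocation. Money that $x_i$ places on segments with bang-per-buck below $\beta - \epsc$ must be matched by unbought capacity above $\beta$. But \cref{lem:slopes}, with the constant error from Step 1's adjustment, says that whenever a strictly-better segment is under-bought by more than $\epsc$, worse segments hold at most $\epsc$ money. Summing over the constantly many segments $M_i$ gives a bound of $O(M_i\epsc)$ on money spent below $\beta-\epsc$ and on the unbought amount above $\beta+\epsc$.

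The main obstacle is the segments whose bang-per-buck lies within $\epsc$ of $\beta$. Lemma~\ref{lem:slopes} gives no control over these, so the greedy allocation may reshuffle a constant amount of money among them, not just $O(\epsc)$. I would fix this by \emph{not} fully greedifying. Instead, I would only move the $O(\epsc)$ money sitting on segments more than $\epsc$ below $\beta$ onto under-bought segments more than $\epsc$ above $\beta$, and leave the near-$\beta$ segments alone. The leftover allocation then puts money only on segments with bang-per-buck at least about $\beta - 2\epsc$. Every segment not fully bought has bang-per-buck at most about $\beta + 2\epsc$. Since slopes lie in $[1,\kappa]$ and prices in $[\pmin,\pmax]$, bang-per-buck values are bounded below by a constant. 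So an additive gap of $4\epsc$ is a multiplicative factor $1 + c_3\epsc$ for a suitable constant $c_3$, which gives the invariant. I expect the most delicate point is handling the per-good segment-order constraint while doing this, but within a single good it only improves bang-per-buck.

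\emph{Clearing.} Each buyer now moves at most $c\cdot\epsc$ money. Each good has degree at most $d$ and price at least $\pmin$, so the demand of each good changes by at most $d c\epsc/\pmin$ units. Hence every good that $(c_2\epsc)$-cleared in the allocation of \cref{lem:step2} still $(c_3\epsc)$-clears after enlarging $c_3$. The fraction of goods that clear stays at least $1 - c_2\deltac \ge 1 - c_3\deltac$, which completes the statement.
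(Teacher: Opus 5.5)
Your overall plan is the paper's plan: reallocate $O(M_i\epsc)$ money per buyer using \cref{lem:slopes}, then use the degree and $\pmin$ bounds for clearing. You are also right that full greedification can move a constant amount of money. But the repaired procedure has a genuine gap. You move the money $m_F$ sitting more than $\epsc$ below $\beta$ onto the unbought capacity $u_A$ more than $\epsc$ above $\beta$, and leave the near-$\beta$ segments alone. Nothing forces $m_F=u_A$: the definition of $\beta$ only bounds $m_F$ by the unbought capacity \emph{at or above} $\beta$, and that capacity may lie mostly in the near band. So either surplus money stays far below, or a far-above segment stays under-bought and $\marbpb_i$ jumps.

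Concretely, take two segments on different goods: $A$ with bang-per-buck $2\beta$ and money-capacity $a<M$, and $B$ with bang-per-buck $\beta$ and ample capacity. Let $x_i$ spend $a-\epsc/2$ on $A$ and the rest on $B$. This is consistent with \cref{lem:slopes}, the greedy marginal is $\beta$, and $m_F=0$, so your procedure changes nothing. But $A$ is not fully bought, so $\marbpb_i=2\beta$, while $B$ carries almost all of the money at bang-per-buck $\beta<2\beta/(1+c_3\epsc)$. Symmetrically, $\epsc/2$ money on a segment at bang-per-buck $\beta/2$, with nothing under-bought above $\beta+\epsc$, stays below the window. So your two claims (all money at bang-per-buck at least about $\beta-2\epsc$, every non-full segment at most about $\beta+2\epsc$) cannot both be guaranteed.

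The missing idea is that the near band must act as a buffer in both directions. The paper's procedure does this. It fills each non-full segment above the current window by taking money from the buyer's \emph{worst} funded segments, which may lie in the window. It also moves all money below the window onto the non-full segments at $\marbpb_i(p,x)$, and recomputes the window whenever $\marbpb_i$ moves. Termination holds because each segment is emptied or filled at most once.

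Your version can be repaired the same way. After the far-below-to-far-above transfer, put any surplus on the best non-full band segments, and cover any deficit from the worst funded band segments. The capacity of segments with bang-per-buck at least $\beta$ is at least $M$, which in turn is at least the capacity of those strictly above $\beta$. Hence the band has at least $m_F-u_A$ free capacity in the first case and at least $u_A-m_F$ money in the second. The extra movement is $O(M_i\epsc)$ and respects per-good segment order. Then all money sits at bang-per-buck at least $\beta-\epsc$ and $\marbpb_i\le\beta+\epsc$, so your additive-to-multiplicative conversion and clearing bound go through.

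One point you share with the paper: \cref{lem:slopes} covers only \emph{satisfied} buyers, whereas Step~2 rebuilds only \emph{broken} ones. Letting Step~2 rebuild every unsatisfied buyer closes this, at the cost of an $O(\deltac)$ fraction.
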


\begin{proof}
Step 2 modifies some of the buyers to buy optimal allocations, and
these buyers therefore buy segments in bang-per-buck order and therefore
satisfy the invariant. 

For each other buyer $i$, we can refer back to~\cref{lem:slopes} to see that
buyer $i$ spends at most $\epsc$ money on segments that lie strictly below their
window, and they need to spend at most $\epsc$ more money to fully buy any
segment that lies strictly above their window.

Therefore, we apply the following preprocessing step to each buyer $i$
that did not buy an
optimal allocation in Step 2.
\begin{enumerate}
\item For each segment $s$ of buyer $i$ that lies strictly above their window
that they do not currently fully buy, we move money from buyer $i$'s current
worst bang-per-buck segment $s'$ to $s$ until it is fully bought. If $s'$ does
not have enough money to do this, then we move all money from $s'$ to $s$ and
repeat the procedure until $s$ is fully bought. Note that it is possible that
$\marbpb_i(p,x)$ increases during this process, which will occur if we happen
to remove all of the money from the current segments that have bang-per-buck
equal to $\marbpb_i(p,x)$. This is not an issue, and in fact makes the
procedure terminate faster, because we no longer need to move money onto the
segments that join the new window. 

\item Then, for each segment $s$ of buyer $i$ that lies strictly below their
window, we move all money from $s$ to any of the segments whose bang-per-buck
is equal to $\marbpb_i(p,x)$ that are not currently fully bought. Note that if
we fully buy all segments with bang-per-buck $\marbpb_i(p,x)$, then
$\marbpb_i(p,x)$ will decrease, in which case, we repeat the procedure until
all money has been removed from the segments that lie beneath the current
window. 
\end{enumerate}

If, at the end of step 2, the invariant is not satisfied (which can happen only if
$\marbpb_i(p,x)$ changes in both step 1 and step 2), then we repeat the process
starting from step 1. To see that this will terminate in polynomial time,
observe that $\marbpb_i(p,x)$ changes only when all money is removed from a
segment in step 1, or when a segment is fully bought in step 2, and we can
never do either of these operations twice for the same segment. 

Let $x'$ be the allocation when the preprocessing terminates , and note
that $x'$ satisfies the invariant by construction. We must show that there
exists a constant $c_3$ such that a $(1 - c_3 \cdot \deltac)$-fraction
of the goods $(c_3 \cdot \epsc)$-clear in $x'$.
 
Recall from~\cref{lem:step2} that there exists a constant $c$ such that 
a $(1 - c \cdot \deltac)$-fraction
of the goods $(c \cdot \epsc)$-clear in $x$. We argue that the amount of money
moved by the preprocessing procedure is small enough to ensure that this does
not significantly change.

Note that at most $\epsc$ money is added to any segment that is below any 
window that is considered in step 1, and so in total each buyer needs to remove $M_i \cdot \epsc$ money from
their other goods, where $M_i$ is the total number of utility-function segments
of buyer $i$. In the worst case, each buyer that is interested in some good $j$
may remove $M_i \cdot \epsc$ from good $j$, and so $d \cdot M_i \cdot \epsc$
money is removed in total from good $j$. Thus, the amount bought of any good can
change by $d \cdot M_i \cdot \epsc/\pmin$ units during this operation, and
since $d$, $M_i$, and $\pmin$ are all constant, good $j$ continues to $(c'
\cdot \epsc)$-clear for some constant $c'$ after this operation.

On the other hand, at most $\epsc$ money is taken off any segment that is above
any window considered in step 2, and so at most $M_i \cdot \epsc$ money is removed by any one buyer.
In the worst case, for a good $j$, each buyer interested in good $j$ may place
all of their money on to good $j$, so at most $d \cdot M_i \cdot \epsc$ extra
money is spent. So the amount bought of any good can change by at most $d \cdot
M_i \cdot \epsc / \pmin$ during this operation. Therefore good $j$ continues to
$(c' \cdot \epsc)$-clear for some constant $c'$ after this operation.
\end{proof}

\paragraph{\bf The shift-and-burn procedure.}

We now define a flow problem that allows money to be shifted along 
segments with bang-per-buck lying in the window of each buyer.
\begin{itemize}
\item The vertices of the flow problem consist of all buyers in the market, all
goods in the market, and two special vertices $s$ and $t$, which will be the
source and sink vertices, respectively. That is, we set $V = G \cup B \cup \{s,
t\}$. 

\item For each good $j$ with $\sum_{i} x_{i,j} > 1 + c_2 \cdot \epsc$, i.e., each good
that is currently over-clearing, we add an edge from $s$ to $j$ with capacity 
$p_j \cdot (\sum_{i} x_{i,j} - (1 + c_2 \cdot \epsc))$.

\item For each buyer $i$, we add an edge from $i$ to $t$ with capacity $\max\left(0,
\sum_{j} p_j \cdot x_{i,j} - (e_i - \epsc)\right)$, which represents the amount of money that
buyer $i$ can still burn.

\item For each buyer $i$ and good $j$ and each utility function segment $k$, if
$(j,k)$ has bang-per-buck lying in the window of buyer $i$, then we add the following two edges.
\begin{itemize}
\item We add an edge from good $j$ to buyer $i$ with capacity $a_{j,i,k}$, where
$$a_{j,i,k} = p_j \cdot \min\left(\max\left(x_{i,j} - \sum_{k' = 1}^{k-1} \ell_{i,j,k'}, \;
\ell_{i,j,k}\right), 0\right)$$
is the amount of segment $k$ that is used by $x_{i,j}$. 

\item We add an edge from buyer $i$ to good $j$ with capacity 
$ a_{i,j,k} = p_j \cdot \ell_{i,j,k} - a_{j,i,k}$. 
\end{itemize}
This means that the flow problem is defined over a multi-graph, since
there can be multiple segments $(j,k)$ of good $j$ that lie in the window for buyer $i$.
\end{itemize}
We then find a maximal flow $f$ from $s$ to $t$, which can be computed in
polynomial time.

We use the flow $f$ to create a new allocation $x'$ in the following way. 
Let $f_{i,j}$ denote the total amount of flow from buyer $i$ to good $j$, where flow on edges from $i$ to $j$ is counted positively and flow on edges from $j$ to $i$ is counted negatively. Recall that only edges corresponding to segments for good $j$ that are in the window for buyer $i$ are included in the graph. We set $x'_{i,j} = x_{i,j} +
f_{i,j}/p_j$.

\begin{lemma}\label{lem:shift-and-burn-invariant}
The shift-and-burn procedure maintains the invariant. Furthermore, it holds that $\marbpb_i(p,x') \leq \marbpb_i(p,x)$ for all buyers $i$.
\end{lemma}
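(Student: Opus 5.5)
Fix a buyer $i$ and write $\beta := \marbpb_i(p,x)$, so the window is $W = [\beta/(1+c_3\epsc), \beta]$. The plan is a direct structural argument about which segments the flow touches. Prices are unchanged, so all bang-per-bucks stay the same. The flow $f$ only changes the money on segments $(j,k)$ of buyer $i$ whose bang-per-buck lies in $W$, since only those segments contribute edges incident to $i$. Segments above $W$ were fully bought under $x$, by the definition of $\beta$ as the best not-fully-bought bang-per-buck, and they remain fully bought in $x'$. Segments below $W$ carry no money under $x$, by the invariant, and remain empty in $x'$. The capacities $a_{j,i,k}$ and $p_j\ell_{i,j,k}-a_{j,i,k}$ guarantee that each in-window segment ends with an amount of money in $[0,p_j\ell_{i,j,k}]$.

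One technical point needs to be checked first. Setting $x'_{i,j}=x_{i,j}+f_{i,j}/p_j$ aggregates per good, and the spend on good $j$ is then re-interpreted as filling segments $1,2,\dots$ in order. If two segments of the same good lie in $W$, the flow may net-add money to one and remove it from the other. The re-interpreted allocation still fills the higher-slope segments of $j$ first, and segments of $j$ above $W$ stay full because the net change on $j$ comes only from in-window segments. So the re-interpreted allocation has exactly the same profile as before: full above $W$, empty below $W$, and arbitrary within $W$. This re-interpretation is the step I expect to need the most care, since one must make sure that the ordering of slopes within a good is compatible with the window.

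For the second claim, $\marbpb_i(p,x')\le\beta$: every segment with bang-per-buck strictly above $\beta$ is fully bought in $x'$. Hence the best not-fully-bought segment under $x'$ has bang-per-buck at most $\beta$. If no segment is left not fully bought, the quantity is taken as $0$ or $-\infty$, and the claim holds trivially.

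For the invariant, let $\beta' = \marbpb_i(p,x')\le\beta$. Every segment on which $i$ spends money in $x'$ has bang-per-buck at least $\beta/(1+c_3\epsc)\ge\beta'/(1+c_3\epsc)$, because the support of the allocation did not extend below $W$. Therefore no money is spent strictly below the new window, and the invariant holds for $i$. Since buyers are handled independently, and the flow's burn edges only reduce spending, which does not affect the invariant, the lemma follows for all buyers.
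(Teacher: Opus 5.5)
Your proof is correct and follows essentially the same argument as the paper. Prices are unchanged, and segments above the old window stay fully bought, which gives $\marbpb_i(p,x')\le\marbpb_i(p,x)$; segments below the old window stay empty because the flow can add at most the in-window capacity of each good; and the new window lies weakly below the old one. Your explicit check that the per-good update $x'_{i,j}=x_{i,j}+f_{i,j}/p_j$ preserves the full/partial/empty profile of each good's segments spells out what the paper covers in one sentence.
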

\begin{proof}
Fix any buyer $i$. Since we have not modified the prices, the ordering of segments with respect to bang-per-buck has not changed. Let the \emph{old window} refer to the window with respect to $(p,x)$, and the \emph{new window} refer to the window with respect to $(p,x')$. Since we have not changed the amount of money spent on segments lying strictly above the old window, they are still fully bought in $x'$, and thus it must be that $\marbpb_i(p,x') \leq \marbpb_i(p,x)$. As a result, the new window lies weakly below the old window. 
Now, consider any segment with bang-per-buck lying strictly below the new window. By the above, it also lies strictly below the old window. By construction of the flow problem, in allocation $x'$ buyer $i$ does not spend any money on that segment. Indeed, the amount of money that can be added to the good by buyer $i$ in the flow is upper bounded by the total amount of money that can be borne by the segments of that good in the old window. Therefore, the invariant is maintained.
\end{proof}

In addition, the shift-and-burn procedure weakly decreases the over-clearing of every good, while not changing the clearing of all other goods.

\begin{lemma}
The new allocation $x'$ satisfies the following two statements.
\begin{itemize}
\item If good $j$ over-cleared in $x$, then $(1+c_2 \cdot \epsc) \le \sum_i x'_{i,j} \le \sum_i x_{i,j}$. 
\item If good $j$ did not over-clear in $x$, then $\sum_i x'_{i,j} = \sum_i x_{i,j}$.
\end{itemize}
\end{lemma}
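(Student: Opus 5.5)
Both claims follow from flow conservation at good vertices, combined with the capacity of the source edges. Throughout, \(f\) is the maximal flow computed above and \(x'_{i,j} = x_{i,j} + f_{i,j}/p_j\).

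\textbf{Net change of money on a good.} Fix a good \(j\). The edges incident to the vertex \(j\) are of three kinds:
\begin{itemize}
\item possibly one edge from \(s\) into \(j\);
\item edges from \(j\) to buyers, one for each in-window segment;
\item edges from buyers to \(j\), one for each in-window segment.
\end{itemize}
No edge from \(j\) goes to \(t\). Flow conservation at \(j\) therefore says that the flow entering \(j\) from \(s\) plus the flow on edges \(i\to j\) equals the flow on edges \(j \to i\). By the definition of \(f_{i,j}\), this reads
\[
\sum_i f_{i,j} = - f(s,j),
\]
where \(f(s,j)\) is the flow on the edge from \(s\) to \(j\), taken to be \(0\) if that edge is absent.

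Summing the update rule over buyers, and using \(p_j>0\), gives
\[
\sum_i x'_{i,j} = \sum_i x_{i,j} - \frac{f(s,j)}{p_j}.
\]
I would also note that the update is consistent with the segment structure. The in-window segments of good \(j\) for buyer \(i\) form a contiguous block. Segments of \(j\) below the window carry no money, by the invariant, and segments above it are fully bought. Hence a net change \(f_{i,j}\) that respects the capacities \(a_{j,i,k}\) and \(a_{i,j,k}\) yields a valid nonnegative \(x'_{i,j}\). This is only a sanity check and is not needed for the clearing claims.

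\textbf{Goods that did not over-clear.} If \(j\) did not over-clear in \(x\), there is no edge from \(s\) to \(j\). So \(f(s,j)=0\), and \(\sum_i x'_{i,j}=\sum_i x_{i,j}\), which is the second bullet.

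\textbf{Goods that over-cleared.} If \(j\) over-cleared, then \(0 \le f(s,j)\) because flows are nonnegative. The capacity of the source edge gives
\[
f(s,j) \le p_j\Bigl(\sum_i x_{i,j} - (1+c_2\epsc)\Bigr).
\]
Dividing by \(p_j>0\) and substituting into the identity above yields
\[
1 + c_2\epsc \le \sum_i x'_{i,j} \le \sum_i x_{i,j},
\]
which is the first bullet.

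The only point needing care is making explicit that the good vertices have no edges to \(t\) and that the multi-edges are aggregated correctly into \(f_{i,j}\). Once that is written out, the rest is immediate.
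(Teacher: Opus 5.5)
Your proposal is correct and uses the same argument as the paper. Flow conservation at each good vertex, which has no edge to $t$, shows that the total money spent on good $j$ changes by exactly minus the flow on its source edge. That flow is zero for goods that did not over-clear, and for goods that did, the capacity of the source edge gives the lower bound. Your write-up just spells out the bookkeeping more explicitly than the paper's terse proof, namely aggregating the multi-edges into $f_{i,j}$ and dividing by $p_j>0$.
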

\begin{proof}
For every good $j$ we have that the amount of flow incoming at $j$ and the
amount of flow outgoing at $j$ are equal, by definition. If good $j$ does not
have an incoming edge from $s$, then the total demand of that good must be
unchanged, since all money routed on to the good by the flow is matched with
money that is removed by other buyers. If a good $j$ does have an incoming edge
from $s$, then the total amount of money spent on that good is decreased by
exactly the amount of money that flows from $s$ to $j$. Hence, the demand of
good $j$ may decrease, but it cannot decrease below $(1+c_2 \cdot \epsc)$ due
to the capacity of the edge between $s$ and $j$.
\end{proof}

If there exists a buyer $i$ such that its window in allocation $x'$ is different than in allocation $x$ (i.e., $\marbpb_i(p,x') < \marbpb_i(p,x)$), then we run the shift-and-burn procedure again (with the graph defined by the new windows). As long as there is such a buyer we keep re-running the shift-and-burn procedure. We stop once we reach the point where the windows for all buyers no longer change and let $x'$ denote the final allocation we obtain.

\begin{lemma}
At most a polynomial number of shift-and-burn iterations are required to find the final allocation.
\end{lemma}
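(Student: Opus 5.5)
We plan to bound the number of iterations with a potential argument based on each buyer's marginal bang-per-buck. Prices stay fixed throughout shift-and-burn, so each buyer $i$ has a fixed list of segments $(j_1,k_1),(j_2,k_2),\dots$ sorted by bang-per-buck. Write $a_i$ for the index of the first segment in this list that is not fully bought, so that $\marbpb_i(p,x)=s_{i,j_{a_i},k_{a_i}}/p_{j_{a_i}}$.

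\cref{lem:shift-and-burn-invariant} gives $\marbpb_i(p,x')\le\marbpb_i(p,x)$ for every buyer $i$. We will strengthen this to the claim that $a_i$ is non-decreasing over the iterations. The key step is to show that a segment strictly above the current window is never touched. Such a segment has no edges in the flow graph, so the flow leaves its expenditure unchanged, and it stays fully bought. For the same reason, a segment that the flow fully buys cannot later be unbought unless it lies in some later window. The subtle case is that a later window sits weakly below the current one. So a segment that was fully bought while above the new window keeps its money, and $a_i$ cannot move backwards.

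Next, whenever the procedure is re-run, some buyer's window strictly changes, i.e.\ $\marbpb_i(p,x')<\marbpb_i(p,x)$. Then every segment with bang-per-buck at least the old $\marbpb_i$ is fully bought in $x'$, and in particular segment $a_i$ is. Hence $a_i$ strictly increases for that buyer. Since $a_i\le M_i+1$, the potential $\sum_i a_i$ is bounded by $\sum_i (M_i+1)$, which is polynomial; for a reducible market it is in fact linear in $|B|$. Each re-run raises the potential by at least one, so the number of iterations is polynomial, and each iteration is one polynomial-time max-flow computation.

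The main obstacle is the monotonicity of $a_i$. In particular, we must check that ties in bang-per-buck and segments lying exactly on the window boundary cannot let a flow in a later iteration remove money from a segment with index below the current $a_i$. We would handle this by fixing a consistent tie-breaking order once, before Step 3, and then using the fact that flow edges only touch segments inside the current window, all of which have index at least $a_i$ under that fixed order.
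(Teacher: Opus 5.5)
Your overall strategy, a potential argument on each buyer's marginal bang-per-buck at fixed prices, is the same as the paper's. However, the potential you chose, the index $a_i$, is not monotone, and the fixed tie-breaking does not repair this. The window $[\marbpb_i(p,x)/(1+c_3\cdot\epsc), \marbpb_i(p,x)]$ is closed at the top. So a fully bought segment whose bang-per-buck equals $\marbpb_i(p,x)$, but which precedes segment $a_i$ in your fixed order, lies in the window. It therefore gets a good-to-buyer edge whose capacity is the money spent on it.

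Concretely, take a buyer with linear capped utilities and two tied segments $A$ (good $j_1$) and $B$ (good $j_2$), with $A$ ordered first, $A$ fully bought and $B$ partially bought. If $j_1$ over-clears, a maximum flow may route $s \to j_1 \to i \to t$ and partly drain $A$. Then $\marbpb_i$ is unchanged, so this buyer's window does not move, but $a_i$ drops to the index of $A$. In that same iteration another buyer's $a_{i'}$ may rise by only one, so $\sum_i a_i$ can decrease. Your claim that all in-window segments have index at least $a_i$ is therefore false whenever ties exist, however ties are broken once and for all. Your other step is fine: a strict drop of $\marbpb_i$ does force $a_i$ up. The failure is only the non-decreasing claim.

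The repair is to drop the index and argue with the value itself, which is exactly the paper's proof. Since $p$ is fixed during shift-and-burn, $\marbpb_i(p,x)$ always lies in the fixed finite set $\{s_{i,j,k}/p_j\}$ of at most $M_i$ values. By \cref{lem:shift-and-burn-invariant}, which you already cite, it never increases. Every re-run strictly decreases it for at least one buyer. Hence the number of iterations is at most $\sum_i M_i$, the total number of segments. Equivalently, you can use the tie-invariant potential $N_i = |\{(j,k) : s_{i,j,k}/p_j \ge \marbpb_i(p,x)\}|$. It depends only on the value, never decreases, and strictly increases whenever buyer $i$'s window moves. No control over which individual segments stay fully bought is needed.
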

\begin{proof}
Observe that $\marbpb_i(p,x)$ is defined to be the bang-per-buck of some segment
belonging to buyer $i$ under price vector $p$, and also observe that a
shift-and-burn iteration does not change the price vector $p$. Furthermore, by
\cref{lem:shift-and-burn-invariant} we have that $\marbpb_i(p,x)$ weakly
decreases for each buyer $i$ in a shift-and-burn step, and, if we do not
terminate, then there exists some buyer $i$ for which $\marbpb_i(p,x)$ strictly
decreases. Therefore, the total number of iterations is bounded by the total
number of segments in the market, and these segments are part of the input of the problem, so the number of iterations is polynomial.
\end{proof}

If the final allocation $x'$ happens to remove all over-clearing in the market,
then we stop and skip ahead to Step 4. Otherwise we proceed below.

\paragraph{\bf The restricted flow graph.}

After constructing $x'$, we then construct the \emph{restricted flow graph} in
the following way. Consider the flow problem that was solved in the last iteration of the shift-and-burn procedure. Note that the window of each buyer did not change as a result of that last iteration. We will select a subset of the vertices of this graph to define a \emph{restricted} flow graph. Namely:
\begin{itemize}
\item We add all goods that over-clear in $x'$ to the restricted flow graph.
\item If good $j$ is in the restricted flow graph, and $i$ is a buyer such that
there exists an edge from $j$ to buyer $i$ in the flow problem, then we add buyer $i$ to the restricted flow graph if the following condition is satisfied: $f_{j,i} < a_{j,i}$, where $a_{j,i}$ is the sum of the capacities of all edges from good $j$ to buyer $i$. This condition is equivalent to saying that in allocation $x'$ buyer $i$ spends some strictly positive amount of money on a segment of good $j$ that appears in the window of buyer $i$ in $x'$.
\item If buyer $i$ is in the restricted flow graph, and $j$ is a good such that
there exists an edge from $i$ to good $j$ in the flow problem, then we add good $j$ to the restricted flow graph if the following condition is satisfied: $f_{i,j} < a_{i,j}$, where $a_{i,j}$ is the sum of the capacities of all edges from buyer $i$ to good $j$. This condition is equivalent to saying that in allocation $x'$ there exists a segment of buyer $i$ for good $j$ appearing in the window (in $x'$) of buyer $i$ that is not fully bought.
\end{itemize}
The restricted flow graph is defined to be the subset of the flow problem that
we obtain by repeating the above steps until convergence. Intuitively, the
restricted flow graph consists of all goods and buyers that can be reached from
the source by a path of edges that are not saturated by the flow $f$.

\begin{lemma}
\label{lem:rfg}
Every buyer $i$ in the restricted flow graph has fully burned, i.e.,
$\sum_{j} p_j \cdot x'_{i,j} = e_i - \epsc$.
\end{lemma}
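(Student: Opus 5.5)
We prove the statement by contradiction, using the max-flow structure of the final shift-and-burn iteration. Let $f$ be the maximal $s$-$t$ flow computed in the last iteration. Recall that $x'$ is obtained from $x$ by applying $f$, and that the windows did not change in this iteration. Hence the flow network built from $x'$ with the same windows is exactly the residual network of $f$, restricted to the original edge directions. In that network:
\begin{itemize}
\item the capacity of an edge from $s$ to an over-clearing good $j$ is what remains after subtracting $f_{s,j}$;
\item the capacity from good $j$ to buyer $i$ is $a_{j,i}-f_{j,i}$;
\item the capacity from $i$ to $j$ is $a_{i,j}-f_{i,j}$, up to netting;
\item the capacity from $i$ to $t$ is $\max(0,\sum_j p_j x'_{i,j}-(e_i-\epsc))$.
\end{itemize}

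The restricted flow graph is, by its definition, the set of vertices reachable from $s$ along unsaturated edges of $f$. Edges from $s$ into goods that still over-clear in $x'$ are unsaturated: a good still over-clears only if the $s$-edge capacity was not used up, since the capacity was exactly the excess money above $1+c_2\epsc$. Goods that reach the graph through buyers are connected by unsaturated buyer-to-good edges, and buyers are added through unsaturated good-to-buyer edges.

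Suppose some buyer $i$ in the restricted flow graph has not fully burned, that is, $\sum_j p_j x'_{i,j} > e_i-\epsc$. Since money is only ever burned (never added) by flow into $t$, this means the edge $(i,t)$ is unsaturated by $f$: its original capacity was $\sum_j p_j x_{i,j}-(e_i-\epsc)$, and exactly $f_{i,t}$ was burned. By construction of the restricted graph there is a path $s \to j_0 \to i_1 \to j_1 \to \cdots \to i$ in which every edge is unsaturated. Appending $(i,t)$ gives an $s$-$t$ path of positive residual capacity, so $f$ can be augmented. This contradicts maximality of $f$.

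One subtlety needs care. In the multigraph, between a good $j$ and buyer $i$ there are both forward edges (from $i$ to $j$) and backward edges (from $j$ to $i$), one for each in-window segment, and $f_{i,j}$ nets them. I would argue that the condition $f_{j,i}<a_{j,i}$ (respectively $f_{i,j}<a_{i,j}$) means that some individual edge in the required direction has positive residual capacity. This holds because both the condition and the residual capacities are sums over the parallel edges, so any slack in the sum appears on at least one edge.

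I also need the equivalence stated in the construction, namely that $f_{j,i}<a_{j,i}$ holds iff $i$ still spends money on an in-window segment of $j$ in $x'$. This follows because the money remaining on those segments is exactly $a_{j,i}$ minus the flow that left. I expect this bookkeeping identification between the saturation conditions and the residual graph to be the main (mild) obstacle. The augmenting-path contradiction itself is standard max-flow reasoning.
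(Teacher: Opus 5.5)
Your proof is correct and follows the same route as the paper's. An unsaturated edge $(i,t)$, combined with the path of unsaturated edges from a still-over-clearing good to $i$ (that good's $s$-edge being unsaturated precisely because it still over-clears), gives an augmenting $s$-$t$ path contradicting the maximality of $f$; your remarks on parallel edges and residual capacities are just extra bookkeeping that the paper's terse argument leaves implicit.
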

\begin{proof}
This can easily be shown by contradiction. If a buyer $i$ has not
fully burned, then the edge from $i$ to $t$ is not saturated in the flow $f$. Furthermore, there is a path of unsaturated edges from some over-clearing good $j$ to $i$. Since $j$ still over-clears in $x'$, the
edge from $s$ to good $j$ must be unsaturated. We could therefore increase the flow
from $s$ to $t$ by increasing the flow along the edges that connect $s$ to $i$
and along the edge from $i$ to $t$, but this would contradict the maximality of
$f$.
\end{proof}

\paragraph{\bf The pump-and-shift procedure.}

In the pump-and-shift procedure we take the allocation $x'$ and the restricted
flow graph that are produced by the shift-and-burn procedure and we do the following.

\begin{enumerate}
\item Let $G_r$ be the set of goods that appear in the restricted flow graph. 
We construct a new price vector $p'$ by \emph{pumping} the prices of the goods
in $G_r$ by a factor of $(1 + \epsc)$. Formally, we
construct $p'$ in the following way.
\begin{equation*}
p'_j = \begin{cases}
p_j \cdot (1+\epsc) & \text{if $j \in G_r$,} \\
p_j & \text{otherwise.}
\end{cases}
\end{equation*}

\item 
We now need to \emph{shift} the allocations of the buyers. Fix a buyer $i$. For each segment $(j,k)$ of buyer $i$, let $x'_{i,j,k}$ denote the amount of segment $(j,k)$ that is bought by buyer $i$ in allocation $x'_i$. For the ensuing arguments, it will be convenient to introduce the notion of a \emph{pseudo-allocation}. A pseudo-allocation corresponds to an allocation where buyers are allocated parts of utility function segments. This is not necessarily an allocation, because an allocation requires each buyer to fully buy segment $(j,k)$ before spending any money on segment $(j,k+1)$. Let $q_{i,j,k} := p_j \cdot x'_{i,j,k}$ denote the amount of money spent on segment $(j,k)$ by buyer $i$ in allocation $x'$ at prices $p$. Define the pseudo-allocation $x''$ by setting $x''_{i,j,k} := q_{i,j,k}/p'_j$, i.e., the pseudo-allocation that is obtained by spending exactly the same amount of money as before on each segment, but now at the new prices $p'$. Note that by construction we have $x''_{i,j,k} \leq x'_{i,j,k}$ for all $i,j,k$.

Next, we will describe a procedure that will modify the pseudo-allocation $x''$ such that at the end (i) $x''$ is an allocation, (ii) $x''_{i,j} \leq x'_{i,j}$ for all $j$, (iii) $\sum_{j} p_j' \cdot x''_{i,j} = \sum_{j} p_j \cdot x'_{i,j}$, and (iv) $(p',x'')$ satisfies the invariant. For any buyer $i$ that is not interested in any good lying in the restricted flow graph, we do not perform any modification on its allocation. Indeed, there is no need to modify the allocation of that agent, since it already satisfies all the desiderata and none of the goods that are bought were pumped.

For any buyer $i$ who is interested in at least one good lying in the restricted flow graph, we perform the following \emph{shift} procedure. The procedure considers the segments of buyer $i$ in order of bang-per-buck at the new prices $p'$, starting from the segment with highest bang-per-buck and moving down. When the procedure reaches a segment $(j,k)$, we do the following. If $x''_{i,j,k} < x'_{i,j,k}$, then we remove money from the worst bang-per-buck segment $(\bar{j},\bar{k})$ of buyer $i$ that has money on it and shift it onto segment $(j,k)$ until $x''_{i,j,k} = x'_{i,j,k}$, or until we have removed all the money from segment $(\bar{j},\bar{k})$. In the latter case, if we still have $x''_{i,j,k} < x'_{i,j,k}$, then we repeat with the worst bang-per-buck segment on which buyer $i$ now still spends money. If at any point, there is no segment with bang-per-buck strictly worse than $(j,k)$, the procedure terminates. Otherwise, once we have shifted enough money onto $(j,k)$ such that $x''_{i,j,k} = x'_{i,j,k}$, we move on to the next segment in decreasing bang-per-buck order.

Condition (i) holds because for any segment $(j,k)$ with $x''_{i,j,k} < x'_{i,j,k}$, the algorithm will not terminate as long as there is a segment $(j,k')$, $k' > k$, with a positive amount of money on it. This is because segment $(j,k')$ necessarily has a strictly worse bang-per-buck than segment $(j,k)$ due to the SPLC utility function. Condition (ii) holds because we have $x''_{i,j,k} \leq x'_{i,j,k}$ at the beginning of the procedure, and the algorithm maintains this throughout. Condition (iii) holds because we have $\sum_{j,k} p_j' \cdot x''_{i,j,k} = \sum_{j,k} p_j \cdot x'_{i,j,k}$ before running the procedure, and the procedure only shifts money between segments of each buyer $i$. The fact that condition (iv) holds will be proved below.
\end{enumerate}

The following lemma states that the pump-and-shift procedure does not introduce
any new over-clearing goods. 

\begin{lemma}
For every good $j$, if $j$ does not over-clear in $x'$, then it does not
over-clear in $x''$. 
\end{lemma}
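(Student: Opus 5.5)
The proof will go by showing that, for every buyer $i$, the amount of good $j$ bought in the new allocation satisfies $x''_{i,j} \leq x'_{i,j}$. This is exactly condition (ii) of the shift procedure, stated (though not yet formally argued) in the description above. Summing over buyers then gives $\sum_i x''_{i,j} \leq \sum_i x'_{i,j} \leq 1 + c_2 \cdot \epsc$ for any good $j$ that did not over-clear in $x'$. Hence $j$ does not over-clear in $x''$. Note that this argument does not use the hypothesis on $j$ in any essential way beyond the final inequality: the demand for \emph{every} good weakly decreases.

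The core step is therefore a careful proof of condition (ii), which I would argue segment by segment. I will show that $x''_{i,j,k} \leq x'_{i,j,k}$ holds for every segment $(j,k)$ at every moment of the shift procedure. Summing over $k$ then gives (ii) for the final allocation, since by (i) it is a genuine allocation whose per-good amount is the sum of its segment amounts.

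\emph{Initially.} We have $x''_{i,j,k} = q_{i,j,k}/p'_j \leq q_{i,j,k}/p_j = x'_{i,j,k}$, because $p'_j \geq p_j$; pumping only ever raises prices by the factor $1+\epsc$.

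\emph{During the procedure.} It performs only two kinds of operations. It removes money from some segment, which can only decrease $x''_{i,j,k}$ for that segment. It adds money to a segment $(j,k)$, which it does only while $x''_{i,j,k} < x'_{i,j,k}$, and it stops as soon as equality is reached. Both preserve the inequality, so an induction over the steps of the procedure establishes it throughout.

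For buyers not interested in any good of $G_r$ there is nothing to check, since their allocation is unchanged and the prices of the goods they buy are unchanged.

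I expect the main (mild) obstacle to be making precise that the procedure never overshoots $x'_{i,j,k}$ on the receiving segment. The shift is carried out continuously, or in exactly computed amounts, and it halts at equality; I would state this explicitly when formalizing the procedure. A second point to check is that the final object is an allocation, so that the quantity $\sum_i x''_{i,j}$ entering the clearing condition is indeed $\sum_{i,k} x''_{i,j,k}$. This is guaranteed by condition (i), which was argued above.
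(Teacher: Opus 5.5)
Your proposal is correct and follows essentially the same route as the paper. The paper's proof simply sums condition (ii) over buyers, having justified (ii) by exactly your argument: $x''_{i,j,k}\le x'_{i,j,k}$ holds initially because prices only rise, and the shift preserves it since it adds money to a segment only until equality is reached. Your segment-by-segment induction spells out that one-line justification in more detail.
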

\begin{proof}
By condition (ii) above, for every good $j$ we have that $\sum_{i}
x''_{i,j} \le \sum_{i} x'_{i,j}$. 
If good $j$ does not over-clear in $x'$ then we have $\sum_{i} x'_{i,j} \le 1
+ c_2 \cdot \epsc$, and therefore we have 
$\sum_{i} x''_{i,j} \le 1 + c_2 \cdot \epsc$, implying that $j$ also does not over-clear in $x''$. 
\end{proof}

The following two lemmas imply that condition (iv) holds, i.e., the invariant continues to hold after the
pump-and-shift operation.

\begin{lemma}
In allocation $x''$, any buyer $i$ outside the restricted flow graph satisfies the invariant.
\end{lemma}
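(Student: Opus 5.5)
Two cases, according to whether buyer $i$ is interested in a good of $G_r$.

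\emph{Case 1: $i$ is interested in no good of $G_r$.} The shift procedure leaves $x''_i = x'_i$. Every good on which $i$ has a non-zero segment keeps its price, so $p'_j = p_j$ for all of them. Hence the bang-per-buck of every segment of $i$ is unchanged, and so are the ordering, $\marbpb_i$ and the window. Goods in $G_r$ for which $u_{i,j}$ is the zero function carry no segments, so they play no role in the invariant. Since $(p,x')$ satisfies the invariant by \cref{lem:shift-and-burn-invariant}, so does $(p',x'')$ for buyer $i$.

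\emph{Case 2: $i$ has a segment on some good of $G_r$ but is not in the restricted flow graph.} The key structural fact comes from the closure rule. If $j \in G_r$ and $(j,k)$ lies in $i$'s window, then the $j \to i$ edge for that segment must be saturated, since otherwise $i$ would have been added. So $i$ spends nothing on any in-window segment of a pumped good. Pumping multiplies the bang-per-buck of such segments by $1/(1+\epsc)$ and leaves all others unchanged. So relative to the old values, the bang-per-buck only drops, and only on pumped segments.

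I then classify $i$'s segments on pumped goods by their position relative to the old window $[\beta/(1+c_3\epsc),\beta]$, where $\beta = \marbpb_i(p,x')$.
\begin{enumerate}
\item Segments strictly below the old window carry no money and stay below it.
\item Segments inside the old window carry no money, as just argued. They may fall below the window, which is harmless because they are empty.
\item Segments strictly above the old window are fully bought. After pumping they cost more, so $x''_{i,j,k} < x'_{i,j,k}$, and they may now lie inside or above the window.
\end{enumerate}

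Next I show the new marginal value $\beta'=\marbpb_i(p',x'')$ satisfies $\beta' \ge \beta/(1+\epsc)$. To do this, I analyse the shift procedure. It walks segments in decreasing bang-per-buck under $p'$. It refills each deficient segment, which must be of type 3, by taking money from the currently worst funded segment. The money needed is exactly the price increase on those segments. The worst funded segment before shifting is at or below the window; since unpumped bang-per-buck values never change, money is drawn from segments in the window and never placed below it. If the procedure succeeds in fully re-buying every segment with $p'$-bang-per-buck at least $\beta$, then the invariant holds with the same window. If it stops early, it does so at a segment $(j,k)$ with no strictly worse funded segment. In that case every funded segment has bang-per-buck at least that of $(j,k)$, and all better segments are fully bought, so $(j,k)$ defines $\beta'$. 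Funded segments all lie weakly above the old window floor, since the worst money lies within a factor $(1+c_3\epsc)$ of $\beta$. Alternatively, $(j,k)$ is a pumped type-3 segment whose $p'$-value is at least $\beta/(1+\epsc)$. In either case, all money sits at bang-per-buck at least $\beta'/(1+c_3\epsc)$ and above, provided $c_3 \ge 1$, which we fix at this point.

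\emph{Main obstacle.} The delicate step is the early-termination case. I must check that every remaining funded segment is within a factor $1+c_3\epsc$ of the new $\beta'$, not merely of the old $\beta$. Since $\beta' \le \beta$ in this case, this follows from funded segments being at least $\beta/(1+c_3\epsc) \ge \beta'/(1+c_3\epsc)$. I also must verify that a type-3 segment falling into the window does not raise the marginal value above a funded segment by more than the window factor. This holds because the marginal value is by definition the best non-fully-bought segment, and refilled segments are handled in order.
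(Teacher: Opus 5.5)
Your argument is correct in substance and follows the paper's route. The closure rule forces every in-window segment of a pumped good to be empty, so after pumping no money sits strictly below $\beta/(1+c_3\epsc)$ (using $c_3\ge 1$). Your split on whether the shift re-buys everything strictly above $\beta$ or stops early is exactly the paper's split on whether the in-window money $M$ suffices to refill $S$.

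One bookkeeping fix is needed in the early-termination case. The claim ``$(j,k)$ defines $\beta'$'' is guaranteed only when $(j,k)$ lies strictly above $\beta$ (the paper's first case), whereas ``$\beta'\le\beta$'' holds only when it does not, so split on the position of $(j,k)$ rather than asserting both. The bound $\beta'\ge\beta/(1+\epsc)$ you announce is true but plays no role.
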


\begin{proof}
First, consider a buyer $i$ who is not in the restricted flow graph. By definition of the restricted flow graph, this means that for any segment $(j,k)$ lying in the window of buyer $i$ with respect to $(p,x')$, we have that $x_{i,j,k}' > 0 \implies p_j' = p_j$. Indeed, if there existed a segment $(j,k)$ lying in the window and such that $x_{i,j,k}' > 0$ and $p_j' > p_j$, then good $j$ is in the restricted flow graph and buyer $i$ would also have been added to the graph by construction.

Let $W$ denote the window with respect to $(p,x')$, i.e., $W = [b/(1 + c_3 \cdot \epsc), b]$, where $b := \marbpb_i(p,x')$. Now, consider the segments with their bang-per-buck after the prices have been updated to $p'$. Let $S$ denote the set of segments $(j,k)$ with $p_j' > p_j$ and with bang-per-buck strictly larger than $b$, where the bang-per-buck is computed with the new prices $p'$. Finally, let $M$ denote the total amount of money spent by buyer $i$ on segments lying in $W$ (where the bang-per-buck of the segments is computed with respect to $p'$) in the pseudo-allocation $y$ obtained after pumping the prices, but before the shifting procedure is executed. Note that since the pumping ratio $1+\epsc$ is smaller than the ratio defining the window $W$, namely $1+c_3\cdot \epsc$, and thanks to the observation in the previous paragraph, it must be that in the pseudo-allocation $y$ buyer $i$ is not spending any money on segments with bang-per-buck (with respect to $p'$) strictly below $W$.

Now consider two cases. First, if the amount of money $M$ is not sufficient to ensure that all the segments in $S$ are fully bought, then the shifting procedure will terminate while handling a segment $(j,k)$ such that (a) all segments with bang-per-buck strictly better than $(j,k)$ are fully bought, and (b) buyer $i$ spends no money on any segment with bang-per-buck strictly worse than $(j,k)$. As a result, the invariant immediately holds. Here, we crucially used the fact that any segment with bang-per-buck (with respect to $p'$) strictly above $b$ was fully bought in allocation $x'$.

The remaining case is when the amount of money $M$ suffices to ensure that all segments in $S$ are fully bought at the new prices. In that case, the shifting procedure will indeed fully buy the segments in $S$. As a result, all segments with bang-per-buck (with respect to $p'$) strictly above $b$ will be fully bought in the final allocation $x''$. In particular, we have $\marbpb_i(p',x'') \leq b$. Recall that, as argued above, buyer $i$ was not spending any money on segments with bang-per-buck (at the new prices $p'$) strictly below $W$. As a result, this is still the case after the shifting procedure, i.e., no money is spent on segments with bang-per-buck strictly lower than $b/(1 + c_3 \cdot \epsc)$. In particular, no money is spent on segments with bang-per-buck strictly lower than $\marbpb_i(p',x'')/(1 + c_3 \cdot \epsc)$, and thus the invariant holds.
\end{proof}

\begin{lemma}
In allocation $x''$, any buyer $i$ inside the restricted flow graph satisfies the invariant.
\end{lemma}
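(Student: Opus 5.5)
We prove the lemma by fixing a buyer $i$ in the restricted flow graph, writing $b := \marbpb_i(p,x')$ and $W := [b/(1+c_3\cdot\epsc), b]$ for the window in $x'$, and checking that no money ends up strictly below the new window. We follow the structure of the previous lemma. The only new feature is that buyer $i$ may now hold money on pumped in-window segments, whose bang-per-buck drops by the factor $1+\epsc$.

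\textbf{Step 1: location of the money before shifting.} Let $y$ be the pseudo-allocation obtained after pumping but before the shift. In $x'$, buyer $i$ spends money only on segments of bang-per-buck (at $p$) at least $b/(1+c_3\cdot\epsc)$. Since prices rise by at most $1+\epsc$, in $y$ every funded segment has bang-per-buck (at $p'$) at least $b/((1+\epsc)(1+c_3\cdot\epsc))$. So the money can now lie slightly below $W$, but only by a factor of $1+\epsc$. Segments strictly above $W$ at $p$ were fully bought in $x'$. After pumping, such a segment either stays above $b$ (at $p'$), or drops into $[b/(1+\epsc), b]$ and may become not fully bought in $y$.

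\textbf{Step 2: effect of the shift procedure.} The procedure scans segments in decreasing $p'$-bang-per-buck order. It refills each segment to its $x'$ level, taking money from the worst funded segment. There are two cases.
\begin{itemize}
\item The procedure terminates early at some segment $(j,k)$. Then all segments better than $(j,k)$ are fully bought (they were fully bought in $x'$ and have been refilled), and no money remains strictly below $(j,k)$. The invariant then holds trivially, as in the previous lemma.
\item The procedure completes. Then every segment that was fully bought in $x'$ is fully bought again. In particular, all segments with $p'$-bang-per-buck strictly above $b$ are fully bought, since each of them had $p$-bang-per-buck above $b$ and was therefore fully bought in $x'$. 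Hence $\marbpb_i(p',x'') \le b$.
\end{itemize}
The remaining task in the second case is a \emph{lower} bound on $b' := \marbpb_i(p',x'')$. This is needed because money may sit at level $b/((1+\epsc)(1+c_3\cdot\epsc))$.

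\textbf{Step 3: lower bound on $b'$.} This is the main obstacle. The segment that defined $b$ in $x'$ was not fully bought at $p$. At $p'$ its bang-per-buck is at least $b/(1+\epsc)$. Because the shift procedure only ever moves money \emph{onto} segments up to their $x'$ level, this segment is still not fully bought in $x''$ (condition (ii) gives $x''_{i,j,k}\le x'_{i,j,k}$). Therefore $b' \ge b/(1+\epsc)$. We then compare this with where the funded money lies:
\begin{itemize}
\item Money on segments that were not pumped, or that received money through the shift, sits at level at least $b/(1+c_3\cdot\epsc) \ge b'/(1+c_3\cdot\epsc)$.
\item Money on pumped segments sits at level at least $b/((1+\epsc)(1+c_3\cdot\epsc))$, which is not obviously at least $b'/(1+c_3\cdot\epsc)$ when $b' > b/(1+\epsc)$.
\end{itemize}
To close this gap we use the shift's ordering. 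If $b' > b/(1+\epsc)$ while money still sits on a segment of level below $b'/(1+c_3\cdot\epsc)$, then the segment defining $b'$ is a strictly better segment that is not refilled. The procedure would have drained that worst funded segment before stopping, which is a contradiction. If the accounting is delicate, the fallback is to argue that the window ratio $1+c_3\cdot\epsc$ absorbs one extra factor of $1+\epsc$: one fixes $c_3$ in the invariant with slack, for example by showing that the money sits above $b'/(1+(c_3-1)\epsc)$ up to pumping, and one chooses $c_3$ large relative to the preprocessing constant.
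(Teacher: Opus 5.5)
\textbf{Genuine gap.} Your argument never uses the property that distinguishes buyers \emph{inside} the restricted flow graph, and without that property the conclusion fails. So the shift-ordering contradiction in your Step 3 cannot be made to work.

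Take a buyer whose $(p,x')$-window contains two segments: an unpumped segment of bang-per-buck $b$ that is not fully bought, and a pumped, funded segment at the bottom of the window. After pumping, the funded segment has $p'$-bang-per-buck $b/((1+\epsc)(1+c_3\cdot\epsc)) < b/(1+c_3\cdot\epsc)$. The shift procedure adds nothing to the unpumped segment, because it only refills segments up to their $x'$ level and this one is already there. Hence $\marbpb_i(p',x'') = b$ and the invariant is violated.

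This configuration shows that your claim that ``the procedure would have drained that worst funded segment'' is false. It also breaks your Case~1 assertion that every segment better than the termination segment is fully bought, since segments not fully bought in $x'$ are never completed. Note too that the bound $b' \ge b/(1+\epsc)$ you derive points the wrong way. The invariant needs the \emph{upper} bound $b' \le b/(1+\epsc)$, as your own final comparison shows.

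\textbf{The missing idea.} Since $i$ lies in the restricted flow graph, every good $j$ on which $i$ has a not-fully-bought segment in its $(p,x')$-window has an unsaturated edge from $i$ to $j$. So $j \in G_r$, and $j$ is pumped. Consequently every segment not fully bought in $x'$ has $p'$-bang-per-buck at most $b/(1+\epsc)$: in-window ones were pumped, and below-window ones were already below $b/(1+c_3\cdot\epsc)$. Now set
\[ W' := [\, b/((1+\epsc)(1+c_3\cdot\epsc)),\; b/(1+\epsc) \,], \]
the window translated down by $1+\epsc$. This gives three facts:
\begin{itemize}
\item every segment strictly above $W'$ was fully bought in $x'$;
\item every funded segment has $p'$-bang-per-buck at least the bottom of $W'$;
\item segments strictly below $W'$ were strictly below the old window, hence unfunded.
\end{itemize}
The paper then splits on whether the shift runs out of money before refilling all segments strictly above $W'$. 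If it does, it stops at such a segment, all better segments are full, and nothing worse is funded. If it refills them all, then $\marbpb_i(p',x'') \le b/(1+\epsc)$ and no money lies below $W'$. Either way the invariant holds.

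Your fallback of enlarging $c_3$ to create slack is not a valid substitute. The lemma asserts the invariant with the same fixed $c_3$. Moreover, Step~3 alternates roughly $\log(\pmax/\pmin)/\epsc$ times, so losing a factor $1+\epsc$ per round would compound into a constant-factor window. That would destroy the $(1-c\cdot\epsc)$-optimality supplied by \cref{lem:invariant-works}.
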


\begin{proof}
Consider a buyer $i$ who is in the restricted flow graph. By construction of the restricted flow graph, we have that any segment $(j,k)$ lying in the window of buyer $i$ with respect to $(p,x')$ must satisfy $x_{i,j,k}' < \ell_{i,j,k} \implies p_j' > p_j$. Indeed, the existence of a segment $(j,k)$ lying in the window and such that $x_{i,j,k}' < \ell_{i,j,k}$, implies that good $j$ is added to the restricted flow graph, and thus its price is pumped.

Let $W$ denote the window with respect to $(p,x')$, but translated by a factor $1+\epsc$. More formally, we define $W := [b/(1 + c_3 \cdot \epsc), b]$, where $b := \marbpb_i(p,x')/(1+\epsc)$. Note that for any segment $(j,k)$ with bang-per-buck (with respect to $p'$) lying strictly above $W$, we have $x_{i,j,k}' = \ell_{i,j,k}$, i.e., the segment was fully bought before the pump and shift procedure. Indeed, any segment $(j,k)$ with bang-per-buck (with respect to $p'$) lying strictly above $W$ must satisfy one of the following conditions (a) the segment had bang-per-buck (with respect to $p$) that was strictly above $\marbpb_i(p,x')$, or (b) the segment had bang-per-buck (with respect to $p$) lying in the window of buyer $i$ with respect to $(p,x')$. Here we used the fact that $W$ was obtained by translating by a factor of $1+\epsc$ the window of buyer $i$ with respect to $(p,x')$. Since this factor is smaller than $1+c_3 \cdot \epsc$, the factor defining the size of the windows, it follows that the window of buyer $i$ with respect to $(p,x')$ and its translation $W$ must overlap. So, (a) and (b) are indeed the only two possibilities. Now, in case (a) it is immediate that the segment was fully bought before the pump and shift procedure, since it was above the window. For case (b), assume for the sake of contradiction that segment $(j,k)$ was not fully bought. Then, as observed above, it must be that the price of good $j$ was pumped. But this means that segment $(j,k)$ now has bang-per-buck (with respect to the new prices $p'$) lying in the translated window $W$, a contradiction to the assumption that $(j,k)$ is strictly above $W$.

Now, consider two cases. First, assume that the shifting procedure runs out of money before having been able to fully buy all segments lying strictly above $W$ (with respect to the new prices $p'$). Then the shifting procedure will terminate while handling a segment $(j,k)$ such that (a) all segments with bang-per-buck strictly better than $(j,k)$ are fully bought, and (b) buyer $i$ spends no money on any segment with bang-per-buck strictly worse than $(j,k)$. As a result, the invariant immediately holds. Here, we crucially used the fact that any segment with bang-per-buck (with respect to $p'$) strictly above $W$ was fully bought in allocation $x'$.

The remaining case is when the shifting procedure fully buys all segments lying strictly above $W$ (with respect to the new prices $p'$). As a result, the invariant holds, since no money will be spent on any segment lying strictly below $W$, because such segments were strictly below the window with respect to $(p,x')$, and thus had no money spent on them in allocation $x'$.
\end{proof}

The following lemma states that the shift procedure can
introduce under-clearing only in goods that are adjacent to the restricted flow
graph.

\begin{lemma}
\label{lem:s3uc}
Let $j$ be any good that under-clears in $x''$ but does not
under-clear in $x'$. Then, there exists a buyer who is interested in good $j$ and in a good that lies in the restricted flow graph.
\end{lemma}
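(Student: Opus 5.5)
We argue by contrapositive. Suppose that no buyer interested in good $j$ is interested in any good of $G_r$, the set of goods in the restricted flow graph. We show that then $\sum_i x''_{i,j} = \sum_i x'_{i,j}$, so $j$ under-clears in $x''$ if and only if it under-clears in $x'$. Here ``interested'' means having a non-zero utility function $u_{i,j}$. The claim then follows.

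First, the price of $j$ itself is unchanged, i.e., $p'_j = p_j$. If $j \in G_r$, then (assuming some buyer is interested in $j$) that buyer is interested in a good of $G_r$, namely $j$, and we are done. If no buyer is interested in $j$, then $j$ is never moved by the shift procedure; we also note that in that degenerate case the allocation of $j$ is untouched by pump-and-shift except through the price. We handle this separately: money spent on $j$ by uninterested buyers is fixed, and the pump only applies to goods in $G_r$. If such a $j$ were pumped, its quantity would drop by the factor $1+\epsc$. This case needs a check that goods in $G_r$ with no interested buyers are only the over-clearing starting goods. A good enters $G_r$ either by over-clearing or via an edge from an interested buyer's window segment. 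An over-clearing good stays at demand at least $(1+c_2\epsc)/(1+\epsc)$ after the pump, which is above the threshold $1-c_2\epsc$, so it does not under-clear.

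Second, consider any buyer $i$ who spends money on $j$ in $x'$. The pseudo-allocation $x''$ only differs from $x'$ for buyers on which the shift procedure is run, and those are exactly the buyers interested in at least one good of $G_r$. Every other buyer's allocation is left unmodified, and since $p'_j = p_j$, their quantity of $j$ is unchanged. The buyers that are modified are interested in some good of $G_r$. By our assumption, they are not interested in $j$, so $u_{i,j}$ is the zero function and $j$ contributes no segments for them.

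The key point, and the main obstacle, is that the shift procedure only moves money between \emph{segments} of buyer $i$. Money a buyer spends on a good it has zero utility for is not on any segment. We must verify from the construction that the shift procedure never touches such spending: it removes money only from worst bang-per-buck segments and adds money only to segments. Hence $x''_{i,j} = x'_{i,j}$ for such buyers as well, again using $p'_j = p_j$.

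Summing over all buyers gives $\sum_i x''_{i,j} = \sum_i x'_{i,j}$, contradicting that $j$ newly under-clears. The delicate points are therefore the zero-utility spending and the pumped goods in $G_r$, and I would state both explicitly.
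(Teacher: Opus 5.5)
Your proof is correct and follows the same route as the paper's: by contrapositive, if no buyer interested in $j$ is interested in a good of $G_r$, then $p'_j = p_j$ and the shift procedure is not run on any buyer interested in $j$, so the demand for $j$ in $x''$ equals that in $x'$. Your separate treatment of $j \in G_r$ with no interested buyer is sound but unnecessary, because the preprocessing guarantees $\sum_{(i,k)\in I_j}\ell_{i,j,k} > 1+\delta$, so every good has an interested buyer. Your remark that the shift procedure leaves zero-utility spending on $j$ untouched makes explicit a step the paper's proof uses only implicitly.
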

\begin{proof}
Recall that $G_r$ denotes the set of goods that appear in the restricted flow graph.
Let $j$ be a good such that $j \not\in G_r$ and such that no buyer that is
interested in good $j$ is interested in a good in the restricted flow graph. This means that we did not run the shift procedure on any of the buyers who are interested in good $j$. As a result, the amount of good $j$ that is bought in allocation $x''$ is the same as was already bought in allocation $x'$. So good $j$ can only under-clear in $x''$ if it already did so in allocation $x'$, i.e., $j \notin U$.

Thus, if a good $j$ under-clears in $x''$ when it did not under-clear in $x'$, there
must be at least one buyer that is interested in $j$ and is interested in a good
$j'$ in the restricted flow graph.
\end{proof}

\paragraph{\bf Combining into Step 3.}

In Step 3 we alternate the shift-and-burn and pump-and-shift procedures until
there are no over-clearing goods left in the market. The following lemma states
that we can alternate these procedures at most a constant number of times
before we terminate. 

\begin{lemma}
\label{lem:steps}
The shift-and-burn and pump-and-shift procedures can be alternated at most a
constant number of times before all over-clearing in the market is eliminated.
Once the algorithm has terminated, for all goods $j$ we have $p_j \le c'$ for
some constant $c'$. 
\end{lemma}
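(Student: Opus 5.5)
The argument is a potential argument on prices: every price starts in $[\pmin,\pmax]$, prices only ever increase, and each price increase is a multiplication by $(1+\epsc)$. We would establish two facts.

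\emph{(a) A good at a high price never over-clears.} Fix a threshold $P^* := 2\cdot d\cdot \bmax/(1+c_2\cdot\epsc)$, which is at most $\pmax$. At most $d$ buyers have non-zero utility for a good $j$, and each spends at most $e_i\le\bmax$. A good is only ever pumped when it lies in the restricted flow graph, so we need to rule out money from buyers with zero utility for $j$. Buyers never move money onto segments with zero slope in Step~3, since shift-and-burn only uses in-window segments and the shift step only uses segments with positive bang-per-buck. The allocation entering Step~3 (after Steps 1--2 and the preprocessing) likewise spends only on non-zero segments, by construction from the $q_{i,j,k}$ variables and optimal purchases. Hence the money on $j$ is at most $d\cdot\bmax$, and if $p_j\ge P^*$ the demand is at most $d\bmax/p_j\le (1+c_2\epsc)/2<1+c_2\epsc$. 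So a good with $p_j\ge P^*$ cannot over-clear.

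\emph{(b) Each round strictly reduces a bounded potential.} Consider any round in which over-clearing remains after shift-and-burn. By \cref{lem:rfg}, every buyer in the restricted flow graph has fully burned. We would argue that the restricted flow graph contains at least one over-clearing good, namely all of them, and that every over-clearing good is pumped. A pumped good $j$ is pumped only if it is in $G_r$, and $G_r$ consists of over-clearing goods together with goods reachable from them through unsaturated edges.

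The key quantity is the number of times each individual good can be pumped. Suppose good $j$ in $G_r$ has been pumped $T$ times, and $T$ exceeds $\log_{1+\epsc}(P^*/\pmin)+1$. Then $p_j>P^*$. We then need that a good reachable in the restricted flow graph that is not over-clearing also cannot be pumped unboundedly. The plan is to show that if $j\in G_r$ with $p_j\ge \kappa\cdot P^*\cdot(1+c_3\epsc)$, then no buyer can have a segment of $j$ in its window. Such a buyer's marginal bang-per-buck is at least the bang-per-buck of some segment on a cheaper over-clearing good with price below $P^*$, by (a). Since slopes lie in $[1,\kappa]$, this bang-per-buck exceeds $\kappa/p_j\cdot(1+c_3\epsc)$, which pushes $j$ strictly below the window, so $j$ has no edges and is unreachable. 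Hence every price remains bounded by $c':=\kappa(1+c_3\epsc)(1+\epsc)P^*$, a constant.

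From this, each good is pumped at most $N:=\lceil\log_{1+\epsc}(c'/\pmin)\rceil$ times, which is constant. The remaining step is to bound the number of rounds rather than the number of pumps. Every round with over-clearing remaining pumps every currently over-clearing good, and by the preceding lemma goods never newly start over-clearing. So the set of over-clearing goods is monotonically non-increasing, and each surviving member is pumped every round. After $N+1$ rounds, any still-over-clearing good would exceed $P^*$, contradicting (a). Hence at most $N+1$ rounds occur, and each round consists of polynomially many shift-and-burn iterations.

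The main obstacle is the claim that non-over-clearing goods reachable in $G_r$ have bounded price. The window argument above is the first thing we would try, but it needs care because the witness segment must belong to the same buyer. If it fails, we would instead bound prices only for goods ever in $G_r$ by induction over rounds, using that any such good is adjacent via an in-window segment to a buyer that also has an in-window segment on a good of bounded price; windows have constant width, so the price ratio is at most $\kappa(1+c_3\epsc)$ per hop. Unbounded path lengths in $G_r$ could then break constancy. In that case we would fall back on the weaker but sufficient bound coming from $\pmax$ being the only relevant scale for over-clearing goods, and bound the prices of the other goods only by the number of rounds times $\log(1+\epsc)$, which is still constant.
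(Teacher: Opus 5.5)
Your argument is correct and, once the detour is stripped out, it is the paper's proof. Part (a) is the paper's observation that a good priced at (roughly) $\pmax$ cannot over-clear, since at most $d$ buyers with budgets at most $\bmax$ can buy it. Your round count is also the paper's: all over-clearing goods lie in $G_r$ and are pumped every round, and no good newly starts over-clearing, so after about $\log_{1+\epsc}(\pmax/\pmin)$ rounds none remain. The same holds for the price bound $p_j \le \pmax(1+\epsc)^{R}$, which follows because each price is pumped at most once per round. You also make explicit the monotonicity of the set of over-clearing goods, which the paper uses only tacitly.

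The window argument for non-over-clearing goods in $G_r$ is unnecessary, and it fails as you feared. A buyer with an in-window segment of $j$ may have entered $G_r$ from a non-over-clearing good, so along a chain the factor $\kappa(1+c_3\epsc)$ compounds without bound. Nothing depends on it, though: your round bound only needs some $N \ge \log_{1+\epsc}(P^*/\pmin)$, not $c'$. So there is no circularity, and your fallback is by itself a complete proof.
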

\begin{proof}
Let $j$ be a good that over-clears at the start of Step 3. The price of $j$ at
the start of Step 3 is at least $\pmin = \bmin/(8 \cdot d \cdot \kappa)$, since
that was the lower bound on the prices in Step 1, and Step 2 did
not change the prices. 
Recall that $\pmax = 2 \cdot d \cdot \bmax$, and recall that both $\pmin$ and
$\pmax$ are constant. As long as the good is still over-clearing, each iteration of the pump-and-shift procedure increases the price of
good $j$ by a factor of $(1 + \epsc)$ where $\epsc$ is constant. Let $c$ be a
constant such that $(1 + \epsc)^c \cdot \pmin > \pmax$. So, if the procedure
has not terminated after $c$ rounds, the price of any good $j$ that is still
over-clearing must exceed $\pmax$.

We argue that if the price of a good $j$ is greater than or equal to $\pmax$,
then good $j$ cannot over-clear, which proves that the algorithm must terminate after at most $c$ rounds. To see why, observe that there are at most $d$
buyers that are interested in good $j$, and each of those buyers can spend at
most $\bmax$ money. If all such buyers spend their entire budget on good $j$,
then they can buy at most $d \cdot \bmax / \pmax = 1/2$ units of it, and $1/2 <
1 + c_2 \cdot \epsc$, so the good cannot over-clear.

Finally, to obtain an upper bound on the prices, it suffices to note that all
prices $p_j$ satisfy $p_j \le \pmax$ at the start of Step 3, since that was the
upper bound enforced in the \gcircuitp instance in Step 1, and prices have not
been modified since then. As argued previously, in Step~3 each price is
multiplied by at most $(1 + \epsc)^c$ for some constant $c$, and since $\pmax$,
$\epsc$, and $c$ are all constant, we therefore obtain a constant upper bound on
the prices. 
\end{proof}

As shown in~\cref{lem:s3uc}, the price for running this procedure is that
some under-clearing can be introduced in goods that are adjacent to one of the
restricted flow graphs that are pumped. The following lemma states that the
number of goods that under-clear after Step 3 is still relatively small.

\begin{lemma}
\label{lem:s3oc}
There is a suitably small constant value for $\epsc$ such that 
at most a $c \cdot (\deltac/\epsc)$-fraction of the goods under-clear at the end of
Step 3, for some constant $c$.
\end{lemma}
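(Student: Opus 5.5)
We bound the under-clearing goods by tracking what each round of Step 3 can create. At the start of Step 3, after preprocessing, at most a $c_3 \cdot \deltac$-fraction of the goods fail to $(c_3\cdot\epsc)$-clear, so in particular at most that fraction under-clear. Shift-and-burn leaves the demand of every non-over-clearing good unchanged, and it never pushes an over-clearing good below $1 + c_2\cdot\epsc$. Hence shift-and-burn creates no new under-clearing. The only new under-clearing comes from pump-and-shift. By \cref{lem:s3uc}, any newly under-clearing good $j$ shares a buyer with some good in $G_r$. Since the market has degree $d$, the number of such goods is at most $d^2 \cdot |G_r|$. By \cref{lem:steps} there are only constantly many rounds, so it suffices to show that in every round $|G_r| \le c'\cdot(\deltac/\epsc)\cdot|G|$ for a constant $c'$. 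Summing over the rounds then gives the claimed $c\cdot(\deltac/\epsc)$ bound, after absorbing the initial $c_3\cdot\deltac \le c_3\cdot \deltac/\epsc$ term.

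The key step is the bound on $|G_r|$, and I would prove it by a money-accounting argument. By \cref{lem:rfg}, every buyer in the restricted flow graph has fully burned exactly $\epsc$. Burned money only ever originates as flow out of over-clearing goods. Summing over all shift-and-burn iterations in all rounds so far, the total amount burned is therefore at most the total excess money ever removed from over-clearing goods. The set of goods that are over-clearing in any round is contained in the set of goods that did not $(c_3\cdot\epsc)$-clear at the start of Step 3: pump-and-shift creates no new over-clearing, and shift-and-burn does not either. That set has size at most $c_3\cdot\deltac\cdot|G|$. Each such good has at most $d$ interested buyers, each with budget at most $\bmax$, so the money spent on it by buyers with non-zero utility is at most $d\cdot\bmax$.

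One subtlety is that buyers might spend money on zero-utility goods. I would argue that, after Step 2 and the preprocessing, every buyer spends only on segments in or above its window, so all spending is on non-zero segments and the bound $d\cdot\bmax$ applies. This gives total burn at most $d\cdot\bmax\cdot c_3\cdot\deltac\cdot|G|$. Consequently, the number of fully burned buyers is at most $d\cdot\bmax\cdot c_3\cdot(\deltac/\epsc)\cdot|G|$, and this controls the buyers in the restricted flow graph.

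To pass from buyers to goods, note that every good in $G_r$ is either over-clearing, which accounts for at most $c_3\deltac|G|$ goods, or was added through an edge from some buyer in the restricted flow graph. Each such buyer contributes at most $d$ goods, so $|G_r| \le (c_3\deltac + d^2\bmax c_3\deltac/\epsc)|G|$, which has the required form.

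I expect the main obstacle to be the cumulative accounting across rounds. Prices change between rounds, so "money of excess" must be measured consistently. I would bound the total money ever spent on initially bad goods by $d\cdot\bmax$ each, independent of prices, since that is a budget bound. I would also check that the burn capacities, which are capped at $\epsc$ per buyer overall, make the total-burn inequality valid globally and not just per iteration.
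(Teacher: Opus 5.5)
Your ingredients are the right ones (\cref{lem:rfg}, burn accounting, \cref{lem:s3uc}, degree counting). The gap is the step ``by \cref{lem:steps} there are only constantly many rounds, so summing over the rounds gives the bound.'' The only available bound on the number of rounds is the least $R$ with $(1+\epsc)^R\pmin>\pmax$, i.e.\ $R=O(\log(\pmax/\pmin)/\epsc)$. This is constant only because $\epsc$ is fixed. Summing your per-round bound $d^2\cdot c'(\deltac/\epsc)|G|$ over $R$ rounds therefore gives a fraction of order $(\deltac/\epsc)\cdot(1/\epsc)$, so your constant $c$ secretly contains a factor $1/\epsc$. That defeats the explicit $\deltac/\epsc$ form. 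In the completion of \cref{thm:markettopcp} one sets $\epsc=z$, $\deltac=z^2$, and your bound then does not tend to $0$. (Taking $\deltac=z^3$ would rescue the theorem, but not the lemma as stated and used.)

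The paper never sums over rounds, and you already have what is needed: your burn bound is global, so apply it to the union over all rounds. Any buyer appearing in a restricted flow graph in some round is fully burned at that moment by \cref{lem:rfg}, and it stays fully burned. Shift-and-burn never raises a buyer's spending, a fully burned buyer's edge to $t$ has capacity $0$, and pump-and-shift preserves spending by condition~(iii). Hence the set $B_r$ of buyers appearing in \emph{any} restricted flow graph satisfies $|B_r|\le(\text{total burn})/\epsc$. The union of all the sets $G_r$ lies within the initially over-clearing goods plus the neighbours of $B_r$. By \cref{lem:s3uc}, every good that ever becomes under-clearing is within two hops of this union, so the factor $d^2$ is paid once rather than $R$ times.

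A secondary point, which the paper's own proof also passes over quickly: your bound of $d\bmax$ per initially bad good on the \emph{total} burn does not follow from money on a good being at most $d\bmax$ at every moment. In pump-and-shift, buyers restore fully bought segments of a pumped over-clearing good at the price multiplied by $1+\epsc$. Money thus flows back onto that good and can be burned again in a later round, and bounding each round's removal by $d\bmax$ reintroduces the factor $R$.

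The clean fix is to count units instead of money. The quantity $\sum_j\max(0,\sum_i x_{i,j}-(1+c_2\cdot\epsc))$ never increases: pump-and-shift weakly lowers every demand and creates no over-clearing, and shift-and-burn only lowers it. A flow $f$ from $s$ into $j$ lowers it by exactly $f/p_j$. All prices during Step~3 stay below $(1+\epsc)^R\pmax=O(\pmax\cdot\pmax/\pmin)$, which is independent of $\epsc$. Hence the total burn is at most a constant (independent of $\epsc$) times $(d\bmax/\pmin)\cdot c_3\deltac|G|$, which is exactly what the union argument needs.
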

\begin{proof}
By~\cref{lem:step2} there exists a constant $c'$ such that, at the start of
Step 3 a $(1 - c' \cdot \deltac)$-fraction of the goods $(c' \cdot \epsc)$-clear. 
Thus, there are at most $c' \cdot \deltac \cdot |G|$ goods that over-clear, and
the amount of money spent on over-clearing goods can be at most
$M = c' \cdot \deltac \cdot |G| \cdot d \cdot \bmax$, which would occur when every
buyer interested in an over-clearing good spent their entire budget on that
good. 

By~\cref{lem:rfg}, after the shift-and-burn procedure, the restricted flow
graph only contains buyers who have fully burned. Note that the shift-and-burn
procedure moves, via a flow, money from over-clearing goods to the money that
is burned by the buyers, and each buyer can burn at most $\epsc$ money. Thus, if $M / \epsc$ many buyers fully burn, then there
cannot be any over-clearing left in the market. This implies that the set of
buyers $B_r$ who appear in any restricted flow graph at any point in Step 3
must satisfy $|B_r| \le M / \epsc$. \cref{lem:s3uc} states that a good can
become under-clearing in Step 3 only if it 
shares a buyer with
some good that
appears in some restricted flow graph that is considered in Step 3. Thus there
can be at most $d^2 \cdot M / \epsc$ such goods. So in total, the number of
under-clearing goods that are introduced in Step 3 is at most
$$c' \cdot d^3 \cdot \bmax \cdot (\deltac/\epsc) \cdot |G|,$$
and since $d$, $c'$, and $\bmax$ are all constant, this proves the claim.
\end{proof}

Finally we can state the following lemma, which summarizes the result of
applying Step 3. 

\begin{lemma}
\label{lem:step3}
If $\epsc$ is a suitably small constant, then 
in polynomial time we can find an allocation $x$ and a price vector $p$ such
that there is a constant $c$ for which the following hold.
\begin{enumerate}
\item Every buyer receives a $(1 - c \cdot \epsc)$-optimal allocation.
\item Every buyer $i$ spends at least $e_i - \epsc$ and at most $e_i$ of their
money.
\item At least a $(1 - c \cdot (\deltac/\epsc))$-fraction of the goods $j$ satisfy $\sum_{i}
x_{i,j} \ge 1 - c \cdot \epsc$, and all goods $j$ satisfy $\sum_{i} x_{i,j} \le
1 + c \cdot \epsc$.
\item For each good $j$ we have $p_j \le c$.
\end{enumerate}
\end{lemma}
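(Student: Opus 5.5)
The plan is to assemble \cref{lem:step3} from the Step 3 lemmas, tracking each of the four properties through the alternation of shift-and-burn and pump-and-shift. We start from the output of \cref{lem:step2} and apply the preprocessing lemma, which yields an allocation satisfying the invariant in which a $(1 - c_3 \cdot \deltac)$-fraction of the goods $(c_3\cdot\epsc)$-clear. We then alternate the two procedures until no good over-clears. By \cref{lem:steps}, only a constant number of alternations occur. Each alternation runs polynomially many shift-and-burn iterations, each being a max-flow computation, followed by one pump-and-shift pass. So the whole step runs in polynomial time. \cref{lem:steps} also gives the price bound of item 4 with some constant $c'$.

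For item 2, the flow construction caps each buyer's burn edge by $\max(0,\sum_j p_j x_{i,j} - (e_i-\epsc))$, so expenditure never drops below $e_i-\epsc$. Flow conservation at buyers means money is only moved between segments or burned, never added, so expenditure never exceeds $e_i$. Pump-and-shift preserves expenditure by condition (iii). For item 1, the invariant is preserved by \cref{lem:shift-and-burn-invariant} and by the two lemmas establishing condition (iv) for buyers inside and outside the restricted flow graph. Given item 2, \cref{lem:invariant-works} then shows every buyer is $(1-c\cdot\epsc)$-optimal.

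For item 3, the upper bound holds as follows. The procedure terminates only when no good over-clears, i.e.\ every good satisfies $\sum_i x_{i,j} \le 1+c_2\epsc$. Before termination, no procedure creates new over-clearing: shift-and-burn never increases demand, and pump-and-shift introduces no new over-clearing by the corresponding lemma. Here the threshold $c_2$ must be understood as the constant after preprocessing, say $\max(c_2,c_3)$. For the lower bound, goods that under-clear at the end are of two kinds. Some already failed to $(c_3\epsc)$-clear at the start of Step 3, and there are at most $c_3\deltac|G|$ of these. The rest became under-clearing during pump-and-shift, and \cref{lem:s3oc} bounds these by a $c\cdot(\deltac/\epsc)$-fraction. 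Shift-and-burn cannot create under-clearing, since it leaves non-over-clearing goods unchanged and never pushes over-clearing goods below $1+c_2\epsc$. Since $\deltac \le \deltac/\epsc$ for $\epsc\le 1$, both counts fit in one $c\cdot(\deltac/\epsc)$ bound.

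The final constant $c$ is the maximum of the constants arising in the four items. I expect the main obstacle to be the bookkeeping behind \cref{lem:s3oc}: the bound on burners must be counted cumulatively over all rounds, not per round. This works because a burned buyer never regains burn capacity, so the total burned money, at most $M$, bounds the set of buyers ever appearing in a restricted flow graph. It also requires that the over-clearing threshold constant is used consistently with the preprocessing constant $c_3$.
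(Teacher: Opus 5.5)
Your proposal is correct and follows the paper's proof: item 1 from preservation of the invariant plus \cref{lem:invariant-works}, item 2 from the burn-edge capacities and expenditure-preserving shifts, item 3 from termination together with \cref{lem:steps} and \cref{lem:s3oc}, and item 4 from \cref{lem:steps}; your extra bookkeeping (a clearing threshold of at least $c_3$ after preprocessing, and counting burners cumulatively) only makes explicit what the paper glosses over. One small correction to your last paragraph: the fact that burn capacity is never regained does not by itself bound the cumulative burned money by $M$, because pumping raises the value of the remaining excess, and a round-by-round bound loses a factor of order $1/\epsc$ (that is how many rounds \cref{lem:steps} allows); the bound that does hold uses that demand never increases during Step 3 and that an over-clearing good always has price below $\pmax$, so at most $\pmax \cdot d\bmax/\pmin$ money is ever burned out of each initially over-clearing good, a constant multiple of $\deltac|G|$ independent of $\epsc$, which is what you need.
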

\begin{proof}
The first point follows from \cref{lem:invariant-works}, where we note that both
shift-and-burn and pump-and-shift maintain the invariant, and that
each buyer can burn at most $\epsc$ money, meaning that the precondition of
\cref{lem:invariant-works} is satisfied. Point 2 follows from the fact
that, by
design, the total flow lost 
in the edges from $i$ to $t$ across all of the
rounds of Step 3 is at most $\epsc$. Point 3 follows from~\cref{lem:steps} and~\cref{lem:s3oc}. Point 4 follows from~\cref{lem:steps}. 
\end{proof}

\subsection{Step 4: Fix the under-clearing}

In Step 4 we receive an allocation $x$ and a price vector $p$ that satisfy~\cref{lem:step3} for some constant~$c$. 
Let $U = \{ j \; : \; \sum_{i} x_{i,j} < 1 - c \cdot \epsc \}$ be the set of
under-clearing goods in $x$. 
We use the following three-step procedure to fix the under-clearing.

\begin{enumerate}
\item For each good $j$ that under-clears in $x$, we find a buyer $i$ who has
burned some amount of money, and we force buyer $i$ to spend money on good $j$
instead of burning it, and we move money onto good $j$ until either good $j$
no-longer under-clears, or until buyer $i$ runs out of burned money.
We repeat this process until we either run out of under-clearing goods, or
until we run out of buyers that have burned money. 

\item If there are no more under-clearing goods, but there are still buyers
that have burned money, then we allocate that money across all of the goods,
with each good $j$ receiving money in equal proportion. Formally, we calculate
the amount of remaining burned money as $P = \max(\sum_{i} (e_i - \sum_{j}
x^1_{i,j}), 0)$, where $x^1$ is the allocation at the end of the step above. We
then allocate this money so that $P/|G|$ extra money is spent on each good,
with the allocation of how the buyers spend money on the goods being chosen
arbitrarily.

\item If there are no more buyers with burned money, but there are still
under-clearing goods, then we force all buyers to remove money from each good, 
 where money is removed from all goods $j$ in equal proportion, and then spend
that money on the under-clearing goods in order to clear them. 
Formally, if $x^2$ is the allocation after the step above, then for each good
$j$ we define $D(j) = \max(1 - c \cdot \epsc - \sum_{i} x^2_{i,j}, 0)$ to be the
amount by which good $j$ under-clears. We then calculate 
$Q = \sum_{j} (p_j \cdot D(j))$ to be the amount of money that is needed to
fix the under-clearing. 

We then force each buyer $i$ to remove $Q/|B|$ money from their allocation,
where we remove money from goods in bang-per-buck order, starting with the
worst utility-function segment that is bought in the allocation $x^2$. Then we force those
buyers to allocate that money to the under-clearing goods, with the assignment
of buyers to under-clearing goods being chosen arbitrarily. 
\end{enumerate}
Let $x'$ be the allocation that results from this procedure.

The following lemma bounds the amount of money that is needed to clear the
goods in $U$. 
\begin{lemma}
\label{lem:totuc}
There exists a constant $c'$ such that 
$$\sum_{j \in U} \left(p_j \cdot \left(1 - c \cdot \epsc - \sum_i x_{i,j}
\right)\right) \le c' \cdot (\deltac/\epsc) \cdot |G|.$$
\end{lemma}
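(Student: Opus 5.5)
The bound follows by multiplying a per-good bound on the money deficit by the number of under-clearing goods. Both factors are already controlled by \cref{lem:step3}. The sum ranges only over $j \in U$, so we bound each summand and then bound $|U|$.

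\textbf{Per-good bound.} Fix $j \in U$. Since $\sum_i x_{i,j} \ge 0$ and $c \cdot \epsc \ge 0$, the quantity $1 - c\cdot\epsc - \sum_i x_{i,j}$ lies in $[0,1]$. It is nonnegative because $j \in U$. By point 4 of \cref{lem:step3}, $p_j \le c$. Hence every summand is at most $c$, a constant independent of the instance size.

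\textbf{Counting.} By point 3 of \cref{lem:step3}, at least a $(1 - c\cdot(\deltac/\epsc))$-fraction of the goods satisfy $\sum_i x_{i,j} \ge 1 - c\cdot\epsc$. Every good in $U$ violates this inequality, so $|U| \le c \cdot (\deltac/\epsc)\cdot|G|$. Multiplying the two bounds gives
$$\sum_{j\in U} p_j\Bigl(1 - c\cdot\epsc - \sum_i x_{i,j}\Bigr) \le c\cdot c\cdot(\deltac/\epsc)\cdot|G|,$$
so we may take $c' = c^2$.

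\textbf{Main obstacle.} The only point needing care is that the constants are uniform. The constant in point 4 of \cref{lem:step3} is the price bound from \cref{lem:steps}, namely $\pmax(1+\epsc)^{O(1)}$. We should check that this stays bounded as $\epsc$ is later chosen small. The number of pump rounds grows like $\log(\pmax/\pmin)/\epsc$, but the resulting price bound is about $(1+\epsc)^{\log(\pmax/\pmin)/\epsc}\cdot\pmin$. This is at most roughly $\pmax(1+\epsc)$, since pumping stops once a good can no longer over-clear. So the bound is independent of $\epsc$, and the remaining factor $\deltac/\epsc$ carries the dependence correctly.
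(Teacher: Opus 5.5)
Your argument is correct and is essentially the paper's proof. The paper likewise bounds each summand by $p_j \cdot 1 \le c_1\pmax$, which is the constant price bound from \cref{lem:steps} and the same as your point~4 of \cref{lem:step3}, and then multiplies by $|U| \le c_3(\deltac/\epsc)|G|$ from \cref{lem:step3}.

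One correction to your closing uniformity remark, which goes beyond what the paper itself checks. The final prices are not at most roughly $\pmax(1+\epsc)$, because a good is pumped whenever it lies in the restricted flow graph, not only while it over-clears. Goods reached from over-clearing goods by unsaturated edges, possibly already priced near $\pmax$, are pumped as well. The right $\epsc$-independent bound is $p_j \le \pmax(1+\epsc)^R < (1+\epsc)\pmax^2/\pmin$. Here the number of rounds $R$ satisfies $(1+\epsc)^{R-1} < \pmax/\pmin$, since every still-over-clearing good is pumped in each round and an over-clearing good has price below $\pmax$. So your conclusion that the constant does not depend on $\epsc$ still holds.
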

\begin{proof}
In the worst case we have $\sum_i x_{i,j} = 0$ for any good $j \in U$. The
price of each good is at most $\pmax$ at the start of Step 3 due to the bounds
imposed during Step 1, and by~\cref{lem:steps} each good is pumped at most
constantly many times during Step 3, so we have $p_j \le c_1 \cdot \pmax$ for
some constant $c_1$. Since $\pmax$ is constant, there exists a constant $c_2$ such that, for each
good $j \in U$ we have $p_j \cdot \left(1 - c \cdot \epsc - \sum_i x_{i,j}
\right) \le p_j \cdot 1 \le c_2$. By~\cref{lem:step3} there exists a
constant $c_3$ such that $|U| \le c_3 \cdot (\deltac/\epsc) \cdot |G|$.
Therefore we have that $\sum_{j \in U} \left(p_j \cdot \left(1 - c \cdot \epsc
- \sum_i x_{i,j} \right)\right) \le c_2 \cdot c_3 \cdot (\deltac/\epsc) \cdot
  |G|$, so setting $c' = c_2 \cdot c_3$ suffices to prove the claim.
\end{proof}

The following lemma provides a lower bound on the utility that each buyer
receives at the end of Step 4. 

\begin{lemma}
\label{lem:finalopt}
There are sufficiently small constant values for $\epsc$ and $\deltac$ such that
every buyer receives a $(1 - c \cdot (\deltac/\epsc)) \cdot (1 - c' \cdot \epsc)$-optimal allocation in
$x'$ for some constants $c$ and $c'$.
\end{lemma}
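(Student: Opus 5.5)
We track what happens to each buyer $i$ through the three sub-steps of Step 4, starting from \cref{lem:step3}. By point 1 of \cref{lem:step3}, $x_i$ is $(1 - c' \cdot \epsc)$-optimal at prices $p$, and Step 4 never changes $p$. So it suffices to bound the multiplicative loss in $u_i$ incurred by the modifications of Step 4, and we will show that this loss is at most a factor $(1 - c\cdot(\deltac/\epsc))$.

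\textbf{Sub-steps 1 and 2.} These only add money to buyer $i$'s allocation: burned money is converted into purchases of some goods. Since utilities are non-decreasing, $u_i$ cannot decrease. The optimum $U_i^*$ is computed with the full budget $e_i$ and is unaffected, so after these sub-steps buyer $i$ is still $(1-c'\cdot\epsc)$-optimal. If sub-step 3 is not executed we are therefore done.

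\textbf{Sub-step 3.} Here each buyer removes $Q/|B|$ money. The key is to bound $Q$. Sub-step 3 is reached only if sub-steps 1 and 2 did not clear all under-clearing, and sub-steps 1 and 2 never decrease the amount bought of any good. Hence $D(j) \le 1 - c\cdot\epsc - \sum_i x_{i,j}$ for $j \in U$, and $D(j)=0$ for $j\notin U$. Applying \cref{lem:totuc} gives
\[
Q \le c'' \cdot (\deltac/\epsc)\cdot |G|.
\]
Since the market has degree $d$ and every good has at least one interested buyer, we have $|G| \le d\,|B|$. Therefore
\[
Q/|B| \le c''\, d\, (\deltac/\epsc).
\]
We choose $\deltac$ small relative to $\epsc$ so that this quantity is well below $\bmin$; this makes the removal feasible.

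\textbf{Utility loss in sub-step 3 (main obstacle).} Money is removed from the worst bang-per-buck segments bought in the current allocation. Let $S \ge e_i - \epsc$ be buyer $i$'s spending before removal. Assume $\epsc \le \bmin/2$, so that $S \ge \bmin/2$.

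We first show that the average bang-per-buck of the removed money is at most the average bang-per-buck over all money spent. Suppose that, inside a fixed budget, we remove the lowest-ratio portion of the spending. Then the utility of the remainder is at least a $(1 - (Q/|B|)/S)$ fraction of the original utility. This is a concavity/averaging argument: sorting spending by bang-per-buck, the removed part has below-average ratio.

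Hence after removal
\[
u_i \ge \Bigl(1 - \tfrac{2 c'' d}{\bmin}\cdot\tfrac{\deltac}{\epsc}\Bigr)\, u_i^{\text{before}},
\]
since $\bmin$ and $d$ are constants. The subsequent spending of that money on under-clearing goods can only add utility.

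Combining this with the $(1 - c'\cdot\epsc)$-optimality from the earlier steps yields the claimed $(1 - c\cdot(\deltac/\epsc))(1-c'\cdot\epsc)$-optimality.

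The delicate points are:
\begin{itemize}
\item confirming that removal proceeds across segments in the worst-first order, so that the averaging bound applies;
\item making sure $\deltac/\epsc$ is small enough that no buyer is asked to remove more than it spends.
\end{itemize}
Both are handled by fixing $\epsc$ first and then choosing $\deltac \ll \epsc^2$, for instance.
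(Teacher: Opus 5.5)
Your proposal is correct and follows the paper's proof. Sub-steps 1 and 2 only add utility. In sub-step 3 you bound the per-buyer removal by $Q/|B| \le c\, d\,(\deltac/\epsc)$ using \cref{lem:totuc} and $|G| \le d|B|$, and then use the lower bound $\bmin$ on budgets to get a $(1 - c(\deltac/\epsc))$ loss factor. Two of your steps are ones the paper states without detail: the worst-first averaging argument (the removed money has below-average bang-per-buck, so at most a $(Q/|B|)/S$ fraction of utility is lost), and the check that $D(j)$ is bounded by the original under-clearing.
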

\begin{proof}
Steps 1 and 2 of the procedure that produces $x'$ can clearly only weakly
increase the utility obtained by each buyer, since they involves making buyers
spend money that they had burned, which provided zero utility, on actual goods,
which have non-negative utility. 

Step 3 of the procedure that produces $x'$ on the other hand has buyers shifting money from goods that gave them
non-zero utility, to goods that potentially give them zero utility. 
By~\cref{lem:totuc}, we have that the total money removed by each buyer
$i$ is at most $Q/|B| \le (c_1 \cdot (\deltac/\epsc) \cdot |G|)/|B|$ for some constant
$c_1$. Since the graph has constant degree $d$, we have that $|G|/|B| \le d$,
and so $Q/|B| \le c_2 \cdot (\deltac/\epsc)$ for some constant $c_2$. 

In $x'$
each
buyer $i$ spends their full budget $e_i \ge \bmin$, and we have that $\bmin$ is
constant, positive, and bounded away from zero. So since $\deltac$ can be chosen
to be an
arbitrarily small constant, we have that $u_i(x') \ge (1- c_3 \cdot
(\deltac/\epsc)) \cdot u_i(x)$ for some constant $c_3$. 
Since $x$ delivers a $(1 - c_4 \cdot \epsc)$-optimal allocation for some
constant $c_4$ by~\cref{lem:step3}, this proves the claim.
\end{proof}

The following lemma states that all goods clear at the end of Step 4.

\begin{lemma}
\label{lem:finalclear}
Every good $(c \cdot \epsc + c' \cdot (\deltac/\epsc))$-clears in $x'$ for some
constants $c$ and $c'$. 
\end{lemma}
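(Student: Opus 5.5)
We track each good through the three sub-steps of Step 4 and bound how much its demand can move up or down. The starting point is \cref{lem:step3}. There every good $j$ satisfies $\sum_i x_{i,j} \le 1 + c\cdot\epsc$, all prices lie in $[\pmin, c]$, and every good not in $U$ satisfies $\sum_i x_{i,j} \ge 1 - c\cdot\epsc$. Since prices are bounded below by the constant $\pmin$, any change of $\Delta$ money on a good changes its demand by at most $\Delta/\pmin$. So it suffices to bound the money added to or removed from each good by $O(\epsc) + O(\deltac/\epsc)$.

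Sub-step 1 only adds money to goods in $U$, and only until they reach demand $1 - c\cdot\epsc$. Hence no good rises above $\max(1 + c\cdot\epsc, 1 - c\cdot\epsc)$, and goods in $U$ only move toward clearing.

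Sub-step 2 adds $P/|G|$ money to every good. Here $P \le \epsc\cdot|B|$, since each buyer burned at most $\epsc$. Using $|B| \le d\cdot|G|$ from the constant degree (each buyer has at least one good of interest), we get $P/|G| \le d\cdot\epsc$. The demand of every good therefore increases by at most $d\cdot\epsc/\pmin$, which is of order $\epsc$.

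Sub-step 3 runs only if the burned money ran out. Any good still under-clearing then receives exactly enough money to reach demand $1 - c\cdot\epsc$. The total money needed is $Q \le c'\cdot(\deltac/\epsc)\cdot|G|$ by \cref{lem:totuc}, because sub-step 1 only reduced the deficits. Each buyer removes $Q/|B|$ money. To bound $Q/|B|$, we use $|G| \le d\cdot|B|$, which holds because the preprocessing guarantees each good has an interested buyer. This gives $Q/|B| = O(\deltac/\epsc)$.

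The main obstacle is the removal: it is made in bang-per-buck order, not uniformly across goods, so a single good could lose money from every one of its buyers. The way around this is that at most $d$ buyers spend money on a good with non-zero utility. The money spent on a good by buyers with zero utility came only from sub-steps 1 and 2, and we can require removals to take such money first or treat it analogously. So the total removed from any good is at most $d\cdot Q/|B|$ plus the corresponding share from those extra spenders. This bounds each good's demand drop by $O(\deltac/\epsc)/\pmin$.

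To finish, every good ends with demand in $[1 - c\cdot\epsc - O(\deltac/\epsc),\; 1 + c\cdot\epsc + O(\epsc)]$. Goods in $U$ are filled to $1 - c\cdot\epsc$ and then lose at most the removal bound. All constants ($d$, $\pmin$, $\bmax$, and the constants from \cref{lem:step3,lem:totuc}) are absolute, so collecting them yields the claimed $(c\cdot\epsc + c'\cdot(\deltac/\epsc))$-clearing.
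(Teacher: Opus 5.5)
Your plan matches the paper's: bound the money each sub-step adds to or removes from a good using the degree bound $d$ and \cref{lem:totuc}, then divide by $\pmin$. Your handling of sub-steps 1 and 2 is correct. In particular, $P/|G|\le d\cdot\epsc$ via $|B|\le d|G|$ covers the upward drift from sub-step 2, which the paper's proof skips when it claims sub-steps 1 and 2 leave goods outside $U$ unchanged. Your bound $|G|\le d|B|$ also correctly gives $Q/|B|=O(\deltac/\epsc)$.

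The gap is in the step you flag as the main obstacle. Your extra term ``the corresponding share from those extra spenders'' is never bounded, and the option ``take such money first'' makes it unbounded. Sub-step 3 runs only once all burned money has been spent, so sub-step 2 was vacuous. By your own accounting, every unit of zero-utility spending then sits on a good of $U$ and was placed there in sub-step 1. Such a good $j$ may have been filled by arbitrarily many buyers, each contributing a tiny burned amount. If each of them removes that money first, each takes up to $\min(\text{its contribution},Q/|B|)$ from $j$, so $j$ can be drained back to its original deficit. Since $Q$ and the deficits $D(\cdot)$ are computed at $x^2$, where $j$ no longer under-clears, $j$ gets nothing back. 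Its demand then drops by a constant, not by $O(\deltac/\epsc)$.

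What you need is the opposite rule. In sub-step 3, each buyer should remove its $Q/|B|$ only from spending on goods for which it has non-zero utility, never from money it placed in sub-step 1. This is feasible because $Q/|B|=O(\deltac/\epsc)$ is tiny compared with the at least $e_i-\epsc$ that each buyer spends entering Step 4 on goods it values. The rule guarantees that only the at most $d$ buyers with non-zero utility for $j$ remove money from $j$. Then every good, including those in $U$ (which reach $1-c\cdot\epsc$ after sub-step 1 or after the sub-step 3 reallocation), loses at most $d\cdot Q/|B|$ money. That is at most $d\cdot Q/(|B|\,\pmin)=O(\deltac/\epsc)$ units. This is what the paper's proof implicitly assumes when it counts only the at most $d$ buyers interested in $j$. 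With this rule, your final interval $[1-c\cdot\epsc-O(\deltac/\epsc),\,1+c\cdot\epsc+O(\epsc)]$ is justified.
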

\begin{proof}
First consider a good $j$ that under-clears in $x$. Clearly by construction the
procedure that generates $x'$ ensures that good $j$ does not under-clear in
$x'$. 

Next consider a good $j$ that did not under-clear in $x$. Steps 1 and 2 of of
the procedure that generates $x'$ do not change the allocation of good $j$.
Step 3 of that procedure may change the allocation, however. In the worst case,
each of the buyers who are interested in good $j$ may remove $Q/|B|$ money from
good~$j$, and from~\cref{lem:totuc} we have that $Q/|B| \le c_1 \cdot
(\deltac/\epsc) \cdot |G|/|B|$ for some constant $c_1$. Since the degree of the
market is $d$ we therefore have $Q/|B| \le c_1 \cdot (\deltac/\epsc) \cdot d$, and
since $d$ is constant there exists a constant $c_2$ such that 
$Q/|B| \le c_2 \cdot (\deltac/\epsc)$.
Since there are at most $d$ buyers interested in good
$j$, where $d$ is a constant, we get that at most $d \cdot Q/|B| \le c_3 \cdot
(\deltac/\epsc)$ money is removed from good $j$, for some constant $c_3$.

The price of good $j$ is at least $\pmin$, since the \gcircuitp instance
imposed that as a lower bound, and Step 3 can only increase prices, and $\pmin$
is a constant. Thus the amount of good $j$ that is no longer bought is at most 
$(d \cdot Q/|B|)/\pmin \le c_4 \cdot 
(\deltac/\epsc)$ for some constant $c_4$, where we have used the fact that
$\pmin$ is constant. 

In summary, at least $(1 - c \cdot \epsc)$ units of good $j$ were bought at the
start of Step 4 by~\cref{lem:step3}. We have shown that in Step 4 at most 
$c_4 \cdot (\deltac/\epsc)$ units of good $j$ may no-longer be bought, which
proves the claim.
\end{proof}

The following lemma states the outcome of Step 4.

\begin{lemma}
\label{lem:step4}
If $\epsc$ and $\deltac$ are suitably small constants, then 
in polynomial time we can find an allocation $x$ and a price vector $p$ such
that there is a constant $c$ for which the following hold.
\begin{enumerate}
\item Every buyer receives a $(1 - c \cdot (\deltac/\epsc)) \cdot (1 - c \cdot \epsc)$-optimal allocation and satisfies their budget constraint with equality.
\item Every good $(c \cdot \epsc + c \cdot (\deltac/\epsc))$-clears.
\end{enumerate}
\end{lemma}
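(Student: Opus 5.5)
This lemma is a packaging statement: the plan is to run Steps 1--3 to get the input of Step 4, apply the three-phase Step 4 procedure to obtain $x'$, and read the two conclusions off \cref{lem:finalopt} and \cref{lem:finalclear}, together with a check on budgets and running time.

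\emph{Input and running time.} By \cref{lem:step3}, for suitably small constant $\epsc$, we obtain in polynomial time a pair $(p,x)$ with $p_j \le c$, every buyer $(1-c\epsc)$-optimal, every buyer spending between $e_i-\epsc$ and $e_i$, no good over-clearing, and at most a $c(\deltac/\epsc)$-fraction of goods under-clearing. Each phase of the Step 4 procedure is a single greedy pass over buyers and goods. Phase 1 moves burned money onto goods in $U$. Phase 2 spreads the leftover $P$ uniformly. Phase 3 removes $Q/|B|$ from each buyer in worst-first bang-per-buck order and reassigns it. So the whole of Step 4 is polynomial. I also need the removal in Phase 3 to be feasible, i.e.\ $Q/|B|$ must not exceed what a buyer spends. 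By \cref{lem:totuc} and $|G|/|B|\le d$, $Q/|B|\le c_2(\deltac/\epsc)$, and every buyer spends at least $\bmin-\epsc$. Choosing $\deltac$ small relative to $\epsc$ therefore makes this harmless.

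\emph{Budget equality.} After Phase 1 and Phase 2, either all burned money has been spent or no under-clearing remains. In the latter case Phase 2 spends the remainder $P$. In the former case Phase 3 only relocates money within each buyer. Either way every buyer ends with $\sum_j p_j x'_{i,j}=e_i$. One subtlety: in Phase 2 the assignment of the $P/|G|$ per good to buyers is arbitrary, so I will assign each buyer exactly its own remaining burned money, which keeps the budgets exact.

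\emph{Conclusions.} \cref{lem:finalopt} gives optimality $(1-c(\deltac/\epsc))(1-c'\epsc)$. \cref{lem:finalclear} gives that every good $(c\epsc+c'(\deltac/\epsc))$-clears, provided over-demand from Phase 2 is bounded. That lemma only treats removal, so I additionally bound the increase. The total money $P$ is at most $\epsc|B|$, so Phase 2 adds $P/|G|\le \epsc|B|/|G|$ per good. Dividing by $\pmin$ gives at most $\epsc(|B|/|G|)/\pmin$ units. Since $|B|/|G|$ is bounded by a constant by degree considerations (each buyer has a nonzero good and each good has at most $d$ buyers, so $|B|\le d|G|$), this is $O(\epsc)$. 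Phase 1 and Phase 3 only add money to goods in $U$, and only up to the threshold $1-c\epsc$, so no over-clearing is created there. Finally I take a single constant as the maximum of all constants involved. The main obstacle is this extra Phase 2 over-clearing bound and the bookkeeping that budgets are met exactly; everything else is a direct citation.
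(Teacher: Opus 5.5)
Your proposal is correct and follows the paper's route: the paper proves this lemma simply by citing \cref{lem:finalopt} and \cref{lem:finalclear} and taking $c$ to be the maximum of the constants involved. Your extra Phase~2 check is worthwhile. The proof of \cref{lem:finalclear} asserts that sub-steps~1 and~2 of Step~4 leave goods outside $U$ untouched, which is false for sub-step~2 since it adds $P/|G|$ money to every good. Your bound of at most $\epsc(|B|/|G|)/\pmin = O(\epsc)$ extra units per good, using $|B| \le d|G|$, is exactly the missing upper-bound argument.
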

\begin{proof}
This follows from~\cref{lem:finalopt} and~\cref{lem:finalclear}, while
picking $c$ to be the maximum of any of the constants arising from those
lemmas. 
\end{proof}

\subsection{Step 5: Obtain exact clearing}

In Step 5 we will define $f(\epsc, \deltac) = c \cdot \epsc + c \cdot
(\deltac/\epsc)$ and $g(\epsc, \deltac) = (1 - c \cdot (\deltac/\epsc)) \cdot
(1 - c \cdot \epsc)$, where $c$ is the constant arising from
\cref{lem:step4}. With these definitions in place, \cref{lem:step4}
states that at the start of Step 5 we have an allocation $x$ and a price vector
$p$ such that each buyer receives a $g(\epsc, \deltac)$-optimal allocation and
satisfies their budget constraint with equality, and such that each good $f(\epsc,
\deltac)$-clears. The goal of Step 5 is to modify $x$ and $p$ so that each good
exactly clears, while not losing too much in the optimality of the bundles.

Our procedure for Step 5 proceeds using three sub-steps.

\paragraph{\bf Step 5.1: Remove money so that exactly $(1 - f(\epsc,
\deltac))$ units of each good are demanded.}

For each good $j$, let $D_j = \sum_{i} x_{i,j}$ be the total amount of good $j$
that is demanded under the allocation $x$, and observe that we have $1 -
f(\epsc, \delta) \le D_j \le 1 + f(\epsc, \deltac)$ for all $j$ by
\cref{lem:step4}. Create the allocation $x^1$ so that, for each buyer $i$
and each good $j$, we set $x^1_{i,j}  = (1 - f(\epsc, \deltac))/D_j \cdot
x_{i,j}$.

We define 
$$g'(\epsc, \deltac) = \frac{1- f(\epsc, \deltac)}{1 + f(\epsc, \deltac)} \cdot g(\epsc, \deltac).$$ 
The following lemma shows
that the demand of each good under $x^1$ is exactly $1 - f(\epsc, \deltac)$, and
that each buyer receives a $g'(\epsc, \deltac)$-optimal bundle under $x^1$ and
$p$.

\begin{lemma}
\label{lem:step51}
In the allocation $x^1$ under price vector $p$ we have the following.
\begin{itemize}
\item For each good $j$ we have $\sum_{i} x^1_{i,j} = 1 - f(\epsc, \deltac)$.
\item Each buyer receives a $g'(\epsc, \deltac)$-optimal
bundle.
\end{itemize}
\end{lemma}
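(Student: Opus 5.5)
The first bullet is a direct calculation. By definition $x^1_{i,j} = \frac{1 - f(\epsc,\deltac)}{D_j} \cdot x_{i,j}$, so summing over buyers gives $\sum_i x^1_{i,j} = \frac{1 - f(\epsc,\deltac)}{D_j} \cdot D_j = 1 - f(\epsc,\deltac)$. We need $D_j > 0$ for this to make sense. By \cref{lem:step4}, $D_j \ge 1 - f(\epsc,\deltac) > 0$ once $\epsc$ and $\deltac$ are small enough that $f(\epsc,\deltac) < 1$. We will assume this throughout.

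For the second bullet, fix a buyer $i$. Set $\lambda_j := (1 - f(\epsc,\deltac))/D_j$. Since $D_j \in [1 - f, 1 + f]$, each factor satisfies
$$\frac{1-f(\epsc,\deltac)}{1+f(\epsc,\deltac)} \le \lambda_j \le 1.$$
So $x^1_i$ is obtained from $x_i$ by scaling each coordinate down by a factor of at least $\rho := (1-f)/(1+f)$.

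Two facts then finish the argument.
\begin{enumerate}
\item $x^1_i$ is feasible at prices $p$. Since $\lambda_j \le 1$, its cost is at most that of $x_i$, which is exactly $e_i$.
\item Its utility is large. Each $u_{i,j}$ is concave, non-decreasing, and has $u_{i,j}(0)=0$, so $u_{i,j}(\lambda x) \ge \lambda\, u_{i,j}(x)$ for all $\lambda \in [0,1]$. Combining this with monotonicity and $\lambda_j \ge \rho$ gives $u_{i,j}(x^1_{i,j}) \ge u_{i,j}(\rho x_{i,j}) \ge \rho\, u_{i,j}(x_{i,j})$. Summing over $j$ by separability, $u_i(x^1_i) \ge \rho\, u_i(x_i)$.
\end{enumerate}

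Prices are unchanged, so the optimum $U_i^*$ is the same as before. By \cref{lem:step4}, $x_i$ is $g(\epsc,\deltac)$-optimal, meaning $u_i(x_i) \ge g(\epsc,\deltac)\, U_i^*$; here "$g$-optimal" means utility at least $g \cdot U_i^*$, as in the paper's usage. Therefore
$$u_i(x^1_i) \ge \rho \cdot g(\epsc,\deltac) \cdot U_i^* = g'(\epsc,\deltac) \cdot U_i^*.$$

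There is no substantial obstacle here. The only points needing care are the positivity of $D_j$ and the fact that the bound $u_{i,j}(\lambda x) \ge \lambda u_{i,j}(x)$ uses $u_{i,j}(0)=0$.
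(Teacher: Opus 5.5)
Your proof is correct and follows the same route as the paper: the first bullet is the same direct calculation, and for the second you use that each coordinate of $x_i$ is scaled by a factor in $[(1-f(\epsc,\deltac))/(1+f(\epsc,\deltac)),\,1]$, so that concavity retains that fraction of the utility at unchanged prices. You additionally make explicit what the paper leaves implicit: that $D_j>0$, that $x^1_i$ remains feasible, and that the scaling bound $u_{i,j}(\lambda x)\ge \lambda u_{i,j}(x)$ relies on $u_{i,j}(0)=0$.
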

\begin{proof}
We start with the first claim. For each good $j$ we have
\begin{align*}
\sum_{i} x^1_{i,j} &= \sum_{i} \left(\frac{1 - f(\epsc, \deltac)}{D_j}\right) \cdot x_{i,j} \\
& = \left(\frac{1 - f(\epsc, \deltac)}{D_j} \right) \cdot \sum_{i} x_{i,j} \\
& = 1 - f(\epsc, \deltac).
\end{align*}
For the second claim, observe that since each $D_j \le 1 + f(\epsc,
\deltac)$, each buyer retains at least a fraction $(1 - f(\epsc, \deltac))/(1 + f(\epsc,
\deltac))$ of their allocation for each good. Since the utility
functions are concave, this implies that they retain at least a fraction
$(1 - f(\epsc, \deltac))/(1 + f(\epsc, \deltac))$ 
of the utility they obtained under $x$. Since $x$
provided a $g(\epsc, \deltac)$-optimal allocation, we have that $x^1$ provides a
$g'(\epsc, \deltac)$-optimal allocation.
\end{proof}

\paragraph{\bf Step 5.2: Add money back to the goods evenly.}

In Step 5.1, each buyer may remove money from each of the goods that they buy
under $x$, meaning that their budget might not be fully spent under
$x^1$. In Step 5.2 we address this by instructing each of the buyers to spend
that money, and we allocate money so that the demand of each good is increased
by the same amount. 

Formally, let $S_i = e_i -  \sum_{j} x^1_{i,j}$ be the surplus budget of
buyer $i$ under the allocation $x^1$, and let $P = \sum_j p_j$ be the
sum of the prices of all goods. 
For each buyer $i$ and good $j$ we construct the allocation $x^2$ so that 
$x^2_{i,j} = x^1_{i,j} + S_i / P$.

The following lemma states that there exists some value $D \in 
[1 - f(\epsc, \deltac), 1 + f(\epsc, \deltac)]$ such that all goods have demand
exactly $D$, and that all buyers meet their budget constraints with equality and continue to
receive $g'(\epsc, \deltac)$-optimal bundles.

\begin{lemma}
\label{lem:step52}
For the allocation $x^2$ and the price vector $p$ we have the following.
\begin{itemize}
\item There exists a constant $D$ in the range $[1 - f(\epsc, \deltac), 1 + f(\epsc,
\deltac)]$ such that all goods $j$ satisfy $\sum_{i} x^2_{i,j} = D$.
\item All buyers receive $g'(\epsc, \deltac)$-optimal bundles, and exactly meet
their budget constraints.
\end{itemize}
\end{lemma}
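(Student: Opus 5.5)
The proof is a direct computation. I read the surplus $S_i$ as the money left unspent by buyer $i$ under $x^1$, i.e.\ $S_i = e_i - \sum_j p_j x^1_{i,j}$; this is clearly the intended meaning, since $S_i$ is added to quantities after dividing by $P=\sum_j p_j$.

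\emph{Step (a): $S_i \ge 0$.} By \cref{lem:step4}, each buyer satisfies their budget constraint with equality under $x$, so $\sum_j p_j x_{i,j} = e_i$. By Step 5.1, $x^1_{i,j} = \frac{1-f(\epsc,\deltac)}{D_j}\, x_{i,j}$. Since $D_j \ge 1-f(\epsc,\deltac)$, this gives $x^1_{i,j} \le x_{i,j}$ for all $i,j$, and hence $S_i \ge 0$. This also shows that $x^2$ is a valid (nonnegative) allocation.

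\emph{Step (b): budgets are met exactly.} Buyer $i$'s expenditure under $x^2$ is
\[
\sum_j p_j x^2_{i,j} = \sum_j p_j x^1_{i,j} + \frac{S_i}{P}\sum_j p_j = (e_i - S_i) + S_i = e_i .
\]

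\emph{Step (c): optimality is preserved.} The prices and budgets are unchanged, so each buyer's optimal utility $U_i^*$ is the same as in \cref{lem:step51}. Since $x^2_{i,j} \ge x^1_{i,j}$ for every $j$ and the utilities are non-decreasing, we have $u_i(x^2_i) \ge u_i(x^1_i)$. Therefore every bundle remains $g'(\epsc,\deltac)$-optimal by \cref{lem:step51}.

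\emph{Step (d): every good has the same demand.} By \cref{lem:step51}, $\sum_i x^1_{i,j} = 1-f(\epsc,\deltac)$ for every good $j$. Setting $S := \sum_i S_i$, we get
\[
\sum_i x^2_{i,j} = 1 - f(\epsc,\deltac) + \frac{S}{P} =: D ,
\]
which does not depend on $j$.

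\emph{Step (e): $D \in [1-f, 1+f]$ (the main step).} The lower bound $D \ge 1-f(\epsc,\deltac)$ follows from $S \ge 0$. For the upper bound, we express $S$ through the goods rather than the buyers. Summing the budget equalities under $x$ gives
\[
\sum_i e_i = \sum_j p_j \sum_i x_{i,j} = \sum_j p_j D_j .
\]
Similarly, $\sum_i \sum_j p_j x^1_{i,j} = \sum_j p_j\,(1-f(\epsc,\deltac))$. Subtracting the second identity from the first,
\[
S = \sum_j p_j\bigl(D_j - (1-f(\epsc,\deltac))\bigr) \le \sum_j p_j \cdot 2f(\epsc,\deltac) = 2f(\epsc,\deltac)\,P ,
\]
where the inequality uses $D_j \le 1+f(\epsc,\deltac)$. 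Hence $D \le 1-f+2f = 1+f(\epsc,\deltac)$.
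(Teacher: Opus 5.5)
Your proof is correct and follows essentially the same route as the paper: the same explicit $D = 1 - f(\epsc,\deltac) + \sum_i S_i/P$, nonnegativity of the $S_i$ for the lower bound, the budget computation for exact spending, and monotonicity of utilities for preserving $g'(\epsc,\deltac)$-optimality. For the upper bound, the paper argues in one line that otherwise some good would fail to $f(\epsc,\deltac)$-clear under $x$; your identity $\sum_i S_i = \sum_j p_j\bigl(D_j - (1-f(\epsc,\deltac))\bigr)$ makes this precise, including its dependence on buyers spending exactly their budgets under $x$ (\cref{lem:step4}), and your reading $S_i = e_i - \sum_j p_j x^1_{i,j}$ correctly fixes a typo in the paper's definition.
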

\begin{proof}
We start with the first claim. We fix $D = 1 - f(\epsc, \deltac) + \sum_{i} S_i
/ P$.
By definition we have that, for each good $j$, the allocation $x^2$ places
$\sum_{i} S_i / P$ extra demand on good $j$ relative to the demand of good $j$
under $x^2$. Therefore, by \cref{lem:step51}, we have that the demand of
all goods is exactly $D$.

To complete the proof of the first claim we must show that 
$D \in [1 - f(\epsc, \deltac), 1 + f(\epsc, \deltac)]$. The fact that $D \ge 
1 - f(\epsc, \deltac)$ is straightforward, because Step 5.1 only removed money
from goods, and thus $S_i \ge 0$ for all $i$. The fact that $D \le 1 + f(\epsc,
\deltac)$ comes from the fact that all of the money spent in Step 5.2 was
obtained by removing money from goods in Step 5.1. Therefore if $\sum_{i} S_i /
P > 2 \cdot f(\epsc, \deltac)$, then there must have existed a good $j$ that
did not $f(\epsc, \deltac)$-clear in allocation $x$ under price vector $p$, but
this would contradict \cref{lem:step4}. 

For the second claim, note that the utility of each buyer can only increase as
they spend money that was previously not spent, so each buyer continues to
receive a $g'(\epsc, \deltac)$-optimal bundle. 
To show that each buyer meets their budget constraint with equality, observe that for each
good $j$, buyer $i$ spends $(S_i / P) \cdot p_j$ extra money to buy their
allocation of good $j$ in $x^2$ compared to the money that they spent in $x^1$.
Thus in total they spend 
$$\sum_{j} (S_i \cdot p_j)/P = S_i \cdot \sum_{j} p_j/P = S_i$$
additional money, meaning that they exactly meet their budget constraint under
$x^2$.
\end{proof}

\paragraph{\bf Step 5.3: Scale the prices to make the goods clear.}

To finish Step 5, we now create a new price vector $p^3$ so that $p^3_j = p_j
\cdot D$ for all goods $j$, where $D$ is the value arising from
\cref{lem:step52}. We then create the allocation $x^3$ so as to keep the
total amount of money spent by each buyer on each good the same. Formally, for each buyer
$i$ and good $j$ we set
$$x^3_{i,j} = \frac{x^2_{i,j} \cdot p_j}{p_j \cdot D} = x^{2}_{i,j} / D.$$

We define $$g''(\epsc, \deltac) = \frac{1- f(\epsc, \deltac)}{1 + f(\epsc,
\deltac)} \cdot g'(\epsc, \deltac).$$ The following lemma states that all goods
exactly clear, and that each buyer receives a $g''(\epsc, \deltac)$ optimal
bundle. 

\begin{lemma}
For the allocation $x^3$ and the price vector $p^3$ we have the following.
\begin{itemize}
\item Every good clears.
\item Every buyer receives a $g''(\epsc, \deltac)$-optimal allocation and exactly spends
their budget.
\end{itemize}
\end{lemma}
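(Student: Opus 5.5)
The proof splits into three short parts: clearing under the new prices, exact budget spending, and approximate optimality of the bundles. The first two are direct computations from the definitions of $p^3$ and $x^3$; the third needs a comparison of optimal utilities before and after the price change.

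For clearing, I use \cref{lem:step52}, which gives $\sum_i x^2_{i,j} = D$ for every good $j$. Dividing by $D$ yields $\sum_i x^3_{i,j} = D/D = 1$, so every good clears exactly. For the budget, I note that $p^3_j x^3_{i,j} = p_j D \cdot x^2_{i,j}/D = p_j x^2_{i,j}$. Each buyer therefore spends exactly the same money on each good as under $(p, x^2)$, and by \cref{lem:step52} that total is exactly $e_i$.

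For optimality, I compare both the realised utility and the optimal utility across the change from $(p,x^2)$ to $(p^3,x^3)$, handling the cases $D \ge 1$ and $D \le 1$ separately.

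\textbf{Realised utility.}
\begin{itemize}
\item If $D \le 1$, then $x^3_{i,j} \ge x^2_{i,j}$ coordinatewise. By monotonicity, $u_i(x^3) \ge u_i(x^2)$.
\item If $D \ge 1$, then $x^3_i = x^2_i / D$. By concavity and $u_{i,j}(0)=0$, $u_i(x^3) \ge u_i(x^2)/D \ge u_i(x^2)/(1+f(\epsc,\deltac))$.
\end{itemize}

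\textbf{Optimal utility.} Let $U^*(p)$ denote the optimal utility with budget $e_i$ at prices $p$.
\begin{itemize}
\item If $D \ge 1$, prices only rise, so the feasible set shrinks and $U^*(p^3) \le U^*(p)$.
\item If $D < 1$, any bundle $y$ feasible at $p^3$ gives a bundle $Dy$ that is feasible at $p$. Concavity gives $u_i(Dy) \ge D\,u_i(y)$, hence $U^*(p^3) \le U^*(p)/D \le U^*(p)/(1-f(\epsc,\deltac))$.
\end{itemize}

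Combining these with $u_i(x^2) \ge g'(\epsc,\deltac)\, U^*(p)$ from \cref{lem:step52} gives:
\begin{itemize}
\item for $D \ge 1$, $u_i(x^3) \ge g'(\epsc,\deltac)\, U^*(p^3)/(1+f(\epsc,\deltac))$;
\item for $D<1$, $u_i(x^3) \ge (1-f(\epsc,\deltac))\, g'(\epsc,\deltac)\, U^*(p^3)$.
\end{itemize}
Both factors are at least $\frac{1-f(\epsc,\deltac)}{1+f(\epsc,\deltac)}\, g'(\epsc,\deltac) = g''(\epsc,\deltac)$, which is the claimed guarantee.

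The only delicate step is the scaling comparison of optimal values when $D<1$. It rests on concavity and $u_{i,j}(0)=0$, which together imply $u_i(Dy) \ge D\,u_i(y)$ for $D \in [0,1]$. I would state this inequality explicitly, since the rest of the argument is a direct calculation.
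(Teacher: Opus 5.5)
Your proof is correct and has the same structure as the paper's. Clearing and exact spending follow by direct computation. A case split on $D$ then bounds $u_i(x^3)$ below by $u_i(x^2)/(1+f(\epsc,\deltac))$ and $U^*(p^3)$ above by $U^*(p)/(1-f(\epsc,\deltac))$, and these are combined with the $g'(\epsc,\deltac)$-optimality from \cref{lem:step52}.

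The paper justifies the two nontrivial cases with segment-level pseudo-allocations: it spends the same money per segment at the new prices, then reorders or overfills segments. You instead use the scaling inequality $u_i(\lambda y)\ge\lambda\,u_i(y)$ for $\lambda\in[0,1]$ together with the feasibility map $y\mapsto Dy$, which gives the same bounds more cleanly.
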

\begin{proof}
For the first claim, we have that the total demand on each good $j$ is 
\begin{align*}
\sum_{i} x^3_{i,j} &= \sum_{i} x^{2}_{i,j} / D \\
&=D/D \\
&=1,
\end{align*}
where in the second equality we used \cref{lem:step52}.

For the second claim, first note that each buyer exactly spends their budget
because they do so in $x^2$ under $p$ by \cref{lem:step52}, and because
$x^3$ is constructed to ensure that they spend the same amount of money in
$x^3$ under $p^3$ as they do in $x^2$ under $p$.

We now consider the utility obtained by each buyer in $x^3$ under $p^3$. We use $u_i(x, p)$ to denote the utility obtained by
buyer $i$ under the allocation $x$ and price vector $p$, and
we start by showing show that $u_i(x^3, p^3) \ge \frac{1}{1 + f(\epsc,
\deltac)} \cdot u_i(x^2, p)$. First note that if $D <
1$ then we have that $u_i(x^3,p^3) \ge u_i(x^2, p)$, because all of the
utility-function segments bought by buyer $i$ become cheaper, meaning that
buyer $i$'s utility cannot decrease. On the other hand, if $D \ge 1$, then we
use the following analysis. 
\begin{itemize}
\item First consider the hypothetical allocation $x'$ in which buyer $i$ spends
exactly as much money on each utility-function segment as they do in $x^2$
under price vector $p$, but under $p^3$ instead. This is not implementable in
practice, since it disobeys the rule that buyer $i$ must fully buy utility
function segment $x_{i,j,k}$ before buying $x_{i,j,k+1}$, but it will serve as
a useful intermediary. Clearly we have $u_i(x', p^3) = (1/D) \cdot u_i(x^2,
p)$, because each segment bought by buyer $i$ became more expensive by a factor
of $D$.

\item Next consider the allocation $x^3$ itself, and note that we can obtain
$x^3$ from $x'$ be transferring money from segments $x_{i,j,k}$ to $x_{i,j,\ell}$
with $\ell < k$ until the segments for each good are bought in order. Since the
utility function of each good is concave, this process involves moving money
from segments with lower utility to segments with (weakly) higher utility, and
so we immediately get that $u_i(x^3, p^3) \ge u_i(x', p^3)$. 
\end{itemize}
Finally note that $D \le 1 + f(\epsc, \deltac)$ by \cref{lem:step52}, so
we have shown that
$u_i(x^3, p^3) \ge \frac{1}{1 + f(\epsc, \deltac)} \cdot u_i(x^2, p)$.

Now let $y^{2}$ be an optimal allocation for buyer $i$ under price vector
$p$, and let $y^{3}$ be an optimal allocation for buyer $i$ under price
vector $p^3$. 
We will show that $u_i(y^3, p^3) \le \frac{1}{1 - f(\epsc, \deltac)} \cdot u_i(y^2, p)$.
If $D > 1$ then we have $u_i(y^3, p^3) \le u_i(y^2, p)$ because all of
the goods became more expensive, and this cannot have
increased the optimal utility. On the other
hand, when $D \le 1$ we use the following analysis. 
\begin{itemize}
\item Consider the hypothetical allocation $y'$ that spends the same amount of
money as $y^2$ under the price vector $p$, but under the price vector $p^3$. This is not implementable in
practice since it may buy more than the length of each utility-function
segment, but it will serve as a useful intermediary. Note that $u_i(y^2,
p) = D \cdot u_i(y', p^3)$, since the price of each good has decreased by a
factor of $D$.

\item Next note that
the prices of all goods are increased by the same
factor when we move from $p$ to $p^3$, thus the relative bang-per-bucks of the
utility-function segments stay the same. This implies that an optimal
allocation for price vector $p^3$ should buy the same segments as $y^2$, and
then spend any excess money on worse segments. Hence we have $u_i(y', p^3) \ge
u_i(y^3, p^3)$.
\end{itemize}
Putting the above two steps together yields that $u_i(y^2, p) \ge D \cdot
u_i(y^3, p^3)$. By \cref{lem:step52} we have that $D \ge 1 - f(\epsc,
\deltac)$, so we have shown that
$u_i(y^2, p) \ge (1 - f(\epsc, \deltac)) \cdot u_i(y^3, p^3)$.

Putting all of the above together yields the following. 
\begin{align*}
u_i(x^3, p^3) &\ge \frac{1}{1 + f(\epsc, \deltac)} \cdot u_i(x^2, p) \\ 
&\ge \frac{1}{1 + f(\epsc, \deltac)} \cdot g'(\epsc, \deltac) \cdot u_i(y^2, p)
\\
&\ge \frac{1- f(\epsc, \deltac)}{1 + f(\epsc, \deltac)} \cdot g'(\epsc,
\deltac) \cdot u_i(y^3, p^3),
\end{align*}
where we used \cref{lem:step52} in the second inequality.
Thus each buyer receives a $g''(\epsc, \deltac)$-optimal bundle under $x^3$ and
$p^3$. 
\end{proof}

\subsection{\texorpdfstring{Completing the proof of \cref{thm:markettopcp}}{Completing the proof of Theorem~\ref*{thm:markettopcp}}}

All that remains is to choose suitable constant values for $\epsc$ and
$\deltac$ to ensure that we produce a $(0, \delta)$-market equilibrium of the
Fisher market. \cref{lem:step52} states that all goods clear, and that each
buyer receives a $g''(\epsc, \deltac)$-optimal bundle, so we must ensure that  
$g''(\epsc, \deltac) \ge 1 - \delta$. 

Let us recall the functions that are involved in the definition of $g''$. 
\begin{align*}
f(\epsc, \deltac) &= c \cdot \epsc + c \cdot (\deltac/\epsc) \\
g(\epsc, \deltac) &= (1 - c \cdot (\deltac/\epsc)) \cdot (1 - c \cdot \epsc) \\
g'(\epsc, \deltac) &= \frac{1- f(\epsc, \deltac)}{1 + f(\epsc, \deltac)} \cdot
g(\epsc, \deltac) \\
g''(\epsc, \deltac) &= \frac{1- f(\epsc, \deltac)}{1 + f(\epsc, \deltac)} \cdot g'(\epsc, \deltac)
\end{align*}
To analyze this, suppose that we set $\epsc = z$ and $\deltac = z^2$ for some
parameter $z$. Then note that $f$ tends to $0$ as $z \rightarrow 0$, that $g$
tends to $1$ as $z \rightarrow 0$, and $g'$ and $g''$ tend to $1$ as $z
\rightarrow 0$. Moreover, the rate at which these functions approach their
limiting value depends only on the constant $c$. 
Many of our proofs required us to have the ability to set $\epsc$
and $\deltac$ to be smaller than some constant $c'$, which was derived from the
constants used in the market. 
Since $\delta$ is constant, we can choose some suitably small constant value for $z$ to ensure that
$\epsc, \deltac \le c'$ and that 
$g''(\epsc, \deltac) \ge 1 - \delta$ as required.

\newpage

\appendix

\section*{Appendix}

\section{Equivalence of the Different Versions of the \pcppad Conjecture}
\label{sec:app:pcppad-equivalence}

In this section, we prove the following statement.

\pcpequivalence*

\begin{proof}
Clearly, formulation 1 implies formulation 2. In \cref{lem:pcircuit-to-gcircuit} we show that formulation 2 implies formulation 3, and in \cref{lem:gcircuit-to-pcircuit1} we show that formulation 3 implies formulation 1. Thus, this establishes that formulations 1 to 3 are equivalent.

Finally, in \cref{lem:gcircuitp-to-gcircuit} we show that formulation 4 implies formulation 3. Since formulation 3 trivially implies formulation 4, this completes the proof of the theorem.
\end{proof}

\subsection{The \pcircuit Version Implies the \gcircuit Version}

\begin{lemma}\label{lem:pcircuit-to-gcircuit}
    For any constants $\eps \in (0, 1/10)$, $\del \in (0, 1/5)$, $5 \del$-\pcircuit can be reduced in polynomial time to \edgcircuit.
\end{lemma}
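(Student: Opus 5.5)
We reduce by simulating each gate of the $5\del$-\pcircuit instance with a constant-size gadget of \gcircuit gates, and we pad the circuit so that the fraction of \gcircuit gates violated translates into at most a $5\del$ fraction of \pcircuit gates violated. Concretely, for every \pcircuit node $u$ we introduce a \gcircuit variable $x_u$, and we read off the \pcircuit assignment by setting $\val{u}=0$ if $x_u \le \eps$, $\val{u}=1$ if $x_u \ge 1-\eps$, and $\val{u}=\garbo$ otherwise. A \NAND gate $(u,v,w)$ is implemented as $a := G_\land(u,v)$ followed by $w := G_\lnot(a)$. The $\eps$-semantics of $G_\land$ and $G_\lnot$ match the \NAND truth table: if both inputs are $\ge 1-\eps$ then $a \ge 1-\eps$ and $w\le\eps$; if some input is $\le \eps$ then $a\le\eps$ and $w \ge 1-\eps$. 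Here we use $\eps<1/10$ so that the thresholds are consistent.

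For a \PURE gate $(u,v,w)$ we must force at least one output to be pure. We use the comparison gate against two distinct constants: $v := G_<(c_1, u)$ and $w := G_<(c_2, u)$, with constants $c_1 = 1/3$ and $c_2 = 2/3$ produced by $G_c$ gates. If $x_u \le \eps$ then both comparisons are decisively ``$u$ smaller'', so both outputs are $\le\eps$ and read as $0$. Symmetrically, if $x_u\ge 1-\eps$ both outputs read as $1$. For any $x_u$, the value cannot be within $\eps$ of both $1/3$ and $2/3$, since $\eps<1/10$. Hence at least one comparator receives a gap larger than $\eps$ and outputs a value within $\eps$ of $0$ or $1$, which reads as a pure bit. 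Since \pcircuit requires each node to be the output of exactly one gate, all \gcircuit variables are outputs of exactly one gate. The constants can be shared, or duplicated per gate to keep each gadget local.

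For the counting, each \pcircuit gate maps to a gadget of at most $4$ (or $5$, if constants are counted per gadget) \gcircuit gates. A \pcircuit gate is correctly simulated whenever all gates in its gadget are $\eps$-satisfied. If a $(1-\del)$ fraction of the $N \le 5m$ \gcircuit gates are satisfied, at most $\del N \le 5\del m$ gadgets contain a violated gate. Therefore at least a $(1-5\del)$ fraction of the \pcircuit gates are satisfied by the read-off assignment.

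The main subtlety I expect is making the gadget sizes exactly balanced so that the constant $5$ works out, since the bound $N\le 5m$ must hold for every gate type. If necessary, I will pad the smaller gadget with dummy $G_=$ gates so that every gadget has exactly $5$ gates. Then a violated fraction $\del$ of \gcircuit gates hits at most $5\del m$ of the $m$ gadgets, or more precisely a fraction at most $5\del$. I also need to check that $\del<1/5$ keeps this meaningful.
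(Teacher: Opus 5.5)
Your overall strategy is exactly the paper's argument: simulate each \pcircuit gate by a gadget of at most five \gcircuit gates whose correctness depends only on its own gates, then charge each violated \gcircuit gate to the one gadget containing it. (The paper simply imports the gadgets from Theorem 4.1 of \citep{DFHM24}.) Your \NAND gadget, $G_\land$ followed by $G_\lnot$, is correct.

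The gap is in the \PURE gadget. You argue that $x_u$ cannot be within $\eps$ of both $1/3$ and $2/3$. But the comparators do not see $1/3$ and $2/3$; they see the outputs of the two $G_c$ gates. Those outputs are only guaranteed to lie in $[1/3-\eps,1/3+\eps]$ and $[2/3-\eps,2/3+\eps]$. Both comparators are unconstrained as soon as $x_u$ is within $\eps$ of both realized constants. The realized constants can be as close as $1/3-2\eps$, so this is possible whenever $1/3-2\eps\le 2\eps$, i.e.\ whenever $\eps\ge 1/12$. Concretely, for $\eps=1/11$ take $x_{c_1}=14/33$, $x_{c_2}=19/33$ and $x_u=x_v=x_w=1/2$. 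All four gates of the gadget are satisfied, yet both outputs read as $\garbo$. So your gadget proves the lemma only for $\eps<1/12$, not for the full range $\eps<1/10$.

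The repair is to use thresholds $a$ and $1-a$ chosen depending on $\eps$. The realized constants then satisfy $x_{c_1}\in[a-\eps,a+\eps]$ and $x_{c_2}\in[1-a-\eps,1-a+\eps]$, and three conditions are needed:
\begin{itemize}
\item for pure input $0$ (both comparators decisive when $x_u\le\eps$), $a-\eps>2\eps$;
\item for pure input $1$, $1-a+\eps<1-2\eps$;
\item for at least one comparator to be decisive at every $x_u$, $(1-a-\eps)-(a+\eps)>2\eps$.
\end{itemize}
Together these say $3\eps<a<1/2-2\eps$. This interval is nonempty precisely when $\eps<1/10$ (e.g.\ $a=1/4+\eps/2$, or any rational in the interval), and this is where the hypothesis $\eps<1/10$ is actually used.

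Two smaller points. First, the $G_c$ gates must be duplicated per gadget rather than shared, since a single violated shared constant gate could break every \PURE gadget and destroy the counting. Second, no padding is needed. With per-gadget constants you have $N\le 4m$, and each violated gate lies in exactly one gadget, so at most $\del N\le 5\del m$ gadgets are broken.
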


\begin{proof}
    This is straightforward by using the proof of Theorem 4.1 from \citep{DFHM24}. That proof shows that for any constant $\eps < 1/10$, one can reduce \pcircuit to $\eps$-\gcircuit in polynomial time, by simulating each of the \NAND and \PURE gates of \pcircuit with a gadget that uses at most five $\eps$-\gcircuit gates. This means that the resulting $\eps$-\gcircuit instance has $|T| \leq 5 \cdot |C|$ gates. Now if we had a solution of the \edgcircuit for some constants $\eps \in (0, 1/10)$ and $\delta \in (0, 1/5)$, then at most $\delta \cdot |T| \leq 5 \cdot \delta \cdot |C|$ gates would not be satisfied, which allows at most a $5 \cdot \delta \cdot |C| / |C| = 5 \cdot \delta $ fraction of the gadgets simulating \pcircuit gates to not work as intended. Therefore, translating this solution back to a \pcircuit assignment, we get a $5\delta$-\pcircuit solution.
\end{proof}

\subsection{The \gcircuit Version Implies the \pcircuit Version}
\label{sec:gcircuit-implies-pcircuit}

In this section, we prove the following lemma, which shows that the \gcircuit formulation of the \pcppad conjecture implies the \pcircuit formulation.

\begin{lemma}\label{lem:gcircuit-to-pcircuit1}
    For given constants $\eps \in (0, 1]$ and $\del \in (0,1]$, let $M := \ceil{ \max (7/\eps, (9 / \del)^{1/3})}$ and $\kappa := 320 M^4$. The  \edgcircuit problem with unbounded fan-out can be reduced in polynomial time to $\del/\kappa$-\pcircuit, where every node is the input to exactly one gate.
\end{lemma}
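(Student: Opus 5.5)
The plan is to take the standard reduction from $\eps$-\gcircuit to \pcircuit in \citep{DFHM24}, where each \gcircuit gate is simulated by a gadget of \NAND and \PURE gates, and make it robust. There are two requirements. First, every gadget must have constant size, at most $\kappa$ gates, so that a $\del/\kappa$-fraction of broken \pcircuit gates can break at most a $\del$-fraction of gadgets. Second, the unbounded fan-out of the \gcircuit instance must be absorbed without a blow-up in the number of \pcircuit gates that could let a few broken gates corrupt many \gcircuit gates.

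The encoding I would use represents each \gcircuit variable $v$ by a block of $M$ bits (a unary or thermometer-style encoding). The decoded value $\val{v}$ is the fraction of bits equal to $1$, with $\garbo$ bits treated as either value, which costs accuracy about $1/M \le \eps/7$. Each gate type $G_c, G_{\times c}, G_=, G_+, G_-, G_<, G_\lor, G_\land, G_\lnot$ is then implemented by a constant-size Boolean circuit on $O(M)$ input bits and $M$ output bits. Before the Boolean logic, each input bit passes through a \PURE gate so that enough of the bits are pure. The \PURE trees need about $M^2$ gates per variable, and the arithmetic and sorting circuitry on $M$-bit thermometer codes needs $O(M^4)$ gates in total. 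This gives gadgets of at most $320M^4$ gates, with total error at most $\eps$ whenever every gate in the gadget is satisfied.

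For fan-out, I would replace each variable $v$ that feeds $k$ gates by a balanced tree of copy gadgets. Each copy gadget uses a \PURE gate plus two \NAND-based negations, so that every \pcircuit node is the input to exactly one gate. Next comes the accounting step, which I expect to be the main obstacle. A single broken gate near the root of a copy tree corrupts many leaves, so the bound must be charged correctly. The approach is to charge each \gcircuit gate for the copy nodes along its own input paths only, and to use a chain instead of a tree. In the chain, each copy emits one output for the current consumer and passes the other output on. A broken link then breaks all downstream consumers, so the chain alone does not give the bound.

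The fix I would try is the $M$-bit redundancy. The unary code is treated as a majority-style code. Each copy step copies $M$ bits in parallel, and a gadget consuming $v$ reads its value by sorting or majority over its $M$ bits. A single broken gate flips at most one bit, which moves the decoded value by only $1/M$. That amount is absorbed in the slack $\eps - 6/M$ only if few bits are corrupted. I would therefore bound the number of gates whose inputs see more than a constant number of corrupted bits, using a counting argument together with the $(9/\del)^{1/3}$ term in $M$. That argument should show that at most a $\del$-fraction of \gcircuit gates are affected when at most a $\del/\kappa$-fraction of \pcircuit gates fail.

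Finally, I would map a \pcircuit assignment back to \gcircuit by decoding each block. I would then check that every gate whose gadget and whose inputs' copy paths are essentially clean is $\eps$-satisfied, which yields an \edgcircuit solution.
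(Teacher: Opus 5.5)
Your encoding and gate gadgets essentially match the paper's: unary blocks of $M$ bits, \PURE trees, sorting networks, and $O(M^4)$ gates per gadget. The gap is in the fan-out step, and the $M$-bit redundancy does not close it, because faulty gates need not be spread out. Suppose some $v$ feeds $k=\Theta(|T|)$ gates, which unbounded fan-out permits. In any tree or chain of copy gadgets there is a cut of only $M$ gates, one per bit lane (the root \PURE gate of each lane, or a middle link of the chain), that separates half of the consumers from the rest. Violating those $M=O(1)$ gates lets the solver give half of the leaves $0^M$ and the other half $1^M$: a violated \PURE gate with pure input may output one $0$ and one $1$, and the satisfied gates below copy faithfully. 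Take, say, $G_=$ consumers and $\eps\le 1/4$; then no single value $\val{v}$ is consistent with more than about half of them. The allowed number of violations, $\del|C|/\kappa$, grows linearly with the instance, so this is affordable for many such variables. The ratio of spoiled \gcircuit gates to violated \pcircuit gates is then $\Theta(k/M)$, which is unbounded. The $1/M$-robustness you invoke only protects against fewer than roughly $\eps M$ corrupted lanes, a constant. So no counting argument with constant $M$, including one using the $(9/\del)^{1/3}$ term, can exclude such concentrated faults. Charging each gate only for its own input path fails for the same reason: the paths share their prefix.

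The missing ingredient is to make the fan-out constant before encoding. The paper first applies the reduction of \citep{BPR16}, which turns \edgcircuit with unbounded fan-out into $(\eps,\del/4)$-\gcircuit with fan-out at most $\lceil 8/\del\rceil\le 9/\del$. Each bit of $v$'s block then gets a binary \PURE tree with $\lceil 8/\del\rceil$ leaves, at most $18M/\del$ gates in total. This tree is folded into the gadget of the gate producing $v$, so a fault inside it can spoil only constantly many \gcircuit gates. The $(9/\del)^{1/3}$ term in $M$ serves only to guarantee $18M/\del\le 2M^4$. That keeps each enlarged gadget at most $4M^4$ gates over $\{\NOT,\OR,\AND,\PURE\}$, hence $20M^4$ over $\{\NAND,\PURE\}$. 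So $\kappa=320M^4$ is not a gadget size: $320=20\cdot 4\cdot 4$, with one factor $4$ from the fan-out reduction and one from a final step that your plan also skips. Copy gadgets give fan-out at most one, but ``exactly one'' also requires every node to be used. Your gadgets leave unused nodes, for example the $M^2-M$ sorted bits that selection discards. The paper repeatedly deletes \NAND gates with unused output and \PURE gates with both outputs unused. It replaces a \PURE gate with a single unused output by a four-gate copy gadget (\PURE, \NAND, \PURE, \NAND), which costs a factor $4$ in the violated fraction.
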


We will proceed by first showing a polynomial-time reduction from \edgcircuit to $\del'$-\pcircuit for $\del = \del' = 0$, and then adapting it for constant $\del, \del' > 0$. In particular, given an $\eps \in (0, 1]$ and an \eps-\gcircuit instance, if we construct in polynomial time a simulation of each \eps-\gcircuit gate via a gadget that uses only at most a constant number $\kappa$ of \pcircuit gates, then this implies that the reduction also holds for any $\del \in [0,1]$, $\del' = \del/\kappa$.

We start from an \eps-\gcircuit instance $(V_G, T)$ with $\eps \in (0, 1]$ and will construct a \pcircuit instance $(V_P, C)$. For each node $u \in V_G$ we fix an $M \geq 7/\eps$ and create $M$ \pcircuit nodes $u_1, \dots, u_M$, called \emph{unary vector}, and we will think of them as encoding in unary the value of $u$. Given an assignment $\valtwoonly \in \{ 0, \garbo, 1 \}^{|V_P|}$ of \pcircuit, for any node $u \in V_G$ and $b \in \{ 0, \garbo, 1 \}$, we denote $U_b = |\{ i \in [M] ~:~ \valtwo{u_i} = b \}| $. We extract an assignment $\valonly \in [0,1]^{|V_G|}$ of \eps-\gcircuit by computing for each $u \in V_G$, $\val{u} = U_1 / M$.

\begin{figure*}[t!]
    \begin{center}
	\begin{minipage}{0.22\textwidth}
		\begin{center}
			\begin{tabular}{c||c}
				$u$ & $v$ \\ \hline
				0 & 1 \\
				1 & 0 \\
				$\garbo$ & $\{0, 1, \garbo\}$\\
				\multicolumn{2}{c}{}%this line make this table nicely aligned with the other two tables
			\end{tabular}
			\caption*{\NOT gate}
		\end{center}
	\end{minipage}
    \begin{minipage}{0.38\textwidth}
        \begin{center}
            \begin{tabular}{c|c||c}
                $u$ & $v$ & $w$ \\ \hline
                0 & 0 & 0 \\
                1 & $\{0,1,\garbo\}$ & 1 \\
                $\{0,1,\garbo\}$ & 1 & 1 \\
                \multicolumn{2}{c||}{Else} & $\{0,1,\garbo\}$
            \end{tabular}
            \caption*{\OR gate}
        \end{center}
    \end{minipage}
	\begin{minipage}{0.38\textwidth}
		\begin{center}
			\begin{tabular}{c|c||c}
				$u$ & $v$ & $w$ \\ \hline
				1 & 1 & 1 \\
				0 & $\{0, 1, \garbo\}$ & 0 \\
				$\{0, 1, \garbo\}$ & 0 & 0 \\
				\multicolumn{2}{c||}{Else} & $\{0, 1, \garbo\}$
			\end{tabular}
			\caption*{\AND gate}
		\end{center}
	\end{minipage}
    \end{center}
	\caption{The truth tables of additional \pcircuit gates.}
	\label{fig:more_gates}
\end{figure*}

For ease of presentation, our reduction will first construct a \pcircuit instance with gates from $\{ \NOT, \OR, \AND, \PURE \}$ (see \cref{fig:more_gates} for the definition of the first three), rather than from $\{ \NAND, \PURE \}$ of the definition in \cref{sec:pcircuit_def}. An instance with the former set of gates can be easily transformed into an instance that uses the latter set of gates. In particular, we first simulate \AND with \NOT and \NAND gates by replacing \AND with \NAND and connecting its output to a \NOT gate. Then we simulate \OR with \NOT and \NAND gates by connecting each of its input nodes with a \NOT gate, whose output is an input to a \NAND gate. Finally, we simulate the \NOT gates by \NAND and \PURE by using a \PURE gate whose outputs are inputs to a \NAND gate. One can easily verify that the aforementioned gate simulations follow the truth tables of \cref{fig:more_gates}. Also notice that each gate from $\{ \NOT, \OR, \AND, \PURE \}$ can be simulated using at most 5 gates from $\{ \NAND, \PURE \}$. 

We will first ensure that in any satisfying \pcircuit assignment $\valtwoonly$, every output unary vector of our gadget will have two properties: (i) it will contain at most one \garbo value, (ii) it will be \emph{sorted}, that is, all its $0$'s are to the leftmost places of the vector and all $1$'s are to the rightmost places of it. We will also enforce the fixed constant parameters of the \eps-\gcircuit gates to be carried in the \pcircuit instance as sorted vectors too; for $c \in [0, 1]$, the \pcircuit gadgets of $G_c$ and $G_{\times c}$ will involve a hardcoded sorted vector $c_1, \dots, c_M$ which contains only pure bits (i.e., no \garbo values), where for the cardinality of its $1$-bits, $C_1$, we have $c \cdot M - 1 \leq C_1 \leq c \cdot M$ by construction. Since every input of a gadget is an output of another, the above also ensure that, in a satisfying assignment, all the inputs of our gadgets -- as well as their parameters -- are sorted unary vectors.

To ensure the required properties in our outputs, whenever needed in our gadgets, we will construct what we call a \emph{formatting} sub-gadget, which takes an input vector $r = (r_1, \dots, r_M)$ through a series of purification-sorting-selection stages, and outputs a vector $w = (w_1, \dots, w_M)$ with the required properties; this is a technique that was introduced in the \ppad/-hardness proof of \pcircuit in \citep{DFHM24}.

\paragraph{Formatting sub-gadget.} Let the input of the sub-gadget be the unary vector $r_1, \dots, r_M$, for which $R_{\garbo} \leq M$. In what we call \emph{purification stage}, we create a binary tree of \PURE gates with $M$ leaves $u_{i,j}$, $j \in [M]$, rooted in each $r_i, i \in [M]$. The crucial observation here is that, by \PURE's definition, if $u_i$ is a pure bit then all $u_{i,j}$'s are copies of it, but even if $u_i$ is \garbo, at most one of $u_{i,j}$'s is \garbo. Consider the vector that comes from concatenating the copies of $u_i$, i.e., $t = (t_1, \dots, t_{M^2}) := ((u_{1,j})_{j \in [M]}; \dots ; (u_{M,j})_{j \in [M]})$. In the \emph{sorting stage}, we create a sorting network \citep{Knuth98-book-vol3-sorting} where the implementation of $\min$ and $\max$ is done using \AND and \OR gates. As discussed in \citep{DFHM24}, since our \pcircuit gates are robust (e.g., if one of the two inputs of \AND is 0, then its output is 0 regardless of the other input; and if one of the two inputs of \OR is 1, then its output is 1 regardless of the other input), the pure bits are correctly ordered in the output of the sorting network, $s = (s_1, \dots, s_{M^2})$. Recall that all \garbo values are concentrated together to the right of all $0$'s and to the left of all $1$'s, and also $S_{\garbo} \leq M$, since all the pure bits of vector $t$ have been preserved. Finally, in the \emph{selection stage}, we select node $s_{i \cdot M}$ for $i \in [M]$ to be our output node (which we rename to) $w_i$. The crucial observation here is that, since all non-pure bits are concentrated in the ``middle'' of $s$ and are at most $M$, the selection will pick up at most one of them, therefore $W_\garbo \leq 1$. Notice also that $w$ is sorted. The formatting sub-gadget uses at most $2M \cdot M$ gates for the purification, and at most $(M^2)^2 / 2$ gates for the sorting (e.g., assuming a Bubble Sort network), totalling to at most $M^4 / 2 + 2 M^2$ gates. 

The following claim will be useful in the correctness proofs of our gadgets.
\begin{claim}\label{cl:formatting_bound}
    If $R_{\garbo} \leq k$ for some $k \in [M]$, then $R_1 \leq W_1 \leq R_1 + k$.
\end{claim}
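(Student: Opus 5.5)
We track what happens to the pure bits as they pass through the formatting sub-gadget. The aim is to show that the number of $1$'s can only grow, and by at most $k$. Throughout we assume every gate of the sub-gadget is satisfied.

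\emph{Purification stage.} Fix an input node $r_i$. If $r_i$ is a pure bit, the \PURE-tree rooted at $r_i$ copies it to all $M$ leaves, by the definition of \PURE. If $r_i = \garbo$, then at most one leaf is $\garbo$, since each \PURE gate has at most one non-pure output. Hence every leaf of this tree is either $0$ or $1$, except possibly one. Summing over all $i$ gives
\[
T_1 \geq M \cdot R_1 \quad \text{and} \quad T_1 + T_0 \geq M^2 - R_\garbo .
\]
Every leaf coming from a pure $r_i$ is pure. Every pure leaf coming from a $\garbo$ root is arbitrary, so it could be $1$; this gives $T_1 \le M R_1 + M R_\garbo$. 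Finally, $T_\garbo \le R_\garbo \le k$.

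\emph{Sorting stage.} Invoke the robustness property recalled from \citep{DFHM24}. The sorting network built from \AND/\OR gates outputs a vector $s$ with $S_b = T_b$ for $b\in\{0,1\}$. This vector is sorted, meaning its $0$'s come first and its $1$'s last, with the at most $k$ entries equal to $\garbo$ contiguous in between. This is the step I would check most carefully. To justify it, argue gate by gate: a comparator fed a pure bit and a $\garbo$ still routes the pure bit correctly, because \AND with a $0$ input outputs $0$ and \OR with a $1$ input outputs $1$, so pure values are conserved. Then apply the $0$-$1$ principle, treating $\garbo$ as a value strictly between $0$ and $1$.

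\emph{Selection stage.} We have $w_i = s_{iM}$, and $W_1$ counts the indices $i$ with $iM > M^2 - S_1$. This count is $\lfloor S_1/M \rfloor$.

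\emph{Upper bound.} Since $S_1 = T_1 \le M R_1 + M R_\garbo$, we get $W_1 \le R_1 + R_\garbo \le R_1+k$.

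\emph{Lower bound.} Since $S_1 = T_1 \ge M R_1$, we get $W_1 \ge R_1$. Both bounds follow from the purification counts alone: every pure-$1$ root contributes exactly $M$ ones and every $\garbo$ root contributes at most $M$ ones.

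The main obstacle is the sorting-stage claim that pure bits are preserved in number and placed correctly despite $\garbo$ inputs. I would settle it by a short induction on the comparators of the network.
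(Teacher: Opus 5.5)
Your route (count pure bits through purification, sorting and selection) is the paper's, but the sorting-stage assertion $S_b=T_b$ is false, and your upper bound rests on it. In Pure-Circuit semantics an \OR gate with inputs $(0,\garbo)$, or an \AND gate with inputs $(1,\garbo)$, may output any value, including a pure bit. A comparator fed $(\garbo,\garbo)$ may even output $(1,0)$. So garbage can resolve into pure bits inside the network, and your gate-by-gate argument proves only conservation: $S_0\ge T_0$ and $S_1\ge T_1$. Consequently ``$S_1=T_1\le MR_1+MR_\garbo$'' does not bound $S_1$, because an upper bound on $T_1$ says nothing about the extra $1$'s that the $\garbo$ leaves may turn into. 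The paper gets the upper bound from the zeros instead. Every pure-$0$ root yields $M$ zero leaves and zeros are conserved, so $S_0\ge T_0\ge MR_0$, and hence $S_1\le M^2-S_0\le M(M-R_0)=M(R_1+R_\garbo)\le M(R_1+k)$. Your lower bound $S_1\ge T_1\ge MR_1$ is fine, and with this replacement for the upper bound the argument goes through.

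The same phenomenon defeats your planned justification of sortedness via the $0$-$1$ principle with $\garbo$ treated as a fixed middle value: the middle block of the output need not consist of $\garbo$'s, nor be sorted. What an induction over the comparators does give is the following. Run the Boolean network on $t$ with every $\garbo$ set to $0$, and again with every $\garbo$ set to $1$. By monotonicity and the robustness of \AND/\OR, on every wire where the two runs agree the Pure-Circuit value equals the common bit. Hence the first $T_0$ outputs are $0$ and the last $T_1$ outputs are $1$. Now $W_1$ is at least the number of selected indices among the last $T_1$ positions, and at most the number among the last $M^2-T_0$ positions. This gives $\lceil T_1/M\rceil\le W_1\le\lceil (M^2-T_0)/M\rceil$, i.e.\ $R_1\le W_1\le R_1+R_\garbo$, with no sortedness of the middle needed.

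Finally, even for a perfectly sorted $s$, the number of $i\in[M]$ with $iM>M^2-S_1$ is $\lceil S_1/M\rceil$, not $\lfloor S_1/M\rfloor$ (take $S_1=1$: then $i=M$ is selected). This slip is harmless only because your bounds on $S_1$ are multiples of $M$.
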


\begin{proof}
    We know that the pure bits of the unary vector $r$ are copied perfectly in the purification stage and are preserved during the sorting stage, therefore, for the unary vector $s$, we have $S_1 \geq R_1 \cdot M$ and $S_0 \geq R_0 \cdot M$. Also, by definition, $M = R_1 + R_0 + R_{\garbo}$, and since $R_{\garbo} \leq k$, we have $M - R_0 \leq R_1 + k$. Therefore, we have $S_1 \leq M^2 - S_0 - S_{\garbo} \leq M^2 - R_0 \cdot M - 0 = M (M - R_0) \leq M(R_1 + k)$. By definition of the selection stage, the $1$-bits that will be selected are at least $\floor{ S_1 / M }$ and at most $\ceil{ S_1 / M }$. From the above bounds of $S_1$, for the unary vector $w$ we get $W_1 \geq \floor{ S_1 / M } \geq R_1$ and $W_1 \leq \ceil{ S_1 / M } \leq R_1 + k$, where the rightmost inequalities hold due to the fact that $R_1$ is an integer.
\end{proof}

Now we are ready to construct the \pcircuit gadgets that simulate the \eps-\gcircuit gates. We describe the constructions and prove their correctness individually.

\paragraph{$G_{+}$ gadget.} Consider the $(G_{+}, u, v, w, \textsf{nil})$ gate with inputs $u, v$ and output $w$. The \pcircuit gadget implements the \OR operation between the unary vector of $u$ and the mirrored unary vector of $v$. Formally, it uses $M$ many \OR gates $(\OR, u_i, v_{M+1-i}, r_i)$ for $i \in [M]$ to produce vector $r$. Focusing on the bits of vector $u$ as they are \OR-ed with the bits of the mirrored $v$, we notice that the last $U_1$ bits produce at least $U_1 - 1$ and at most $U_1$ $1$-bits (since at most one \garbo-bit from $v$ might be \OR-ed with these bits). If $V_1 \leq M-U_1$, the first $M-U_1$ bits of $u$ produce at least $V_1 - 1$ and at most $V_1 + 2$ $1$-bits, since at most one from $u$ or one from each $u$ and $v$ \garbo-bits can contribute to the outcome, respectively. If $V_1 > M-U_1$, the first $M-U_1$ bits of $u$ produce at least $M - V_1 - 1$ and at most $M - V_1$ $1$-bits, since at most one \garbo-bit of $u$ can interfere in that region. Therefore, $R_1 \geq \min (U_1 - 1 + V_1 - 1, U_1 - 1 + M - U_1 -1) = \min (U_1 + V_1 - 2, M - 2)$, and $R_1 \leq \min (U_1 + V_1 + 2, U_1 + M - U_1) = \min (U_1 + V_1 + 2, M)$. Then, $r$ passes through the formatting sub-gadget to produce $w$. Now recall that $U_{\garbo}, V_{\garbo} \leq 1 $, therefore $R_{\garbo} \leq 2$. By \cref{cl:formatting_bound}, we get $R_1 \leq W_1 \leq R_1 + 2$, which implies $\min (U_1 + V_1 - 2, M - 2) \leq W_1 \leq \min (U_1 + V_1 + 4, M)$, and so, $\min (U_1 + V_1, M) - 2 \leq W_1 \leq \min (U_1 + V_1, M) + 4$. Therefore, $\val{w} = W_1 / M \in [ \min (\val{u} + \val{v}, 1) - 2/M, \min (\val{u} + \val{v}, 1) + 4/M ]$, which satisfies the $G_{+}$ constraints since $M \geq 4/\eps$. This gadget uses $M$ \OR gates and the formatting sub-gadget itself, totalling to at most $M^4 / 2 + 2 M^2 + M \leq M^4$ gates, for the given value of $M$.

\paragraph{$G_{-}$ gadget.} Consider the $(G_{-}, u, v, w, \textsf{nil})$ gate with inputs $u, v$ and output $w$. We first invert the unary vector $v$, using $M$ \NOT gates $(\NOT, v_i, v'_i)$ for $i \in [M]$ to produce vector $v'$. Note that $V'_{\garbo} \leq 1$, since $V_{\garbo} \leq 1$ and $V_1 \leq V'_1 \leq V_1 + 1$. Then, we implement the \AND operation between $u$ and $v'$, that is, we create $M$ \AND gates $(\AND, u_i, v'_i, r_i)$ for $i \in [M]$ to produce vector $r$. Focusing on the $u$ vector as it gets \AND-ed with $v'$, its first $M - U_1$ bits produce at most $1$ $1$-bit since $u$ has at most a single \garbo-bit in that region and the rest are $0$-bits. If $U_1 \geq V'_0$, the last $U_1$ bits of $u$ produce at least $U_1 - V'_0 - 1$ and at most $U_1 - V'_0$ $1$-bits, since at most one \garbo-bit can of $v'$ can interfere. If $U_1 < V'_0$, then the last $U_1$ bits of $u$ do not produce any $1$-bits. Therefore, $\max (U_1 - V'_0 - 1, 0) \leq R_1 \leq \max (U_1 - V'_0 + 1, 1)$, implying $\max (U_1 - V_1 - 2, 0) \leq R_1 \leq \max (U_1 - V_1 + 1, 1)$. Then, we use the formatting sub-gadget with input $r$ and output $w$. Recall that $U_{\garbo}, V'_{\garbo} \leq 1$, therefore $R_{\garbo} \leq 2$. By \cref{cl:formatting_bound}, we get $R_1 \leq W_1 \leq R_1 + 2$, and so, $\max (U_1 - V_1 - 2, 0) \leq W_1 \leq \max (U_1 - V_1 + 1, 1) + 2$, which gives $\max (U_1 - V_1, 0) - 2 \leq W_1 \leq \max (U_1 - V_1, 0) + 3$. Thus, $\val{w} = W_1 / M \in [ \max (\val{u} - \val{v}, 0) - 2/M, \max (\val{u} - \val{v}, 0) + 3/M ]$, which satisfies the $G_{-}$ constraints since $M \geq 3/\eps$. This gadget uses $M$ \NOT, $M$ \AND gates, and the formatting sub-gadget, totalling to at most $M^4 / 2 + 2 M^2 + 2 M \leq M^4$ gates, for the given value of $M$.

\paragraph{$G_{\times c}$ gadget.} For a fixed value $c \in [0,1]$, consider the $(G_{\times c}, u, \textsf{nil}, w, c)$ gate with input $u$, and output $w$. We have a hardcoded unary vector $c_1, \dots, c_M$ with $C_{\garbo} = 0$, such that $c \cdot M - 1 \leq C_1 \leq c \cdot M$ holds. We now create $M^2$ \AND gates implementing the outer product of vectors $c$ and $u$. In other words, we create gates $(\AND, u_i, c_j, r_{i,j})$ for $i,j \in [M]$ resulting to vector $r$ with $M^2$ bits. Notice that, $U_{\garbo} \leq 1$, therefore $R_{\garbo} \leq M$. By definition of the implemented operation, we have $U_1 \cdot C_1 \leq R_1 \leq \min (U_1 \cdot C_1 + C_1 , M^2)$. We then implement a sorting network (e.g. Bubble Sort, as in the formatting sub-gadget) that outputs vector $s = (s_1, \dots, s_{M^2})$. Notice that the pure bits have been preserved, therefore $S_{\garbo} \leq M$. We have $R_1 \leq S_1 \leq \min ( R_1 + M, M^2 )$ Then, we select (rename) the $i \cdot M$-th bit to be the output $w_{i}$, $i \in [M]$. Having all the \garbo bits concentrated in the middle of $s$ and selecting bits with step $M$ ensures that at most one \garbo bit finds its way into $w$, which is also sorted as required. In the selection stage we get at least $\floor{S_1 / M}$ and at most $\ceil{S_1 / M}$ $1$-bits. Since $C_1 / M \leq c \leq (C_1 + 1)/M$, we have
\begin{align*}
    W_1 &\geq \floor{ S_1 / M} \geq \floor{ R_1 / M} \geq \floor{ U_1 \cdot C_1 / M} \geq \floor{ U_1 \cdot \max( c - 1/M, 0 )} \geq \floor{ c \cdot U_1 - U_1 / M } \\ 
    &\geq c \cdot U_1 - 2 , \quad \text{and} \\
    W_1 &\leq \ceil{ S_1 / M} \leq \ceil{ \min (R_1 + M, M^2) / M} \leq \ceil{ \min( U_1 \cdot C_1 + C_1 + M, M^2) / M} \\ 
    &\leq \ceil{ \min( c \cdot U_1 + c + 1, M )} \leq \min ( c \cdot U_1 + 3, M) \\
    &\leq \min( c \cdot U_1, M) + 3  
\end{align*}
Therefore, $\val{w} = W_1 / M \in [ \min (c \cdot \val{u}, 1) - 2 / M, \min( c \cdot \val{u}, 1) + 3 / M ]$, which satisfies the $G_{\times c}$ constraints since $M \geq 3/\eps$. This gadget uses $M^2$ \AND gates, and $(M^2)^2$ gates for the sorting, totalling to at most $M^4 + M^2 \leq 2 M^4$ gates.

\paragraph{$G_{<}$ gadget.} Consider the $(G_{<}, u, v, w, \textsf{nil})$ gate with inputs $u, v$, and output $w$. We first use the $G_{-}$ gadget to implement subtraction the subtraction $v - u$, outputting the result to vector $r = (r_1, \dots, r_M)$ which is sorted and has $R_{\garbo} \leq 1$. As shown for that gadget, we get $\max (V_1 - U_1, 0) - 2 \leq R_1 \leq \max (V_1 - U_1, 0) + 3$. Now let us focus on node $r_{M-4}$ which occupies the $5$-th position from the right in $r$. Notice that if $V_1 - U_1 \geq 7$, then $R_1 \geq 5$, and since $r$ is sorted, node $r_{M-4}$ is a $1$-bit. On the other hand, if $V_1 - U_1 \leq 0$, $R_1 \leq 3$, and since there is at most one \garbo-bit in $r$, node $r_{M-4}$ is a $0$-bit. We create a binary tree of \PURE gates rooted in $r_{M-4}$, so that it has $M$ leaves constituting unary vector $t$. We know that if $V_1 - U_1 \geq 7$ then $t$ is the all-$1$'s vector, if $V_1 - U_1 \leq 0$ then $t$ is the all-$0$'s vector, otherwise, $t$ is an arbitrary unary vector, which however maintains $T_{\garbo} \leq 1$ (even if $r_{M-4}$ was a \garbo-bit, only at most one of its copies would be a \garbo-bit, by definition of \PURE). We then sort $t$ using a sorting network with output $w$. We have that $W_1 = M$ if $U_1 \leq V_1 - 7$, and $W_1 = 0$ if $U_1 \geq U_1$. Therefore, $\val{w} = W_1 / M = 1$ if $\val{u} < \val{v} - 7/M$, and $\val{w} = 0$ if $\val{u} > \val{v}$, which satisfies the $G_{<}$ constraints since $M \geq 7/\eps$. For this construction, we used the $G_{-}$ gadget a tree of at \PURE gates with $M$ leaves, and a sorting network. Therefore, the number of gates used overall is at most $M^4 + 2M + M^2 / 2 \leq 2 M^4$.

\paragraph{$G_c$ gadget.} For a fixed value $c \in [0,1]$, consider the $(G_c, \textsf{nil}, \textsf{nil}, w, c)$ gate with no input, and output $w$. Again, we have a hardcoded unary vector $w_1, \dots, w_M$ for which $W_{\garbo} = 0$ and $c \cdot M - 1 \leq W_1 \leq c \cdot M$. Therefore, $\val{w} = W_1 / M \in [c - 1 / M , c]$, which satisfies the $G_c$ constraints since $M \geq 1 / \eps$.

\paragraph{$G_{\lor}$ gadget.} Consider the $(G_{\lor}, u, v, w, \textsf{nil})$ gate with inputs $u, v$, and output $w$. By simply using $M$ \OR gates $(\OR, u_i, v_i, r_i)$ we ensure that its output vector $r$ has $\max ( U_1 , V_1 ) \leq R_1 \leq \max ( U_1 , V_1 ) + 2$, by the robustness of the \OR gate and the fact that each of the $u, v$ vectors has at most one \garbo-bit. For the same reason, $R_{\garbo} \leq 2$. Then, we use the formatting sub-gadget with input $r$ and output vector $w$. Using \cref{cl:formatting_bound}, we get that $R_1 \leq W_1 \leq R_1 + 2$, which implies $\max ( U_1 , V_1 ) \leq W_1 \leq \max ( U_1 , V_1 ) + 4$. Therefore, $\val{w} = W_1 / M \in [ \max ( \val{u}, \val{v} ), \max ( \val{u}, \val{v} ) + 4/M]$. This means that $\val{w} \geq 1 - 6/M$ if $\val{u} \geq 1 - 2/M $ or $\val{v} \geq 1 - 2/M $, while $\val{w} \leq 6/M$ if $\val{u} \leq 2/M $ and $\val{v} \leq 2/M $, which satisfies the $G_{\lor}$ constraints since $M \geq 6/\eps$. This gadget used at most $M + M^4 / 2 + 2 M^2 \leq M^4$ gates.

\paragraph{$G_{\land}$ gadget.} Consider the $(G_{\land}, u, v, w, \textsf{nil})$ gate with inputs $u, v$, and output $w$. Similarly to what we did for the case of $G_{\lor}$, we will use $M$ \AND gates $(\AND, u_i, v_i, r_i)$, its output vector $r$ has $\min ( U_1 , V_1 ) \leq R_1 \leq \min ( U_1 , V_1 ) + 2$, by the robustness of the \AND gate and the fact that each of the $u, v$ vectors has at most one \garbo-bit. We also have $R_{\garbo} \leq 2$. We then use the formatting sub-gadget with input $r$ and output vector $w$, and by \cref{cl:formatting_bound}, we have $R_1 \leq W_1 \leq R_1 + 2$. This means that $\min ( U_1 , V_1 ) \leq W_1 \leq \max ( U_1 , V_1 ) + 4$, therefore, $\val{w} = W_1 / M \in [ \min ( \val{u}, \val{v} ), \min ( \val{u}, \val{v} ) + 4/M]$. So, $\val{w} \geq 1 - 6/M$ if $\val{u} \geq 1 - 2/M $ and $\val{v} \geq 1 - 2/M $, while $\val{w} \leq 6/M$ if $\val{u} \leq 2/M $ or $\val{v} \leq 2/M $, which satisfies the $G_{\land}$ constraints since $M \geq 6/\eps$. Similarly to the  $G_{\lor}$ gadget, for this construction we used at most $M^4$ gates.

\paragraph{$G_{\lnot}$ gadget.} Consider the $(G_{\lnot}, u, \textsf{nil}, w, \textsf{nil})$ gate with input $u$, and output $w$. We simply implement the \NOT operation and mirror the output so that the vector becomes sorted again. In particular, we use $M$ gates $(\NOT, u_i, w_{M+1-i})$ which output vector $w$. Notice that by definition of \NOT and the fact that $U_{\garbo} \leq 1$, we have $W_{\garbo} \leq 1$. Since in $w$ the $0$'s and $1$'s are at the leftmost and rightmost positions, respectively, and there is at most one \garbo-bit, $w$ is sorted. Now notice that $W_1 \geq U_0 \geq M - U_1 - 1$, where the first inequality comes by the definition of \NOT, and the second inequality by the fact that at most one \garbo-bit interferes. Also, for the same reasons, $W_1 \leq U_0 + 1 \leq M - U_1 + 1$. Therefore, $\val{w} = W_1 / M \in [ 1 - \val{u} - 1/M, 1 - \val{u} + 1/M ]$. So, $\val{w} \leq 2/M$ if $\val{u} \geq 1 - 1/M $, while $\val{w} \geq 1 - 2/M$ if $\val{u} \leq 1/M $, which satisfies the $G_{\lnot}$ constraints since $M \geq 2/\eps$. This construction used only $M$ gates.

\paragraph{$G_{=}$ gadget.} Consider the $(G_{=}, u, \textsf{nil}, w, \textsf{nil})$ gate with input $u$, and output $w$. For this, we use two $G_{\lnot}$ gadgets one after the other. In particular, the first one has input $u$ and output $r$, with $ M - U_1 - 1 \leq R_1 \leq M - U_1 + 1$, and the second has input $r$ and output $w$, with $ M - R_1 - 1 \leq W_1 \leq M - R_1 + 1$, giving $ M - U_1 - 2 \leq W_1 \leq M - U_1 + 2$. This implies $\val{w} = W_1 / M \in [ 1 - \val{u} - 2/M, 1 - \val{u} + 2/M ]$. Therefore, $\val{w} \leq 4/M$ if $\val{u} \geq 1 - 2/M $, while $\val{w} \geq 1 - 4/M$ if $\val{u} \leq 2/M $, which satisfies the $G_{=}$ constraints since $M \geq 4/\eps$. For this construction we used twice as many gates used for the $G_{\lnot}$ gadget, i.e., $2M$.

\bigskip

Putting everything together, we can finish the proof of the lemma. From the above constructions, one can easily check that the fan-out of the initial \edgcircuit is preserved in the resulting $\del'$-\pcircuit instance. In \citep{BPR16}, it was shown that \edgcircuit with unbounded fan-out is polynomial time reducible to $(\eps, \del/4)$-\gcircuit with fan-out $\ceil{ 8/\del}$. In our $\del'$-\pcircuit construction, whenever we have a node $u \in V_G$ with fan-out greater than 1, we can simply create the unary vector of $u$ and using it as root, create a binary tree of \PURE gates with $\ceil{8 / \del} \leq 9/\del$ leaves. This tree, which we call \emph{fan-out gadget}, would have to use at most $2 \cdot M \cdot 9/\del  = 18 M/\del$ gates. This gadget ensures that every node of \pcircuit is an input to at most one gate.

Finally, recall that our gadget constructions required at most $2 M^4$ gates from $\{ \NOT, \OR, \AND, \PURE \}$, where $M \geq 7/\eps$. As shown above, the output of some of those gadgets is the input to a fan-out gadget, creating a larger gadget that simulates the original \gcircuit gate. 
The fan-out gadget is considered now to be part of the entire gadget, so each \gcircuit gate is simulated by at most $2 M^4 + 18 M/\del$ \pcircuit gates. We choose $M := \ceil{ \max (7/\eps, (9 / \del)^{1/3})}$, so that $4 M^4 \geq 2 M^4 + 18 M/\del$, and now all our constructions require at most $4 M^4$ gates from $\{ \NOT, \OR, \AND, \PURE \}$. At the beginning of this section, we showed that each of these gates can be simulated with at most 5 gates from $\{ \NAND, \PURE \}$, therefore, the construction requires at most $20 M^4$ gates of the latter set.

Having $\del/(4 \cdot 20 M^4)$ fraction of gates being violated in the resulting \pcircuit, implies having at most $\del$ fraction of gates being violated in the original \edgcircuit instance (with unbounded fan-out).

\bigskip

In the final step, we show how to go from our constructed instance of \pcircuit, where every node is the input to at most one gate, to an instance where every node is the input to \emph{exactly} one gate. As long as there is a node that is not used as an input, we do the following:
\begin{enumerate}
    \item If the node is the output of a \NAND gate, then we simply remove the node and the \NAND gate from the instance.
    \item If the node is the output of a \PURE gate and the other output node of the gate is also not used as an input by any gate, then we remove the \PURE gate and both output nodes.
    \item If the node is the output of a \PURE gate and the other output node is used as an input by some gate, then we remove the first output node and the \PURE gate from the instance, and add a small ``copy'' gadget that takes as input the node that was the input to the \PURE gate and that outputs into its remaining output node. This gadget consists of a \PURE gate, followed by a \NAND gate, followed by a \PURE gate, and finally a \NAND gate. The gadget introduces five new intermediate nodes.
\end{enumerate}
It is easy to see that this procedure terminates with a reduced instance where every node is the input to exactly one gate. In particular, note that case 3 above can only occur for one of the original \PURE gates of the instance, not for any of the new \PURE gates introduced as part of a ``copy'' gadget.

Furthermore, from any assignment to the nodes of this reduced instance that satisfies a $(1-\delta')$-fraction of the gates, we can construct in polynomial time an assignment to the original instance that satisfies at least a $(1-4\delta')$-fraction of the gates. This procedure just traces back the steps of the procedure above. Whenever we removed a gate and its output node(s) from the instance in case 1 or 2 above, we can easily extend the assignment to those nodes so that the gate is satisfied. When a \PURE gate and one of its output nodes were removed in case 3, the assignment can be extended to that output node while satisfying the \PURE gate, as long as all the gates in the ``copy'' gadget are satisfied. As a result, we have reduced our problem to a \pcircuit instance where every node is the input to exactly one gate.

\subsection{The \gcircuitp Version Implies the \gcircuit Version}
\label{sec:gcircuitp-implies-gcircuit}

In this section, we prove the following lemma, which shows that the \gcircuitp formulation of the \pcppad conjecture implies the \gcircuit formulation.

\begin{lemma}
\label{lem:gcircuitp-to-gcircuit}
If $\eps$ and $\delta$ are both constant and $\eps < 1/8$, then we can reduce
\edgcircuitp to $(\eps', \delta')$-\gcircuit where $\eps'$ and $\delta'$ are
both constant fractions of $\eps$ and $\delta$, respectively.
\end{lemma}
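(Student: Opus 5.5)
The reduction rescales every variable of the \gcircuitp instance affinely into $[0,1]$ and replaces each \gcircuitp gate by a gadget of constantly many \gcircuit gates. Let $K := U - L$, an absolute constant. For each variable $v$ with bounds $[v_l, v_u] \subseteq [L,U]$ we introduce a \gcircuit variable $\hat v$ and interpret it as
\[
\val{v} = L + K \cdot \val{\hat v}.
\]
This is a common global affine map, not a per-variable one, so that sums and differences of encoded values stay linear. Any \gcircuit value then decodes into $[L,U]$. We finally clip it into $[v_l,v_u]$, which can only reduce errors, because every gadget below explicitly clips its output to those bounds.

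Each gate type is simulated as follows.
\begin{itemize}
\item A $G_c$ gate becomes a $G_c$ gate with constant $(c-L)/K$.
\item For $G_+$, we have $\val{u}+\val{v} - L = K(\hat u + \hat v) + L$. We therefore compute $\hat u + \hat v$ with half-scaled inputs, using $G_{\times 1/2}$ on both and then $G_+$, so that nothing is truncated at $1$. We then add or subtract the constant $L/K$ (at half scale) and rescale by $2$ using $G_+$ of a node with itself. The sign of $L$ determines whether we add or subtract via $G_-$.
\item For $G_-$, we proceed similarly, working at half scale around an offset of $1/2$ to avoid the truncation at $0$.
\item For $G_{\times c}$ with $|c|$ bounded by an absolute constant, we compute $c\cdot(\val{u}-L) + cL$. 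We write $c$ as a bounded sum of copies of constants in $[0,1]$, using repeated $G_+$ of small pieces. A negative $c$ is handled via subtraction from a constant.
\item The final clip $\max(\min(\cdot,w_u),w_l)$ is realized by one $G_-$ subtraction of $(w_l-L)/K$ and one addition of it back. Alternatively, it can be done directly with the $\max(\cdot,0)$ and $\min(\cdot,1)$ truncation that $G_-$ and $G_+$ already perform, applied around shifted constants.
\item $G_{\max}$ and $G_{\min}$ with constant $c$ use $\max(a,c)=c+\max(a-c,0)$ and $\min(a,c)=c-\max(c-a,0)$. Each is a constant-size combination of $G_c$, $G_-$ and $G_+$.
\item $G_=$ with identical bounds is copied as is. $G_<$, $G_\lor$, $G_\land$ and $G_\lnot$ already live on $[0,1]$ by the bound requirements. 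The inputs of $G_<$ live on general ranges, but comparing $\hat u,\hat v$ is equivalent to comparing $\val{u},\val{v}$, with the gap scaled by $1/K$.
\end{itemize}

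For the error accounting, each gadget uses at most $k$ gates for an absolute constant $k$, and every operation multiplies accumulated additive error by at most a constant factor (the rescalings by $2$ and by $|c|\le\max(|L|,|U|)$). Hence if all gates of a gadget are $\eps'$-satisfied, the decoded output satisfies the \gcircuitp constraint up to $C\eps'$ for an absolute constant $C$. We set $\eps' := \eps/C$. For $G_<$, an input gap of $\eps$ in \gcircuitp becomes a gap of $\eps/K \ge \eps'$ in the encoding, provided $C \ge K$. This is where the assumption $\eps<1/8$ and the smallness of constants enter, ensuring that the intermediate half-scaled quantities never leave $[0,1]$ by more than the truncation tolerates. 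For the fraction, the new instance has at most $k\cdot|T|$ gates. If at most $\delta'\cdot k|T|$ gates fail, then at most that many gadgets fail. With $\delta' := \delta/k$, at most a $\delta$ fraction of the original gates is violated.

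The main obstacle is making each arithmetic gadget robust to the built-in truncations of \gcircuit's $G_+$ (at $1$) and $G_-$ (at $0$). An intermediate quantity that leaves $[0,1]$ would be clipped and silently corrupt the result. I would handle this by keeping all intermediates at half or quarter scale, centered at $1/2$, so that every sum of two encoded values stays inside $[0,1]$ before rescaling. I would also carefully verify for each gate type that rescaling by the final factor of $2$ (or by $4$) is applied only after the clip to $[w_l,w_u]$ has been carried out in the scaled domain.
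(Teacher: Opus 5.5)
Your single global affine encoding is a legitimate alternative to the paper's chain of reductions, but the step for $G_<$, $G_\lor$, $G_\land$, $G_\lnot$ fails as written.

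Under $\val{v}=L+K\hat v$, a node with bounds $[0,1]$ is encoded in the subinterval $[-L/K,(1-L)/K]$, which has length $1/K$. (Such nodes force $L\le 0$ and $U\ge 1$.) This subinterval equals $[0,1]$ only when $L=0$ and $U=1$. So these gates do \emph{not} ``already live on $[0,1]$'' in your encoding, and copying them is wrong. For instance, with $L=-1$, $U=2$, the values $0$ and $1$ are encoded as $1/3$ and $2/3$. A copied \gcircuit $G_\lnot$ or $G_\lor$ then never sees an input $\le\eps'$ or $\ge 1-\eps'$, so its output is unconstrained. Likewise, a raw $G_<$ outputs roughly $0$ or $1$, which decodes to $L$ or $U$ rather than to $0$ or $1$.

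The comparator is rescued if you really apply your output clip to $[\hat w_l,\hat w_u]$, since that clip sends $0\mapsto -L/K$ and $1\mapsto(1-L)/K$. Your text, however, suggests copying $G_<$ unchanged. The logical gates need a genuinely new gadget that restores threshold behaviour on encoded inputs. One option is the paper's comparator formulas, $((u>1/2)+(v>1/2))>1/2$ for $G_\lor$ and $(1-u)>1/2$ for $G_\lnot$, with De Morgan for $G_\land$. Here you would realise them as comparisons of $\hat u,\hat v$ against the encoded threshold $(1/2-L)/K$, with raw intermediate values and a final output clip. Another option is to subtract $-L/K$, amplify by $K$, apply the raw logical gate, and re-encode. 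This is where the paper actually uses $\eps<1/8$; the hypothesis is not what controls truncations in your scheme.

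Two further points need repair.

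First, ``half or quarter scale'' is not enough in general. For $G_{\times c}$ you must compute $c\hat u+(c-1)L/K$ with $|c|$ up to $\max(|L|,|U|)$. Partial sums of pieces $c_i\hat u$ get truncated at $1$ even when the final value, after adding a negative constant, lies in $[0,1]$. You need to work at scale $1/s$ for a constant $s$ depending on $L$ and $U$. Constants such as $L/(2K)$ in your $G_+$ gadget can also leave $[-1,1]$. That case needs a trivial split, since the output is then forced to $w_l$ or $w_u$.

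Second, your final clip of decoded values into $[v_l,v_u]$ is harmless for the gate producing $v$. But if that gadget is among the failed ones, clipping can move $\val{v}$ by a constant and break every gate that consumes $v$. Since fan-out is unbounded, your $\delta$-accounting then fails. The fix is to clip each input to its encoded bounds inside every consuming gadget. A satisfied gadget is then correct with respect to the clipped decoded values.

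With these repairs your route works and is shorter than the paper's. The paper rounds constants, expands multipliers in binary, replaces logical gates by comparators, rescales to $[-1,1]$, removes $G_{\min}$ and $G_{\max}$, and splits each variable into $v^+/v^-$ with $G_{\mathrm{os}}$ gadgets. In your approach, one global map plus multiplication by $|c|/s\in[0,1]$ in a single \gcircuit gate avoids both the constant rounding and the sign splitting.
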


We present a multi-step reduction. 
The first step is to ensure that all $G_{\times c}$ gates use constants~$c$
that have a constant bit-length.

\begin{claim}
We can reduce
\edgcircuitp to $(\eps', \delta)$-\gcircuitp where $\eps'$ is
a constant fraction of $\eps$, and where all
$G_{\times c}$ gates use constants $c$ that have a constant bit-length.
\end{claim}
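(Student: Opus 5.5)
The approach is to replace each $G_{\times c}$ gate whose constant $c$ has long bit-length with a short chain of $G_{\times c'}$ gates using a rounded constant $c'$. All other gates stay unchanged. The number of gates, and hence the fraction $\delta$ of gates allowed to fail, is left intact.

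Fix a precision parameter $\eta := \eps/(2\max(|L|,|U|,1))$, which is a constant because $L$, $U$ and $\eps$ are absolute constants. For every gate $(G_{\times c}, u, \textsf{nil}, w, c)$ we let $c'$ be $c$ rounded to the nearest multiple of $\eta$ in $[L,U]$. Since $\eta$ is a fixed constant and $c' \in [L,U]$, the constant $c'$ has constant bit-length. We replace $c$ by $c'$ in the gate, keeping the same input, output and bounds. We also set $\eps' := \eps/2$.

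For correctness, take a solution $\valonly$ of the new $(\eps',\delta)$-\gcircuitp instance. Every gate other than a $G_{\times c}$ gate is identical in both instances, and $\eps' < \eps$, so any such gate satisfied in the new instance is satisfied in the original. For a satisfied multiplication gate we know $\val{w} = \max(\min(\val{u}\cdot c', w_u), w_l) \pm \eps'$. Because $\val{u} \in [L,U]$, we have $|\val{u}\cdot c' - \val{u}\cdot c| \le \max(|L|,|U|)\cdot \eta/2 \le \eps/4$. Clamping to $[w_l,w_u]$ is $1$-Lipschitz, so the clamped values also differ by at most $\eps/4$. Therefore $\val{w} = \max(\min(\val{u}\cdot c, w_u), w_l) \pm (\eps' + \eps/4)$, and the right-hand side is within $\pm\eps$. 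So every gate satisfied in the new instance is satisfied in the original, and the same vector $\valonly$ is an $(\eps,\delta)$ solution of the original instance. The reduction runs in polynomial time.

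The main subtlety is to make sure the variable bounds do not break the argument. Here this is immediate, because every variable $\val{u}$ lies in $[L,U]$ by definition of \gcircuitp, which gives the uniform multiplicative error bound. If one insists that $\eps'$ be a constant fraction of $\eps$ with a different ratio, the same argument works with $\eta$ scaled accordingly.
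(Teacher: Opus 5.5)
Your construction matches the paper's: round each multiplication constant to a grid of constant precision, use $\val{u}\in[L,U]$ to turn $|c-c'|$ into a uniform additive error, and keep the gate count (hence $\delta$) unchanged. Your constants are in fact more careful than the paper's. The paper rounds to multiples of $\eps$ itself and then solves at $\eps/(1+U/2)$, which leaves a total error of $\eps/(1+U/2)+U\eps/2>\eps$, and it ignores $|L|$. Taking the grid width proportional to $\eps/\max(|L|,|U|,1)$, as you do, is what closes the error budget. (Rounding to grid points inside $[L,U]$ can cost up to $\eta$ rather than $\eta/2$ near an endpoint, but your slack still gives $\eps/2+\max(|L|,|U|)\cdot\eta\le\eps$.)

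The step that fails is: ``every gate other than a $G_{\times c}$ gate is identical in both instances, and $\eps'<\eps$, so any such gate satisfied in the new instance is satisfied in the original.'' This holds for $G_c$, $G_=$, $G_+$, $G_-$, $G_{\min}$ and $G_{\max}$. It also holds for $G_<$, whose hypothesis $\val{u}<\val{v}-\eps$ only fires more often as the parameter shrinks while its conclusion tightens. It is false for $G_{\lor}$, $G_{\land}$ and $G_{\lnot}$, whose hypotheses use the thresholds $1-\eps$ and $\eps$. For instance, $\val{u}=1-\eps$ and $\val{v}=\val{w}=1/2$ satisfy a $G_{\lor}$ gate vacuously at parameter $\eps'$ but violate it at parameter $\eps$. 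So $\valonly$ need not be an $(\eps,\delta)$-solution of the original instance when logical gates are present.

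The paper's proof silently uses the same monotonicity, so this is a shared oversight, and it is easy to repair. First eliminate the logical gates as in the paper's later claim: that step keeps $\eps$ fixed, uses only $G_c$, $G_+$, $G_-$ and $G_<$ gates, and its gadget analysis remains valid at any smaller error parameter. Then apply your rounding. Your argument then goes through verbatim, with $\delta$ unchanged, on the logical-gate-free instance. The composite reduction loses only a constant factor in $\delta$, which is all \cref{lem:gcircuitp-to-gcircuit} needs.
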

\begin{proof}
We do this by rounding all constants used in $G_{\times c}$ gates to the
nearest multiple of $\eps$. 
Thus, a multiplication of the form $w = c \cdot u$ now becomes $w = c' \cdot
u$, where $| c - c' | \le \eps/2$. Since all variable bounds lie in $[L, U]$,
we have that $u \in [L, U]$, and therefore the total amount of additive error
that we introduce in any multiplication is bounded by $U \cdot \eps/2$. Moreover, we
have not changed the number of gates in the circuit. Thus, if we
have a 
$(\eps/(1+U/2), \delta)$-\gcircuitp solution to the new instance, then we can
recover a \edgcircuitp solution to the original, where we note that $U$ is a
constant. Finally, we note that in the modified instance all $G_{\times c}$
gates now have a constant bit-length, because $\eps$ is a constant. 
\end{proof}

The next step is to entirely remove any $G_{\times c}$ gates.

\begin{claim}
If $c$ has bit-length bounded by a constant in every $G_{\times c}$ gate, then
we can reduce \edgcircuitp to $(\eps', \delta')$-\gcircuitp where $\eps'$ and
$\delta'$ are both constant fractions of $\eps$ and $\delta$, respectively, and
where all $G_{\times c}$ gates use constants $c \in [0, 1]$.
\end{claim}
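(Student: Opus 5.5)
Each $G_{\times c}$ gate whose constant $c$ lies outside $[0,1]$ will be replaced by a constant-size gadget of gates that only use constants in $[0,1]$, together with additions and subtractions. Since $c$ has constant bit-length and lies in the constant range $[L,U]$ (reading the range $[U,L]$ in the definition as $[L,U]$), we can write $|c| = n + c_0$ with $n = \lfloor |c| \rfloor \le \max(|L|,|U|)$ a constant integer and $c_0 \in [0,1)$.

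\textbf{Gadget for $c \ge 0$.} We compute $w = \max(\min(c\cdot u, w_u), w_l)$ as follows. First form $y_0 := c_0 \cdot u$ with a $G_{\times c_0}$ gate. Then build $y_1 := y_0 + u$, $y_2 := y_1 + u$, and so on up to $y_n$, using $n$ chained $G_+$ gates. All intermediate variables get bounds wide enough that no truncation ever occurs: since $|u| \le \max(|L|,|U|)$, the bounds $\pm(n+1)\max(|L|,|U|) \pm n\epsc$ suffice. Finally, clamp into $[w_l, w_u]$ with a $G_{\max}$ gate with constant $w_l$, followed by a $G_{\min}$ gate with constant $w_u$. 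Alternatively, the last $G_+$ can output directly into $w$, since its truncation already uses $w$'s bounds.

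\textbf{Gadget for $c < 0$.} We first compute $|c|\cdot u$ into an untruncated intermediate as above. We then take $0 - (\cdot)$ using a $G_c$ gate with constant $0$ and a $G_-$ gate whose output is $w$ with bounds $[w_l,w_u]$. The truncation of $G_-$ then gives exactly $\max(\min(c u, w_u), w_l)$.

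\textbf{Error accounting.} If every gate in a gadget is $\eps'$-satisfied, errors accumulate additively through at most $k = n+4$ gates. The $G_{\times c_0}$ gate multiplies incoming error by at most $1$, and clamping is $1$-Lipschitz. So the output is within $k\eps'$ of the correct value. We choose $\eps' = \eps/k$, a constant fraction of $\eps$.

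\textbf{Gate-fraction accounting.} Each original gate becomes at most $k$ gates, so the new circuit has at most $k\,|T|$ gates. If a $\delta'$ fraction of the new gates fails, at most $\delta' k |T|$ original gates are affected. Setting $\delta' = \delta/k$ therefore yields a $(1-\delta)$ fraction of satisfied original gates, after reading off the original variables. This step relies on $k$ being a constant independent of the instance, which holds because $n$ is bounded by the constant range $[L,U]$. The constant bit-length of $c$ guarantees that $c_0$ and $n$ are computable and that the gadget has polynomial description.

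\textbf{Main obstacle.} The delicate point is the interaction with the variable bounds of \gcircuitp. Intermediate variables must never truncate, while the final truncation must reproduce exactly the clamp into $[w_l,w_u]$ prescribed by the original gate. The bound conditions for $G_c$, $G_{\max}$ and $G_{\min}$ must also hold, for example the requirement that the constant $0$ lie within its output's bounds. I expect that choosing symmetric intermediate bounds $[-B,B]$, with $B$ a constant larger than every quantity that appears, resolves this. The price is enlarging the global range $[L,U]$ by a constant factor, which is permitted since $L$ and $U$ are only required to be absolute constants.
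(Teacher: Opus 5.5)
Your argument is correct, but it decomposes the multiplier differently from the paper. The paper writes $c = p/q$ and realizes $G_{\times p}$ from the binary expansion of $p$: it builds chains of doublings $v+v$ from $G_+$ gates and sums them over the one-bits of $p$. It then applies $G_{\times 1/q}$, and handles negative $c$ by subtracting from $0$ exactly as you do. Its gadget has $O(\log^2 p)$ gates, and this is precisely where the constant bit-length hypothesis enters, since that hypothesis is what bounds the numerator $p$. You instead split $|c| = \lfloor |c|\rfloor + c_0$ and use one $G_{\times c_0}$ gate followed by $\lfloor |c|\rfloor$ additions of the input $u$. Because $c_0$ may be any rational in $[0,1)$, your gadget size depends only on the magnitude of $c$. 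That magnitude is bounded by $\max(|L|,|U|)$, or by $2^b$ via the bit-length hypothesis. So your route does not actually need bounded bit-length, and it makes the preceding rounding claim unnecessary for this step. It also keeps intermediate values of order $|c|\max(|L|,|U|)$ rather than $p\max(|L|,|U|)$; for $c = 101/100$, for instance, the paper first computes $101\,u$. Your error accounting is likewise a plain additive count over at most $\lfloor |c|\rfloor+3$ gates. You also make explicit a point the paper's proof passes over: untruncated intermediates may leave the original range $[L,U]$. The target instance must therefore use enlarged, but still absolute, constant bounds, including bounds that contain the constant $0$ of the $G_c$ gate. In return, the paper's construction is polylogarithmic rather than linear in the multiplier. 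Here that only affects the size of the constants, since all multipliers are bounded.
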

\begin{proof}
We aim to replace every $G_{\times c}$ gate in which $c$ does not lie in $[0,
1]$ with a gadget that implements the operation. First observe that we can
build a $G_{\times 2}$ gate that multiplies a variable $v$ by two using a $G_{+}$ gate to
add $v$ to itself. Then given a positive integer $p$ we can write $v \cdot p$
as $v \cdot (\sum_{i \in B} 2^i)$ where $B$ is the set of bits that are a 1 in
the binary representation of $p$. 
We can eliminate the multiplication of two variables in this expression by
rewriting it as 
$\sum_{i \in B} (2^i \cdot v)$, and noting that we can compute $2^i \cdot v$
by applying a chain of $G_{\times 2}$ gates of length $i$ to $v$.
Thus we can implement $G_{\times p}$ using at
most $\log^2 p + \log p \le 2 \cdot \log^2 p$ many $G_{\times 2}$ and $G_{+}$
gates. To implement a $G_{\times p/q}$ gate for two positive integers $p$ and
$q$ we first multiply by $p$ and then use a $G_{\times 1/q}$ gate, noting that
$1/q \in [0, 1]$. Finally, to implement a $G_{\times c}$ gate for a negative
rational $c$, we first multiply by $|c|$ and then use a $G_{-}$ gate to
subtract the result from $0$. 

So in summary, we can eliminate each $G_{\times c}$ gate by introducing at most 
$2 \cdot \log^2 |c| + 2 \le 3 \cdot \log^2 |c|$ new gates, where $|c|$ gives the number of bits needed to
represent $c$. Moreover, if each gate introduces an error of $\eps$, then our
simulation of a $G_{\times c}$ gate will introduce an error of at most $\eps
\cdot \log^2 |c|$. 
Recall that there is a constant $b$ that bounds the longest bit-length of
any value $c$ that is used in a $G_{\times c}$ gate. 
So if we find a
solution of the 
$(\eps/(3 \cdot \log^2 b), \delta/(3 \cdot \log^2
b))$-\gcircuitp problem in the new instance, then this will also be a solution to the original $\edgcircuitp$ instance.
Since $b$ is required to be constant, we therefore have that our new $\eps$ and $\delta$ parameters are still constant.
\end{proof}

For the rest of this section we will assume that all $G_{\times c}$ gates in
our \gcircuitp instance have $c \in [0, 1]$. 

The next step is to remove the logical gates $G_{\lor}$, $G_{\land}$, and
$G_{\lnot}$ from the circuit. We do this because, although $\gcircuit$ contains
logical gates, these gates can easily be expressed in terms of the other gates,
and so removing them at this stage makes the following steps of the reduction
less cumbersome. 

\begin{claim}
If $\eps < 1/8$, then we can reduce \edgcircuitp to $(\eps,
\delta')$-\gcircuitp where $\delta'$ is a constant fraction of $\delta$,
and where the gates $G_{\lor}$, $G_{\land}$, and $G_{\lnot}$ are not used.
\end{claim}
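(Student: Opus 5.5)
Each logical gate will be replaced by a constant-size gadget built only from the arithmetic gates $G_{-}$, $G_{+}$, $G_{\times c}$, $G_c$ and $G_{<}$, with all intermediate variables given bounds $[0,1]$. Since every gadget uses a constant number $k$ of gates, a $(\eps,\delta/k)$-solution of the new instance leaves at most a $\delta$-fraction of gadgets with a broken gate. Every gadget whose gates are all satisfied will then be shown to meet the original logical constraint with the same $\eps$. This gives $\delta' = \delta/k$, and $\eps$ is unchanged.

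The gadgets are as follows. For $G_{\lnot}$ with input $u$ and output $w$, I set $c := \tfrac12$ via a $G_c$ gate and then $w := (u < c)$ via $G_{<}$. If $\val{u}\le\eps$, then $\val{u} < \val{c}-\eps$ holds, because $\val{c}\ge \tfrac12-\eps$ and $\tfrac12-2\eps>\eps$ when $\eps<1/8$ (indeed $\eps<1/6$ suffices). So $G_{<}$ outputs $1\pm\eps$. Symmetrically, if $\val{u}\ge 1-\eps$, then $\val{u} > \val{c}+\eps$ because $\val{c}\le\tfrac12+\eps$, and the output is $0\pm\eps$.

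For $G_{\lor}$ I compute $s := u+v$ with $G_{+}$ and bounds $[0,1]$, and then $w := (c' < s)$ with a $G_c$ constant $c' = \tfrac12$. Note that this is the comparison with the inputs in the other order, which makes $w$ large exactly when $s$ is large. If either input is at least $1-\eps$, then $\val{s}\ge 1-2\eps$, and $1-2\eps > \tfrac12+2\eps$ requires $\eps<1/8$. This is exactly where the hypothesis $\eps<1/8$ is used. If both inputs are at most $\eps$, then $\val{s}\le 3\eps<\tfrac12-2\eps$, which needs $\eps<1/10$. To avoid that tighter condition I will instead compute the maximum. I take $d := u - v$ with $G_{-}$, clamped at $0$ by the bounds, and $s := v + d$, so that $\val{s} = \max(u,v)\pm 2\eps$. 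Then both cases need only margins of order $4\eps<\tfrac12$, and $\eps<1/8$ suffices.

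For $G_{\land}$ I use the minimum in the same way: $d := u - v$ and $s := u - d$, so that $\val{s}=\min(u,v)\pm 2\eps$, followed by the same threshold comparison against $\tfrac12$.

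The only real obstacle is this arithmetic on error margins. Each case reduces to checking $1-\eps-2\eps-\eps > \tfrac12+\eps+\eps$, or a similar inequality, and these hold for $\eps<1/8$. I will verify each case explicitly. I also need to respect the \gcircuitp bound conventions: the outputs of $G_{<}$ must have bounds $[0,1]$, and the constants $\tfrac12$ must lie within the bounds of the nodes they feed. Finally, the number of gates only grows by a constant factor, so the fraction of violated gadgets is at most a constant multiple of $\delta'$, which is what the claim requires.
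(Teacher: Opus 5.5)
Your framework matches the paper's: constant-size gadgets, $\delta'=\delta/k$ with $\eps$ unchanged. Your $G_{\lnot}$ gadget is fine. The gap is in the error analysis of the $G_{\lor}$ and $G_{\land}$ gadgets, which is the one place this claim needs care.

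In your error model the threshold comes from a $G_c$ gate, so $\val{c'}\in[1/2-\eps,1/2+\eps]$. The max detour does not remove the $1/10$ obstruction.
\begin{itemize}
\item In the low case, $\val{s}\le\max(u,v)+2\eps\le 3\eps$. This is exactly the bound you already had for $s=u+v$, so you still need $3\eps<1/2-2\eps$.
\item In the high case you only get $\val{s}\ge 1-3\eps$, which is worse than the $1-2\eps$ of the sum. So you need $1-3\eps>1/2+2\eps$.
\end{itemize}
Both conditions say $\eps<1/10$, not $\eps<1/8$. The inequality you display, $1-4\eps>1/2+2\eps$, even requires $\eps<1/12$.

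Here is a concrete failure. Take $\eps=0.11$, $u=0.89$, $v=0$, $d=0.78$, $s=0.67$, $c'=0.61$. Every gate of your gadget is satisfied. Yet $c'<s-\eps$ fails, so the comparator is unconstrained and $w=0$ is allowed, although $G_{\lor}$ demands $w=1\pm\eps$. Your min-based $G_{\land}$ gadget fails the same way.

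The paper reaches $1/8$ only because it compares against $0.5$ with no error on the threshold itself. It thresholds each input, adds, and thresholds again, and its binding condition is $3\eps<1/2-\eps$. To close your gap you need an exact, or nearly exact, threshold.

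In \gcircuitp this is available, because variable bounds are enforced on every assignment. Give the node output by the $G_c$ gate with constant $1/2$ the bounds $[1/2,1/2+\eta]$, with $0<\eta<1/2-4\eps$. Then your very first gadget $w=(c'<u+v)$ already works for all $\eps<1/8$:
\begin{itemize}
\item low case: $3\eps+\eps<1/2$;
\item high case: $1/2+\eta+\eps<1-2\eps$.
\end{itemize}
Your max and min variants work as well under this choice of $\eta$.

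Alternatively, let the new instance use a smaller constant $\eps'$, e.g.\ $\eps'=\eps/2$. That suffices for the surrounding lemma, which tolerates constant-factor losses in $\eps$, but it does not prove the claim as stated.
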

\begin{proof}
We claim that a gate $(G_{\lor}, u, v, w, \textsf{nil})$ can be re-written as
\begin{equation}
w = (u > 0.5) + (v > 0.5) > 0.5.
\end{equation}
To see why, observe that since $\eps < 1/8$, if $u \ge 1- \eps$ then $u > 0.5 +
\eps$, so the
$G_{<}$ gate implementing $u > 0.5$ will output a value that is greater than
$1-\eps$, and the same property will hold for $v$. Thus, if either of the two
inputs is greater than $1 - \eps$, the $G_{+}$ gate will output a value that is
at least $1 - 2 \eps > 0.5 + \eps$, and the outer $G_{<}$ gate will output a
value that is at least $1 - \eps$, as required.

On the other hand, if $u \le \eps < 0.5 - \eps$, then 
$G_{<}$ gate implementing $u > 0.5$ will output a value that is less than
$\eps$, and the same property will hold for $v$. The $G_{+}$ gate will then
output a value that is at most $3 \eps < 0.5 - \eps$, so the outer $G_{<}$ gate
will output a value that is at most $\eps$, as required. 

Next we claim that a gate $(G_{\lnot}, u, \textsf{nil}, w, \textsf{nil})$ can
be implemented as
\begin{equation*}
w = 1 - u > 0.5.
\end{equation*}
The reasoning here is the same as for the $G_{\lor}$ gate: the $G_{-}$ gate
does the inversion while introducing an extra $\eps$ error, and the $G_{<}$
gate removes that error because $\eps < 1/8$. 

Finally a $G_{\land}$ gate can be replaced with a $G_{\lor}$ gate and three $G_{\lnot}$ gates
by observing that De Morgan's laws still apply to the formulations of 
$G_{\land}$, $G_{\lor}$, and $G_{\lnot}$ used in \gcircuit.

The simulation of 
$G_{\lor}$ introduces four new gates, and the simulation of $G_{\lnot}$
introduces two new gates. Since a $G_{\land}$ gates uses one $G_{\lor}$ gate
and three $G_{\lnot}$ gates, it introduces 10 new gates. On the other hand, the
reduction has not changed $\eps$ at all. Therefore, we have reduced
\edgcircuitp to 
$(\eps, \delta/10)$-\gcircuitp, which proves the claim.
\end{proof}

For the rest of this section we will assume that our \gcircuitp instance does
not use any logical gates. 
The next step is to move all variable bounds to $[-1, 1]$. 

\begin{claim}
If all variables $v$ have bounds $v_l$ and $v_r$ that are constant, we can
reduce \edgcircuitp to $(\eps', \delta')$-\gcircuitp where $\eps'$ and
$\delta'$ are both constant fractions of $\eps$ and $\delta$, respectively,
and where all variables $v$ are bounded to lie in $[-1, 1]$.
\end{claim}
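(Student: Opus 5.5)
We rescale every variable by an affine map. Let $R := \max(|L|,|U|)$, a constant since $L$ and $U$ are absolute constants. In the new instance each variable $v$ represents $\tilde v := v/R$, with bounds $v_l/R$ and $v_u/R$, both in $[-1,1]$. A solution $\tilde{\valonly}$ of the new instance is mapped back by multiplying every coordinate by $R$. An $\eps'$ error in the new instance then becomes an $R\eps'$ error in the original, so we plan to take $\eps' := \eps/(cR)$ for a suitable constant $c$.

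The gates are translated one by one.
\begin{itemize}
\item $G_+$, $G_-$, $G_{\max}$, $G_{\min}$, $G_=$ and $G_{\times c}$ commute with scaling by $1/R$, including the truncation at the output bounds. These are kept as they are, with the constants of $G_{\max}$, $G_{\min}$ replaced by $c/R$.
\item $G_c$ becomes $G_{c/R}$.
\item $G_<$ needs more care, because its output is a logical value in $[0,1]$ and must not be rescaled. We plan to keep the output variable with bounds $[0,1]$ and compare the rescaled inputs. The threshold gap shrinks from $\eps$ to $\eps/R$, but the gate is only required to be correct when $|u-v| > \eps$, and since $\eps' \le \eps/R$ this is still detected.
\item Whenever a logical output is later used as an arithmetic value, we insert a $G_{\times 1/R}$ gate to convert the scale.
\end{itemize}

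Each original gate is simulated by at most two new gates, so a $\delta'$ fraction of unsatisfied gates in the new instance touches at most a $2\delta'$ fraction of the original gadgets. Taking $\delta' = \delta/2$ therefore suffices.

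The main obstacle is consistency of scale. Some variables are logical (the outputs of $G_<$) and others are numeric. The claim's requirement that equality gates have identical bounds on input and output must be respected. The interplay with the $[0,1]$ requirement on comparator outputs also has to be checked. We handle this by tagging each variable as logical or numeric, rescaling only the numeric ones, and inserting conversion gates where a tag changes.
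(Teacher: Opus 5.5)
The gap is in what your construction delivers. Your rescaling by $R=\max(|L|,|U|)$, the choice $\varepsilon'=\Theta(\varepsilon/R)$, and your check that comparator thresholds survive the rescaling are fine. But you keep each variable's bounds as $[v_l/R,v_u/R]$. That these lie in $[-1,1]$ is automatic from the choice of $R$, so under that reading the claim is a mere change of units. It is also not the normal form the next two claims consume: both assume every variable is constrained by $[-1,1]$, and the final one needs this in the strong sense.

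That final conversion to \gcircuit writes $v=v^+-v^-$ with $v^\pm\in[0,1]$ and uses the $G_+$ and $G_-$ gates of \gcircuit, which truncate only at $0$ and $1$. This encoding reproduces $\max(\min(u+v,1),-1)$, i.e., truncation at exactly $\pm1$, and nothing else. In your instance, $G_+$, $G_-$ and $G_{\times c}$ still truncate at arbitrary per-variable thresholds $w_l/R$ and $w_u/R$. The split encoding cannot express these, and by that point the $G_{\min}/G_{\max}$ gates that could express them have already been eliminated.

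The missing idea, which is the key step of the paper's proof, is to give every variable the uniform bounds $[-1,1]$ and re-impose the scaled original bounds with explicit gates. Redirect each gate's output to a fresh variable $w'$ and set $w=\min(\max(w',w_l/R),w_u/R)$, using one $G_{\max}$ and one $G_{\min}$ gate. Each gadget then has three gates, so $\delta'=\delta/3$ suffices. When mapping back, multiply by $R$ and clip to $[v_l,v_u]$, since the approximate clamps may overshoot by $O(\varepsilon')$; this costs only another constant factor in $\varepsilon'$.

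Once the clamps are in place, your logical/numeric tagging becomes unnecessary. A comparator's output must have bounds $[0,1]$, so its clamp is $\min(\max(w',0),1/R)$. This sends an output near $1$ to about $1/R$ and an output near $0$ to about $0$, converting the comparator's output to the rescaled units automatically. All variables can then be treated uniformly. The paper does not remark on this, so your concern about comparator outputs was legitimate.

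The tagging route would also need more care than your count of two gates per original gate suggests. With unbounded fan-out, a single shared $G_{\times 1/R}$ gate that fails could invalidate every consumer of that comparator. So either the conversions must be charged to the consuming gadgets, or the mapped-back value must be read off the converted copy.
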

\begin{proof}
Recall that the constants $U$ and $L$ satisfy $U \ge \max_v{v_u}$ and $L \le
\min_v{v_l}$, and let $m = \max(|U|, |L|)$ be
the largest absolute value of the bounds. The idea is to scale down the
instance by a factor of $1/m$, which will ensure that the variables now fit in
the range $[-1, 1]$. At the same time, we need to reimplement the original
bounds on each variable using $\min$ and $\max$ operations.

To implement this, we do the following.
\begin{itemize}
\item Every constant $c$ that appears in a $G_{c}$, a $G_{\max}$, or a
$G_{\min}$ gate is replaced with $c/m$. Note that the constants in a $G_{\times
c}$ gate are left unchanged.
\item For every gate $v$ in $T$, let $w$ be the output of the gate. We instead
make the gate output to a new variable $w'$, and then we set $w = \min(\max(w',
w_l/m), w_u/m)$, which reimplements the bounds for $w$.
\end{itemize}

It is straightforward to show that if $\mathbf{x}$ is an $\eps/m$ solution to the new
instance, then $m \cdot \mathbf{x}$ is a $\eps$ solution of the original.
Furthermore, we have introduced two new gates for each variable in the original
instance, so every $(\eps/m, \delta/3)$-\gcircuitp solution to the new instance
yields a \edgcircuitp solution to the original. Since $U$ and $L$ are both
absolute constants, we have that $m$ is an absolute constant, so our $\eps$ and
$\delta$ parameters are still constant, as required.
\end{proof}

The next step is to remove the $\min$ and $\max$ gates. 

\begin{claim}
If all variables are constrained by $[-1, 1]$, then we can reduce \edgcircuitp
to $(\eps', \delta')$-\gcircuitp where $\eps'$ and $\delta'$ are both constant
fractions of $\eps$ and $\delta$, respectively, 
and where there are no $G_{\min}$ or $G_{\max}$ gates.
\end{claim}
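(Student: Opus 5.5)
We will replace each $G_{\max}$ and $G_{\min}$ gate by a constant-size gadget built from $G_{+}$, $G_{-}$, $G_c$, and $G_<$ gates, plus $G_{\times c}$ gates with $c \in [0,1]$. We then account for the blow-up in error and in gate count exactly as in the previous claims. The key identity is $\max(u,c) = c + \max(u - c, 0)$. In \gcircuitp the subtraction gate truncates using the output bounds, so we can realise $\max(u-c,0)$ directly with a single $G_{-}$ gate whose output variable $r$ has bounds $[0,2]$. Since all variables now lie in $[-1,1]$, $u - c \le 2$ and the upper bound never bites. Concretely, for $(G_{\max}, u, \textsf{nil}, w, c)$ we introduce a constant variable $k := c$ via $G_c$ (bounds $[-1,1]$), then $r := u - k$ with bounds $[0,2]$, and finally $w := k + r$ with the bounds of $w$.

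Symmetrically, $\min(u,c) = c - \max(c - u, 0)$. We implement it with $k := c$, then $r := k - u$ with bounds $[0,2]$, and $w := k - r$. Because every intermediate output is produced by a gate whose bounds are chosen to be exactly the intended truncation (or wide enough to be irrelevant), each gate contributes at most $\eps'$ additive error. Errors propagate additively through $G_{+}$ and $G_{-}$, so when all three gates are satisfied the output satisfies $\val{w} = \max(\val{u},c) \pm 3\eps'$ (respectively $\min$). Taking $\eps' = \eps/3$ suffices. If the bounds $[0,2]$ on $r$ are disallowed because the previous claim fixed every variable in $[-1,1]$, we first rescale by $1/2$. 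We compute $r' := \max(u/2 - c/2, 0)$ using $G_{\times 1/2}$ gates and bounds $[0,1]$, and then recover $r = r' + r'$ with a $G_{+}$ gate. This is still constant size, at most about six gates, with error at most $c_0 \eps'$ for an absolute constant $c_0$.

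For the $\delta$ accounting, each original gate becomes at most $c_1$ gates for an absolute constant $c_1$, so the new instance has at most $c_1|T|$ gates. A $(\eps', \delta/c_1)$-solution violates at most $\delta |T|$ gates in total, hence at most $\delta|T|$ gadgets contain a violated gate. Every other gadget yields an $\eps$-correct $\min$/$\max$. The variables of the original instance keep their values, so this gives an \edgcircuitp solution. Both $\eps' = \eps/c_0$ and $\delta' = \delta/c_1$ are constant fractions.

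The main obstacle is bookkeeping the truncation semantics of $G_{-}$ and $G_{+}$ in \gcircuitp. We must make sure the intermediate variables' bounds perform the intended $\max(\cdot,0)$ and never inadvertently truncate elsewhere. We also need the $G_c$ constant $c$ to lie within its output bounds, as the definition requires. Using the identity-based gadget with explicitly chosen bounds, and the rescaling trick if needed, handles this.
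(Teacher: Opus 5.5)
Your three-gate gadget ($k:=c$, then $r:=u-k$ with bounds $[0,2]$, then $w:=k+r$, and its mirror image for $\min$) is correct. The $3\eps'$ bound is also right, because the error in $k$ cancels via $k+\max(u-k,0)=\max(u,k)$. The problem is that the gadget creates a variable with bounds $[0,2]$. The claim as worded does not say the output instance stays in $[-1,1]$, but the chain of reductions needs it: the next claim assumes every variable lies in $[-1,1]$ when it splits $v$ into $v^+,v^-\in[0,1]$.

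You anticipate this, but your fallback does not repair it. After computing $r'=\max((u-c)/2,0)\in[0,1]$ you form $r:=r'+r'$, whose intended range is again $[0,2]$. If $r$ is bounded by $[-1,1]$, the $G_+$ gate truncates it at $1$, so the gadget outputs $c+\min(\max(u-c,0),1)$. This is wrong whenever $u-c>1$: for $u=0.8$, $c=-0.5$ it returns $0.5$ instead of $0.8$. Such inputs are not excluded, since $u$ and $c$ can lie anywhere in $[-1,1]$. So the step meant to restore the invariant fails exactly in the case it was introduced for.

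The fix is to never form $r$. Set $w_1:=k+r'$ and then $w:=w_1+r'$. The partial sum $c+\max((u-c)/2,0)$ equals $c$ if $u\le c$ and $(u+c)/2$ otherwise, so it lies in $[-1,1]$, and $w=\max(u,c)$. Symmetrically, for $\min$ use $r'=\max((c-u)/2,0)$, $w_1:=k-r'$ and $w:=w_1-r'$. This uses six gates with error at most $9\eps'$, and your $\delta$-accounting then goes through unchanged.

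For comparison, the paper uses no custom bounds. It sets $w':=u+(1-c)$ with $w'\in[-1,1]$, so the built-in truncation of $G_+$ at the global bound $1$ produces $\min(u+1-c,1)$, and then subtracts $1-c$ again. For $\max$ it shifts by $-(1+c)$ and uses truncation at $-1$. This is the same mechanism as yours, with the threshold placed at $\pm1$ rather than at $0$. Strictly, it needs the same kind of care, because the constant $1-c$ leaves $[-1,1]$ when $c<0$.
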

\begin{proof}
We can replace each gate $(G_{\min}, u, \textsf{nil}, w, c)$ in the following
way. First we introduce an intermediate variable $w'$, and then we set $w' = u
+ (1-c)$ using a $G_+$ gate and a $G_c$ gate, and we set $w = w' - (1-c)$ using
a $G_+$ gate and a $G_c$ gate. The idea is as follows.
\begin{itemize}
\item If $\val{u} \le c$, then
$\val{w'} \le c + (1 - c) + 2\eps = 1 + 2\eps$, where the $2 \eps$ error arise
from the $G_{c}$ and $G_{+}$ gates that are used to implement the operation.
This means that
$\val{w} = \val{u} \pm 3 \eps$. 
\item On the other hand, if $\val{u} \ge c$ then $\val{w'} \ge c + (1 - c) +
2\eps = 1 + 2\eps$, and so the value of $w'$ will be truncated by the $\min$
operation in $G_{+}$. Therefore we have $\val{w} = 1 - (1- c) \pm 3 \eps = c
\pm 3 \eps$. 
\end{itemize}
We can symmetrically implement each $(G_{\max}, u, \textsf{nil}, w, c)$ gate
using an intermediate variable $w' = u
- 1 - c$ and then setting $w = w + 1 + c$, and this time the $\max$ operation in $G_-$
truncates $w'$ to $-1$ whenever $\val{u} \le c$. 

This reduction replaces each $\min$ or $\max$ operation with three gates. So if
we find a $(\eps/3, \delta/3)$-\gcircuit solution to the new instance, then we
can recover a \edgcircuitp solution to the original instance.
\end{proof}

For the rest of this section, we will assume that our \gcircuitp instance does
not use any $\min$ or $\max$ gates. 
At this stage we have eliminated all of the extra features introduced by
\gcircuitp except for the fact that
variables are bounded by $[-1, 1]$ and \gcircuit requires them to be bounded by
$[0, 1]$. Our final claim addresses this, and completes the proof of~\cref{lem:gcircuitp-to-gcircuit}.

\begin{claim}
If all variables are constrained by $[-1, 1]$, then
we can reduce
\edgcircuitp to $(\eps', \delta')$-\gcircuit where $\eps'$ and $\delta'$ are
both constant fractions of $\eps$ and $\delta$, respectively.
\end{claim}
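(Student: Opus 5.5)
The plan is an affine change of variables $y = (x+1)/2$, which maps $[-1,1]$ onto $[0,1]$. Each remaining \gcircuitp gate is simulated by a constant-size gadget of \gcircuit gates acting on the shifted variables. After the previous claims, the only gates left are $G_c$, $G_{\times c}$ with $c\in[0,1]$, $G_=$, $G_+$, $G_-$ and $G_<$, with every variable bound $[w_l,w_u]\subseteq[-1,1]$. Since $w_l,w_u$ are rational constants of constant bit-length, their shifted versions $(w_l+1)/2$ and $(w_u+1)/2$ lie in $[0,1]$ and can be produced by $G_c$ gates.

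Each gate is handled as follows, writing $Y_v$ for the \gcircuit variable that encodes $v$.
\begin{enumerate}
\item $G_c$ with constant $c$ becomes $G_c$ with constant $(c+1)/2$.
\item $G_=$ is unchanged.
\item $G_{\times c}$ requires $y_w = (c(2y_u-1)+1)/2 = c\,y_u + (1-c)/2$. I compute this as a $G_{\times c}$ gate on $y_u$ followed by a $G_+$ gate with the constant $(1-c)/2$; no truncation at $1$ occurs, since the sum is at most $c+(1-c)/2\le 1$. I then clamp the result to the shifted bounds of $w$.
\item $G_+$ requires $y_w = y_u + y_v - 1/2$. I form $\min(y_u+y_v,1)$ with $G_+$ and subtract $1/2$ with $G_-$. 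Truncation at $1$ is harmless: it only bites when $x_u+x_v>1\ge w_u$, and then the result is clamped to $w_u$ anyway.
\item $G_-$ requires $y_w = y_u - y_v + 1/2$. I compute $\max(y_u+1/2-y_v,0)$ as $G_-$ applied to $\min(y_u+1/2,1)$ and $y_v$. Truncation at either end again coincides with values that the bounds in $[-1,1]$ would clamp.
\item $G_<$ needs output bounds $[0,1]$ in \gcircuitp, so its output is an unshifted $0/1$ value; comparing $y_u$ with $y_v$ directly is equivalent up to a factor $2$ in $\eps$. The output must then be re-encoded as $(z+1)/2$: I apply $G_{\times 1/2}$ and then $G_+$ with the constant $1/2$.
\end{enumerate}

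The clamping to the shifted bounds $[(w_l+1)/2,(w_u+1)/2]$ is the main technical obstacle, since the earlier claim removed the $G_{\min}$ and $G_{\max}$ gates. I would reimplement them over $[0,1]$ by the same trick used there. For $\min(y,a)$: add $1-a$ with $G_+$, which truncates at $1$, then subtract $1-a$ with $G_-$. For $\max(y,b)$: subtract $b$ with $G_-$, which truncates at $0$, then add $b$ back. Each step costs a constant number of gates and a constant multiple of $\eps$ of additive error, and I would redo that claim's $\pm 3\eps$ case analysis in the $[0,1]$ setting.

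To bound the error, each gadget uses at most some constant $K$ gates, so a $(\eps/K', \delta/K)$-\gcircuit solution gives, for all but a $\delta$ fraction of original gates, a correctly working gadget. Mapping back via $x=2y-1$ doubles the additive errors, so the recovered $\valonly$ is an \edgcircuitp solution once $K'$ is a suitable constant. I would also check that $G_<$'s gap condition survives the factor $2$, which the constant rescaling of $\eps$ accommodates.
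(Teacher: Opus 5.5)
Your affine encoding $y=(x+1)/2$ is a genuinely different route from the paper's, which splits each variable into $v^+,v^-\in[0,1]$ and inserts a one-sidedness gate $G_{\text{os}}$ after every gate. Several of your pieces check out: the $G_c$, $G_=$, $G_{\times c}$ and $G_<$ translations, and the $[0,1]$ versions of the min/max clamps.

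The gap is in the $G_+$ and $G_-$ gadgets. Your claim that truncation there is harmless rests on an arithmetic slip. Since $y_u+y_v = 1+(x_u+x_v)/2$, the gate $\min(y_u+y_v,1)$ truncates whenever $x_u+x_v>0$, not only when $x_u+x_v>1$. Before clamping, your gadget therefore outputs $y_w=1/2$, i.e.\ $x_w=0$, for every input with $x_u+x_v>0$. For example, with bounds $[-1,1]$, $x_u=x_v=1/2$ gives $y_u=y_v=3/4$; the sum $3/2$ is cut to $1$, and you recover $x_w=0$ instead of $1$.

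The same failure occurs for $G_-$. There $\min(y_u+1/2,1)$ truncates as soon as $x_u>0$, after which the gadget returns $x_w=-x_v$. With $x_u=x_v=1/2$ you get $x_w=-1/2$ instead of $0$. So the simulation is off by a constant on a large set of inputs, independently of $\eps$, and rescaling $\eps$ cannot absorb this.

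The repair is to do the shifted arithmetic at half scale and double at the end. Then every intermediate value stays in $[0,1]$, and the only truncations occur exactly where the target bounds would clamp anyway.

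\textbf{Fixed $G_+$.} Compute $s=y_u/2+y_v/2$, which never truncates. Then set $t=\max(s-1/4,0)$, and finally $\min(t+t,1)$. The result is $\bigl(\max(\min(x_u+x_v,1),-1)+1\bigr)/2$. The two truncations happen precisely when $x_u+x_v<-1$ or $x_u+x_v>1$.

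\textbf{Fixed $G_-$.} Use $\min\bigl(2\max(y_u/2+1/4-y_v/2,0),1\bigr)$, which equals $\bigl(\max(\min(x_u-x_v,1),-1)+1\bigr)/2$.

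Your clamp to $[(w_l+1)/2,(w_u+1)/2]$ then finishes each gadget, since $[w_l,w_u]\subseteq[-1,1]$. The doubling only multiplies the additive error by a constant. Hence your final counting argument ($K$ gates per gadget, $\eps'=\eps/K'$, $\delta'=\delta/K$) goes through unchanged. With this fix your route works and avoids the paper's one-sidedness argument entirely.
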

\begin{proof}
The idea is to split each variable $v$ into a positive variable $v^+$ with
bounds $[0, 1]$ and a negative variable $v^-$ with bounds $[0, 1]$. The idea is
that $v^+$ will hold the value of $v$ whenever it is positive, and $v^-$ will
hold $-v$ whenever $v$ is negative, meaning that $v = v^+ - v^-$.
To implement this, we replace each gate with new gates in the following way.
\begin{itemize}
\item For each $(G_c, \textsf{nil}, \textsf{nil}, w, c)$ gate, if $c \ge 0$
then we use $G_c$ gates to set $w^+ = c$ and $w^- = 0$.
On the other hand, if $c < 0$ then
we use $G_c$ gates to set 
$w^+ = 0$ and $w^- = -c$.
\item For each $(G_{\times c}, u, \textsf{nil}, w, c)$ gate we use $G_{\times
c}$ gates to set $w^+ = c \cdot u^+$ and $w^- = c \cdot u^-$.
\item For each 
$(G_{=}, u, \textsf{nil}, w, \textsf{nil})$ gate we use $G_{=}$ gates to set
$w^+ = u^+$ and $w^- = u^-$. 
\item For each $(G_{+}, u, v, w, \textsf{nil})$ gate we use $G_{+}$ and $G_{-}$
gates to set $w^+ = u^+ - u^- + v^+ - v^-$ and $w^- = - u^+ + u^- - v^+ + v^-$.
\item For each $(G_{-}, u, v, w, \textsf{nil})$ gate we use $G_{+}$ and $G_{-}$
gates to set $w^+ = u^+ - u^- - v^+ + v^-$ and $w^- = - u^+ + u^- + v^+ - v^-$.
\item For each $(G_{<}, u, v, w, \textsf{nil})$ gate, we use $G_{c}$, $G_{-}$,
$G_{\times 0.5}$ and $G_{+}$ gates to compute $u' = 0.5 -
u^-/2 + u^+/2$ and $v' = 0.5 - v^-/2 + v^+/2$ and then we use a $G_{<}$ gate to
set $w^+ = u' < v'$ and a $G_c$ gate to set $w^- = 0$.
\end{itemize}
In addition to this, we insert a new type of gate, the $G_{\text{os}}$ gate,
after each other gate in the circuit. That is, if $w^+$ and $w^-$ are the
output variables of some gate, then we instead make that gate output to fresh
intermediate variables $v^+$ and $v^-$, and we implement a 
$G_{\text{os}}$ between the $v$ and $w$ variables in the following way.
\begin{itemize}
\item We set $w^+ = v^+ - v^-$.
\item We set $w^- = v^- - v^+$.
\end{itemize}

In a solution to the circuit we say that a variable $w$ is
\emph{one-sided} if we have that either $w^+ \le c \cdot \eps$ or $w^- \le
c \cdot \eps$ for some constant $c$. The purpose of the $G_{\text{os}}$ gate is to force a value to be
one-sided. Observe that if we apply a $G_{\text{os}}$ gate between $w$ and $v$
the following hold.
\begin{itemize}
\item If $v$ is one-sided, then $w$ will also be one-sided and represent the
same value (up to an additional error introduced by the $G_{\text{os}}$ gate).
For example, if $v^+$ is large and $v^- \le c \cdot \eps$ then the $G_{\text{os}}$
gate is effectively a no-op, and sets $w^+ = v^+ \pm (c+1) \cdot \eps$ and $w^-
\le \eps$, where in the latter case the value will be truncated at $0$ by the
$G_{-}$ gate. 
\item If $v$ is not one-sided, then $w$ will be one-sided, though it will not
necessarily represent the same value. This follows because the two $G_{-}$
gates used to implement $G_{\text{os}}$ compute opposite values, and one of
those values will be truncated at $0$. 
\end{itemize}

Since we use a $G_{os}$ gate after every other gate, this means that any solution of the \gcircuit
problem must input a one-sided value to each other gate in the circuit. Note
that for each other type of gate, if the gate receives a one-sided value as
input, then the gate will output a one-sided value. 
\begin{itemize}
\item For $G_{c}$, $G_{\times c}$, $G_{=}$, and $G_{<}$ this holds by
definition.
\item For $G_{+}$ and $G_{-}$, note that when both inputs are one-sided, the
computation boils down to a single $G_{+}$ or $G_{-}$ gate with $(2+c) \cdot \eps$ extra
error introduced by the terms that are less than $c \cdot \eps$. Moreover note that the plus and
minus outputs of these gates perform exactly the opposite computations, meaning
that one of them will be truncated at zero, and so the output will be one-sided.
\end{itemize}
Therefore every solution to the \gcircuit instance will be one-sided. It is not
difficult to verify that each gate, when given a one-sided input, outputs the
correct value for the output with a constant amount of extra error. Moreover,
each gate from the \gcircuitp instance is implemented by a constant number of
gates in the \gcircuit instance. Thus there exists a constant $c$ such that a
\edgcircuitp solution of the original instance can be recovered from a
$(\eps/c, \delta/c)$-\gcircuit solution of the new instance.
\end{proof}

\section{Inapproximability for Arrow-Debreu Exchange Markets}
\label{sec:app:exchange}

Our hardness results for Fisher markets from \cref{sec:hardness} also imply analogous hardness results for \emph{Arrow-Debreu exchange markets}. Indeed, it is not hard to check that a folklore reduction from Fisher markets to exchange markets (see, e.g., \citep[Theorem 5.1]{DeligkasFHM24-fisher-constant}) preserves the quantities \eps and \del in approximate equilibria.

The only difference between an exchange market and a Fisher market is that, in the former, each buyer $i \in B$ initially owns some endowment $w_{i,j} \geq 0$ of each good $j \in G$, instead of a budget $e_i$. Given a price vector $p \in \mathbb{R}_{\geq 0}^{|G|}$, the set of optimal bundles for buyer $i$, denoted $\opt_i(p) \subseteq \mathbb{R}_{\geq 0}^{|G|}$ is defined as in Fisher markets (see \cref{sec:fisher_markets}), with the only difference that, in the former, $e_i$ is replaced by $\sum_{j \in G} p_j w_{i,j}$ in the budget constraint \eqref{eq:opt-bundle}. In other words, the buyers acquire a budget by selling each endowment $w_{i,j}$ at price $p_{j}$. The definition of market equilibrium is identical to the one for Fisher markets, and its existence is guaranteed if the following condition holds \citep{Maxfield1997,vazirani2011market-plc-ppad}:
\begin{quote}
    \textbf{Sufficient Condition:} The economy graph of the market is strongly connected. This graph is defined on the set of buyers $B$ by introducing a directed edge from buyer $i$ to buyer $i'$ if there exists a good $j \in G$ such that $w_{i,j} > 0$ and $u_{i',j}$ is a strictly increasing function.
\end{quote}
Finally, we define the set of $(1-\delta)$-optimal bundles for buyer $i$ as in Fisher markets, namely, the set of all $(1-\delta)$-optimal solutions, denoted $\opt^\delta_i(p) \subseteq \mathbb{R}_{\geq 0}^{|G|}$.

Applying the folklore reduction (\citep[Theorem 5.1]{DeligkasFHM24-fisher-constant}) to a simple Fisher market, as defined in \cref{def:simple-condition}, yields an exchange market with endowments $w_{i,j} := 1/|B|$ and a strongly connected economy graph. Consider any $(\eps,\delta)$-equilibrium $(p,x)$ of the exchange market. Note that we can assume without loss of generality that the prices sum up to $|B|$, since prices can be scaled without changing the quality of approximation in terms of $\eps$ or $\delta$. As a result, each buyer in the exchange market has budget equal to $1$ after selling its endowment. Since the buyers in the two markets have exactly the same utility functions and budget constraints, it follows that $(p,x)$ is also an $(\eps,\delta)$-equilibrium of the Fisher market. Since the hardness results from \cref{sec:hardness} hold for \emph{simple} Fisher markets, as defined in \cref{def:simple-condition}, the results we obtain here for exchange markets hold for a similar restriction.

\begin{definition}
A family of Arrow-Debreu exchange markets is \emph{simple} if the utility functions are SPLC \emph{linear capped}, all buyers have identical endowments, and there exist constants such that for any market in the family:
\begin{itemize}
    \item Every buyer $i$ has non-zero utility function $u_{ij}$ for at most a constant number of goods $j$.
    \item For every good $j$, the utility function $u_{ij}$ is non-zero for at most a constant number of buyers $i$.
    \item For every buyer $i$, the non-zero slopes in the utility functions $u_{ij}$ have ratio bounded by a constant.
\end{itemize}
Furthermore, for every buyer $i$, there exists at least one good $j$ such that $i$ is not satiated with good $j$ (in other words, the function $u_{ij}$ is linear (uncapped) with strictly positive slope). In particular, this implies that the sufficient condition for the existence of equilibrium is satisfied.
\end{definition}

We therefore obtain the following as corollaries of \cref{thm:fisher-pcp-hard,thm:fisher-inv-poly-hard,thm:fisher-non-zero-hard}.

\begin{theorem}
Fix any constant $\eps < 1/9$. Assuming the \pcppad conjecture, there exists a constant $\delta > 0$ such that finding an $(\eps,\delta)$-approximate market equilibrium in \emph{simple} Arrow-Debreu exchange markets is \ppad/-hard.
\end{theorem}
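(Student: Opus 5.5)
The plan is to derive this theorem as a corollary of \cref{thm:fisher-pcp-hard} via the folklore Fisher-to-exchange reduction recalled above. The aim is a polynomial-time reduction that maps any simple Fisher market $\mathcal{F}$ produced by the proof of \cref{prop:fisher-pcp-hard} to a simple exchange market $\mathcal{E}$, and maps every $(\eps,\delta)$-approximate equilibrium of $\mathcal{E}$ back to an $(\eps,\delta)$-approximate equilibrium of $\mathcal{F}$ with the same $\eps$ and $\delta$. With such a reduction, hardness for every constant $\eps<1/9$ and the constant $\delta>0$ given by \cref{thm:fisher-pcp-hard}, assuming the \pcppad conjecture, carries over directly.

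\textbf{Construction.} Keep the goods, the buyers and the linear capped utilities $u_{i,j}$ of $\mathcal{F}$ unchanged, and give each buyer the endowment $w_{i,j}:=1/|B|$ of every good $j$. All endowments are identical, and the degree bounds and slope-ratio bounds are inherited from $\mathcal{F}$. Each buyer in $\mathcal{F}$ is non-satiated with some good, so every buyer $i'$ has an uncapped good $j$ with positive slope, and every buyer $i$ owns a positive amount of that $j$. Hence the economy graph is complete, so it is strongly connected, and $\mathcal{E}$ is simple in the sense of the definition above.

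\textbf{Mapping solutions back.} Let $(p,x)$ be an $(\eps,\delta)$-approximate equilibrium of $\mathcal{E}$. The first step is to show $p\neq 0$. If all prices were zero, then a buyer with an uncapped good would have infinite optimal utility, and no finite bundle could be $(1-\delta)$-optimal; so this case is impossible. Both the $(1-\delta)$-optimality condition and the clearing condition are invariant under scaling all prices by a positive factor, since bundles and budget constraints scale together. We may therefore rescale so that $\sum_j p_j=|B|$. Then every buyer's income is $\sum_j p_j/|B|=1$, which is exactly the budget $e_i=1$ in $\mathcal{F}$.

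At this point the buyers in $\mathcal{E}$ face the same feasible sets and the same utilities as in $\mathcal{F}$. So $x_i\in\opt_i^\delta(p)$ holds in $\mathcal{F}$ as well, and the clearing constraints are identical. Thus $(p,x)$ is an $(\eps,\delta)$-approximate equilibrium of $\mathcal{F}$.

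\textbf{Main obstacle.} The only delicate point is the rescaling step, specifically excluding the all-zero price vector, and that is handled by the non-satiation argument above. Everything else is routine bookkeeping.
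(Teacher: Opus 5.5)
Your proposal is correct and follows the paper's argument: the folklore reduction with endowments $w_{i,j}=1/|B|$, rescaling prices to sum to $|B|$ so every income equals $1$, and observing that the exchange equilibrium is then an $(\eps,\delta)$-equilibrium of the simple Fisher market. Your explicit exclusion of the all-zero price vector via non-satiation fills in a detail the paper leaves implicit in its ``without loss of generality'' rescaling.
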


\begin{theorem}
Fix any constant $\eps < 1/9$. For inverse-polynomial $\delta > 0$ given as part of the input, finding an $(\eps,\delta)$-approximate market equilibrium in \emph{simple} Arrow-Debreu exchange markets is \ppad/-hard.
\end{theorem}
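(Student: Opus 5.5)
The last statement is the exchange-market version of \cref{thm:fisher-inv-poly-hard}: for inverse-polynomial $\delta$ and any constant $\eps<1/9$, finding an $(\eps,\delta)$-approximate equilibrium in simple Arrow-Debreu exchange markets is \ppad/-hard. I would obtain it as a direct corollary, by composing the reduction behind \cref{thm:fisher-inv-poly-hard} with the folklore Fisher-to-exchange reduction described above.

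\textbf{Step 1 (the exchange instance).} Start from a simple Fisher market $\mathcal{F}$ produced by the reduction of \cref{thm:fisher-inv-poly-hard}. In this market all budgets equal $1$, and $\delta$ is an inverse-polynomial value, so constructing the instance takes polynomial time. Build the exchange market $\mathcal{E}$ with the same buyers, goods and utility functions, and give every buyer the endowment $w_{i,j}:=1/|B|$ of every good.

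\textbf{Step 2 ($\mathcal{E}$ is simple).} Utilities are unchanged, so they are still linear capped, with bounded degrees and bounded slope ratios. Every buyer still has an unsatiated linear good, and the endowments are identical. Strong connectivity of the economy graph follows because every buyer owns a positive share of every good. Each buyer $i'$ has some good $j$ with $u_{i',j}$ strictly increasing, and every $i$ has $w_{i,j}>0$. Hence there is an edge from $i$ to $i'$ for every ordered pair $(i,i')$.

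\textbf{Step 3 (mapping solutions back).} Take any $(\eps,\delta)$-equilibrium $(p,x)$ of $\mathcal{E}$. Rescale the prices to $p':=p\cdot |B|/\sum_j p_j$, keeping $x$ fixed. This is the one point needing care, and I do not expect a serious obstacle here.
\begin{itemize}
\item \emph{Prices are not all zero.} If $\sum_j p_j=0$, every good is free, and any buyer with an uncapped linear good has unbounded optimal utility. Then no finite bundle is $(1-\delta)$-optimal, exactly as in \cref{lem:hard:prices-pos}.
\item \emph{Scaling preserves both conditions.} Budgets $\sum_j p_j w_{i,j}$ scale by the same factor as prices, so each buyer's feasible set in quantities is unchanged. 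Therefore $(1-\delta)$-optimality is preserved, and clearing does not involve prices at all.
\end{itemize}
After rescaling, each buyer's budget is $\sum_j p'_j/|B| = 1$. This is exactly the Fisher budget, so $\opt^\delta_i(p')$ in $\mathcal{E}$ coincides with $\opt^\delta_i(p')$ in $\mathcal{F}$. The clearing constraints are identical in the two markets.

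Hence $(p',x)$ is an $(\eps,\delta)$-equilibrium of $\mathcal{F}$, computed in polynomial time, and \cref{thm:fisher-inv-poly-hard} yields \ppad/-hardness.
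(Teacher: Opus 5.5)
Your proposal is correct and follows essentially the same route as the paper: apply the folklore Fisher-to-exchange reduction with endowments $1/|B|$, rescale prices so they sum to $|B|$ (making every budget equal to $1$), and observe that the $(\eps,\delta)$-equilibrium then transfers verbatim to the simple Fisher market of \cref{thm:fisher-inv-poly-hard}. Your explicit checks that the prices cannot all be zero and that the economy graph is complete fill in details the paper leaves implicit.
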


\begin{theorem}
Fix any constant $\eps < 1/9$. Then, there exists a constant $\delta > 0$ such that finding an $(\eps,\delta)$-approximate market equilibrium in \emph{simple} Arrow-Debreu exchange markets is \ppad/-hard, \emph{if buyers are not allowed to spend any money on goods for which their utility function is the zero function}.
\end{theorem}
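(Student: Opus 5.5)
We derive this from \cref{thm:fisher-non-zero-hard} using the folklore Fisher-to-exchange reduction described above. Take a simple Fisher market $(G,B,(e_i)_{i\in B},(u_i)_{i\in B})$ with $e_i=1$ for all $i$, as produced by \cref{thm:fisher-non-zero-hard}. Build the exchange market with the same goods, the same buyers and the same linear capped utilities, and give every buyer the identical endowment $w_{i,j}:=1/|B|$ of every good $j$. This map is clearly polynomial.

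First we check that the new market lies in the simple exchange family. Degree bounds, slope ratios, linear cappedness, identical endowments and the non-satiation condition are all inherited unchanged. For strong connectivity of the economy graph, fix buyers $i,i'$. By the simplicity of the Fisher market, $i'$ has some good $j$ on which $u_{i',j}$ is strictly increasing, and $w_{i,j}=1/|B|>0$. So there is an edge from every buyer to every buyer, and the graph is strongly connected.

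Next, take any $(\eps,\delta)$-approximate equilibrium $(p,x)$ of the exchange market in which no buyer spends money on a good for which its utility function is zero. We normalize the prices.
\begin{itemize}
\item Not all prices are zero. Otherwise every buyer has budget $0$, yet its uncapped good has price zero, so its optimal utility is $+\infty$. No finite bundle can then be $(1-\delta)$-optimal for $\delta<1$.
\item Rescale $p$ so that $\sum_j p_j=|B|$, keeping $x$ fixed. Each buyer's budget then equals $\sum_j p_j/|B|=1$.
\item Scaling prices and budgets by the same factor leaves the feasible sets $F_i(p)$ unchanged. Hence $(1-\delta)$-optimality, the $\eps$-clearing condition, and the property that no money is spent on zero-utility goods are all preserved.
\end{itemize}

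After this normalization, every buyer faces exactly the budget constraint $\sum_j p_j x_{i,j}\le 1=e_i$ of the Fisher market, with the same utility function. Therefore $(p,x)$ is an $(\eps,\delta)$-approximate equilibrium of the original Fisher market that respects the same spending restriction. \cref{thm:fisher-non-zero-hard} then gives \ppad/-hardness for every constant $\eps<1/9$, with the same constant $\delta$.

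The only delicate step is the price normalization. We must rule out the all-zero price vector, and we must make sure that rescaling neither changes the approximation parameters nor the zero-utility spending restriction. Both points follow from the non-satiation condition and from the homogeneity of the budget constraint, as argued above. Everything else is bookkeeping.
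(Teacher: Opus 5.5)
Your proposal is correct and follows essentially the same route as the paper. Both arguments use the folklore reduction with identical endowments $w_{i,j}=1/|B|$, rescale prices to sum to $|B|$ so that every budget equals $1$, and then read off an approximate equilibrium of the original simple Fisher market. You also spell out details the paper leaves implicit: ruling out the all-zero price vector, checking strong connectivity, and noting that the spending restriction survives the rescaling.
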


\bigskip
\subsubsection*{Acknowledgments}

We thank the reviewers for useful comments and suggestions.
Argyrios Deligkas was supported by EPSRC Grant EP/X039862/1 ``NAfANE: New Approaches for Approximate Nash Equilibria''.
John Fearnley was supported by EPSRC Grant EP/W014750/1
``New Techniques for Resolving Boundary Problems in Total Search''.

\bibliographystyle{plainnat}
\bibliography{references}

\end{document}